\newtheorem{Example}{Example}[section]
\newtheorem{Definition}{Definition}[section]
\newtheorem{Theorem}{Theorem}[section]
\newtheorem{Theorem/Definition}{Theorem/Definition}[section]
\newtheorem{Proposition}{Proposition}[section]
\newtheorem{Lemma}{Lemma}[section]
\newtheorem{Conjecture}{Conjecture}[section]
\newcommand{\pd}{\partial}
\newcommand{\bC}{{\mathbb C}}
\newcommand{\bQ}{{\mathbb Q}}
\newcommand{\bZ}{{\mathbb Z}}
\newcommand{\cC}{{\mathcal C}}
\newcommand{\cF}{{\mathcal F}}
\newcommand{\cH}{{\mathcal H}}
\newcommand{\cI}{{\mathcal I}}
\newcommand{\cJ}{{\mathcal J}}
\newcommand{\cL}{{\mathcal L}}
\newcommand{\cM}{{\mathcal M}}
\newcommand{\cP}{{\mathcal P}}
\newcommand{\cS}{{\mathcal S}}
\newcommand{\cT}{{\mathcal T}}
\newcommand{\half}{\frac{1}{2}}
\newcommand{\cV}{{\mathcal V}}
\newcommand{\cW}{{\mathcal W}}
\newcommand{\cZ}{{\mathcal Z}}
\newcommand{\Mbar}{\overline{\cM}}
\newcommand{\tK}{{\widetilde K}}
\newcommand{\wcL}{{\widehat{\mathcal{L}}}}
\newcommand{\tcW}{{\widetilde{\mathcal{W}}}}
\newcommand{\tD}{{\widetilde{D}}}
\newcommand{\tE}{{\widetilde{E}}}
\newcommand{\fs}{{\mathfrak{s}}}
\newcommand{\fj}{{\mathfrak{j}}}
\newcommand{\ff}{{\mathfrak{f}}}
\newcommand{\cU}{{\mathcal{U}}}
\newcommand{\be}{\begin{equation}}
\newcommand{\ee}{\end{equation}}
\newcommand{\bea}{\begin{eqnarray}}
\newcommand{\ben}{\begin{eqnarray*}}
\newcommand{\een}{\end{eqnarray*}}
\newcommand{\eea}{\end{eqnarray}}
\DeclareMathOperator{\Aut}{Aut}
\DeclareMathOperator{\Id}{id}
\DeclareMathOperator{\val}{val}
\DeclareMathOperator{\tr}{tr}
\DeclareMathOperator{\diag}{diag}
\DeclareMathOperator{\perm}{perm}
\DeclareMathOperator{\Res}{Res}
\DeclareMathOperator{\Sym}{Sym}
\definecolor{yellow}{rgb}{1,1,0}
\definecolor{orange}{rgb}{1,.7,0}
\definecolor{red}{rgb}{1,0,0}
\definecolor{green}{rgb}{0,1,1}
\definecolor{white}{rgb}{1,1,1}
\definecolor{A}{rgb}{.75,1,.75}
\theoremstyle{remark}
\newtheorem{Remark}{Remark}[section]
\begin{document}

\newtheorem{myDef}{Definition}
\newtheorem{thm}{Theorem}
\newtheorem{eqn}{equation}

\title[A formalism of abstract QFT of summation of fat graphs]
{A formalism of abstract quantum field theory of summation of fat graphs}

\author{Zhiyuan Wang}
\address{School of Mathematical Sciences\\
Peking University\\Beijing, 100871, China}
\email{zhiyuan19@math.pku.edu.cn}

\author{Jian Zhou}
\address{Department of Mathematical Sciences\\
Tsinghua University\\Beijing, 100084, China}
\email{jianzhou@mail.tsinghua.edu.cn}

\begin{abstract}

In this work we present a formalism of abstract quantum field theory
for fat graphs and its realizations.
This is a generalization of an earlier work for stable graphs.
We define the abstract correlators $\mathcal F_g^\mu$,
abstract free energy $\mathcal F_g$,
abstract partition function $\mathcal Z$,
and abstract $n$-point functions $\mathcal W_{g,n}$ to be formal summations of fat graphs,
and derive quadratic recursions using edge-contraction/vertex-splitting operators,
including the abstract Virasoro constraints,
an abstract cut-and-join type representation
for $\mathcal Z$,
and a quadratic recursion for $\mathcal W_{g,n}$ which resembles
the Eynard-Orantin topological recursion.
When considering the realization by the Hermitian one-matrix models,
we obtain the Virasoro constraints,
a cut-and-join representation for
the partition function $Z_N^{\text{Herm}}$ which proves that $Z_N^{\text{Herm}}$ is
a tau-function of KP hierarchy,
a recursion for $n$-point functions which is known to be
equivalent to the E-O recursion,
and a Schr\"odinger type-equation which is equivalent to the quantum spectral curve.
We conjecture that in general cases the realization of the
quadratic recursion for $\mathcal W_{g,n}$
is the E-O recursion,
where the spectral curve and Bergmann kernel are constructed from
realizations of $\mathcal W_{0,1}$ and $\mathcal W_{0,2}$ respectively
using the framework of emergent geometry.

\end{abstract}

\maketitle

\tableofcontents

\section{Introduction}

\subsection{Backgrounds}

Various types of quadratic recursion relations have
played an important role in Gromov-Witten theory and mirror symmetry,
for example,
the Virasoro constraints \cite{dvv1,dvv2,ehx,get},
the BCOV holomorphic anomaly equations \cite{bcov1,bcov2},
and the Eynard-Orantin topological recursion \cite{eo},
etc.
The Virasoro constraints are a sequence of quadratic recursions
for the free energies on the big phase space,
while the BCOV holomorphic anomaly equations and Eynard-Orantin topological recursions
are quadratic recursions on the small phase space.
Finding relations between different types of quadratic recursion relations
is an interesting and important problem.

Techniques in quantum field theory such like summation over graphs
have now proven to be very powerful in the study of various mathematical and  physical problems.
In literatures,
there are many results about relating  quadratic relations
to various types of Feynman graphs.
For example,
Bershadsky et al. have solved the holomorphic anomaly equation
 in \cite{bcov2} and found that the solutions can be
represented as a summation over stable graphs.
In \cite{eo},
a graph sum formulation of the E-O topological recursion
has been introduced by Eynard and Orantin,
where the graphs are certain trivalent graphs;
and then this graph sum was reformulated as a summation over stable graphs
by Eynard \cite{ey2} and Dunin-Barkowski et al. \cite{doss}.
Inspired by such results,
a general formalism called the `abstract quantum field theory and its realizations'
was introduced by the authors in a previous work \cite{wz1}
to deal with all problems concerning summations over stable graphs.
The purpose of this work is to develop a similar theory for fat graphs and present some of its applications.

Fat graphs,
also known as `ribbon graphs',
are another type of Feynman graphs.
They are actually graphs on surfaces, called maps by some authors.
See eg. \cite{lz} for an introduction.
The study of the combinatorics of such graphs
was initiated by Tutte \cite{tu1, tu2, tu3, tu4}.
In physics literatures,
fat graphs play an important role in the study of gauge theory and matrix models,
beginning with a fascinating connection proposed in the work of 't Hooft \cite{tH}.
For introductions to matrix models and some related works,
see eg. \cite{biz, bipz, dgz, me, iz2}.

Fat graphs have also entered the study of mathematics,
especially modern geometry,
in the last few decades.
Stable fat graphs (i.e., fat graphs whose vertices are all of valences $\geq 3$)
with metrics on edges
lead to the construction of a combinatorial model of the
moduli space $\cM_{g,n}^{\text{comb}}$ ($2g-2+n>0$) of curves \cite{ha, mum, pe2, thu}.
Such a construction of combinatorial moduli spaces inspires powerful techniques
in some geometric problems,
for example,
the computation of orbifold Euler characteristics of the moduli space $\cM_{g,n}$
(i.e., the Harer-Zagier formula),
see Harer-Zagier \cite{hz} and Penner \cite{pe};
and the first proof of the famous Witten Conjecture \cite{wit}
by Kontsevich \cite{kon}.
These authors all introduced new techniques inspired by QFT such as
summations over fat graphs and matrix integrals into mathematics.
See also \cite{biz}.

It is also well-known that fat graphs have entered the study of number theory.
In fact, fat graphs are special cases (the `clean dessins')
of Grothendieck's dessins d'enfants \cite{gr},
see \cite{be, sc} for some arithmetic aspects of the dessins d'enfants,
and see also \cite{lz} for an introduction to these problems.

While computing the partition functions of some Hermitian matrix models,
one may take the Virasoro constraints as a powerful tool.
See \cite{am, da, im, mm} for the derivation of the Virasoro constraints of the Hermitian matrix model,
and \cite{dvv1, fkn} for the Virasoro constraints of the Kontsevich model.
Eynard and Orantin \cite{eo} introduce a formalism of topological recursions inspired by
examples in matrix model theory.
In the present work,
our main goal is to develop the formalism of abstract quantum field theory
for fat graphs and derive  two types of recursion relations in this formalism.
In concrete examples that are given by realizations of such abstract theory,
these recursion relations are realized by
the Virasoro constraints
and the Eynard-Orantin topological recursion respectively.

\subsection{The formalism of abstract quantum field theories and their realizations}

Now let us briefly recall the formalism of
the abstract QFT and its realizations for stable graphs
developed in \cite{wz1}.
This formalism is inspired by the physical literatures on
the BCOV holomorphic anomaly equations \cite{bcov1, bcov2}
and quantum spectral curves \cite{gs}.
The idea is as follows:
\begin{itemize}
\item[1)]
Define `abstract free energy' of genus $g$
to be a linear combination of stable graphs of genus $g$,
such that the coefficient of a stable graph $\Gamma$ is $\frac{1}{|\Aut(\Gamma)|}$
where $\Aut(\Gamma)$ is the group of automorphisms of $\Gamma$.

\item[2)]
We construct the edge-cutting operators and edge-adding operators
on stable graphs,
and derive quadratic recursion relations for the abstract free energies
using these operators.

\item[3)]
A realization of the abstract QFT  is achieved by
specifying suitable  Feynman rules to stable graphs.
This turns the abstract free energies into  some functions
(or formal power series, etc.),
and the quadratic recursion relations are realized by recursions for these functions.
By choosing different Feynman rules
we can obtain different specific theories in literatures and various types of recursions.

\item[4)]
The realizations can be represented as formal Gaussian integrals.

\end{itemize}

In \cite{wz1} and this work we use the word `abstract' to describe
the formalism that we work with. It means that we work at the level of Fenman graphs
without worrying about how they come from a concrete theory.
By a `realization' we mean a specific set of Feynman rules from a concrete quantum field
theory is assigned.
The goal of our formalism is to show that quadratic recursion relations for concrete
theories in the physics literature are often the consequence of some
operations on the Feynman graphs that they involve,
or in our terminology, `at the abstract level'.

It is worthwhile pointing out that
the constructions of the edge-cutting and edge-adding operators on stable graphs
are inspired by the geometric structures of the moduli spaces of curves.
In fact, in mathematical literatures the stable graphs can be used to describe the stratification of
the Deligne-Mumford moduli spaces $\Mbar_{g,n}$ of stable curves \cite{dm,kn},
and the edge-cutting and edge-adding operators are defined as `inverses'
of the gluing maps and forgetful maps on these moduli spaces.
As an application of this formalism,
in \cite{wz2} we
started with the formulas in \cite{hz, bh} and solved the problem of computations
of the orbifold Euler characteristics of $\Mbar_{g,n}$.
This is another typical example that ideas inspired by QFT provide new tools
in the study of algebraic geometry.

Now in the present work,
we we will adopt the idea of \cite{wz1}
and construct the formalsim of the `abstract QFT and its realizations'
for fat graphs (not necessarily stable).
We will construct the abstract correlators,
abstract partition function, abstract free energies, and abstract $n$-point functions
as certain summations of fat graphs,
and derive quadratic recursion relations for them
using the `edge-contraction operator'.
Similar to the case of stable graphs discussed in \cite{wz1},
here the construction of the edge-contraction operator is also
inspired by the geometry of moduli spaces.
In fact,
it comes from the `Whitehead collapse' on fat graphs,
which plays an important role in the construction of the combinatorial moduli space
$\cM_{g,n}^{\text{comb}}$ (see eg. \cite{lz}).
We will also `inverse' the edge-contraction procedures
to construct a family of `vertex-splitting operators',
which gives another equivalent description of the recursions.

We first derive a quadratic recursion relation for the abstract correlators.
Then we reformulate this recursion as quadratic recursions
for two different types of `generating series' of
the abstract correlators--the abstract partition function and abstract $n$-point functions.
For the abstract partition function $\cZ$,
we obtain a family of `abstract Virasoro constraints',
i.e.,
we can construct a family of operators $\{\cL_n\}_{n\geq -1}$
called the abstract Virasoro operators,
such that they annihilate $\cZ$ and form a basis of a Virasoro algebra
(without central extension).
And for the abstract $n$-point functions,
we obtain a quadratic recursion which resembles
the Eynard-Orantin topological recursion.

We will also consider the realizations of the abstract QFT for fat graphs.
A typical example is the Hermitian one-matrix models.
In this example,
one may obtain the Virasoro constraints for the partition function,
and the Eynard-Orantin topological recursion for the $n$-point functions,
as the realizations of the above two abstract recursions.

In summary, we have developed the formalisms of abstract QFTs and their realizations
for stable graphs in \cite{wz1} and for fat graphs in the present work.
As applications we give unified derivations of
the holomorphic anomaly equations (HAE),
the Virasoro constraints, and the Eynard-Orantin topological recursions
in such formalism.
These different quadratic recursions are different realizations of
some abstract recursion relations
derived using different types of operators on Feynman graphs:
for the HAE,
the operators are the edge-cutting/edge-adding operators on stable graphs;
and for the Virasoro constraints and E-O topological recursions,
the operators are the edge-contraction/vertex-splitting operators on fat graphs.
All these operators are geometrically interpreted by
natural geometric structures of moduli spaces of curves.

\subsection{Description of main results}

Now let us describe some results of this work.

First we construct the abstract QFT for fat graphs.
Given $g\in \bZ_{\geq 0}$ and $\mu=(\mu_1,\cdots,\mu_n)\in \bZ_{>0}^n$,
define the abstract correlator of genus $g$ and type $\mu$
to be the following linear combination of fat graphs:
\ben
\cF_g^\mu:=\sum_{\Gamma\in\mathfrak{Fat}_g^{\mu, c}}
\frac{1}{|\Aut(\Gamma)|}\Gamma,
\een
where $\mathfrak{Fat}_g^{\mu, c}$ is the set of all connected fat graphs
whose vertices are labelled by $v_1,\cdots,v_n$ respectively,
and the valence of $v_j$ is $\mu_j$ for every $j$.
We construct an `edge-contraction operator' $K_1$ on fat graphs,
and show that the abstract correlators $\cF_g^\mu$
satisfies the following quadratic recursion
(Theorem \ref{thm-abstract-rec}):
\be
\label{eq-intro-absqrec-correlator}
\begin{split}
&K_1 \cF_g^{\mu}=
\delta_{g,0}\delta_{n,1}\delta_{\mu_1,2}\cF_{0,\{1\}}^{(0)}\cF_{0,\{2\}}^{(0)}+
\sum_{j=2}^n(\mu_1+\mu_j-2)\cF_g^{(\mu_1+\mu_j-2,\mu_{[n]\backslash\{1,j\}})}\\
&\quad
+\sum_{\substack{\alpha+\beta=\mu_1-2\\\alpha\geq 1,\beta\geq 1}}
\alpha\beta\bigg(
\cF_{g-1}^{(\alpha,\beta,\mu_{[n]\backslash\{1\}})}
+\sum_{\substack{g_1+g_2=g\\I\sqcup J=[n]\backslash\{1\}}}
\cF_{g_1,\{1\}\sqcup (I+1)}^{(\alpha,\mu_I)}\cF_{g_2,\{2\}\sqcup (J+1)}^{(\beta,\mu_J)}
\bigg)\\
&\quad
+ (\mu_1-2)\cdot
\cF_{0,\{1\}}^{(0)}\cF_{g,[n+1]\backslash\{1\}}^{(\mu_1-2,\mu_{[n]\backslash\{1\}})}
+ (\mu_1-2)\cdot
\cF_{0,\{2\}}^{(0)}\cF_{g,[n+1]\backslash\{2\}}^{(\mu_1-2,\mu_{[n]\backslash\{1\}})},
\end{split}
\ee
where $[n]:=\{1,2,\cdots,n\}$, and given a finite set of indices
$I=\{i_1,\cdots,i_n\}\subset \bZ_{>0}$ with $i_1<i_2<\cdots<i_n$
we denote by $\cF_{g,I}^\mu$ the summation of graphs obtained from $\cF_{g}^\mu$
by replacing the labels $v_1,\cdots,v_n$ by $v_{i_1},\cdots,v_{i_n}$ respectively,
and $I+1:=\{i_1+1,\cdots,i_n+1\}$
(see \S \ref{sec-absqft} for details).
Here the `multiplication' of two graphs stands for the disjoint union.

The above quadratic recursion for the abstract correlators
can be reformulated in two different ways.
The first way is to reformulate it as a recursion
for the following abstract partition function:
\begin{equation*}
\cZ:=
1+\sum_{g\geq 0}g_s^{2g-2}
\sum_{n\geq 1}\frac{1}{n!}\sum_{\mu\in \bZ_{> 0}^n}
\sum_{\Gamma\in\mathfrak{Fat}_g^\mu}\frac{1}{|\Aut(\Gamma)|}\Gamma,
\end{equation*}
where $g_s$ is a formal variable,
$\mathfrak{Fat}_g^\mu$ is the set of all fat graphs (not necessarily connected)
of genus $g$ and type $\mu\in \bZ_{>0}^n$,
and the term `1' is the empty graph in the sense that
$1\cdot\Gamma=1\sqcup \Gamma:=\Gamma$ for every graph $\Gamma$.
The above quadratic recursion can be reformulated as the following
abstract Virasoro constraints
(Theorem \ref{thm-abstract-vir1}):
\be
\label{eq-intro-abs-Virasoro-1}
\cL_{n} (\cZ) =0,
\qquad
\forall n\geq -1,
\ee
where the abstract Virasoro operators $\{\cL_n\}_{n\geq -1}$ are
of the following form:
\begin{equation*}
\begin{split}
&
\cL_{-1}:=-\pd_1+\sum_{n\geq 1}\cS_{n,n}
+g_s^{-2}\cdot\gamma_{-1},\\
&
\cL_{0}:=-2\pd_2+\sum_{n\geq 1}\cS_{n,n-1}
+g_s^{-2}\cdot\gamma_0,\\
&
\cL_{1}:=-3\pd_3+\sum_{n\geq 1}\cS_{n+1,n-1}
+2\cJ_{1,0}(-\sqcup\Gamma_{dot}),\\
\end{split}
\end{equation*}
and for $m\geq 2$,
\begin{equation*}
\cL_{m}:=-(m+2)\pd_{m+2}+\sum_{n\geq 1}\cS_{n+m,n-1}
+g_s^2\cdot \sum_{n=1}^{m-1}\cJ_{n,m-n}
+2\cJ_{m,0}(-\sqcup\Gamma_{dot}),
\end{equation*}
see \S \ref{sec-fat-abs-virasoro} for the details of this construction.
Roughly speaking,
the operators $\cS_{k,l}$, $\cJ_{k,l}$
are defined to be the `inverses' of the edge-contraction.
Moreover,
one can construct a Lie bracket $[-,-]$ on the linear space spanned by
$\{\cL_n\}_{n\geq -1}$
which is a composition of the commutator and an additional edge-contraction
(see \S \ref{sec-fatabs-virasoro} for details),
such that we have
(Theorem \ref{thm-fat-abs-virasorocomm}):
\be
\label{eq-intro-abs-Virasoro-2}
[\cL_m,\cL_n]=
(m-n)\cL_{m+n},
\qquad
\forall m,n\geq -1.
\ee
As a consequence of the abstract Virasoro constraints,
we obtain a formula of the following form for the abstract partition function $\cZ$:
\be
\label{eq-intro-cut&join}
\cZ = e^{\cM} (1),
\ee
see \S \ref{sec-abs-cur&join} for details.

Another way to reformulate the quadratic recursion \eqref{eq-intro-absqrec-correlator}
is to represent it as a quadratic recursion
for the abstract $n$-point functions.
The abstract $n$-point function $\cW_{g,n}$ of genus $g$ is defined to be
the following formal summation of all connected fat graphs
of genus $g$ with $n$ vertices:
\begin{equation*}
\cW_{g,n}:=
\delta_{g,0}\delta_{n,1}\cF_0^{(0)}
+\sum_{\mu\in \bZ_{>0}^n}\mu_1\mu_2\cdots\mu_n\cF_g^{\mu},
\end{equation*}
where $\cF_0^{(0)}$ is the graph consisting of a single vertex of valence zero.
These abstract $n$-point functions satisfy the following quadratic recursion
(Theorem \ref{thm-abstract-eotype-3}):
\be
\label{eq-intro-abs-EO}
\begin{split}
&(1-2\cT\circ\sigma_1^{-1})\cW_{g,n}
=
\sum_{j=2}^{n}\sigma_j\circ\cS_{\{1;j\}}\circ\sigma_1^{-1}\cW_{g,n-1}\\
&\quad+
\cJ_{\{1,2\}}\circ\sigma_1^{-1}\sigma_2^{-1}\bigg(
\cW_{g-1,n+1}
+\sum_{\substack{g_1+g_2=g\\I\sqcup J=[n+1]\backslash\{1,2\}}}^s
\cW_{g_1,\{1\}\sqcup I}\cdot\cW_{g_2,\{2\}\sqcup J}\bigg),
\end{split}
\ee
where $\sum\limits^{s}$ means that we exclude all the terms with
$(g_1,I)=(0,\emptyset)$ or $(g_2,J)=(0,\emptyset)$.
Taking averaging of this recursion by forgetting the labels on vertices,
one obtains (Theorem \ref{thm-abstract-qsc}):
\be
\label{eq-intro-qsc}
\sigma(\cF_m)
=\cS (\cF_{m-1}) + \cJ (\cF_{m-1}) + \sum_{i+j=m} \cJ(\cF_i,\cF_j),
\qquad
m\geq 1.
\ee
See \S \ref{sec-abstract-eorec} for the detailed constructions of
the operators in these formulas.

It is worthwhile noticing that the recursion \eqref{eq-intro-abs-EO} looks similar
to the Eynard-Orantin topological recursion recursion.
For the models considered in \cite{zhou2, zhou10, zhou11},
the spectral curve and Eynard-Orantin topological recursion both emerge from the Virasoro constraints.
In the case of Witten-Kontsevich tau-function,
it is shown in \cite{Zhou-WK} that the Virasoro constraints are equivalent to the E-O topological recursion
on the Airy curve.
Here again we see a similar phenomenon:
The abstract Virasoro constraints for the abstract partition function and the quadratic recursion \eqref{eq-intro-abs-EO}
for the abstract  $n$-point function for fat graphs are equivalent to each other
because they  both are equivalent to \eqref{eq-intro-absqrec-correlator}.

Next we consider the realizations of the above abstract QFT.
Our main example is the Hermitian one-matrix models,
whose partition function is:
\begin{equation*}
Z_N^{\text{Herm}}:=\frac{
\int_{\cH_N}dM \exp\bigg(tr\sum\limits_{n=1}^\infty\frac{g_n-\delta_{n,2}}{ng_s}M^n\bigg)
}{
\int_{\cH_N}dM \exp\bigg(-\frac{1}{2g_s}tr(M^2)\bigg)
}.
\end{equation*}
There are three types of Feynman rules that we will discuss.
The first Feynman rule is:
\begin{equation*}
\Gamma\quad \mapsto \quad w_\Gamma:=t^{|F(\Gamma)|},
\end{equation*}
for every fat graph $\Gamma$,
where $F(\Gamma)$ is the set of faces of $\Gamma$
and $t:=Ng_s$ is the 't Hooft coupling constant.
Under this Feynman rule,
the abstract correlator $\cF_g^\mu$ is realized by the correlator
$\langle\frac{p_{\mu_1}}{\mu_1}\cdots\frac{p_{\mu_n}}{\mu_n}\rangle_g^c$
of the Hermitian one-matrix models,
and the recursion \eqref{eq-intro-absqrec-correlator} is realized by
the fat Virasoro constraints for the correlators
(see \eqref{eq-1mm-Vir-cor-1} and \eqref{eq-1mm-Vir-cor-2}).

The second Feynman rule is:
\begin{equation*}
\Gamma\quad \mapsto \quad w_\Gamma:=t^{|F(\Gamma)|}\cdot
\prod_{v\in V(\Gamma)} g_{\val(v)},
\end{equation*}
where $V(\Gamma)$ is the set of vertices of $\Gamma$,
and $\val(v)$ is the valence of $v\in V(\Gamma)$.
Under this Feynman rule,
the abstract partition function $\cZ$ is realized by the
partition function $Z_N^{\text{Herm}}$,
and the abstract Virasoro constraints \eqref{eq-intro-abs-Virasoro-1}
are realized by the fat Virasoro constraints of
the Hermitian one-matrix models:
\begin{equation*}
L_{n,t}^{\text{Herm}} (Z_N^{\text{Herm}})=0,\qquad
\forall n\geq -1,
\end{equation*}
where $\{L_{n,t}^{\text{Herm}}\}_{n\geq -1}$ are the fat Virasoro operators
for the Hermitian one-matrix models,
see \eqref{eq-1mm-fatvir-opr-1} and \eqref{eq-1mm-fatvir-opr-2}.
These operators are the realizations of the
abstract Virasoro operators $\{\cL_n\}_{n\geq -1}$.
Moreover,
the relations \eqref{eq-intro-abs-Virasoro-2} are realized by
the Virasoro commutation relations for $\{L_{n,t}^{\text{Herm}}\}_{n\geq -1}$:
\begin{equation*}
[L_{m,t}^{\text{Herm}},L_{m,t}^{\text{Herm}}]=(m-n)L_{m+n,t}^{\text{Herm}},
\qquad \forall m,n\geq -1.
\end{equation*}
Furthermore,
the realization of \eqref{eq-intro-cut&join} gives us:
\be
Z_N^{\text{Herm}}|_{g_s=1} = e^M (1),
\ee
where $M$ is the following element in $\widehat{\mathfrak{gl}(\infty)}$:
\begin{equation*}
M= \half\sum_{i+j+k=-2} :\alpha_i \alpha_j \alpha_k:
+\frac{t}{2} \sum_{i+j=-2} :\alpha_i \alpha_j: + \frac{t^2}{2} \alpha_{-2}.
\end{equation*}
This tells that $Z_N^{\text{Herm}}|_{g_s=1}$ is a tau-function of the KP hierarchy.

The third Feynman rule we consider is:
\begin{equation*}
\Gamma\quad \mapsto\quad
w_\Gamma:=x_1^{-(\val(v_1)+1)}x_2^{-(\val(v_2)+1)}\cdots x_n^{-(\val(v_n)+1)}
\cdot t^{|F(\Gamma)|},
\end{equation*}
where $\Gamma$ is a fat graph with $n$ vertices,
and $x=(x_1,x_2,\cdots)$ is a family of formal variables.
Then the abstract $n$-point functions $\cW_{g,n}$ are realized by the
$n$-point functions of the Hermitian one-matrix models:
\begin{equation*}
W_{g,n}^{\text{Herm}}(x_1,\cdots,x_n):=
\delta_{g,0}\delta_{n,1}\cdot t x_1^{-1}+
\sum_{\mu\in \bZ_{>0}^n}
\langle p_{\mu_1}\cdots p_{\mu_n}
\rangle_g^c
\cdot x_1^{-(\mu_1+1)}\cdots x_n^{-(\mu_n+1)},
\end{equation*}
and the recursion \eqref{eq-intro-abs-EO} is realized by the following
quadratic recursion
(Theorem \ref{thm-EOtype-rec-1mm}):
\begin{equation*}
\begin{split}
&W_{g,n}^{\text{Herm}}(x_1,x_2,\cdots,x_n)\\
=&\sum_{j=2}^n \tD_{x_1,x_j}^{\text{Herm}} W_{g,n-1}^{\text{Herm}}(x_1,\cdots,\hat{x}_j,\cdots,x_n)
+\tE_{x_1,u,v}^{\text{Herm}} W_{g-1,n+1}^{\text{Herm}}(u,v,x_2,\cdots,x_n)\\
&+
\sum_{\substack{g_1+g_2=g\\I\sqcup J=[n]\backslash\{1\}}}^s
\tE_{x_1,u,v}^{\text{Herm}}\bigg(
W_{g_1,|I|+1}^{\text{Herm}}(u,x_I)\cdot W_{g_2,|J|+1}^{\text{Herm}}(v,x_J)\bigg),
\end{split}
\end{equation*}
see \S \ref{sec-realizationrec-npt} for notations.
This recursion has already been derived in a previous work \cite{zhou11} of the second author
using a different method.
In that work it is proved that
this recursion is actually equivalent to Eynard-Orantin topological recursion
on the fat spectral curve
\be
\label{eq-intro-1mm-curve}
x^2-4y^2=4t
\ee
of Hermitian one-matrix models,
where the Bergmann kernel is chosen to be the $2$-point function of genus zero.
Using this result in \cite{zhou11},
we see that the realization of the equation \eqref{eq-intro-qsc},
which gives the following Schr\"odinger type equation:
\begin{equation*}
\bigg( \hbar^2 \frac{d^2}{dx^2} - \frac{1}{4}x^2 +t \bigg)
\exp\bigg( \sum_{m\geq 0} \hbar^{m-1} \widetilde S_m^{\text{Herm}}(x) \bigg) =0,
\end{equation*}
is equivalent to the quantum spectral curve of \eqref{eq-intro-1mm-curve}
(see \S \ref{sec-realization-qsc} for details).
Similar results are also known for the special case $t=1$,
due to \cite{dmss, ms}.

We also consider the emergent geometry of the abstract QFT.
In the formalism of emergent geometry developed by the second author \cite{zhou2},
the spectral curve of a Gromov-Witten type theory is supposed to emerge from the one-point functions
of genus zero.
In this work we show that the following spectral curve
\be
\label{eq-intro-abs-curve}
y=-\frac{z}{\sqrt{2}}
+\sqrt{2} \pd_0(\Gamma_{dot}) z^{-1}
+\sqrt{2}\cdot \sum_{k\geq 1}
\bigg(
\sum_{\Gamma\in \mathfrak{Ho}_k}\frac{2k}{|\Aut(\Gamma)|}\Gamma
\bigg) z^{-2k-1}
\ee
emerges naturally from the abstract Virasoro constraints
\eqref{eq-intro-abs-Virasoro-1}.
This is a curve on the $(y,z)$-plane,
with coefficients some vectors in the linear space spanned by some fat graphs,
see \S \ref{sec-QDT} for notations and details.
We construct a special deformation of the spectral curve \eqref{eq-intro-abs-curve},
and explain how this special deformation and its quantization
encodes the abstract Virasoro constraints
(see Theorem \ref{thm-abstract-vira-genus0} and Theorem \ref{thm-quantumdefo-abs}).

As a realization of these results,
we recover the quantum deformation theory for the fat spectral curve
of the Hermitian one matrix models developed in \cite{zhou10}.
In particular,
the fat spectral curve \eqref{eq-intro-1mm-curve} of the Hermitian one-matrix models
is a realization of the spectral curve \eqref{eq-intro-abs-curve}
under the Feynman rule $w_\Gamma:=t^{|F(\Gamma)|}$.

Now it is natural to expect that in more general cases
the realization of the recursion \eqref{eq-intro-abs-EO}
is also equivalent to the Eynard-Orantin topological recursion,
where the spectral curve is equivalent to the realization of \eqref{eq-intro-abs-curve},
and the Bergmann kernel is the realization of the abstract $2$-point function of genus zero.
We formulate this as a conjecture in \S \ref{sec-conj}.
Moreover,
we conjecture that the realization of \eqref{eq-intro-qsc}
is equivalent to the quantum spectral curve of this spectral curve.
See \S \ref{sec-conj-mainconj} for details.

\subsection{Plan of the paper}

The rest of this paper is arranged as follows.
In \S \ref{sec-absqft},
we introduce the abstract quantum field theory
for fat graphs,
and derive the quadratic recursion relation \eqref{eq-intro-absqrec-correlator}
for abstract correlators using the edge-contraction operator.
In \S \ref{sec-fat-realization},
we discuss the realizations of this abstract QFT by Hermitian matrix models,
and show that for the Hermitian one-matrix models the realization of the
quadratic recursion is equivalent to the fat Virasoro constraints.
Inspired by the results in \S \ref{sec-fat-realization},
we reformulate the quadratic recursion relation \eqref{eq-intro-absqrec-correlator}
in terms of the abstract Virasoro operators $\{\cL_n\}_{n\geq -1}$
in \S \ref{sec-fat-abs-virasoro},
and develop the abstract Virasoro constraints
for the abstract partition function.
In \S \ref{sec-realization-Virasoro} we consider the realization of the
abstract Virasoro constraints.
In \S \ref{sec-QDT},
we discuss the quantum deformation theory of this abstract QFT,
and as an application we show that the quantum deformation theory
for Hermitian one-matrix models is a realization.
In particular,
we can obtain the spectral curves \eqref{eq-intro-1mm-curve} and \eqref{eq-intro-abs-curve}.
We derive the quadratic recursions \eqref{eq-intro-abs-EO} and \eqref{eq-intro-qsc}
for the abstract $n$-point functions in \S \ref{sec-abstract-eorec},
and consider its realization by the Hermitian one-matrix models in \S \ref{sec-1mm-npt}.
Finally in \S \ref{sec-conj},
we present our conjectures towards the E-O topological recursion and quantum spectral curves.

\section{Abstract Quantum Field Theory for Fat Graphs}
\label{sec-absqft}

In this section we formulate the `abstract quantum field theory' for fat graphs.
We define the correlators of this field theory
to be certain linear combinations of fat graphs (not necessarily stable),
called the `abstract correlators',
and derive a quadratic recursion relation for the abstract correlators
using edge-contracting operators.
We also define the abstract free energy and the abstract partition function
for this abstract QFT as certain infinite summations of fat graphs.

\subsection{Graphs on oriented surfaces}
\label{sec-cellgraph}

In this subsection we recall the definition of fat graphs.
We will understand them as `graphs on oriented surfaces'
(see eg. \cite{lz}).
Notice that here we do not impose the stability condition on these graphs
(see Remark \ref{rmk-stable}).

Let $\Sigma_g$ be a connected closed oriented surface of genus $g$.
A graph on the surface $\Sigma_g$ is the $1$-skeleton of a cell decomposition of $\Sigma_g$,
such that the $0$-cells (called `vertices') are labelled by
$\{v_1,v_2,\cdots,v_n\}$,
where $n$ is the number of $0$-cells in this cell decomposition.

Two graphs on $\Sigma_g$ are equivalent if there is
an orientation-preserving homeomorphism of this surface $\Sigma_g$ to itself,
such that one of these graphs is mapped to the other,
and the labels $\{v_1,v_2,\cdots,v_n\}$ on vertices are all preserved.

In what follows,
we will always consider equivalent classes of such graphs,
and call them `fat graphs'.
Fat graphs are drawn as some vertices
(labelled by $v_1,\cdots,v_n$)
and some internal edges connecting them,
together with a cyclic order of the half-edges incident at each vertex,
induced by the the orientation of the surface $\Sigma_g$.
In what follows,
by abuse of notations
we will simply call these vertices $v_1,\cdots,v_n$ respectively.
The genus $g(\Gamma)$ of a fat graph $\Gamma$ is defined to be
the genus of the corresponding surface $\Sigma_g$.

\begin{Example}

The following two graphs are actually the same one of genus zero:
\ben
\begin{tikzpicture}[scale=1.15]
\draw [fill] (0,0) circle [radius=0.05];
\node [align=center,align=center] at (0,-0.45) {$v_1$};
\draw [thick] (0,0) .. controls (1.8,-1.8) and (1.8,1.8) ..  (0,0);
\draw [thick] (0,0) .. controls (1.2,-0.8) and (1.2,0.8) ..  (0,0);
\draw [->,>=stealth] (-0.2,-0.1) .. controls (0.3,-0.5) and (0.3,0.5) ..  (-0.2,0.1);
\draw [fill] (5+0.5,0) circle [radius=0.05];
\node [align=center,align=center] at (5+0.5,-0.45) {$v_1$};
\draw [->,>=stealth] (-0.2+5+0.5,-0.1) .. controls (0.3+5+0.5,-0.5) and (0.3+5+0.5,0.5) ..  (-0.2+5+0.5,0.1);
\draw [thick] (0+5+0.5,0) .. controls (1.5+5+0.5,-1.2) and (1.5+5+0.5,1.2) ..  (0+5+0.5,0);
\draw [thick] (0+5+0.5,0) .. controls (-1.5+5+0.5,-1.2) and (-1.5+5+0.5,1.2) ..  (0+5+0.5,0);
\end{tikzpicture}
\een
This can be seen by placing them on a $2$-sphere.
\end{Example}

Given a fat graph $\Gamma$,
denote by $V(\Gamma)$, $E(\Gamma)$, $H(\Gamma)$
the set of vertices, edges, and half-edges of $\Gamma$ respectively.
An automorphism of the fat graph $\Gamma$ is specified by
a bijection $\phi$ from $H(\Gamma)$ to itself which preserve
all the incidence relations and cyclic orders,
i.e.,
\begin{itemize}
\item[1)]
If $h_1$ and $h_2$ are two half-edges of the same edge,
then so are the half-edges $\phi(h_1)$ and $\phi(h_2)$;

\item[2)]
If $h\in H(\Gamma)$ is incident at the vertex $v_j$,
then $\phi(h)$ is also incident at $v_j$;

\item[3)]
If $(h_{i_1}, h_{i_2},\cdots,h_{i_l})$ are the half-edges
incident at the vertex $v_j$ arranged according to the cyclic order at $v_j$,
then so do $(\phi(h_{i_1}), \phi(h_{i_2}),\cdots,\phi(h_{i_l}))$.

\end{itemize}
Clearly such an automorphism of a fat graph $\Gamma$ is induced by
an orientation-preserving homeomorphism of the corresponding surface $\Sigma_g$ to itself
which preserves each vertex of the graph.
We will denote by $\Aut(\Gamma)$ the automorphism group of $\Gamma$.
Assume the valences
(i.e., number of half-edges incident at a vertex)
of vertices $v_1,\cdots,v_n$ are $\mu_1,\cdots,\mu_n$ respectively,
then it is clear that $\Aut(\Gamma)$ is a subgroup of $\prod_{i=1}^n \bZ/\mu_i\bZ$,
where $\bZ/\mu_i\bZ$ acts by rotating the half edges incident at the vertex $v_i$
(preserving the cyclic order).

\begin{Remark}
\label{rmk-stable}
A fat graph $\Gamma$ is called stable
if the valences of vertices of $\Gamma$ are all greater than two.
In this paper we do not need this stability condition.
\end{Remark}

Given a fat graph $\Gamma$ of genus $g$,
we can recover the topology of the surface $\Sigma_g$ in the following way.
First,
we `fatten' $\Gamma$ to get a surface with boundaries.
There are natural orientations on these boundary components,
induced by the cyclic order of the half-edges on each vertex of $\Gamma$.
Then we attach an oriented $2$-dimensional disc to each boundary component,
and the resulting surface is homeomorphic to $\Sigma_g$.

These $2$-dimensional discs above are called `faces' of $\Gamma$,
since they are $2$-cells in the cell-decomposition of $\Sigma_g$.
Denote by $F(\Gamma)$ the set of all faces of $\Gamma$,
then Euler's formula tells us:
\be\label{eq-euler}
2-2g(\Gamma)=|V(\Gamma)|-|E(\Gamma)|+|F(\Gamma)|.
\ee

\begin{Example}
The following is a fat graph of genus $0$:
\ben
\begin{tikzpicture}[scale=1.55]
\draw [fill] (0,0) circle [radius=0.05];
\draw [thick](-0.5,0) circle [radius=0.5];
\draw [thick](0.5,0) circle [radius=0.5];
\draw [->,>=stealth] (-0.2,-0.1) .. controls (0.3,-0.5) and (0.3,0.5) ..  (-0.2,0.1);
\end{tikzpicture}
\een
To see this,
we first fatten this graph according to the cyclic order and obtain:
\ben
\begin{tikzpicture}[scale=1.3]
\draw [thick] (0.65,0) arc (0:360:0.5);
\draw [thick] (-0.65,0) arc (0:360:0.5);
\draw [thick] (-0.5,0.25) arc (20:340:0.7);
\draw [thick] (-0.5,0.25) arc (160:-160:0.7);
\draw [->,>=stealth,thick] (-0.65,0) arc (0:50:0.5);
\draw [->,>=stealth,thick] (-0.65,0) arc (0:320:0.5);
\draw [->,>=stealth,thick] (0.65,0) arc (0:140:0.5);
\draw [->,>=stealth,thick] (0.65,0) arc (0:230:0.5);
\draw [->,>=stealth,thick] (-0.5,0.25) arc (160:-140:0.7);
\draw [->,>=stealth,thick] (-0.5,0.25) arc (160:130:0.7);
\draw [-<,>=stealth,thick] (-0.5,0.25) arc (20:50:0.7);
\draw [-<,>=stealth,thick] (-0.5,0.25) arc (20:320:0.7);
\end{tikzpicture}
\een
Now one can see that this graph has one vertex,
two edges, and three faces.
Thus by Euler's formula we have $g=0$.
Or more explicitly,
one may directly glue three $2$-discs along the boundary components,
and in this way we will obtain a $2$-dimensional sphere with a fixed orientation.
It is easy to see that the automorphism of this fat graph
is $\bZ/2\bZ$.

\end{Example}

\begin{Example}
The following is a fat graph of genus $0$:
\ben
\begin{tikzpicture}[scale=1.1]
\draw [fill] (-1,0) circle [radius=0.05];
\draw [fill] (1,0) circle [radius=0.05];
\draw [thick](0,0) circle [radius=1];
\draw [thick](-1,0)--(1,0);
\draw [->,>=stealth] (-1.2,-0.1) .. controls (-0.7,-0.5) and (-0.7,0.5) ..  (-1.2,0.1);
\draw [->,>=stealth] (0.8,-0.1) .. controls (1.3,-0.5) and (1.3,0.5) ..  (0.8,0.1);
\end{tikzpicture}
\een
To see this,
we fatten this graph according to the cyclic orders and obtain:
\ben
\begin{tikzpicture}[scale=1.1]
\draw [thick] (0.8,0.1) arc (10:170:0.8);
\draw [->,>=stealth,thick] (0.8,0.1) arc (10:40:0.8);
\draw [->,>=stealth,thick] (0.8,0.1) arc (10:150:0.8);
\draw [thick] (-0.8,-0.1) arc (190:350:0.8);
\draw [->,>=stealth,thick] (-0.8,-0.1) arc (190:220:0.8);
\draw [->,>=stealth,thick] (-0.8,-0.1) arc (190:330:0.8);
\draw [thick](-0.8,0.1)--(0.8,0.1);
\draw [->,>=stealth,thick](-0.8,0.1)--(-0.4,0.1);
\draw [->,>=stealth,thick](-0.8,0.1)--(0.5,0.1);
\draw [thick](-0.8,-0.1)--(0.8,-0.1);
\draw [->,>=stealth,thick](0.8,-0.1)--(-0.5,-0.1);
\draw [->,>=stealth,thick](0.8,-0.1)--(0.4,-0.1);
\draw [thick] (1.05,0) arc (0:360:1.05);
\draw [->,>=stealth,thick] (1.05,0) arc (360:20:1.05);
\draw [->,>=stealth,thick] (1.05,0) arc (360:150:1.05);
\draw [->,>=stealth,thick] (1.05,0) arc (360:200:1.05);
\draw [->,>=stealth,thick] (1.05,0) arc (360:330:1.05);
\end{tikzpicture}
\een
After gluing three $2$-discs to the boundaries,
we obtain a $2$-dimensional sphere with a fixed orientation.
The automorphism of this graph
is $\bZ/3\bZ$.
\end{Example}

\subsection{Abstract QFT for fat graphs and abstract correlators}

In this subsection we formulate an abstract quantum field theory
by defining the correlators of this theory using diagrammatics of fat graphs.
This is inspired by the previous work \cite{wz1} on stable graphs.

Given a sequence of positive integers  $\mu=(\mu_1,\mu_2,\cdots,\mu_n)$,
we say that a fat graph $\Gamma$ is of type $\mu$ if $|V(\Gamma)|=n$ and
the valence of vertex the $v_i$ is $\mu_i$ for $i=1,\cdots,n$.
Denote by $\mathfrak{Fat}_g^{\mu, c}$
the set of all connected fat graphs of genus $g$ and type $\mu$,
then we define the abstract correlators for this abstract QFT
to be the following formal summations of fat graphs:

\begin{Definition}
\label{def-fat-abs-cor}
Let $g\in \bZ_{\geq 0}$ and $\mu\in \bZ_{> 0}^n$.
Define the abstract correlator of genus $g$ and type $\mu=(\mu_1,\mu_2,\cdots,\mu_n)$
to be the following linear combination of fat graphs:
\be
\cF_g^\mu:=\sum_{\Gamma\in\mathfrak{Fat}_g^{\mu, c}}
\frac{1}{|\Aut(\Gamma)|}\Gamma.
\ee
It is an element in the vector space
\be
\cV_g^{\mu,c}:=\bigoplus_{\Gamma\in\mathfrak{Fat}_g^{\mu, c}} \bQ \Gamma.
\ee
We also formally denote:
\begin{equation*}
\begin{tikzpicture}
\draw [fill] (-0.1,0) circle [radius=0.06];
\node [align=center,align=center] at (-1.1,0) {$\cF_0^{(0)}:=$};
\draw [thick] (0,0) .. controls (0,0) and (0,0) ..  (0,0);
\node [align=center,align=center] at (0.3,0) {$v_1$};
\node [align=center,align=center] at (0.75,-0.2) {$.$};
\end{tikzpicture}
\end{equation*}
If $\mathfrak{Fat}_g^{\mu,c}$ is empty for a pair $(g,\mu)$,
then we require $\cF_g^\mu:=0$.

\end{Definition}

\begin{Remark}
It is clear that $\cF_g^\mu=0$ whenever $|\mu|:=\mu_1+\cdots+\mu_n$ is odd,
since we have $|\mu|=2|E(\Gamma)|$ for every $\Gamma\in\mathfrak{Fat}_g^{\mu,c}$.
\end{Remark}

\begin{Example}
\label{eg-abstract-fe}
Let us give some examples of $\cF_g^\mu$:
\begin{flalign*}
\begin{tikzpicture}
\draw [fill] (0,0) circle [radius=0.06];
\node [align=center,align=center] at (0,-0.4) {$v_1$};
\node [align=center,align=center] at (-1.1,0) {$\cF_0^{(2)}=\half$};
\draw [thick] (0,0) .. controls (1.5,-1.2) and (1.5,1.2) ..  (0,0);
\draw [->,>=stealth] (-0.2,-0.1) .. controls (0.3,-0.5) and (0.3,0.5) ..  (-0.2,0.1);
\node [align=center,align=center] at (1.7,-0.25) {$,$};
\end{tikzpicture}&&
\end{flalign*}
\begin{flalign*}
\begin{tikzpicture}[scale=0.95]
\draw [fill] (0,0) circle [radius=0.06];
\node [align=center,align=center] at (0,-0.4) {$v_1$};
\node [align=center,align=center] at (-2.1,0) {$\cF_0^{(4)}=\half$};
\draw [thick] (0,0) .. controls (1.5,-1.2) and (1.5,1.2) ..  (0,0);
\draw [->,>=stealth] (-0.2,-0.1) .. controls (0.3,-0.5) and (0.3,0.5) ..  (-0.2,0.1);
\draw [thick] (0,0) .. controls (-1.5,-1.2) and (-1.5,1.2) ..  (0,0);
\node [align=center,align=center] at (1.7,-0.25) {$,$};
\end{tikzpicture}&&
\end{flalign*}
\begin{flalign*}
\begin{tikzpicture}[scale=0.95]
\draw [fill] (0,0) circle [radius=0.06];
\node [align=center,align=center] at (0,-0.4) {$v_1$};
\node [align=center,align=center] at (-2.1,0) {$\cF_0^{(6)}=\half$};
\draw [thick] (0,0) .. controls (1.8,-1.8) and (1.8,1.8) ..  (0,0);
\draw [thick] (0,0) .. controls (1.2,-0.8) and (1.2,0.8) ..  (0,0);
\draw [->,>=stealth] (-0.2,-0.1) .. controls (0.3,-0.5) and (0.3,0.5) ..  (-0.2,0.1);
\draw [thick] (0,0) .. controls (-1.5,-1.2) and (-1.5,1.2) ..  (0,0);
\node [align=center,align=center] at (2,0) {$+\frac{1}{3}$};
\draw [thick] (3.7,0) .. controls (2.2,-1.5) and (2.2,0.8) ..  (3.7,0);
\draw [thick] (3.7,0) .. controls (5.2,-1.5) and (5.2,0.8) ..  (3.7,0);
\draw [thick] (3.7,0) .. controls (2.5,1.5) and (4.9,1.5) ..  (3.7,0);
\draw [->,>=stealth] (3.5,-0.1) .. controls (4,-0.5) and (4,0.5) ..  (3.5,0.1);
\draw [fill] (3.7,0) circle [radius=0.06];
\node [align=center,align=center] at (3.7,-0.4) {$v_1$};
\node [align=center,align=center] at (5.4,-0.25) {$,$};
\end{tikzpicture}&&
\end{flalign*}
\begin{flalign*}
\begin{tikzpicture}[scale=0.9]
\draw [fill] (0,0) circle [radius=0.06];
\node [align=center,align=center] at (0,-0.4) {$v_1$};
\node [align=center,align=center] at (-2.45,0) {$\cF_0^{(8)}=\half$};
\draw [thick] (0,0) .. controls (1.8,-1.8) and (1.8,1.8) ..  (0,0);
\draw [thick] (0,0) .. controls (1.2,-0.8) and (1.2,0.8) ..  (0,0);
\draw [->,>=stealth] (-0.2,-0.1) .. controls (0.3,-0.5) and (0.3,0.5) ..  (-0.2,0.1);
\draw [thick] (0,0) .. controls (-1.8,-1.8) and (-1.8,1.8) ..  (0,0);
\draw [thick] (0,0) .. controls (-1.2,-0.8) and (-1.2,0.8) ..  (0,0);
\node [align=center,align=center] at (1.9,0) {$+$};
\draw [thick] (3.5,0) .. controls (2,-1.5) and (2,0.8) ..  (3.5,0);
\draw [thick] (3.5,0) .. controls (5,-1.5) and (5,0.8) ..  (3.5,0);
\draw [thick] (3.5,0) .. controls (1.7,1.8) and (5.3,1.8) ..  (3.5,0);
\draw [thick] (3.5,0) .. controls (2.7,1.2) and (4.3,1.2) ..  (3.5,0);
\draw [->,>=stealth] (3.3,-0.1) .. controls (3.8,-0.5) and (3.8,0.5) ..  (3.3,0.1);
\draw [fill] (3.5,0) circle [radius=0.06];
\node [align=center,align=center] at (3.5,-0.4) {$v_1$};
\node [align=center,align=center] at (5.2,0) {$+\frac{1}{4}$};
\draw [thick] (6.9,0) .. controls (5.4,-1.2) and (5.4,1.2) ..  (6.9,0);
\draw [thick] (6.9,0) .. controls (8.4,-1.2) and (8.4,1.2) ..  (6.9,0);
\draw [thick] (6.9,0) .. controls (5.7,1.5) and (8.1,1.5) ..  (6.9,0);
\draw [thick] (6.9,0) .. controls (5.7,-1.5) and (8.1,-1.5) ..  (6.9,0);
\draw [fill] (6.9,0) circle [radius=0.06];
\node [align=center,align=center] at (6.9,-0.4) {$v_1$};
\draw [->,>=stealth] (6.7,-0.1) .. controls (7.2,-0.5) and (7.2,0.5) ..  (6.7,0.1);
\node [align=center,align=center] at (8.6,-0.25) {$,$};
\end{tikzpicture}&&
\end{flalign*}
\begin{flalign*}
\begin{tikzpicture}
\draw [fill] (0,0) circle [radius=0.06];
\node [align=center,align=center] at (0,-0.35) {$v_1$};
\draw [fill] (1.5,0) circle [radius=0.06];
\node [align=center,align=center] at (1.5,-0.35) {$v_2$};
\node [align=center,align=center] at (-1.5,0) {$\cF_0^{(1,1)}=$};
\draw [thick] (0,0) -- (1.5,0);
\node [align=center,align=center] at (2.1,-0.25) {$,$};
\end{tikzpicture}&&
\end{flalign*}
\begin{flalign*}
\begin{tikzpicture}
\draw [fill] (0,0) circle [radius=0.06];
\draw [fill] (1.5,0) circle [radius=0.06];
\node [align=center,align=center] at (-1.5,0) {$\cF_0^{(3,1)}=$};
\draw [thick] (0,0) -- (1.5,0);
\draw [thick] (1.5,0) .. controls (3,1.2) and (3,-1.2) ..  (1.5,0);
\draw [->,>=stealth] (1.3,-0.1) .. controls (1.8,-0.5) and (1.8,0.5) ..  (1.3,0.1);
\node [align=center,align=center] at (0,-0.35) {$v_2$};
\node [align=center,align=center] at (1.5,-0.35) {$v_1$};
\node [align=center,align=center] at (3.2,-0.25) {$,$};
\end{tikzpicture}&&
\end{flalign*}
\begin{flalign*}
\begin{tikzpicture}
\draw [fill] (0,0) circle [radius=0.06];
\node [align=center,align=center] at (0,-0.35) {$v_1$};
\draw [fill] (1.5,0) circle [radius=0.06];
\node [align=center,align=center] at (1.5,-0.35) {$v_2$};
\node [align=center,align=center] at (-1.5,0) {$\cF_0^{(2,2)}=\half$};
\draw [thick] (0,0) .. controls (0.3,0.6) and (1.2,0.6) ..  (1.5,0);
\draw [thick] (0,0) .. controls (0.3,-0.6) and (1.2,-0.6) ..  (1.5,0);
\draw [->,>=stealth] (1.3,-0.1) .. controls (1.8,-0.5) and (1.8,0.5) ..  (1.3,0.1);
\draw [->,>=stealth] (-0.2,-0.1) .. controls (0.3,-0.5) and (0.3,0.5) ..  (-0.2,0.1);
\node [align=center,align=center] at (2.35,-0.25) {$,$};
\end{tikzpicture}&&
\end{flalign*}
\begin{flalign*}
\begin{tikzpicture}[scale=0.9]
\node [align=center,align=center] at (-2.4,0) {$\cF_0^{(3,3)}=\frac{1}{3}$};
\draw [fill] (-1,0) circle [radius=0.06];
\draw [fill] (1,0) circle [radius=0.06];
\node [align=center,align=center] at (-1.15,-0.4) {$v_1$};
\node [align=center,align=center] at (1.15,-0.4) {$v_2$};
\draw [thick](0,0) circle [radius=1];
\draw [thick](-1,0)--(1,0);
\draw [->,>=stealth] (-1.2,-0.1) .. controls (-0.7,-0.5) and (-0.7,0.5) ..  (-1.2,0.1);
\draw [->,>=stealth] (0.8,-0.1) .. controls (1.3,-0.5) and (1.3,0.5) ..  (0.8,0.1);
\node [align=center,align=center] at (1.8,0) {$+$};
\draw [thick](3.8,0)--(5,0);
\draw [thick](3.2,0) circle [radius=0.6];
\draw [thick](5.6,0) circle [radius=0.6];
\draw [fill] (3.8,0) circle [radius=0.06];
\draw [fill] (5,0) circle [radius=0.06];
\node [align=center,align=center] at (3.95,-0.4) {$v_1$};
\node [align=center,align=center] at (4.85,-0.4) {$v_2$};
\draw [->,>=stealth] (-1.2+4.8,-0.1) .. controls (-0.7+4.8,-0.5) and (-0.7+4.8,0.5) ..  (-1.2+4.8,0.1);
\draw [->,>=stealth] (-1.2+6,-0.1) .. controls (-0.7+6,-0.5) and (-0.7+6,0.5) ..  (-1.2+6,0.1);
\node [align=center,align=center] at (6.75,-0.25) {$,$};
\end{tikzpicture}&&
\end{flalign*}
\begin{flalign*}
\begin{tikzpicture}
\node [align=center,align=center] at (-1.5,0) {$\cF_1^{(2)}=0,$};
\end{tikzpicture}&&
\end{flalign*}
\begin{flalign*}
\begin{tikzpicture}[scale=0.9]
\draw [fill] (0,0) circle [radius=0.06];
\node [align=center,align=center] at (-0.1,-0.4) {$v_1$};
\draw [->,>=stealth] (-0.2,-0.1) .. controls (0.3,-0.5) and (0.3,0.5) ..  (-0.2,0.1);
\node [align=center,align=center] at (-2,0) {$\cF_1^{(4)}=\frac{1}{4}$};
\draw [thick] (0,0) arc (180:540:0.8);
\draw [thick] (0,0) arc (90:10:0.8);
\draw [thick] (0,0) arc (90:350:0.8);
\node [align=center,align=center] at (2.25,-0.3) {$,$};
\end{tikzpicture}&&
\end{flalign*}
\begin{flalign*}
\begin{tikzpicture}[scale=0.9]
\node [align=center,align=center] at (-2.65,0) {$\cF_1^{(6)}=\half$};
\draw [fill] (0,0) circle [radius=0.06];
\node [align=center,align=center] at (0,-0.6) {$v_1$};
\draw [->,>=stealth] (-0.2,-0.1) .. controls (0.3,-0.5) and (0.3,0.5) ..  (-0.2,0.1);
\draw [thick] (0,0) arc (0:260:0.8);
\draw [thick] (0,0) arc (360:280:0.8);
\draw [thick] (0,0) arc (180:260:0.8);
\draw [thick] (0,0) arc (540:280:0.8);
\draw [thick] (0,0) arc (90:450:0.8);
\node [align=center,align=center] at (2,0) {$+$};
\draw [fill] (0+3.3,0) circle [radius=0.06];
\node [align=center,align=center] at (3.2,-0.4) {$v_1$};
\draw [->,>=stealth] (-0.2+3.3,-0.1) .. controls (0.3+3.3,-0.5) and (0.3+3.3,0.5) ..  (-0.2+3.3,0.1);
\draw [thick] (0+3.3,0) arc (180:540:0.8);
\draw [thick] (0+3.3,0) arc (90:10:0.8);
\draw [thick] (0+3.3,0) arc (90:350:0.8);
\draw [thick] (0+3.3,0) .. controls (1.5,0.1) and (3.2,1.8) ..  (0+3.3,0);
\node [align=center,align=center] at (5.4,0) {$+\frac{1}{6}$};
\draw [->,>=stealth] (6.1+0.15,-0.1) .. controls (6.6+0.15,-0.5) and (6.6+0.15,0.5) ..  (6.1+0.15,0.1);
\draw [fill] (6.3+0.15,0) circle [radius=0.06];
\node [align=center,align=center] at (6.05,0) {$v_1$};
\draw [thick] (6.3+0.15,0) arc (180:540:0.8);
\draw [thick] (6.3+0.15,0) arc (230:350:0.8);
\draw [thick] (6.3+0.15,0) arc (590:370:0.8);
\draw [thick] (6.3+0.15,0) arc (130:350:0.8);
\draw [thick] (6.3+0.15,0) arc (490:420:0.8);
\draw [thick] (7.42+0.15,-0.1) arc (40:10:0.8);
\node [align=center,align=center] at (8.45+0.15,-0.35) {$.$};
\end{tikzpicture}&&
\end{flalign*}

\end{Example}

\subsection{Abstract free energy and abstract partition function}
\label{sec-abstractpartition}

In the study of field theories,
sometimes it is more convenient to work with the generating series
of correlators.
Now let us define the abstract free energy and abstract partition function
of the abstract QFT for fat graphs.

When considering the abstract free energy and abstract partition function
for fat graphs,
we need to make a slight adjustment on some notations on graphs first.
In this subsection we will abandon the labels $v_1,v_2,\cdots,v_n$ on vertices of fat graphs
(although we still need each vertex to be fixed under automorphisms of graphs).
In other words,
if two graphs differs only by a permutation of the labels on vertices,
then we regard them as the same one
whenever we are dealing with the abstract free energy and abstract partition function.

\begin{Definition}
\label{def-fat-abs-fe}
Define the abstract free energy of genus $g$ to be
the following formal infinite summation of fat graphs:
\be
\cF_g:=
\sum_{n\geq 1}\frac{1}{n!}
\sum_{\mu\in\bZ_{>0}^n}\cF_g^\mu=
\sum_{(m)\in\cP}
\frac{1}{\prod_{j\geq 1} m_j!}\cF_{g}^{\lambda_{(m)}}
\ee
(after forgetting the labels on vertices on the right-hand side),
where $\mathcal P$ is the set of all sequences of non-negative integers
$(m)=(m_1,m_2,\cdots)$ such that
$m_j=0$ for all but a finite number of $j\in\bZ_{>0}$,
and $\lambda_{(m)}=(\lambda_1\geq\lambda_2\geq\cdots\geq\lambda_l)$
is the partition corresponding to $(m)$,
i.e., $m_j$ is the number of $j$ appearing in the sequence
$(\lambda_1,\lambda_2,\cdots,\lambda_l)$.
The infinite sum $\cF_g$ is an element in
\be
\cV_g^c:=
\prod_{\Gamma\in\mathfrak{Fat}_g^{c}}
\bQ\Gamma,
\ee
where
$\mathfrak{Fat}_g^c$
is the set of all connected fat graphs of genus $g$.

Define the abstract free energy $\cF$ to be:
\be
\cF:=\sum_{g\geq 0} g_s^{2g-2} \cF_g,
\ee
where $g_s$ is a formal variable.
\end{Definition}

\begin{Example}
Consider the following fat graph $\Gamma$
(without labels on vertices) of genus zero
(here we omit the cyclic orders on the vertices since
the valences$\leq 2$ and hence there is no confusion):
\ben
\begin{tikzpicture}
\node [align=center,align=center] at (-1,0) {$\Gamma=$};
\draw [fill] (0,0) circle [radius=0.06];
\draw [fill] (1,0) circle [radius=0.06];
\draw [fill] (2,0) circle [radius=0.06];
\draw [thick](0,0)--(2,0);
\node [align=center,align=center] at (2.5,-0.25) {$.$};
\end{tikzpicture}
\een
Let us find the coefficient of $\Gamma$ in $\cF_0$.
Notice that $\Gamma$ can be obtained from three different fat graphs
$\Gamma_1\in\mathfrak{Fat}_0^{(2,1,1),c}$,
$\Gamma_2\in\mathfrak{Fat}_0^{(1,2,1),c}$,
$\Gamma_3\in\mathfrak{Fat}_0^{(1,1,2),c}$
with labels on vertices by forgetting the labels:
\begin{equation*}
\begin{tikzpicture}[scale=0.72]
\node [align=center,align=center] at (-1,0) {$\Gamma_1=$};
\draw [fill] (0,0) circle [radius=0.08];
\draw [fill] (1,0) circle [radius=0.08];
\draw [fill] (2,0) circle [radius=0.08];
\draw [thick](0,0)--(2,0);
\node [align=center,align=center] at (2.5,-0.25) {$,$};
\node [align=center,align=center] at (0,-0.35) {$v_2$};
\node [align=center,align=center] at (1,-0.35) {$v_1$};
\node [align=center,align=center] at (2,-0.35) {$v_3$};
\node [align=center,align=center] at (-1+6,0) {$\Gamma_2=$};
\draw [fill] (0+6,0) circle [radius=0.08];
\draw [fill] (1+6,0) circle [radius=0.08];
\draw [fill] (2+6,0) circle [radius=0.08];
\draw [thick](0+6,0)--(2+6,0);
\node [align=center,align=center] at (2.5+6,-0.25) {$,$};
\node [align=center,align=center] at (0+6,-0.35) {$v_1$};
\node [align=center,align=center] at (1+6,-0.35) {$v_2$};
\node [align=center,align=center] at (2+6,-0.35) {$v_3$};
\node [align=center,align=center] at (-1+6+6,0) {$\Gamma_3=$};
\draw [fill] (0+6+6,0) circle [radius=0.08];
\draw [fill] (1+6+6,0) circle [radius=0.08];
\draw [fill] (2+6+6,0) circle [radius=0.08];
\draw [thick](0+6+6,0)--(2+6+6,0);
\node [align=center,align=center] at (2.5+6+6,-0.25) {$.$};
\node [align=center,align=center] at (0+6+6,-0.35) {$v_1$};
\node [align=center,align=center] at (1+6+6,-0.35) {$v_3$};
\node [align=center,align=center] at (2+6+6,-0.35) {$v_2$};
\end{tikzpicture}
\end{equation*}
Since these three graphs all have trivial automorphism groups,
thus the coefficient of $\Gamma$ in the abstract free energy $\cF_0$ equals:
\ben
\frac{1}{3!}(1+1+1)=\half.
\een

\end{Example}

Now let us define the abstract partition function.
In order to do that,
we need to take disconnected graphs into consideration.
Denote by $\mathfrak{Fat}_g^\mu$ is the set of fat graphs (not necessarily connected)
of genus $g$ and type $\mu$.
Here if $\Gamma=\Gamma_1\sqcup\cdots\sqcup \Gamma_k$ is a disconnected fat graph
where $\Gamma_1,\cdots,\Gamma_k$ are the connected components,
then the automorphism group of $\Gamma$ is defined to be:
\ben
\Aut(\Gamma):=\Aut(\Gamma_1)\times\cdots\times\Aut(\Gamma_k),
\een
and the genus of $\Gamma$ is defined to be:
\ben
g(\Gamma):=g(\Gamma_1)+\cdots+g(\Gamma_k)-k+1.
\een
We define the abstract partition function to be
$\cZ:=\exp(\cF)$ in the following sense:

\begin{Definition}
\label{def-fat-abs-par}
Define the abstract partition function $\cZ$ for fat graphs to be:
\be
\begin{split}
\cZ:=&\sum_{k=0}^\infty \frac{1}{k!}\bigg(
\sum_{g\geq 0} g_s^{2g-2} \cF_g
\bigg)^{k}\\
=&1+\sum_{g\in \bZ}g_s^{2g-2}
\sum_{n\geq 1}\frac{1}{n!}\sum_{\mu\in \bZ_{> 0}^n}
\sum_{\Gamma\in\mathfrak{Fat}_g^\mu}\frac{1}{|\Aut(\Gamma)|}\Gamma,
\end{split}
\ee
where the additional term `1' can be understood as an `empty graph'
(i.e., we formally require $1\sqcup \Gamma:=\Gamma$ for every graph $\Gamma$.
The abstract partition function $\cZ$ is an element in the following space:
\ben
\bQ\cdot 1\times
\prod_{g\in \bZ}\bigg(
g_s^{2g-2}\cdot\prod_{\Gamma\in \mathfrak{Fat}_{g}}
\bQ\Gamma\bigg),
\een
where $\mathfrak{Fat}_g$ is the set of all fat graphs (not necessarily connected)
of genus $g$.

\end{Definition}

\begin{Remark}
Throughout this paper,
by the `product' of two fat graphs $\Gamma_1$ and $\Gamma_2$
we always mean the disconnected graph $\Gamma_1\sqcup \Gamma_2$ represented by
the disjoint union of $\Gamma_1$ and $\Gamma_2$.
\end{Remark}

\subsection{Edge-contracting operator}
\label{sec-fatedge-contr}

In \cite{dmss, wl1, tu4},
those authors dealed with the enumeration problem of fat graphs
using the technique of `contracting an edge'.
Inspired by these works,
in this subsection
let us formulate a linear operator (called the `edge-contracting operator'),
acting on the following linear space:
\be
\cV:=\bigoplus_{g,\mu}\cV_g^{\mu},
\ee
where $\cV_g^{\mu}$ is defined to be
\be
\cV_g^{\mu}:=\bigoplus_{\Gamma\in\mathfrak{Fat}_g^\mu}\bQ\Gamma.
\ee

\begin{Definition}
\label{def-abs-opr}
Define the edge-contracting operator $K_1$ to be the following linear map
on $\cV$:
\be\label{eq-def-opr}
K_1:\quad \cV\to\cV,\quad \Gamma\mapsto\sum_{h\in H(v_1)} \Gamma^h,
\ee
where $H(v_1)$ is the set of all half-edges incident at the vertex $v_1$,
and the graph $\Gamma^h$ is defined as follows:
\begin{itemize}
\item[1)]
If the edge $e$ containing $h$ is not a loop,
then $\Gamma^h$ is obtained from $\Gamma$ by simply contracting $e$
such that the two endpoints of $e$ merges into one new vertex.
The cyclic order at this new vertex is induced by the
cyclic orders at the two endpoints of $e$ in an obvious way,
for example:
\ben
\begin{tikzpicture}[scale=1.25]
\draw [fill] (0,0) circle [radius=0.05];
\draw [fill] (1.5,0) circle [radius=0.05];
\node [align=center,align=center] at (-0.4,0) {$v_1$};
\node [align=center,align=center] at (1.9,0) {$v_j$};
\draw [thick] (0,0) -- (1.5,0);
\draw [thick] (0,0) -- (-0.5,0.5);
\draw [thick] (0,0) -- (-0.5,-0.5);
\draw [thick] (1.5,0) -- (2,0.5);
\draw [thick] (1.5,0) -- (2,-0.5);
\draw [->,>=stealth] (-0.2,-0.1) .. controls (0.3,-0.5) and (0.3,0.5) ..  (-0.2,0.1);
\node [align=center,left] at (-0.5,0.5) {$h_1$};
\node [align=center,left] at (-0.5,-0.5) {$h_2$};
\node [align=center,right] at (2,0.5) {$h_3$};
\node [align=center,right] at (2,-0.5) {$h_4$};
\node [above,align=center] at (0.75,0) {$e$};
\draw [->,>=stealth] (1.3,-0.1) .. controls (1.8,-0.5) and (1.8,0.5) ..  (1.3,0.1);
\draw [thick] (0,0) -- (1.5,0);
\node [align=center,align=center] at (3.25,0) {$\to$};
\draw [fill] (5,0) circle [radius=0.05];
\node [align=center,align=center] at (5,-0.3) {$v_1$};
\draw [thick] (0+5,0) -- (-0.5+5,0.5);
\draw [thick] (0+5,0) -- (-0.5+5,-0.5);
\draw [thick] (5,0) -- (5.5,0.5);
\draw [thick] (5,0) -- (5.5,-0.5);
\draw [->,>=stealth] (-0.2+5,-0.1) .. controls (0.3+5,-0.5) and (0.3+5,0.5) ..  (-0.2+5,0.1);
\node [align=center,left] at (-0.5+5,0.5) {$h_1$};
\node [align=center,left] at (-0.5+5,-0.5) {$h_2$};
\node [align=center,right] at (5.5,0.5) {$h_3$};
\node [align=center,right] at (5.5,-0.5) {$h_4$};
\node [align=center,right] at (5.5,-0.5) {$h_4$};
\node [align=center,right] at (6,-0.4) {$.$};
\end{tikzpicture}
\een
(Such an operation is called the `Whitehead collapse',
see \cite[\S 4.4]{lz}.)
We label $v_1$ on this new vertex,
and relabel $v_2,v_3,\cdots,v_{n-1}$ on the original vertices
$v_2,,\cdots,\hat{v}_j,\cdots,v_{n}$,
where the notation $\hat{v}_j$ means deleting the term $v_j$.

\item[2)]
If the edge $e$ containing $h$ is a loop at the vertex $v_1$,
then $\Gamma^h$ is obtained from $\Gamma$ by
removing the edge $e$,
splitting $v_1$ into two new vertices,
and partitioning the remaining half-edges incident at $v_1$ into two subsets
according to the cyclic order,
for example:
\ben
\begin{tikzpicture}[scale=0.975]
\draw [fill] (0,0) circle [radius=0.05];
\draw [thick] (1,0) circle [radius=1];
\draw [thick] (-1,0) -- (0,0);
\draw [thick] (-1,0.8) -- (0,0);
\draw [thick] (-1,-0.8) -- (0,0);
\draw [thick] (0,0) -- (0.8,0.5);
\draw [thick] (0,0) -- (0.8,-0.5);
\node [align=center,left] at (-1,0) {$h_2$};
\node [align=center,left] at (-1,0.8) {$h_1$};
\node [align=center,left] at (-1,-0.8) {$h_3$};
\node [align=center,right] at (0.8,0.5) {$h_4$};
\node [align=center,right] at (0.8,-0.5) {$h_5$};
\node [align=center,right] at (2,0) {$e$};
\node [align=center,align=center] at (0.4,0) {$v_1$};
\draw [->,>=stealth] (-0.2,-0.1) .. controls (0.3,-0.5) and (0.3,0.5) ..  (-0.2,0.1);
\node [align=center,align=center] at (3,0) {$\mapsto$};
\draw [fill] (5,0) circle [radius=0.05];
\draw [fill] (6,0) circle [radius=0.05];
\node [align=center,align=center] at (5,-0.3) {$v_1$};
\node [align=center,align=center] at (6,-0.3) {$v_2$};
\draw [thick] (-1+5,0) -- (0+5,0);
\draw [thick] (-1+5,0.8) -- (0+5,0);
\draw [thick] (-1+5,-0.8) -- (0+5,0);
\draw [thick] (0+6,0) -- (0.8+6,0.5);
\draw [thick] (0+6,0) -- (0.8+6,-0.5);
\node [align=center,left] at (-1+5,0) {$h_2$};
\node [align=center,left] at (-1+5,0.8) {$h_1$};
\node [align=center,left] at (-1+5,-0.8) {$h_3$};
\node [align=center,right] at (0.8+6,0.5) {$h_4$};
\node [align=center,right] at (0.8+6,-0.5) {$h_5$};
\draw [->,>=stealth] (-0.2+5,-0.1) .. controls (0.3+5,-0.5) and (0.3+5,0.5) ..  (-0.2+5,0.1);
\draw [->,>=stealth] (-0.2+6,-0.1) .. controls (0.3+6,-0.5) and (0.3+6,0.5) ..  (-0.2+6,0.1);
\node [align=center,right] at (0.8+6+0.5,-0.4) {$.$};
\node [align=center,align=center] at (0,0.5) {$h$};
\end{tikzpicture}
\een
In this case,
we label $v_1$ on one of the two new vertices,
such that the half-edges on $v_1$ are those
right after the half-edge $h$
according to the original cyclic order;
and label $v_2$ on the other new vertex.
Finally,
we relabel $v_3,v_4,\cdots,v_{n+1}$ on the
original vertices $v_2,v_3,\cdots,v_n$.

\end{itemize}

\end{Definition}

\begin{Remark}
From the point of view of cellular graphs on surface $\Sigma_g$,
the procedure $2)$ means contracting the $1$-cycle $e$ on $\Sigma_g$
and separating the resulting nodal point.
Notice that in the special case that $e$ bounds a $2$-dimensional disc on $\Sigma_g$,
the new fat graph has two connected components,
and one of them consists of one single vertex,
for example:
\ben
\begin{tikzpicture}[scale=1.275]
\draw [fill] (0,0) circle [radius=0.05];
\draw [thick] (0.5,0) circle [radius=0.5];
\draw [thick] (-1,0) -- (0,0);
\draw [thick] (-1,0.8) -- (0,0);
\draw [thick] (-1,-0.8) -- (0,0);
\node [align=center,left] at (-1,0) {$h_2$};
\node [align=center,left] at (-1,0.8) {$h_1$};
\node [align=center,left] at (-1,-0.8) {$h_3$};
\node [align=center,right] at (1,0) {$e$};
\node [align=center,align=center] at (0.4,0) {$v_1$};
\draw [->,>=stealth] (-0.2,-0.1) .. controls (0.3,-0.5) and (0.3,0.5) ..  (-0.2,0.1);
\node [align=center,align=center] at (2,0) {$\mapsto$};
\draw [fill] (4,0) circle [radius=0.05];
\draw [fill] (5,0) circle [radius=0.05];
\node [align=center,align=center] at (4,-0.3) {$v_1$};
\node [align=center,align=center] at (5,-0.3) {$v_2$};
\draw [thick] (-1+4,0) -- (0+4,0);
\draw [thick] (-1+4,0.8) -- (0+4,0);
\draw [thick] (-1+4,-0.8) -- (0+4,0);
\node [align=center,left] at (-1+4,0) {$h_2$};
\node [align=center,left] at (-1+4,0.8) {$h_1$};
\node [align=center,left] at (-1+4,-0.8) {$h_3$};
\draw [->,>=stealth] (-0.2+4,-0.1) .. controls (0.3+4,-0.5) and (0.3+4,0.5) ..  (-0.2+4,0.1);
\node [align=center,left] at (5.5,-0.3) {$.$};
\node [align=center,align=center] at (0,0.4) {$h$};
\end{tikzpicture}
\een
\end{Remark}

\begin{Example}
We give some examples of the operator $K_1$:
\begin{flalign*}
\begin{tikzpicture}[scale=1.1]
\draw [fill] (0,0) circle [radius=0.05];
\node [align=center,align=center] at (0,-0.4) {$v_1$};
\node [align=center,align=center] at (-0.8,0) {$K_1\bigg($};
\draw [thick] (0,0) .. controls (1.5,-1.2) and (1.5,1.2) ..  (0,0);
\draw [->,>=stealth] (-0.2,-0.1) .. controls (0.3,-0.5) and (0.3,0.5) ..  (-0.2,0.1);
\node [align=center,align=center] at (1.9,0) {$\bigg)=2\bigg($};
\draw [fill] (2.7,0) circle [radius=0.05];
\draw [fill] (3.2,0) circle [radius=0.05];
\node [align=center,align=center] at (2.7,-0.35) {$v_1$};
\node [align=center,align=center] at (3.2,-0.35) {$v_2$};
\node [align=center,align=center] at (3.6,0) {$\bigg)$};
\node [align=center,align=center] at (4,-0.25) {$,$};
\end{tikzpicture}&&
\end{flalign*}
\begin{flalign*}
\begin{split}
&\begin{tikzpicture}[scale=0.925]
\draw [fill] (0-0.1,0) circle [radius=0.06];
\node [align=center,align=center] at (-0.1,-0.4) {$v_1$};
\node [align=center,align=center] at (-1.9,0) {$K_1\bigg($};
\draw [thick] (0-0.1,0) .. controls (1.8-0.1,-1.8) and (1.8-0.1,1.8) ..  (0-0.1,0);
\draw [thick] (0-0.1,0) .. controls (1.2-0.1,-0.8) and (1.2-0.1,0.8) ..  (0-0.1,0);
\draw [->,>=stealth] (-0.2-0.1,-0.1) .. controls (0.3-0.1,-0.5) and (0.3-0.1,0.5) ..  (-0.2-0.1,0.1);
\draw [thick] (0-0.1,0) .. controls (-1.5-0.1,-1.2) and (-1.5-0.1,1.2) ..  (0-0.1,0);
\node [align=center,align=center] at (2.1,0) {$\bigg)=2\bigg($};
\draw [fill] (2.9+0.1,0) circle [radius=0.06];
\draw [fill] (3.4+0.1,0) circle [radius=0.06];
\node [align=center,align=center] at (3,-0.35) {$v_1$};
\node [align=center,align=center] at (3.5,-0.35) {$v_2$};
\draw [thick] (0+3.4+0.1,0) .. controls (1.8+3.4+0.1,-1.8) and (1.8+3.4+0.1,1.8) ..  (0+3.4+0.1,0);
\draw [thick] (0+3.4+0.1,0) .. controls (1.2+3.4+0.1,-0.8) and (1.2+3.4+0.1,0.8) ..  (0+3.4+0.1,0);
\draw [->,>=stealth] (-0.2+3.4+0.1,-0.1) .. controls (0.3+3.4+0.1,-0.5) and (0.3+3.4+0.1,0.5) ..  (-0.2+3.4+0.1,0.1);
\node [align=center,align=center] at (5.6+0.1,0) {$\bigg)+2\bigg($};
\draw [fill] (2.9+3.7,0) circle [radius=0.06];
\draw [fill] (3.4+0.1+3.6,0) circle [radius=0.06];
\node [align=center,align=center] at (2.9+3.7,-0.35) {$v_2$};
\node [align=center,align=center] at (3.4+0.1+3.6,-0.35) {$v_1$};
\draw [thick] (0+3.4+0.1+3.6,0) .. controls (1.8+3.4+0.1+3.6,-1.8) and (1.8+3.4+0.1+3.6,1.8) ..  (0+3.4+0.1+3.6,0);
\draw [thick] (0+3.4+0.1+3.6,0) .. controls (1.2+3.4+0.1+3.6,-0.8) and (1.2+3.4+0.1+3.6,0.8) ..  (0+3.4+0.1+3.6,0);
\draw [->,>=stealth] (-0.2+3.4+0.1+3.6,-0.1) .. controls (0.3+3.4+0.1+3.6,-0.5) and (0.3+3.4+0.1+3.6,0.5) ..  (-0.2+3.4+0.1+3.6,0.1);
\node [align=center,align=center] at (5.6+0.1+3.3,0) {$\bigg)$};
\end{tikzpicture}\\
&\qquad\qquad\qquad\qquad\qquad\qquad\qquad
\begin{tikzpicture}[scale=0.925]
\node [align=center,align=center] at (5.8,0) {$+2\bigg($};
\draw [fill] (7.5,0) circle [radius=0.06];
\draw [fill] (8.1,0) circle [radius=0.06];
\node [align=center,align=center] at (7.5,-0.4) {$v_1$};
\node [align=center,align=center] at (8.1,-0.4) {$v_2$};
\draw [thick] (7.5,0) .. controls (6,-1.2) and (6,1.2) ..  (7.5,0);
\draw [thick] (8.1,0) .. controls (9.6,-1.2) and (9.6,1.2) ..  (8.1,0);
\node [align=center,align=center] at (9.6,0) {$\bigg)$};
\draw [->,>=stealth] (7.3,-0.1) .. controls (7.8,-0.5) and (7.8,0.5) ..  (7.3,0.1);
\draw [->,>=stealth] (7.9,-0.1) .. controls (8.4,-0.5) and (8.4,0.5) ..  (7.9,0.1);
\node [align=center,align=center] at (10,-0.25) {$,$};
\end{tikzpicture}
\end{split}&&
\end{flalign*}
\begin{flalign*}
\begin{tikzpicture}[scale=1.1]
\draw [fill] (0.5,0) circle [radius=0.05];
\draw [fill] (1.5,0) circle [radius=0.05];
\node [align=center,align=center] at (-0.05,0) {$K_1\bigg($};
\draw [thick] (0.5,0) -- (1.5,0);
\draw [thick] (1.5,0) .. controls (3,1.2) and (3,-1.2) ..  (1.5,0);
\draw [->,>=stealth] (1.3,-0.1) .. controls (1.8,-0.5) and (1.8,0.5) ..  (1.3,0.1);
\node [align=center,align=center] at (0.5,-0.3) {$v_2$};
\node [align=center,align=center] at (1.5,-0.3) {$v_1$};
\node [align=center,align=center] at (3.25,0) {$\bigg)=\bigg($};
\draw [fill] (4.2-0.3,0) circle [radius=0.05];
\draw [fill] (4.7-0.3,0) circle [radius=0.05];
\draw [fill] (5.2-0.3,0) circle [radius=0.05];
\node [align=center,align=center] at (4.2-0.3,-0.3) {$v_3$};
\node [align=center,align=center] at (4.7-0.3,-0.3) {$v_1$};
\node [align=center,align=center] at (5.2-0.3,-0.3) {$v_2$};
\draw [thick] (4.2-0.3,0) -- (4.7-0.3,0);
\node [align=center,align=center] at (5.6,0) {$\bigg)+\bigg($};
\draw [fill] (4.2-0.3+2.3,0) circle [radius=0.05];
\draw [fill] (4.7-0.3+2.3,0) circle [radius=0.05];
\draw [fill] (5.2-0.3+2.3,0) circle [radius=0.05];
\node [align=center,align=center] at (4.2-0.3+2.3,-0.3) {$v_3$};
\node [align=center,align=center] at (4.7-0.3+2.3,-0.3) {$v_2$};
\node [align=center,align=center] at (5.2-0.3+2.3,-0.3) {$v_1$};
\draw [thick] (4.2-0.3+2.3,0) -- (4.7-0.3+2.3,0);
\node [align=center,align=center] at (5.7+4-2,0) {$\bigg)+$};
\draw [fill] (6.3+4-2,0) circle [radius=0.05];
\node [align=center,align=center] at (6.3+4-2,-0.3) {$v_1$};
\draw [thick] (6.3+4-2,0) .. controls (7.8+4-2,-1.2) and (7.8+4-2,1.2) ..  (6.3+4-2,0);
\draw [->,>=stealth] (6.1+4-2,-0.1) .. controls (6.6+4-2,-0.5) and (6.6+4-2,0.5) ..  (6.1+4-2,0.1);
\node [align=center,align=center] at (8+4-2.2,-0.25) {$,$};
\end{tikzpicture}&&
\end{flalign*}
\begin{flalign*}
\begin{tikzpicture}[scale=0.925]
\draw [fill] (0,0) circle [radius=0.06];
\draw [->,>=stealth] (-0.2,-0.1) .. controls (0.3,-0.5) and (0.3,0.5) ..  (-0.2,0.1);
\node [align=center,align=center] at (-1.6,0) {$K_1\biggl($};
\draw [thick] (0,0) arc (180:540:0.8);
\draw [thick] (0,0) arc (90:10:0.8);
\draw [thick] (0,0) arc (90:350:0.8);
\node [align=center,align=center] at (2.6,0) {$\bigg)=4\bigg($};
\draw [fill] (3.45,0) circle [radius=0.06];
\draw [fill] (4.35,0) circle [radius=0.06];
\draw [thick] (3.45,0) -- (4.35,0);
\node [align=center,align=center] at (4.8,0) {$\bigg)$};
\node [align=center,align=center] at (5.2,-0.25) {$.$};
\end{tikzpicture}&&
\end{flalign*}
\end{Example}

\subsection{A quadratic recursion for the abstract correlators}
\label{sec-abs-qrec}

In this subsection,
we derive a quadratic recursion relation for the abstract correlators $\cF_g^\mu$
using the edge-contracting operator $K_1$.

\begin{Example}
First let us check some examples of
the action of the edge-contraction operator on abstract correlators.
Using the expressions presented in Example \ref{eg-abstract-fe},
we can easily make the following observations:
\ben
&&K_1 \cF_0^{(2)}\text{ `=' }(\cF_0^{(0)})^2,\\
&&K_1 \cF_0^{(4)}\text{ `=' }4\cF_0^{(0)}\cF_0^{(2)},\\
&&K_1 \cF_0^{(6)}\text{ `=' }8\cF_0^{(0)}\cF_0^{(4)}+4(\cF_0^{(2)})^2,\\
&&K_1 \cF_0^{(8)}\text{ `=' }
12\cF_0^{(0)}\cF_0^{(6)}+16\cF_0^{(2)}\cF_0^{(4)},\\
&&K_1 \cF_0^{(1,1)}\text{ `=' }\cF_0^{(0)},\\
&&K_1 \cF_0^{(3,1)}\text{ `=' }2\cF_0^{(0)}\cF_0^{(1,1)}+2\cF_0^{(2)},\\
&&K_1 \cF_1^{(4)}\text{ `=' }\cF_0^{(1,1)}.
\een
Here we use `=' because the above equalities do not strictly hold--
they hold only up to a relabelling of the vertices
on the right-hand side.

\end{Example}

The relations in the above example can be modified so that
they hold strictly.
In fact,
in what follows we will describe a method to relabel the vertices on fat graphs,
and in this way we are able to reformulate the right-hand sides
such that the labels match up to the left-hand sides.

Let $I=\{i_1,i_2,\cdots,i_n\}\subset \bZ_{>0}$ be
a finite set of indices with $i_1<i_2<\cdots<i_n$,
and $\mu=(\mu_1,\mu_2\cdots,\mu_n)\in \bZ_{>0}$.
Denote by $\mathfrak{Fat}_{g,I}^{\mu,c}$ the set of connected fat graphs
obtained from graphs in $\mathfrak{Fat}_g^{\mu,c}$ by
replacing the labels $v_1,v_2,\cdots,v_n$
by new labels $v_{i_1},v_{i_2},\cdots,v_{i_n}$.
Now we denote by $\cF_{g,I}^{\mu}$ the abstract correlators
with new labels $v_{i_1},\cdots,v_{i_n}$:
\be
\cF_{g,I}^{\mu}:=
\sum_{\Gamma\in\mathfrak{Fat}_{g,I}^{\mu,c}}
\frac{1}{|\Aut(\Gamma)|}\Gamma.
\ee
Also,
we formally denote:
\begin{equation*}
\begin{tikzpicture}
\draw [fill] (0.15,0) circle [radius=0.06];
\node [align=center,align=center] at (-1.1,0) {$\cF_{0,\{k\}}^{(0)}:=$};
\draw [thick] (0,0) .. controls (0,0) and (0,0) ..  (0,0);
\node [align=center,align=center] at (0.15,-0.3) {$v_k$};
\node [align=center,align=center] at (0.55,-0.25) {$.$};
\end{tikzpicture}
\end{equation*}

Given a positive integer $n$,
denote $[n]:=\{1,2,\cdots,n\}$.
Then our main theorem in this subsection is:

\begin{Theorem}
\label{thm-abstract-rec}
The following quadratic recursion relation holds:
\be
\label{eq-abstract-rec}
\begin{split}
&K_1 \cF_g^{\mu}=
\delta_{g,0}\delta_{n,1}\delta_{\mu_1,2}\cF_{0,\{1\}}^{(0)}\cF_{0,\{2\}}^{(0)}+
\sum_{j=2}^n(\mu_1+\mu_j-2)\cF_g^{(\mu_1+\mu_j-2,\mu_{[n]\backslash\{1,j\}})}\\
&\quad
+\sum_{\substack{\alpha+\beta=\mu_1-2\\\alpha\geq 1,\beta\geq 1}}
\alpha\beta\bigg(
\cF_{g-1}^{(\alpha,\beta,\mu_{[n]\backslash\{1\}})}
+\sum_{\substack{g_1+g_2=g\\I\sqcup J=[n]\backslash\{1\}}}
\cF_{g_1,\{1\}\sqcup (I+1)}^{(\alpha,\mu_I)}\cF_{g_2,\{2\}\sqcup (J+1)}^{(\beta,\mu_J)}
\bigg)\\
&\quad
+ (\mu_1-2)\cdot
\cF_{0,\{1\}}^{(0)}\cF_{g,[n+1]\backslash\{1\}}^{(\mu_1-2,\mu_{[n]\backslash\{1\}})}
+ (\mu_1-2)\cdot
\cF_{0,\{2\}}^{(0)}\cF_{g,[n+1]\backslash\{2\}}^{(\mu_1-2,\mu_{[n]\backslash\{1\}})},
\end{split}
\ee
where we use the convention $\cF_{g}^{(\mu_1-2,\mu_{[n]\backslash\{1\}})}:=0$
for $\mu_1< 2$;
and for a set of indices $I=\{i_1,\cdots,i_k\}$ with $i_1<i_2<\cdots<i_n$,
we denote:
\ben
I+1:=\{i_1+1,\cdots,i_k+1\}.
\een

\end{Theorem}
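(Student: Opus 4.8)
The plan is to establish \eqref{eq-abstract-rec} by comparing, for each fat graph, its coefficient on the two sides; the bridge between them is the operation inverse to $K_1$, a \emph{vertex-splitting} at $v_1$. First I would rewrite the left-hand side in a form adapted to this. Unwinding the definitions,
\[
K_1\cF_g^{\mu}=\sum_{\Gamma\in\mathfrak{Fat}_g^{\mu,c}}\frac{1}{|\Aut(\Gamma)|}\sum_{h\in H(v_1)}\Gamma^h .
\]
Any $\phi\in\Aut(\Gamma)$ permutes $H(v_1)$ and, because it fixes all vertices and respects the cyclic orders, induces an isomorphism between $\Gamma^h$ and $\Gamma^{\phi(h)}$ that preserves the labels assigned in Definition \ref{def-abs-opr}, so that $\Gamma^{\phi(h)}=\Gamma^h$ in $\cV$. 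Grouping the inner sum into $\Aut(\Gamma)$-orbits of half-edges at $v_1$ and using orbit--stabilizer, one obtains
\[
K_1\cF_g^{\mu}=\sum_{[(\Gamma,h)]}\frac{1}{|\Aut(\Gamma,h)|}\,\Gamma^h ,
\]
the sum over isomorphism classes of pairs $(\Gamma,h)$ with $\Gamma\in\mathfrak{Fat}_g^{\mu,c}$ and $h$ a half-edge at $v_1$, where $\Aut(\Gamma,h)\subseteq\Aut(\Gamma)$ denotes the stabilizer of $h$.

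Next I would split this sum according to the type of the edge $e$ containing $h$, which recovers the four groups of terms on the right-hand side. If $e$ is not a loop and joins $v_1$ to some $v_j$ with $j\geq2$, then $\Gamma^h$ is connected of genus $g$ and type $(\mu_1+\mu_j-2,\mu_{[n]\backslash\{1,j\}})$. If $e$ is a loop at $v_1$ whose removal-and-split leaves $\Gamma^h$ connected, then $\Gamma^h$ has genus $g-1$ and type $(\alpha,\beta,\mu_{[n]\backslash\{1\}})$ with $\alpha,\beta\geq1$ and $\alpha+\beta=\mu_1-2$. If $e$ is a loop at $v_1$ whose removal disconnects $\Gamma$, then $\Gamma^h=\Gamma_1'\sqcup\Gamma_2'$ with $v_1\in\Gamma_1'$ of valence $\alpha$ and $v_2\in\Gamma_2'$ of valence $\beta$, $\alpha+\beta=\mu_1-2$, and genera summing to $g$; this case specializes, when the loop bounds a disc so that $\alpha$ or $\beta$ is $0$, to the terms carrying $\cF_{0,\{1\}}^{(0)}$ resp. $\cF_{0,\{2\}}^{(0)}$, and when moreover $(g,n,\mu)=(0,1,(2))$ to the term $\delta_{g,0}\delta_{n,1}\delta_{\mu_1,2}\cF_{0,\{1\}}^{(0)}\cF_{0,\{2\}}^{(0)}$. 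In each loop case the rule of Definition \ref{def-abs-opr} that ``$v_1$ carries the half-edges immediately after $h$'' dictates which product factor is $\cF_{g_1,\{1\}\sqcup(I+1)}^{(\alpha,\mu_I)}$ and which is $\cF_{g_2,\{2\}\sqcup(J+1)}^{(\beta,\mu_J)}$.

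Finally I would match coefficients by counting the fibres of $(\Gamma,h)\mapsto\Gamma^h$, that is, by reconstructing $(\Gamma,h)$ from a target graph $\Gamma'$ via vertex-splitting. In the first case one cuts the cyclic sequence of half-edges at the relevant vertex of $\Gamma'$ into two consecutive arcs of sizes $\mu_1-1$ and $\mu_j-1$ and inserts the two half-edges of a new edge at the cut, the admissible cuts being counted by $\mu_1+\mu_j-2$. In the loop cases one merges two vertices of $\Gamma'$, of valences $\alpha$ and $\beta$, and inserts the two half-edges of a loop, the cyclic order at the merged vertex being assembled by reading each of the two cyclic sequences from a chosen starting point; this gives $\alpha\cdot\beta$ reconstructions, degenerating to $\mu_1-2$ when one of the two blocks is empty. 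Let $N(\Gamma')$ denote this number of splitting data for the case at hand; then $\Aut(\Gamma')$ acts on the set of splitting data with orbits the classes $[(\Gamma,h)]$ satisfying $\Gamma^h\cong\Gamma'$ and with stabilizers of order $|\Aut(\Gamma,h)|$, so the weighted fibre sum $\sum 1/|\Aut(\Gamma,h)|$ equals $N(\Gamma')/|\Aut(\Gamma')|$, which is exactly the coefficient of $\Gamma'$ read off from \eqref{eq-abstract-rec}.

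\textbf{The main obstacle} I expect is not the generic case but the careful handling of the degenerations: loops bounding a disc, which split off an isolated valence-zero vertex and so produce the separate $(\mu_1-2)$-terms rather than an $\alpha\beta$-term with a vanishing factor; the base case $(g,n,\mu)=(0,1,(2))$; and low-valence vertices where some of the arcs above are empty. One must also check that each disconnected splitting in the third case is produced exactly once, with the two sides correctly labelled, so that nothing is double-counted across the symmetric pair of product terms and the sum over $g_1+g_2=g$, $I\sqcup J=[n]\backslash\{1\}$ really is a sum over ordered data. The conventions $\cF_{g}^{(\mu_1-2,\mu_{[n]\backslash\{1\}})}:=0$ for $\mu_1<2$ and $\cF_g^\mu:=0$ for non-realizable types $(g,\mu)$ are what make these boundary contributions vanish consistently on both sides.
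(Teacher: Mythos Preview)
Your argument is correct and rests on the same underlying bijection as the paper's proof---contracting the marked edge at $v_1$ and reading off the three topological cases (non-loop, genus-reducing loop, separating loop). The difference is only in how automorphism weights are handled. The paper rigidifies completely: it introduces ``arrowed'' fat graphs $f_g^\mu=\sum_{\vec\Gamma}\vec\Gamma$ with a distinguished half-edge at \emph{every} vertex, so that all automorphism groups become trivial and the edge-contraction becomes an honest bijection between sets of coefficient-$1$ objects; the passage back to $\cF_g^\mu$ then goes through the single identity $\pi(f_g^\mu)=\mu_1\cdots\mu_n\,\cF_g^\mu$. You instead rigidify only at $v_1$, work with isomorphism classes of pairs $(\Gamma,h)$, and invoke orbit--stabilizer twice (once to rewrite $K_1\cF_g^\mu$, once to compute the weighted fibre over each target $\Gamma'$). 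Both routes are standard; the paper's buys a cleaner bijection at the cost of an auxiliary construction, while yours stays closer to the objects in the statement at the cost of tracking stabilizers. Your identification of the degenerate cases (loops bounding a disc, the base case $\mu=(2)$, and the ordered nature of the $I\sqcup J$ sum) matches what is needed, and the paper's proof indeed leaves these to the relabelling step.
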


\begin{proof}

The proof is similar to the proof of the counting problem \cite[Theorem 3.3]{dmss},
see also \cite{tu2, wl1}.
The authors of those literatures have considered the enumeration problem
of fat graphs,
and obtained a similar recursion formula for
the weighted number of graphs of a given type.
The case we are dealing with here can be proved using the same method.
Since the labels on vertices can be automatically fixed by
analysing the valences,
thus in what follows we will only consider the edge-contraction procedure
without relabelling of vertices,
i.e.,
we will construct the correspondence of the graphs (without labels)
appearing in two sides of the above recursion
and compare their coefficients,
then the conclusion follows easily by analyzing the relabelling procedure.

Define $f_g^\mu$ to be the following linear combination:
\ben
f_g^\mu:=\sum_{\vec{\Gamma}\in \vec{\Gamma}_g^{\mu,c}}\vec{\Gamma}
\een
for every $\mu\not=(0)$,
where $\vec{\Gamma}_g^{\mu,c}$ is the set of
fat graphs of genus $g$ and type $\mu=(\mu_1,\cdots,\mu_n)$,
with an arrow placed on one of the half edge
incident at $v_i$ for every vertex $v_i$.
For example,
\begin{equation*}
\begin{split}
&\begin{tikzpicture}[scale=0.9]
\draw [fill] (0,0) circle [radius=0.06];
\node [align=center,align=center] at (-2,0) {$f_0^{(6)}=$};
\draw [thick] (0,0) .. controls (1.8,-1.8) and (1.8,1.8) ..  (0,0);
\draw [thick] (0,0) .. controls (1.2,-0.8) and (1.2,0.8) ..  (0,0);
\draw [->,>=stealth] (-0.2,-0.1) .. controls (0.3,-0.5) and (0.3,0.5) ..  (-0.2,0.1);
\draw [thick] (0,0) .. controls (-1.5,-1.2) and (-1.5,1.2) ..  (0,0);
\draw [thick,->,>=stealth] (0.3,0.27) -- (0.4,0.35);
\node [align=center,align=center] at (2,0) {$+$};
\draw [fill] (0+3.8,0) circle [radius=0.06];
\draw [thick] (0+3.8,0) .. controls (1.8+3.8,-1.8) and (1.8+3.8,1.8) ..  (0+3.8,0);
\draw [thick] (0+3.8,0) .. controls (1.2+3.8,-0.8) and (1.2+3.8,0.8) ..  (0+3.8,0);
\draw [->,>=stealth] (-0.2+3.8,-0.1) .. controls (0.3+3.8,-0.5) and (0.3+3.8,0.5) ..  (-0.2+3.8,0.1);
\draw [thick] (0+3.8,0) .. controls (-1.5+3.8,-1.2) and (-1.5+3.8,1.2) ..  (0+3.8,0);
\draw [thick,->,>=stealth] (4.1,0.16) -- (4.2,0.2);
\node [align=center,align=center] at (2+3.8,0) {$+$};
\draw [fill] (0+3.8+3.8,0) circle [radius=0.06];
\draw [thick] (0+3.8+3.8,0) .. controls (1.8+3.8+3.8,-1.8) and (1.8+3.8+3.8,1.8) ..  (0+3.8+3.8,0);
\draw [thick] (0+3.8+3.8,0) .. controls (1.2+3.8+3.8,-0.8) and (1.2+3.8+3.8,0.8) ..  (0+3.8+3.8,0);
\draw [->,>=stealth] (-0.2+3.8+3.8,-0.1) .. controls (0.3+3.8+3.8,-0.5) and (0.3+3.8+3.8,0.5) ..  (-0.2+3.8+3.8,0.1);
\draw [thick] (0+3.8+3.8,0) .. controls (-1.5+3.8+3.8,-1.2) and (-1.5+3.8+3.8,1.2) ..  (0+3.8+3.8,0);
\draw [thick,->,>=stealth] (7.3,0.2) -- (7.2,0.25);
\end{tikzpicture}\\
&\qquad\qquad\qquad\begin{tikzpicture}[scale=0.9]
\node [align=center,align=center] at (2,0) {$+$};
\draw [thick] (3.7,0) .. controls (2.2,-1.5) and (2.2,0.8) ..  (3.7,0);
\draw [thick] (3.7,0) .. controls (5.2,-1.5) and (5.2,0.8) ..  (3.7,0);
\draw [thick] (3.7,0) .. controls (2.5,1.5) and (4.9,1.5) ..  (3.7,0);
\draw [->,>=stealth] (3.5,-0.1) .. controls (4,-0.5) and (4,0.5) ..  (3.5,0.1);
\draw [fill] (3.7,0) circle [radius=0.06];
\draw [thick,->,>=stealth] (3.9,0.3) -- (3.95,0.4);
\node [align=center,align=center] at (2+3.5,0) {$+$};
\draw [thick] (3.7+3.5,0) .. controls (2.2+3.5,-1.5) and (2.2+3.5,0.8) ..  (3.7+3.5,0);
\draw [thick] (3.7+3.5,0) .. controls (5.2+3.5,-1.5) and (5.2+3.5,0.8) ..  (3.7+3.5,0);
\draw [thick] (3.7+3.5,0) .. controls (2.5+3.5,1.5) and (4.9+3.5,1.5) ..  (3.7+3.5,0);
\draw [->,>=stealth] (3.5+3.5,-0.1) .. controls (4+3.5,-0.5) and (4+3.5,0.5) ..  (3.5+3.5,0.1);
\draw [fill] (3.7+3.5,0) circle [radius=0.06];
\draw [thick,->,>=stealth] (7,0.3) -- (6.95,0.4);
\node [align=center,align=center] at (8.75,-0.25) {$.$};
\end{tikzpicture}
\end{split}
\end{equation*}
Let $\pi$ be the operation of `forgetting the arrows on half-edges',
then for every $\mu\not=(0)$ we have:
\be\label{eq-integcoeff}
\pi(f_g^\mu)=\mu_1\cdots\mu_n\cdot\cF_g^\mu.
\ee
In the special case $\mu=(0)$,
we define $\pi f_g^{(0)}=f_g^{(0)}:=\delta_{g,0}\cdot \cF_0^{(0)}$.

Now consider the process of shrinking the edge $e$ which contains
the arrowed half-edge incident at $v_1$.
There are two cases:
\begin{itemize}
\item[1)]
If this edge $e$ connects $v_1$ and $v_j$ with $j\not=1$,
then we obtain a new vertex of valence $\mu_1+\mu_j-2$.
Next we place the arrow of this new vertex on the half-edge next to $e$
around $p_1$ with respect to the counterclockwise cyclic order,
and delete the original arrow on $v_j$.
\item[2)]
If this edge $e$ is a loop incident at $v_1$,
then we separate the vertex $v_1$ into two new vertices
with total valence $\mu_1-2$.
The two new arrows on these two new vertices are placed on
the two half-edges next to the loop $e$ around $p_1$
with respect to the counterclockwise cyclic order.
\end{itemize}

Similar to the case of \cite[Theorem 3.3]{dmss},
the above procedure gives us a one-to-one correspondence from
$\vec{\Gamma}_g^{(\mu_1,\cdots,\mu_n),c}$ to the set:
\ben
&&\bigg(
\bigcup_{j=2}^n\vec{\Gamma}_g^{(\mu_1+\mu_j-2,\mu_{[n]\backslash\{1,j\}}),c}
\bigg)\cup\\
&&\bigg[
\bigcup_{\alpha+\beta=\mu_1-2}
\vec{\Gamma}_{g-1}^{(\alpha,\beta,\mu_{[n]\backslash\{1\}}),c}
\cup\bigg(\bigcup_{\substack{g_1+g_2=g\\I\sqcup J=[n]\backslash\{1\}}}
\vec{\Gamma}^{(\alpha,\mu_I),c}\oplus\vec{\Gamma}^{(\beta,\mu_J),c}
\bigg)\bigg].
\een
Now we understand the above procedure of `shrinking the arrowed edge'
using the operator $K_1$ and compare the coefficients,
we may obtain:
\be\label{eq-proof-rec}
\begin{split}
K_1\pi f_g^\mu=&\sum_{j=2}^n\mu_j \pi
f_g^{(\mu_1+\mu_j-2,\mu_{[n]\backslash\{1,j\}})}\\
&+\sum_{\alpha+\beta=\mu_1-2}\bigg(
\pi f_{g-1}^{(\alpha,\beta,\mu_{[n]\backslash\{1\}})}
+\sum_{\substack{g_1+g_2=g\\I\sqcup J=[n]\backslash\{1\}}}
\pi f_{g_1}^{(\alpha,\mu_I)}\pi f_{g_2}^{(\beta,\mu_J)}
\bigg).
\end{split}
\ee
Recall that we have \eqref{eq-integcoeff} for $\mu\not=(0)$,
and $f_0^{(0)}=\cF_0^{(0)}$  in the unstable case $\mu=(0)$,
thus \eqref{eq-proof-rec} implies the recursion relation \eqref{eq-abstract-rec}
once we perform the relabelling of the vertices.

\end{proof}

\begin{Example}

Let us present an example
to show how the quadratic recursion relation
\eqref{eq-abstract-rec} works.
The explicit expression for $\cF_0^{(4,1,1)}$ is:
\ben
\begin{tikzpicture}[scale=1.15]
\node [align=center,align=center] at (-2,0) {$\cF_0^{(4,1,1)}=$};
\draw [fill] (0,0) circle [radius=0.05];
\draw [thick] (0,0) .. controls (1.5,-1.2) and (1.5,1.2) ..  (0,0);
\draw [->,>=stealth] (-0.2,-0.1) .. controls (0.3,-0.5) and (0.3,0.5) ..  (-0.2,0.1);
\draw [thick] (-0.6,0.3) -- (0,0);
\draw [thick] (-0.6,-0.3) -- (0,0);
\draw [fill] (-0.6,0.3) circle [radius=0.05];
\draw [fill] (-0.6,-0.3) circle [radius=0.05];
\node [align=center,left] at (-0.6,0.3) {$v_2$};
\node [align=center,left] at (-0.6,-0.3) {$v_3$};
\node [align=center,align=center] at (0,-0.3) {$v_1$};
\node [align=center,align=center] at (1.6,0) {$+$};
\draw [fill] (0+3,0) circle [radius=0.05];
\draw [thick] (0+3,0) .. controls (1.5+3,-1.2) and (1.5+3,1.2) ..  (0+3,0);
\draw [->,>=stealth] (-0.2+3,-0.1) .. controls (0.3+3,-0.5) and (0.3+3,0.5) ..  (-0.2+3,0.1);
\draw [thick] (-0.6+3,0.3) -- (0+3,0);
\draw [thick] (-0.6+3,-0.3) -- (0+3,0);
\draw [fill] (-0.6+3,0.3) circle [radius=0.05];
\draw [fill] (-0.6+3,-0.3) circle [radius=0.05];
\node [align=center,left] at (-0.6+3,0.3) {$v_3$};
\node [align=center,left] at (-0.6+3,-0.3) {$v_2$};
\node [align=center,align=center] at (0+3,-0.3) {$v_1$};
\node [align=center,align=center] at (4.6,0) {$+$};
\draw [fill] (0+6,0) circle [radius=0.05];
\draw [thick] (0+6,0) .. controls (1.8+6,-1.5) and (1.8+6,1.5) ..  (0+6,0);
\draw [->,>=stealth] (-0.2+6,-0.1) .. controls (0.3+6,-0.5) and (0.3+6,0.5) ..  (-0.2+6,0.1);
\draw [thick] (-0.6+6,0) -- (0+6,0);
\draw [thick] (0.6+6,0) -- (0+6,0);
\draw [fill] (-0.6+6,0) circle [radius=0.05];
\draw [fill] (0.6+6,0) circle [radius=0.05];
\node [align=center,left] at (-0.6+6,0) {$v_2$};
\node [align=center,right] at (0.6+6,0) {$v_3$};
\node [align=center,align=center] at (0+6,-0.3) {$v_1$};
\node [align=center,align=center] at (7.75,-0.25) {$.$};
\end{tikzpicture}
\een
Then direct computation gives us:
\begin{equation*}
\begin{split}
&\begin{tikzpicture}[scale=1.075]
\node [align=center,align=center] at (-1.9,0) {$K_1\cF_0^{(4,1,1)}=6$};
\draw [fill] (0,0) circle [radius=0.05];
\node [align=center,align=center] at (0,-0.3) {$v_1$};
\node [align=center,align=center] at (-0.6,-0.3) {$v_2$};
\draw [fill] (-0.6,0) circle [radius=0.05];
\draw [thick] (0,0) .. controls (1.5,-1.2) and (1.5,1.2) ..  (0,0);
\draw [->,>=stealth] (-0.2,-0.1) .. controls (0.3,-0.5) and (0.3,0.5) ..  (-0.2,0.1);
\draw [thick] (-0.6,0) -- (0,0);
\node [align=center,align=center] at (1.8,0) {$+2\bigg($};
\draw [fill] (2.4,0) circle [radius=0.05];
\draw [fill] (3,0) circle [radius=0.05];
\draw [fill] (3.6,0) circle [radius=0.05];
\draw [fill] (4.2,0) circle [radius=0.05];
\draw [thick] (2.4,0) -- (3.6,0);
\node [below,align=center] at (3,0) {$v_1$};
\node [below,align=center] at (2.4,0) {$v_3$};
\node [below,align=center] at (3.6,0) {$v_4$};
\node [below,align=center] at (4.2,0) {$v_2$};
\node [align=center,align=center] at (5,0) {$\bigg)+2\bigg($};
\draw [fill] (5.8,0) circle [radius=0.05];
\draw [fill] (6.4,0) circle [radius=0.05];
\draw [fill] (7,0) circle [radius=0.05];
\draw [fill] (7.6,0) circle [radius=0.05];
\node [below,align=center] at (7.6,0) {$v_1$};
\node [below,align=center] at (5.8,0) {$v_3$};
\node [below,align=center] at (7,0) {$v_4$};
\node [below,align=center] at (6.4,0) {$v_2$};
\node [align=center,align=center] at (8,0) {$\bigg)$};
\draw [thick] (5.8,0) -- (7,0);
\end{tikzpicture}\\
&\qquad\qquad\qquad\begin{tikzpicture}[scale=1.075]
\node [align=center,align=center] at (1.8,0) {$+\bigg($};
\draw [fill] (2.4,0) circle [radius=0.05];
\draw [fill] (3,0) circle [radius=0.05];
\draw [fill] (3.6,0) circle [radius=0.05];
\draw [fill] (4.2,0) circle [radius=0.05];
\draw [thick] (2.4,0) -- (3,0);
\draw [thick] (3.6,0) -- (4.2,0);
\node [below,align=center] at (3,0) {$v_1$};
\node [below,align=center] at (2.4,0) {$v_3$};
\node [below,align=center] at (3.6,0) {$v_2$};
\node [below,align=center] at (4.2,0) {$v_4$};
\node [align=center,align=center] at (5,0) {$\bigg)+\bigg($};
\draw [fill] (5.8,0) circle [radius=0.05];
\draw [fill] (6.4,0) circle [radius=0.05];
\draw [fill] (7,0) circle [radius=0.05];
\draw [fill] (7.6,0) circle [radius=0.05];
\node [align=center,align=center] at (8.05,0) {$\bigg)$};
\draw [thick] (5.8,0) -- (6.4,0);
\draw [thick] (7,0) -- (7.6,0);
\node [below,align=center] at (6.4,0) {$v_1$};
\node [below,align=center] at (7.6,0) {$v_3$};
\node [below,align=center] at (7,0) {$v_2$};
\node [below,align=center] at (5.8,0) {$v_4$};
\node [align=center,align=center] at (8.35,-0.25) {$.$};
\end{tikzpicture}
\end{split}
\end{equation*}
Or equivalently,
\begin{equation*}
\begin{split}
K_1\cF_0^{(4,1,1)}=&6\cF_0^{(3,1)}
+2\cF_{0,\{2\}}^{(0)}\cF_{0,\{1,3,4\}}^{(2,1,1)}
+2\cF_{0,\{1\}}^{(0)}\cF_{0,\{2,3,4\}}^{(2,1,1)}
\\
&+
\cF_{0,\{1,3\}}^{(1,1)}\cF_{0,\{2,4\}}^{(1,1)}
+\cF_{0,\{1,4\}}^{(1,1)}\cF_{0,\{2,3\}}^{(1,1)}.
\end{split}
\end{equation*}
\end{Example}

\section{Examples of Realizations of the Abstract QFT}
\label{sec-fat-realization}

In this section we construct realizations of the abstract QFT
by assigning Feynman rules to fat graphs.
We will see that matrix models will
provide some natural examples of the realizations.
In particular,
when we consider the realization of the abstract QFT by the Hermitian one-matrix models,
the realization of the quadratic recursion for the abstract correlators
is equivalent to the fat Virasoro constraints for the Hermitian one-matrix models.

\subsection{Feynman rules and realization of the edge-contracting operator}

In this subsection we consider the realizations of the abstract quantum field theory
for fat graphs and realization of the recursion relations.

A `Feynman rule' is an assignment of a `weight' $w_\Gamma$
(which can be a formal variable, a function, or a formal power series, etc.)
to each fat graph $\Gamma$.
In this way,
we associate weights
\be
\begin{split}
& F_g^\mu:=\sum_{\Gamma\in\mathfrak{Fat}_g^{\mu,c}}
\frac{1}{|\Aut(\Gamma)|}w_\Gamma,\\
& F_g:=
\sum_{n\geq 1}\frac{1}{n!}
\sum_{\mu\in\bZ_{>0}^n}F_g^\mu,\\
& Z:=\exp\bigg(
\sum_{g\geq 0}g_s^{2g-2} F_g
\bigg),
\end{split}
\ee
to the abstract correlators $\cF_g^\mu$,
abstract free energy $\cF$,
and abstract partition function $ \cZ $ respectively
(see Definition \ref{def-fat-abs-cor}, \ref{def-fat-abs-fe}, and \ref{def-fat-abs-par}).
We call $ F_g^\mu$, $ F_g$, $Z$ the
`realizations' of $\cF_g^\mu$, $\cF_g$, $ \cZ $ respectively.

For example,
a large class of Feynman rule can be taken in the following form:
one assigns weights $w_v$, $w_e$, $w_f$ to
each vertex $v$, edge $e$, face $f$ in a fat graph $\Gamma$ respectively:
\ben
&&v\mapsto w_v , \qquad\forall v\in V(\Gamma);\\
&&e\mapsto w_e , \qquad\forall e\in E(\Gamma);\\
&&f\mapsto w_f , \qquad\forall f\in F(\Gamma),\\
\een
where $w_v$, $w_e$, $w_f$ are some formal variables
(functions, formal power series, etc.)
Then for every fat graph $\Gamma$,
the weight of $\Gamma$ is defined to be:
\be
w_\Gamma:=\prod_{v\in V(\Gamma)}w_v\cdot
\prod_{e\in E(\Gamma)}w_e \cdot \prod_{f\in F(\Gamma)}w_f.
\ee

\begin{Remark}
If we assign $w_v$, $w_e$, $w_f$ to be of degree one,
then Euler's formula \eqref{eq-euler} tells us that
$w_\Gamma$ is of degree $2-2g$ for every fat graph $\Gamma$ of genus $g$.
\end{Remark}

Given a realization $\{w_\Gamma\}$ of the abstract QFT,
let us consider the realization of the edge-contracting operator $K_1$.

\begin{Definition}
By a `realization' of the edge-contracting operator $K_1$,
we mean an operator $\tK_1$ satisfying:
\be
w_{K_1 (\Gamma)}=\tK_1(w_\Gamma)
\ee
for every fat graph $\Gamma$.
\end{Definition}

Using such an operator $\tK_1$,
the recursion relation in Theorem \ref{thm-abstract-rec} can be realized
as a quadratic recursion for $ F_g^\mu$.
In the rest of this subsection,
we will consider some special examples.
In these examples we are able to write down the realization of $K_1$ explicitly,
thus the quadratic recursion relations for the realized fat correlators $ F_g^\mu$
can be derived easily.

\begin{Example}[Enumeration of fat graphs]
\label{eg-enum}
Consider the simplest Feynman rule:
\be\label{eq-FR-eg1}
\begin{split}
& w_v=1 , \quad\forall v\in V(\Gamma);\\
& w_e=1 , \quad\forall e\in E(\Gamma);\\
& w_f=1 , \quad\forall f\in F(\Gamma),\\
\end{split}
\ee
then clearly we have $w_\Gamma=1$ for every fat graph $\Gamma$,
and the problem becomes the enumeration problem of
connected fat graphs $\Gamma$ of a fixed genus and a fixed type,
with coefficient $\frac{1}{|\Aut(\Gamma)|}$:
\be\label{eq-fe-enum}
 F_g^\mu=\sum_{\Gamma\in\mathfrak{Fat}_g^{\mu,c}}
\frac{1}{|\Aut(\Gamma)|},
\ee
see \cite{wl1,wl2,wl3} and \cite{dmss} for this enumeration problem.

Since for every fat graph $\Gamma\in\mathfrak{Fat}_g^{\mu,c}$,
the valence of the vertex $v_1$ is $\mu_1\in \bZ$,
thus by the definition \eqref{eq-def-opr} we know that
the operator $K_1$ in this case can be realized by multiplying by $\mu_1$
in this case,
since:
\ben
w_{K_1(\Gamma)}=\sum_{h\in H(v_1)}w_{\Gamma^h}
=\sum_{h\in H(v_1)}1 =\mu_1
\een
for every $\Gamma\in \mathfrak{Fat}_g^{\mu,c}$.
Notice that in this case the weight of a vertex is independent of the labels,
thus $\cF_{g,I}^\mu$ is realized by $ F_g^\mu$ for every $I$.
Then Theorem \ref{thm-abstract-rec}
give us the following recursion relation for $ F_g^\mu$
(see eg. \cite[Theorem 3.3]{dmss}):
\be
\begin{split}
\mu_1\cdot  F_g^{\mu}=&
\sum_{j=2}^n(\mu_1+\mu_j-2) F_g^{(\mu_1+\mu_j-2,\mu_{[n]\backslash\{1,j\}})}\\
&+\sum_{\substack{\alpha+\beta=\mu_1-2\\\alpha\geq 1,\beta\geq 1}}
\alpha\beta\bigg(
 F_{g-1}^{(\alpha,\beta,\mu_{[n]\backslash\{1\}})}
+\sum_{\substack{g_1+g_2=g\\I\sqcup J=[n]\backslash\{1\}}}
 F_{g_1}^{(\alpha,\mu_I)}\cdot F_{g_2}^{(\beta,\mu_J)}
\bigg)\\
&+  2(\mu_1-2)\cdot
 F_{g}^{(\mu_1-2,\mu_{[n]\backslash\{1\}})}
+\delta_{n,1}\delta_{g,0}\delta_{\mu_1,2},
\end{split}
\ee
where we use the convention $ F_{g}^{(\mu_1-2,\mu_{[n]\backslash\{1\}})}:=0$
for $\mu_1< 2$.

\end{Example}

\begin{Example}
\label{eg-realization2}
Now let us consider the following Feynman rule:
\be\label{eq-FR-eg2}
\begin{split}
& w_v=1 , \quad\forall v\in V(\Gamma);\\
& w_e=x , \quad\forall e\in E(\Gamma);\\
& w_f=y , \quad\forall f\in F(\Gamma),\\
\end{split}
\ee
where $x$ and $y$ are two a formal variables.
Then the realized abstract correlator $ F_g^\mu$
of genus $g$ and type $\mu=(\mu_1,\cdots,\mu_n)$ is:
\be\label{eq-fe-eg2}
\begin{split}
 F_g^\mu=&\sum_{\Gamma\in\mathfrak{Fat}_g^{\mu,c}}
\frac{1}{|\Aut(\Gamma)|}x^{|E(\Gamma)|}y^{|F(\Gamma)|}\\
=&\sum_{\Gamma\in\mathfrak{Fat}_g^{\mu,c}}
\frac{1}{|\Aut(\Gamma)|}x^{\half|\mu|}y^{2-2g-n+\half|\mu|}.
\end{split}
\ee
An easy observation is that:
for every fat graph $\Gamma$ and every half-edge $h\in H(v_1)$,
we have the following relations:
\ben
&&|E(\Gamma^h)|=|E(\Gamma)|-1,\\
&&|F(\Gamma^h)|=|F(\Gamma)|.
\een
(Recall that $\Gamma^h$ is obtained from $\Gamma$ by contracting
the edge containing $h$,
see Definition \ref{def-abs-opr}).
Let $e$ be the edge containing $h$, then we have:
\ben
&&|V(\Gamma^h)|=|V(\Gamma)|-1,\quad\text{if $e$ is not a loop};\\
&&|V(\Gamma^h)|=|V(\Gamma)|+1,\quad\text{if $e$ is a loop}.
\een
From the above relations we know that:
\ben
w_{\Gamma^h}=x^{-1}\cdot w_{\Gamma},
\een
therefore in this case the operator $\tK_1$ acting on $ F_g^\mu (x,y)$
can be taken to be multiplying by $\mu_1\cdot x^{-1}$.
In this case it is also clear that
$\cF_{g,I}^\mu$ is realized by $ F_g^\mu$ for every $I$,
thus Theorem \ref{thm-abstract-rec}
give us the following:
\begin{Proposition}
\label{prop-eg-realization2}
For the Feynman rule \eqref{eq-FR-eg2} and \eqref{eq-fe-eg2},
we have the following quadratic recursion relation:
\be
\begin{split}
\frac{\mu_1}{x}\cdot  F_g^{\mu}=&
\sum_{j=2}^n(\mu_1+\mu_j-2) F_g^{(\mu_1+\mu_j-2,\mu_{[n]\backslash\{1,j\}})}\\
&+\sum_{\substack{\alpha+\beta=\mu_1-2\\\alpha\geq 1,\beta\geq 1}}
\alpha\beta\bigg(
 F_{g-1}^{(\alpha,\beta,\mu_{[n]\backslash\{1\}})}
+\sum_{\substack{g_1+g_2=g\\I\sqcup J=[n]\backslash\{1\}}}
 F_{g_1}^{(\alpha,\mu_I)}\cdot F_{g_2}^{(\beta,\mu_J)}
\bigg)\\
&+  2(\mu_1-2)\cdot y \cdot
 F_{g}^{(\mu_1-2,\mu_{[n]\backslash\{1\}})}
+\delta_{n,1}\delta_{g,0}\delta_{\mu_1,2}\cdot y^2,
\end{split}
\ee
where we use the convention $ F_{g}^{(\mu_1-2,\mu_{[n]\backslash\{1\}})}:=0$
for $\mu_1< 2$.
\end{Proposition}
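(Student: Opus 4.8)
The plan is to obtain the proposition as a direct realization of the abstract quadratic recursion of Theorem \ref{thm-abstract-rec} under the Feynman rule \eqref{eq-FR-eg2}, by applying the weight map $\Gamma\mapsto w_\Gamma$ termwise and keeping track of how $|E(-)|$ and $|F(-)|$ behave. First I would isolate the two structural facts that make this legitimate. One is multiplicativity under disjoint union, $w_{\Gamma_1\sqcup\Gamma_2}=w_{\Gamma_1}w_{\Gamma_2}$, which is immediate because $|E(-)|$ and $|F(-)|$ are additive over connected components while $w_v\equiv 1$. The other is that $K_1$ is realized on $\cV_g^\mu$ by multiplication by $\mu_1 x^{-1}$: for any $\Gamma\in\mathfrak{Fat}_g^{\mu,c}$ and any $h\in H(v_1)$ one has $|E(\Gamma^h)|=|E(\Gamma)|-1$ and $|F(\Gamma^h)|=|F(\Gamma)|$, whether or not the edge through $h$ is a loop, and since $w_v=1$ the change of $|V|$ is irrelevant, so $w_{\Gamma^h}=x^{-1}w_\Gamma$ uniformly; summing over the $\mu_1$ half-edges incident at $v_1$ gives $w_{K_1\Gamma}=\mu_1 x^{-1} w_\Gamma$, and $\mu_1$ is the same for every graph of type $\mu$ by definition of the type.

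Next I would record that, because $w_v$ does not see the labels on vertices, every relabelled correlator $\cF_{g,I}^\mu$ is realized by the same series $F_g^\mu$; in particular the disjoint-union terms $\cF_{g_1,\{1\}\sqcup(I+1)}^{(\alpha,\mu_I)}\cF_{g_2,\{2\}\sqcup(J+1)}^{(\beta,\mu_J)}$ are realized by $F_{g_1}^{(\alpha,\mu_I)}\cdot F_{g_2}^{(\beta,\mu_J)}$ by multiplicativity. The only term requiring a separate evaluation is the single-vertex graph $\cF_0^{(0)}$: it has one vertex, no edge and a single face, so $w_{\cF_0^{(0)}}=w_v w_f=y$, and hence $\cF_{0,\{k\}}^{(0)}$ is realized by $y$ for every $k$. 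Consequently $\cF_{0,\{1\}}^{(0)}\cF_{0,\{2\}}^{(0)}\mapsto y^2$, and $\cF_{0,\{1\}}^{(0)}\cF_{g,[n+1]\backslash\{1\}}^{(\mu_1-2,\mu_{[n]\backslash\{1\}})}\mapsto y\cdot F_g^{(\mu_1-2,\mu_{[n]\backslash\{1\}})}$, with the same for the $\{2\}$ version.

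Applying the linear, multiplicative weight map to both sides of \eqref{eq-abstract-rec} and substituting the identifications above turns the left side into $\frac{\mu_1}{x}F_g^\mu$ and each summand on the right into its claimed counterpart; in particular the two terms involving $\cF_{0,\{1\}}^{(0)}$ and $\cF_{0,\{2\}}^{(0)}$ collapse to the single term $2(\mu_1-2)\,y\,F_g^{(\mu_1-2,\mu_{[n]\backslash\{1\}})}$, and the first term becomes $\delta_{n,1}\delta_{g,0}\delta_{\mu_1,2}\,y^2$. This gives exactly the asserted recursion, and the convention $F_g^{(\mu_1-2,\mu_{[n]\backslash\{1\}})}:=0$ for $\mu_1<2$ is inherited from the corresponding convention in Theorem \ref{thm-abstract-rec}. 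As a sanity check one can verify the degenerate case $g=0$, $n=1$, $\mu=(2)$: there $F_0^{(2)}=\frac{1}{2}xy^2$, so the left side is $y^2$, matching the single surviving term $\delta_{n,1}\delta_{g,0}\delta_{\mu_1,2}\,y^2$ on the right.

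I expect no serious obstacle here: the entire content is the functoriality of the abstract recursion under Feynman rules, and the remaining work is the elementary bookkeeping of $|E(-)|$ and $|F(-)|$ under edge-contraction and disjoint union. The one point to be checked carefully is that $|F(\Gamma^h)|=|F(\Gamma)|$ persists in the loop case of Definition \ref{def-abs-opr}, where $v_1$ splits into two vertices and one connected component may degenerate to an isolated vertex carrying a single face; but that surviving face is precisely what produces the realization $\cF_0^{(0)}\mapsto y$, so the face count stays consistent and the argument closes.
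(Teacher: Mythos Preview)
Your proposal is correct and follows essentially the same approach as the paper: the paper's argument is precisely to observe that $|E(\Gamma^h)|=|E(\Gamma)|-1$, $|F(\Gamma^h)|=|F(\Gamma)|$ (so $w_{\Gamma^h}=x^{-1}w_\Gamma$ and $K_1$ is realized by multiplication by $\mu_1 x^{-1}$), note that label-independence makes every $\cF_{g,I}^\mu$ realize to $F_g^\mu$, and then read off the result from Theorem~\ref{thm-abstract-rec}. You supply a little more detail (multiplicativity over disjoint union, the explicit evaluation $w_{\cF_0^{(0)}}=y$, and the sanity check), but the structure is identical.
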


\end{Example}

\subsection{Hermitian one-matrix model as a realization of the abstract QFT}
\label{sec-realization-herm-1mm}

In this subsection
we interpret the Hermitian one-matrix models as a realization
of the abstract QFT for fat graphs.

First let us recall the definition of Hermitian one-matrix models
and the genus expansion of the free energy using fat graphs (at finite $N$),
see \cite{zhou1}.
See \cite{me} for an introduction for the history
and various developments of matrix models.

Denote by $\cH_N$ the space of Hermitian matrices of size $N\times N$.
In what follows,
we use a superscript `Herm' to indicate the case of Hermitian one-matrix models.
The partition function $Z_N^{\text{Herm}}$ of the Hermitian one-matrix model is defined to be:
\be
\label{eq-def-1mm-partition}
Z_N^{\text{Herm}}=\frac{
\int_{\cH_N}dM \exp\bigg(\tr\sum\limits_{n=1}^\infty\frac{g_n-\delta_{n,2}}{ng_s}M^n\bigg)
}{
\int_{\cH_N}dM \exp\bigg(-\frac{1}{2g_s}\tr(M^2)\bigg)
},
\ee
where $dM$ is the Haar measure:
\ben
dM=2^{N(N-1)/2}\prod_{i=1}^N dM_{ii} \prod_{1\leq i<j\leq N}d Re(M_{ij}) d Im(M_{ij}).
\een
The free energy $F_N^{\text{Herm}}$ is defined to be
\be
F_N^{\text{Herm}}:=\log Z_N^{\text{Herm}}.
\ee

One way to define a genus expansion of the free energy $F_N^{\text{Herm}}$
is to introduce the 't Hooft coupling constant
\be
t:=N g_s.
\ee
Using the 't Hooft coupling constant,
$F_N^{\text{Herm}}$ can be expressed as:
\be
F_N^{\text{Herm}}=\sum_{g\geq 0}g_s^{2g-2}F_g^{\text{Herm}}(t),
\ee
where $F_g^{\text{Herm}}(t)=\sum\limits_{m\geq 1}F_{g,m}^{\text{Herm}} t^m$ is a formal power series in $t$.

\begin{Remark}
This genus expansion is called the `fat genus expansion' in an earlier work
\cite{zhou1} of the second author.
In that work another genus expansion (the `thin genus expansion')
is introduced by regarding the fat graphs as thin graphs.
\end{Remark}

Now let $\lambda=(\lambda_1\geq\cdots\geq\lambda_l)$ be a partition.
Then the fat correlators $\langle p_\lambda\rangle_N$ is defined to be:
\begin{equation*}
\begin{split}
\langle p_\lambda\rangle_N=&g_s^{-l}
\langle(\tr(M^{\lambda_1}))\cdots (\tr(M^{\lambda_l}))
\rangle_N\\
=&\frac{
\int_{\cH_N}dM (\tr(M^{\lambda_1}))\cdots (\tr(M^{\lambda_l}))
\exp\bigg(-\frac{1}{2g_s}\tr(M^2)\bigg)
}{
\int_{\cH_N}dM \exp\bigg(-\frac{1}{2g_s}\tr(M^2)\bigg)
}.
\end{split}
\end{equation*}
The connected fat correlators of genus $g$
\be\label{eq-1mm}
\langle \frac{1}{z_\lambda} p_\lambda \rangle_g^c (t):=
\frac{\pd^l F_g^{\text{Herm}}(t)}{\pd g_{\lambda_1}\cdots \pd g_{\lambda_l}}
\bigg|_{g_i=0,i\geq 1,2,\cdots}
\ee
can be expressed as the following summation over all connected fat graphs
of genus $g$ and type $\lambda$:
\be\label{eq-1mm-cor}
\langle \frac{1}{z_\lambda} p_\lambda \rangle_g^c (t)=
\sum_{\Gamma\in\widetilde{\mathfrak{Fat}}_g^{\lambda, c}}
\frac{1}{|\Aut(\Gamma)|}
t^{|F(\Gamma)|},
\ee
where $z_\lambda:=\prod\limits_{i=1}^m (i^{m_i} \cdot m_i !)$ for a partition
$\lambda=(1^{m_1}2^{m_2}\cdots)$,
and $\widetilde{\mathfrak{Fat}}_g^{\lambda,c}$ is the set of fat graphs with vertices
of valences $\lambda_1,\cdots,\lambda_n$ while we ignore the orders of vertices:
\ben
\widetilde{\mathfrak{Fat}}_g^{\lambda,c}=
\bigcup_{\sigma\in S_n}\mathfrak{Fat}_g^{(\lambda_{\sigma(1)},\cdots,\lambda_{\sigma(n)}),c}.
\een

The relation \eqref{eq-euler} gives us the following fat selection rule:
$\langle \frac{1}{z_\lambda} p_\lambda \rangle_g^c (t)=0$,
unless
\ben
2g-2=\half |\lambda|-l(\lambda)-m
\een
for some $m\geq 1$.

Now let us understand the Hermitian one-matrix models as a realization
of the abstract QFT for fat graphs.
The summation \eqref{eq-1mm-cor} suggests us to consider the following Feynman rule:
\be\label{eq-1mm-FR}
\begin{split}
&w_v:=1, \qquad \forall v\in V(\Gamma);\\
&w_e:=1, \qquad \forall e\in E(\Gamma);\\
&w_f:=t, \qquad \forall f\in F(\Gamma).\\
\end{split}
\ee
Then the abstract correlator $\cF_g^\mu$ is realized by:
\be
\label{thm-1mm-realfe}
F_g^\mu (t)=\sum_{\Gamma\in\mathfrak{Fat}_g^{\mu,c}}
\frac{1}{|\Aut(\Gamma)|}t^{|F(\Gamma)|}
=\langle\frac{p_{\mu_1}}{\mu_1}\cdots\frac{p_{\mu_n}}{\mu_n}
\rangle_g^c.
\ee
Then \eqref{eq-1mm} gives us:
\be
F_g^{\text{Herm}}(t)=\sum_{n\geq 1}\sum_{\mu\in \bZ_{>0}^n}\frac{1}{n!}\cdot F_g^\mu(t)
\cdot g_{\mu_1}\cdots g_{\mu_n},
\ee
or equivalently,
\be
F_N^{\text{Herm}}=\sum_{g\geq 0}g_s^{2g-2}\sum_{(m)\in\mathcal{P}}
\bigg( F_g^{\lambda_{(m)}}(t)\cdot
\frac{\prod_{j\geq 1}g_j^{m_j}}{\prod_{j\geq 1}m_j !}\bigg),
\ee
where the notations $\mathcal P$ and $\lambda_{(m)}$
are the same as in \S \ref{sec-abstractpartition}.
This simply tells us that the abstract correlators $\cF_g^\mu$
are realized by the correlators of the Hermitian one-matrix models
if we assign Feynman rule \eqref{eq-1mm-FR}.
Therefore,
the Hermitian one-matrix model is a refinement of the enumeration of fat graphs
(see Example \ref{eg-enum}) where the degree of $t$ encoded the number of faces in each graph.

There is another point of view of the above realization.
Instead of considering the realization of the abstract correlators
by the Feynman rule \eqref{eq-1mm-FR},
we may consider the realization of the abstract free energy
and abstract partition function by the following Feynman rule:
\be\label{eq-1mm-FR2}
\begin{split}
&w_v:=g_{\val(v)}, \qquad \forall v\in V(\Gamma);\\
&w_e:=1, \qquad \forall e\in E(\Gamma);\\
&w_f:=t, \qquad \forall f\in F(\Gamma),\\
\end{split}
\ee
where $\val(v)$ is the valence of vertex $v\in V(\Gamma)$.
In this way,
the abstract free energy $\cF_g$, $\cF$ and abstract partition function $ \cZ $
are realized by $F_g^{\text{Herm}}$, $F_N^{\text{Herm}}$ and $Z_N^{\text{Herm}}$ respectively.
(Here the realization of the empty graph `1'
in Definition \ref{def-fat-abs-par} is simply $1$.)

In the next subsection,
we will discuss the realization of
the quadratic reursion relation \eqref{eq-abstract-rec} in this example in detail,
and such a consideration inspires the formalisms
developed in \S \ref{sec-fat-abs-virasoro} and \S \ref{sec-abstract-eorec}.

\subsection{Realization of the abstrat quadratic recursion relation}
\label{sec-fatreal-qrec}

In this subsection
let us construct the realization of the recursion relation \eqref{eq-abstract-rec}
in the example of the Hermitian one-matrix models.
This provides a quadratic recursion relation for
the fat correlators \eqref{eq-1mm} or \eqref{thm-1mm-realfe}.
We will follow the notations in \S \ref{sec-realization-herm-1mm}.

Notice that the Feynman rule \eqref{eq-1mm-FR} is a special case
of Example \ref{eg-realization2} if we take $x=1$ and $y=t$.
Therefore,
Proposition \ref{prop-eg-realization2} gives us the following:

\begin{Theorem}
\label{thm-1mm-rec}
The fat correlators $ F_g^\mu(t)$
of the Hermitian one-matrix models
can be determined by the quadratic recursion:
\be
\label{eq-1mm-rec}
\begin{split}
&\mu_1\cdot  F_g^{\mu}(t)=
\sum_{j=2}^n(\mu_1+\mu_j-2) F_g^{(\mu_1+\mu_j-2,\mu_{[n]\backslash\{1,j\}})}(t)\\
&\qquad+\sum_{\substack{\alpha+\beta=\mu_1-2\\\alpha\geq 1,\beta\geq 1}}
\alpha\beta\bigg(
 F_{g-1}^{(\alpha,\beta,\mu_{[n]\backslash\{1\}})}(t)
+\sum_{\substack{g_1+g_2=g\\I\sqcup J=[n]\backslash\{1\}}}
 F_{g_1}^{(\alpha,\mu_I)}(t)\cdot F_{g_2}^{(\beta,\mu_J)}(t)
\bigg)\\
&\qquad+
 2(\mu_1-2)\cdot t\cdot
 F_{g}^{(\mu_1-2,\mu_{[n]\backslash\{1\}})}(t)
+\delta_{n,1}\delta_{g,0}\delta_{\mu_1,2}\cdot t^2
\end{split}
\ee
together with the initial value $F_0^{(0)}(t):=t$,
where we use the convention:
\ben
 F_{g}^{(\mu_1-2,\mu_{[n]\backslash\{1\}})}(t):=0,
\qquad \text{if}\quad \mu_1< 2.
\een
\end{Theorem}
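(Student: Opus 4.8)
The plan is to obtain Theorem \ref{thm-1mm-rec} as a special case of the general realization machinery of \S \ref{sec-fat-realization}, with essentially no new work beyond a bookkeeping check. The key observation is that the Feynman rule \eqref{eq-1mm-FR} defining the Hermitian one-matrix realization ($w_v = 1$, $w_e = 1$, $w_f = t$) is exactly the rule \eqref{eq-FR-eg2} of Example \ref{eg-realization2} at the specialization $x = 1$, $y = t$. Under this rule the abstract correlator $\cF_g^\mu$ is realized by $F_g^\mu(t) = \sum_{\Gamma\in\mathfrak{Fat}_g^{\mu,c}} |\Aut(\Gamma)|^{-1}\, t^{|F(\Gamma)|}$, which by \eqref{thm-1mm-realfe} equals the connected fat correlator $\langle \tfrac{p_{\mu_1}}{\mu_1}\cdots\tfrac{p_{\mu_n}}{\mu_n}\rangle_g^c$; and because the weight $t^{|F(\Gamma)|}$ depends only on the isomorphism class of $\Gamma$ and not on the labelling of its vertices, every relabelled correlator $\cF_{g,I}^\mu$ is also realized by this same $F_g^\mu(t)$, while a disjoint union of graphs is realized by the product of the corresponding weights.

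First I would apply Proposition \ref{prop-eg-realization2} with $x = 1$ and $y = t$. The left-hand side $\tfrac{\mu_1}{x}F_g^\mu$ becomes $\mu_1 F_g^\mu(t)$, consistent with the fact that, by the definition \eqref{eq-def-opr} of $K_1$, contracting any of the $\mu_1$ half-edges at $v_1$ lowers $|E(\Gamma)|$ by one and leaves $|F(\Gamma)|$ unchanged (see \eqref{eq-euler}), so with $w_e = 1$ the realization $\tK_1$ of $K_1$ is simply multiplication by $\mu_1$. On the right-hand side, the two boundary terms $(\mu_1 - 2)\cF_{0,\{1\}}^{(0)}\cF_{g}^{(\mu_1-2,\mu_{[n]\backslash\{1\}})}$ and $(\mu_1 - 2)\cF_{0,\{2\}}^{(0)}\cF_{g}^{(\mu_1-2,\mu_{[n]\backslash\{1\}})}$ of \eqref{eq-abstract-rec} each realize to $(\mu_1 - 2)\, t\, F_g^{(\mu_1-2,\mu_{[n]\backslash\{1\}})}(t)$ and combine into the coefficient $2(\mu_1-2)t$, and $\delta_{g,0}\delta_{n,1}\delta_{\mu_1,2}\cF_{0,\{1\}}^{(0)}\cF_{0,\{2\}}^{(0)}$ realizes to $\delta_{n,1}\delta_{g,0}\delta_{\mu_1,2}\, t^2$. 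This reproduces \eqref{eq-1mm-rec} verbatim.

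Next I would record the initial value: $\cF_0^{(0)}$ is a single vertex of valence zero, whose fattening is a disc bounded by one circle, hence it has one face and no edges, so $F_0^{(0)}(t) = t^{|F|} = t$. It then remains to argue that \eqref{eq-1mm-rec} together with $F_0^{(0)}(t) = t$ actually determines all the $F_g^\mu(t)$. Since $\mu\in\bZ_{>0}^n$ forces $\mu_1\geq 1$, the leading coefficient on the left is nonzero and one may solve for $F_g^\mu(t)$; and every correlator occurring on the right-hand side has total valence at most $|\mu| - 2 < |\mu|$, where $|\mu| := \mu_1 + \cdots + \mu_n$ (the first sum drops to $|\mu|-2$; the $F_{g-1}$ term to $(\alpha+\beta) + \sum_{k\geq 2}\mu_k = |\mu|-2$; each factor of a product term to $\le |\mu|-2$ since $\alpha,\beta \le \mu_1 - 2$; and the boundary term to $|\mu|-2$). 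Combined with the vanishing $F_g^\mu(t) = 0$ whenever $|\mu|$ is odd, because $|\mu| = 2|E(\Gamma)|$ on $\mathfrak{Fat}_g^{\mu,c}$, an induction on $|\mu|$ closes the argument.

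The one point that requires genuine care, and which I regard as the main (if modest) obstacle, is the handling of the unstable and degenerate contributions: the vertex $\cF_0^{(0)}$ must be treated throughout with the convention used in the proof of Theorem \ref{thm-abstract-rec}, namely $f_g^{(0)} := \delta_{g,0}\cF_0^{(0)}$ rather than as the instance of $\mu_1\cdots\mu_n\cF_g^\mu$ at $g = 0$, $n = 1$, $\mu = (0)$ (which would be zero), so that its realized weight is $t$ and not $0$; and in the first sum one should keep track of the degenerate case where the contracted vertex acquires valence zero (e.g.\ $\mu_1 = \mu_j = 1$, $n = 2$), where the same convention applies. Verifying that the realization $\tK_1 = \mu_1\cdot(-)$ and the identification of $\cF_{g,I}^\mu$ with $F_g^\mu(t)$ remain consistent with edge-contraction in these boundary situations — equivalently, that Example \ref{eg-realization2} and Proposition \ref{prop-eg-realization2} apply here without modification — is exactly what needs to be checked; everything else is a direct transcription of the abstract recursion.
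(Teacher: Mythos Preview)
Your proposal is correct and follows exactly the paper's own approach: the paper simply observes that the Feynman rule \eqref{eq-1mm-FR} is the special case $x=1$, $y=t$ of Example \ref{eg-realization2}, and invokes Proposition \ref{prop-eg-realization2} to obtain the recursion. Your version is more detailed---you spell out the term-by-term realization, record the initial value $F_0^{(0)}(t)=t$, and supply the induction on $|\mu|$ justifying that the recursion actually determines all correlators---but the core argument is identical.
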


\begin{Example}
\label{eg-1mm-correlator}
First consider the case $(g,n)=(0,1)$.
The recursion \eqref{eq-1mm-rec} gives us:
\ben
2m\cdot  F_g^{(2m)}(t)=
\sum_{\substack{\alpha+\beta=2m-2\\ \alpha\geq 0, \beta\geq 0}}
 F_0^{(\alpha)}(t)\cdot F_0^{(\beta)}(t).
\een
Thus we have:
\ben
 F_0^{(4)}(t)=\half t^3 ,\quad
 F_0^{(6)}(t)=\frac{5}{6}t^4,\quad
 F_0^{(8)}(t)=\frac{7}{4}t^5,\quad
 F_0^{(10)}(t)=\frac{21}{5}t^6,
\quad \cdots
\een
It is known that
$ F_0^{(2m)}(t)=\frac{1}{2m}C_m  t^{m+1}$
where $C_m=\frac{1}{m+1}\binom{2m}{m}$ is the Catalan numbers
(see eg. \cite{dmss}),
and $C_0:=0$.

Now let us consider the case $(g,n)=(0,2)$.
This case has been calculated in \cite{kp}.
Using the quadratic recursion relation \eqref{eq-1mm-rec},
we have:
\begin{equation*}
\begin{split}
\mu_1\cdot F_0^{(\mu_1,\mu_2)}=&
(\mu_1+\mu_2-2) F_0^{(\mu_1+\mu_2-2)}
+2\sum_{\substack{\alpha+\beta=\mu_1-2\\\alpha\geq 1,\beta\geq 1}}
\alpha\beta
 F_{0}^{(\alpha,\mu_2)}\cdot F_{0}^{(\beta)}\\
&+
2(\mu_1-2)\cdot t\cdot
 F_{g}^{(\mu_1-2,\mu_2)}.
\end{split}
\end{equation*}
In particular, for $\mu_1=1$, we have:
\ben
 F_0^{(1,2m+1)}=2m\cdot F_0^{(2m)}=C_m t^{m+1},
\qquad \forall m\geq 0;
\een
for $\mu_1=2$, we have:
\ben
 F_0^{(2,2m)}=\frac{2m}{2}\cdot F_0^{(2m)}=\half C_m t^{m+1},
\qquad \forall m\geq 1;
\een
for $\mu_1=3$, we have:
\begin{equation*}
\begin{split}
 F_0^{(3,2m-1)}=&\frac{2m}{3}\cdot F_0^{(2m)}+\frac{2t}{3}\cdot F_0^{(1,2m-1)}\\
=&(\frac{1}{3}C_m+\frac{2}{3}C_{m-1})t^{m+1},
\qquad \forall m\geq 1.
\end{split}
\end{equation*}
\end{Example}

\begin{Example}
Using the quadratic recursion relation \eqref{eq-1mm-rec} for the fat correlators
together with the initial value $ F_g^{(0)}:=t$,
we can compute first few terms of the free energy
of the Hermitian one-matrix models:
\begin{equation*}
\begin{split}
F_0^{\text{Herm}}=&
t(\half g_1^2+\half g_1^2 g_2+\half g_1^2 g_2^2+\frac{1}{3}g_1^3 g_3+\cdots)\\
&+t^2(\half g_2+\frac{1}{4} g_2^2+ g_1g_3+\frac{1}{6}g_2^3+\cdots)
+t^3(\half g_4+\frac{2}{3}g_3^2+\cdots)\\
&+t^4(\frac{5}{6}g_6+\cdots)+t^5(\frac{7}{4}g_8+\cdots)+\cdots,\\
F_1^{\text{Herm}}=&
t(\frac{1}{4}g_4+g_1g_5+\half g_2g_4+\cdots)
+t^2(\frac{5}{3}g_6+\cdots)+\cdots.
\end{split}
\end{equation*}

\end{Example}

\subsection{Equivalence to fat Virasoro constraints}
\label{sec-fatVirasoro}

In this subsection,
we first recall the fat Virasoro constraints for
the partition function of Hermitian one-matrix models,
and then prove that it is equivalent to the quadratic recursion relation
derived in last subsection.

The partition function $Z_N^{\text{Herm}}$ of the Hermitian one-matrix models
satisfies the following fat Virasoro constraints:
\be\label{eq-fatVirasoro}
L_{m,t}^{\text{Herm}} Z_N^{\text{Herm}}=0, \qquad \forall m\geq -1,
\ee
where the fat Virasoro operators $L_{m,t}$ are defined by:
\be
\label{eq-1mm-fatvir-opr-1}
\begin{split}
&L_{-1,t}^{\text{Herm}}=-\frac{\pd}{\pd g_1}+
\sum_{n\geq 1}ng_{n+1}\frac{\pd}{\pd g_n}
+tg_1g_s^{-2},\\
&L_{0,t}^{\text{Herm}}=-2\frac{\pd}{\pd g_2}+
\sum_{n\geq 1}ng_{n}\frac{\pd}{\pd g_n}
+t^2g_s^{-2},\\
&L_{1,t}^{\text{Herm}}=-3\frac{\pd}{\pd g_3}+
\sum_{n\geq 1}(n+1)g_{n}\frac{\pd}{\pd g_{n+1}}
+2t\frac{\pd}{\pd g_1},\\
\end{split}
\ee
and for $m\geq 2$,
\be
\label{eq-1mm-fatvir-opr-2}
L_{m,t}^{\text{Herm}}=\sum_{k\geq 1}(k+m)(g_k-\delta_{k,2})\frac{\pd}{\pd g_{k+m}}
+g_s^2\sum_{k=1}^{m-1}k(m-k)\frac{\pd}{\pd g_k}
\frac{\pd}{\pd g_{m-k}}+2tm\frac{\pd}{\pd g_m},
\ee
where $t=Ng_s$ is the 't Hooft coupling constant.
Moreover,
these operators $\{L_{m,t}^{\text{Herm}}\}$ satisfies the Virasoro commutation relations:
\be
[L_{m,t}^{\text{Herm}},L_{m,t}^{\text{Herm}}]=(m-n)L_{m+n,t}^{\text{Herm}},
\qquad \forall m,n\geq -1.
\ee
The fat Virasoro constraints \eqref{eq-fatVirasoro}
is a family of quadratic recursion relations,
and the free energies $F_{g}^{\text{Herm}}(t)$ can be uniquely determined from them.

Our main result in this subsection is the following:

\begin{Theorem}
The fat Virasoro constraints \eqref{eq-fatVirasoro}
is equivalent to the quadratic recursion relation \eqref{eq-1mm-rec}.
\end{Theorem}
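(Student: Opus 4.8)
The plan is to convert the differential equations \eqref{eq-fatVirasoro} for $Z_N^{\text{Herm}}$ into the correlator recursion \eqref{eq-1mm-rec} by the standard dictionary between Virasoro constraints and loop equations. Write $Z_N^{\text{Herm}}=\exp(F_N^{\text{Herm}})$ with $F_N^{\text{Herm}}=\sum_{g\geq 0}g_s^{2g-2}F_g^{\text{Herm}}(t)$ and $F_g^{\text{Herm}}(t)=\sum_{n\geq 1}\frac{1}{n!}\sum_{\mu}F_g^\mu(t)\,g_{\mu_1}\cdots g_{\mu_n}$. Dividing \eqref{eq-fatVirasoro} by $Z_N^{\text{Herm}}$ turns the quadratic-in-$Z$ second-order part into the connected combination $\partial^2 F_N^{\text{Herm}}+(\partial F_N^{\text{Herm}})(\partial F_N^{\text{Herm}})$; one then applies $\partial_{g_{\mu_2}}\cdots\partial_{g_{\mu_n}}$, sets all $g_i=0$, and extracts the coefficient of $g_s^{2g-2}$. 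Since by \eqref{eq-1mm} and \eqref{thm-1mm-realfe} the quantity $F_g^\mu(t)$ equals, up to the symmetry factor $\prod_j m_j!$ coming from repeated parts of $\mu$, the specialized derivative $\partial^n F_g^{\text{Herm}}/\partial g_{\mu_1}\cdots\partial g_{\mu_n}\big|_{g_i=0}$, this operation turns the single equation $L^{\text{Herm}}_{m,t}Z_N^{\text{Herm}}=0$, with $\mu_1:=m+2$, into one instance of \eqref{eq-1mm-rec}.

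For the term-by-term matching, fix $m\geq 2$ and $\mu_1=m+2$ and use \eqref{eq-1mm-fatvir-opr-2}. After the above procedure the four groups of terms of $L^{\text{Herm}}_{m,t}$ correspond exactly to the four groups on the right-hand side of \eqref{eq-1mm-rec}: the contribution of $-\delta_{k,2}(k+m)\partial_{g_{k+m}}$ produces $-\mu_1 F_g^\mu(t)$, which after transposition is the left-hand side of \eqref{eq-1mm-rec}; the piece $\sum_{k\geq 1}(k+m)g_k\partial_{g_{k+m}}$ survives only when one of the derivatives $\partial_{g_{\mu_j}}$ lands on $g_k$, i.e. $k=\mu_j$, yielding $\sum_{j=2}^n(\mu_1+\mu_j-2)F_g^{(\mu_1+\mu_j-2,\mu_{[n]\backslash\{1,j\}})}(t)$; the piece $g_s^2\sum_{k=1}^{m-1}k(m-k)\partial_{g_k}\partial_{g_{m-k}}$, through the connected combination and the genus bookkeeping $g_s^2\cdot g_s^{2h-2}=g_s^{2(h+1)-2}$ for the pure second derivative and $g_s^2\cdot g_s^{2h_1-2}g_s^{2h_2-2}=g_s^{2(h_1+h_2)-2}$ for the product, gives $\sum_{\alpha+\beta=\mu_1-2}\alpha\beta\big(F_{g-1}^{(\alpha,\beta,\mu_{[n]\backslash\{1\}})}(t)+\sum_{g_1+g_2=g}\sum_{I\sqcup J=[n]\backslash\{1\}}F_{g_1}^{(\alpha,\mu_I)}(t)F_{g_2}^{(\beta,\mu_J)}(t)\big)$; and $2tm\partial_{g_m}$ gives $2(\mu_1-2)t\,F_g^{(\mu_1-2,\mu_{[n]\backslash\{1\}})}(t)$. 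The low-order cases $m=-1,0,1$ use the special shapes in \eqref{eq-1mm-fatvir-opr-1}: the constant term $t^2g_s^{-2}$ of $L^{\text{Herm}}_{0,t}$ produces, for $n=1$, $g=0$, the inhomogeneous term $\delta_{n,1}\delta_{g,0}\delta_{\mu_1,2}t^2$ of \eqref{eq-1mm-rec}, and the term $tg_1g_s^{-2}$ of $L^{\text{Herm}}_{-1,t}$ closes the recursion at $\mu_1=1$, consistently with $F_0^{(0)}(t)=t$ and $F_0^{(2)}(t)=\half t^2$. Collecting the four groups recovers \eqref{eq-1mm-rec} exactly.

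For the converse one either reverses the computation — multiplying \eqref{eq-1mm-rec} by $\frac{1}{(n-1)!}g_s^{2g-2}g_{\mu_2}\cdots g_{\mu_n}$ and summing over $g\geq 0$, $n\geq 1$ and $\mu_2,\dots,\mu_n\in\bZ_{>0}$, with $\mu_1=m+2$ fixed, reassembles precisely the vanishing of $e^{-F_N^{\text{Herm}}}L^{\text{Herm}}_{m,t}e^{F_N^{\text{Herm}}}$, hence of $L^{\text{Herm}}_{m,t}Z_N^{\text{Herm}}$ since $e^{F_N^{\text{Herm}}}$ is invertible — or, more cheaply, one invokes uniqueness: \eqref{eq-1mm-rec} together with $F_0^{(0)}(t)=t$ determines every $F_g^\mu(t)$ by induction on the pair $(2g-2+n,|\mu|)$ in lexicographic order, since every correlator on its right-hand side is strictly smaller in this order, while \eqref{eq-fatVirasoro} has $Z_N^{\text{Herm}}$ as its unique normalized solution; as both systems are known to be satisfied by the correlators of $Z_N^{\text{Herm}}$, they are equivalent.

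The only real difficulty is bookkeeping around the unstable data. One must track the symmetry factors $\prod_j m_j!$ relating the labelled correlators $F_g^\mu(t)$ to the unordered derivatives $\partial^n F_g^{\text{Herm}}/\partial g_{\mu_1}\cdots\partial g_{\mu_n}|_{g_i=0}$, and check that these, together with the ordered-versus-unordered conventions in $\sum_{\alpha+\beta=\mu_1-2}$ and $\sum_{I\sqcup J}$, cancel consistently on the two sides of \eqref{eq-1mm-rec}. More delicately, in the exceptional cases — the valence-zero vertex $\cF_0^{(0)}$ (realized by $t$), and the data $(g,n,\mu)=(0,1,(2))$, $(0,2,(1,1))$, and $\mu_1=1$ — the prefactors $\mu_1+\mu_j-2$ and $\mu_1-2$ in \eqref{eq-1mm-rec} vanish while the corresponding graph-theoretic contribution does not; these must be matched by hand against the distinctive shapes of $L^{\text{Herm}}_{-1,t}$, $L^{\text{Herm}}_{0,t}$, $L^{\text{Herm}}_{1,t}$ and their constant and linear inhomogeneities. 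Apart from this the argument is a routine expansion, parallel to the derivation of \eqref{eq-1mm-rec} from Theorem~\ref{thm-abstract-rec} via Proposition~\ref{prop-eg-realization2}.
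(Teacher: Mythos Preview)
Your approach is correct and essentially the same as the paper's: both convert $L_{m,t}^{\text{Herm}}Z_N^{\text{Herm}}=0$ into recursions for the connected correlators by passing to $F_N^{\text{Herm}}=\log Z_N^{\text{Herm}}$, applying $\partial_{g_{\mu_2}}\cdots\partial_{g_{\mu_n}}$, setting $g_i=0$, and extracting the $g_s^{2g-2}$ coefficient, then using symmetry in $\mu$ to identify the result with \eqref{eq-1mm-rec}. One small correction: there is in fact no leftover factor $\prod_j m_j!$ to track, since from $F_g^{\text{Herm}}=\sum_{n\geq 1}\frac{1}{n!}\sum_{\mu}F_g^\mu g_{\mu_1}\cdots g_{\mu_n}$ with $F_g^\mu$ symmetric one has $\partial^n F_g^{\text{Herm}}/\partial g_{\mu_1}\cdots\partial g_{\mu_n}\big|_{g=0}=F_g^\mu$ exactly (the multinomial count of orderings of $\mu$ cancels the $1/n!$), so the bookkeeping you flag in the last paragraph is not actually needed.
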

\begin{proof}
Let us rewrite the Virasoro constraints \eqref{eq-fatVirasoro} for the free energy
\ben
F_N^{\text{Herm}}=\sum_{g\geq 0}g_s^{2g-2}
\sum_{n\geq 1}\sum_{\mu_1,\cdots,\mu_n>0}\frac{1}{n!}
\langle \frac{p_{\mu_1}}{\mu_1}\cdots\frac{p_{\mu_n}}{\mu_n}\rangle_g^c
\cdot g_{\mu_1}\cdots g_{\mu_n}
\een
as constraints for the connected fat correlators
$\langle \frac{p_{\mu}}{\mu}\rangle_g^c=
\langle \frac{p_{\mu_1}}{\mu_1}\cdots \frac{p_{\mu_n}}{\mu_n}\rangle_g^c$:
\begin{equation}
\label{eq-1mm-Vir-cor-1}
\begin{split}
\langle p_1 \frac{p_{\mu_1}}{\mu_1}\cdots \frac{p_{\mu_n}}{\mu_n}\rangle_g^c
=&\sum_{j=1}^n (\mu_j-1)\cdot
\langle \frac{p_{\mu_1}}{\mu_1}\cdots \frac{p_{\mu_j-1}}{\mu_j-1}
\cdots\frac{p_{\mu_n}}{\mu_n}\rangle_g^c,\\
\langle p_2 \frac{p_{\mu_1}}{\mu_1}\cdots \frac{p_{\mu_n}}{\mu_n}\rangle_g^c
=&\sum_{j=1}^n \mu_j \cdot
\langle \frac{p_{\mu_1}}{\mu_1}\cdots
\frac{p_{\mu_n}}{\mu_n}\rangle_g^c,\\
\langle p_3 \frac{p_{\mu_1}}{\mu_1}\cdots \frac{p_{\mu_n}}{\mu_n}\rangle_g^c
=&\sum_{j=1}^n (\mu_j+1)\cdot
\langle \frac{p_{\mu_1}}{\mu_1}\cdots \frac{p_{\mu_j+1}}{\mu_j+1}
\cdots\frac{p_{\mu_n}}{\mu_n}\rangle_g^c\\
&+2t\cdot\langle p_1 \frac{p_{\mu_1}}{\mu_1}\cdots\frac{p_{\mu_n}}{\mu_n}
\rangle_g^c,\\
\end{split}
\end{equation}
and for $m\geq 4$,
\begin{equation}
\label{eq-1mm-Vir-cor-2}
\begin{split}
&\langle p_m \frac{p_{\mu_1}}{\mu_1}\cdots \frac{p_{\mu_n}}{\mu_n}\rangle_g^c
=\sum_{j=1}^n (\mu_j+m-2)\cdot
\langle \frac{p_{\mu_1}}{\mu_1}\cdots \frac{p_{\mu_j+m-2}}{\mu_j+m-2}
\cdots\frac{p_{\mu_n}}{\mu_n}\rangle_g^c\\
&\qquad\qquad
+2t\cdot\langle p_{m-2}\frac{p_{\mu_1}}{\mu_1}\cdots\frac{p_{\mu_n}}{\mu_n}
\rangle_g^c
+\sum_{k=1}^{m-3}\langle p_k p_{m-2-k}
\frac{p_{\mu_1}}{\mu_1}\cdots\frac{p_{\mu_n}}{\mu_n}
\rangle_{g-1}^c\\
&\qquad\qquad
+\sum_{k=1}^{m-3}\sum_{\substack{g_1+g_2=g\\I\sqcup J=[n]}}
\langle p_k \frac{p_{\mu_I}}{\mu_I}\rangle_{g_1}^c
\cdot
\langle p_{m-2-k} \frac{p_{\mu_J}}{\mu_J}\rangle_{g_2}^c.
\end{split}
\end{equation}
Notice that $ F_g^\mu(t)=
\langle \frac{p_{\mu_1}}{\mu_1}\cdots \frac{p_{\mu_n}}{\mu_n}\rangle_g^c$
is symmetric with respect to $\mu_1,\mu_2,\cdots,\mu_n$,
therefore the above recursion is equivalent to the quadratic recursion
\eqref{eq-1mm-rec}.
\end{proof}

\subsection{Matrix models of other types as realizations of the abstract QFT}

In this subsection,
we discuss some matrix models of other types.
We regard them as different realizations of the
abstract QFT for fat graphs.

\begin{Example}
[Penner model]

The Penner matrix model \cite{pe} was introduced to
give a proof to the Harer-Zagier formula \cite{hz},
which computes the orbifold Euler characteristics of
the moduli space of stable curves $\cM_{g,n}$ for $2g-2+n>0$.
See also \cite{mu2}.

The partition function of the Penner model is:
\be
Z_N^{Pe}=\frac{
\int_{\cH_N}dM
\exp\bigg(-\half \tr (M^2)\bigg)
\exp\bigg(-\sum\limits_{n=3}^\infty\frac{(\sqrt{z})^{n-2}}{n}\tr(M^n)\bigg)
}{
\int_{\cH_N}dM \exp\bigg(-\frac{1}{2}\tr(M^2)\bigg)
},
\ee
where $z$ is a formal variable.
One can easily see that
this is a special case of the Hermitian one-matrix model,
after taking the particular choice
$g_1=g_2=0$ and
$g_n = -(\sqrt{z})^{n-2}$ for $n\geq 3$.
In other words,
if one assign the Feynman rule:
\be
\begin{split}
&w_v:=0, \qquad\qquad\qquad\quad \val(v)=1,2;\\
&w_v:=-(\sqrt{z})^{\val(v)-2}, \qquad  \val(v)\geq 3;\\
&w_e:=1, \qquad\qquad\qquad\quad \forall e\in E(\Gamma);\\
&w_f:=N, \qquad\qquad\qquad\quad \forall f\in F(\Gamma),\\
\end{split}
\ee
then the abstract partition function $ \cZ $ is realized by $Z_N^{Pe}$,
and the abstract free energy $\cF$ is realized by $\log(Z_N^{Pe})$
(here we take $g_s=1$).
In this case $\log(Z_N^{Pe})$ is a summation over stable fat graphs:
\begin{equation*}
\begin{split}
\log(Z_N^{Pe})=&\sum_{g\geq 0}\sum_{n\geq 1}\sum_{\mu\in \bZ_{\geq 3}^n}\frac{1}{n!}
\sum_{\Gamma\in \mathfrak{Fat}_g^{\mu,st,c}}\frac{(-1)^{|V(\Gamma)|}
z^{|E(\Gamma)|-|V(\Gamma)|}N^{|F(\Gamma)|}}{|\Aut(\Gamma)|}\\
=&
\sum_{g\geq 0}\sum_{n\geq 1}\sum_{\mu\in \bZ_{\geq 3}^n}\frac{1}{n!}
\sum_{\Gamma\in \mathfrak{Fat}_g^{\mu,st,c}}\frac{(-1)^{n}
z^{\half |\mu|-n}N^{2-2g-n+\half |\mu|}}{|\Aut(\Gamma)|},
\end{split}
\end{equation*}
where $\mathfrak{Fat}_g^{\mu,st,c}$ is the set of all connected stable fat graphs
of genus $g$ and type $\mu$.

\end{Example}

\begin{Example}
[Kontsevich model]

The Kontsevich model was introduced by Kontsevich \cite{kon}
in his proof of Witten Conjecture \cite{wit}.
In that work,
Kontsevich related the correlators of such a matrix integration
to the intersection numbers of $\psi$-classes on the
Deligne-Mumford moduli spaces $\Mbar_{g,n}$ of stable curves
(see \cite[\S 3, the Main Identity]{kon}).

The partition function of this matrix model is:
\be
Z_N^{Ko}=\frac{
\int_{\cH_N}dX
\exp\bigg(-\half \tr (\Lambda X^2)\bigg)
\exp\bigg(\frac{\sqrt{-1}}{6}\tr(X^3)\bigg)
}{
\int_{\cH_N}dX \exp\bigg(-\frac{1}{2}\tr(\Lambda X^2)\bigg)
},
\ee
where $\Lambda=\diag(\lambda_1,\cdots, \lambda_N)$ is a diagonal matrix.
Kontsevich showed that the free energy $\log(Z^{Ko})$ can be
represented as a summation over trivalent fat graphs.
The Feynman rule is as follows.
Given a fat graph $\Gamma$ whose vertices are all of valence three,
denote by $f:=|F(\Gamma)|$ the number of faces of $\Gamma$.
We symbolically label $\lambda_{i_1},\lambda_{i_2},\cdots,\lambda_{i_f}$ to these $n$ faces
in an arbitrary way,
and require the weight of an edge $e$ to be:
\be
\label{eq-Kmodel-FR-edge}
w_e:=\frac{2}{\lambda_i+\lambda_j}
\ee
if the two sides of $e$ are incident at the two faces labelled by
$\lambda_i$ and $\lambda_j$ respectively
(there might be the case $i=j$).
Then the weight of the graph $\Gamma$ is given by:
\be
\label{eq-Kont-Feynmanrule}
w_\Gamma^{Ko}:=2^{-2|V(\Gamma)|}\cdot
\sum_{i_1,\cdots,i_f=1}^N \frac{1}{f!}\bigg(
\prod_{e\in E(\Gamma)}w_e+\perm.\bigg),
\ee
where $\perm.$ is the summation of all such terms
obtained by permutations of the labels $\lambda_{i_1},\cdots,\lambda_{i_f}$
on the fat graph $\Gamma$.
Then the free energy of the Kontsevich model can be represented as:
\be
\label{eq-Kmodel-FR}
F^{Ko}:=
\log(Z_N^{Ko})=\sum_{\Gamma\in \mathfrak{Fat}^{tri,c}}\frac{1}{|V(\Gamma)|!}\cdot
\frac{w_\Gamma^{Ko}}{|\Aut(\Gamma)|},
\ee
where $\mathfrak{Fat}^{tri,c}$ is the set of all connected trivalent fat graphs.

Now let us set $w_\Gamma=0$ if a fat graph $\Gamma$ is not trivalent,
and assign \eqref{eq-Kont-Feynmanrule} for a trivalent fat graph $\Gamma$,
then it is clear that the free energy $F^{Ko}$ is a realization of the
abstract free energy $\cF$
(see Definition \ref{def-fat-abs-fe}, here we take $g_s=1$).

\end{Example}

\begin{Example}
[Kontsevich-Penner model]

The Kontsevich-Penner model was introduced by Chekhov and Makeenko in \cite{cm}.
This model describes the intersection theory on the discretized moduli space
due to Chekhov \cite{ch,ch2}.

The partition function of the Kontsevich-Penner model is:
\be
Z_N^{K-P}=
\int_{\cH_N}dX
\exp\bigg[
N\cdot\tr\bigg(
-\half \Lambda X\Lambda X+\alpha\big(\log(1+X)-X\big)
\bigg)\bigg],
\ee
where $\Lambda=\diag(\mu_1,\mu_2,\cdots, \mu_N)$.
The Feynman rules for the Kontsevich-Penner model has been discussed in \cite{ch}.
In this case,
the free energy is represented by the following summation:
\be
\label{eq-KPmodel-FR}
F_N^{K-P}=\sum_{\Gamma\in \mathfrak{Fat}^{st,c}}\frac{1}{|V(\Gamma)|!}\cdot
\frac{w_\Gamma^{K-P}}{|\Aut(\Gamma)|},
\ee
where $\mathfrak{Fat}^{st,c}$ is the set of connected stable fat graphs,
and the weight $w_\Gamma^{K-P}$ of a stable fat graph $\Gamma$
(not necessarily trivalent) is given by
the same formula \eqref{eq-Kont-Feynmanrule} for the Kontsevich model,
where the weight of an edge $e$ is replace by:
\be
\label{eq-KPmodel-FR-edge}
w_e:=\frac{2}{\mu_{i}\mu_{j}-1}
\ee
if the two sides of $e$ are incident at the two faces labelled by
$\mu_i$ and $\mu_j$ respectively.
Similar to the Kontsevich model,
the Kontsevich-Penner model also provides us a natural realization
of the abstract QFT for fat graphs.

\end{Example}

Not like the case of the Hermitian one-matrix models
where the edge-contraction operators on graphs can be simply realized by some
multiplications and partial derivatives,
the realization of the edge-contraction operators in the above three examples
are much more complicated.
It is a natural question to realize the quadratic recursion for these models
and relate the resulting recursions to some known recursions such as
Virasoro constraints and Eynard-Orantin topological recursions.
We hope to address such a problem in future works.

\section{Abstract Virasoro Constraints for Fat Graphs}
\label{sec-fat-abs-virasoro}

We have already shown that when we regard the Hermitian one-matrix models
as a realization of the abstract QFT for fat graphs,
the realization of the quadratic recursion \eqref{eq-abstract-rec}
is equivalent to the fat Virasoro constraints.
This inspires us to consider the following question --
Can we reformulate this quadratic recursion
in terms of a family of abstract operators which annihilate
the abstract partition function $\cZ$,
such that these abstract operators satisfy the Virasoro commutation relations?
We will answer this question affirmatively in this section.

First,
we construct some operators (called `vertex-splitting') acting on fat graphs
which `inverse' the edge-contraction procedures.
Then we use them to formulate a family of abstract Virasoro operators
which annihilate the abstract partition function $\cZ$.
We show that these abstract Virasoro operators satisfies the
Virasoro `commutation' relations in a special sense.
Furthermore,
we derive a formula of the form $\cZ=e^{\cM}(1)$ for the abstract partition function $\cZ$
using the abstract Virasoro constraints.

Since in this section we are dealing with the
abstract free energies and abstract partition function,
we will abandon the labels $v_1,\cdots,v_n$
on vertices of fat graphs during the whole section
(see \S \ref{sec-abstractpartition}).

\subsection{Fat graphs with hollow vertices}
\label{sec-hollow-ord}

In this subsection,
let us introduce a new type of fat graphs
(called `fat graphs with hollow vertices'),
and some abstract operators relating the two types of fat graphs.

In previous sections,
we draw vertices of fat graphs as some solid dots (see \S \ref{sec-absqft}).
From now on,
we will introduce a new type of vertices,
which is drawn as hollow dots in a fat graph.
We call these vertices `hollow vertices'.
A `fat graphs with hollow vertices' is a fat graph consisting of some
solid vertices, some hollow vertices,
and internal edges connecting these vertices.
The notions of `genus', `valence', etc. for fat graphs with hollow vertices
are defined as usual.
Notice that in the definition of the automorphism group $\Aut(\Gamma)$,
we only consider the automorphism which preserves each vertex of $\Gamma$
(see \S \ref{sec-cellgraph}),
therefore it is clear that
\ben
\Aut(\Gamma)\cong\Aut(\Gamma')
\een
where $\Gamma'$ is a fat graph obtained from $\Gamma$ by
changing some solid (resp. hollow) vertices into hollow (resp. solid) ones.

Define
\ben
\cV^{ho}:=\prod_{\Gamma\in \mathfrak{Fat}^{ho}}\bQ[g_s,g_s^{-1}]\cdot\Gamma,
\een
where $\mathfrak{Fat}^{ho}$ is the set of all
fat graphs with hollow vertices (not necessarily connected).
Now let us define some operators on this space
which produce new hollow vertices in fat graphs.

\begin{Definition}
Given a positive integer $n$,
define the linear operator $\pd_n$ on $\cV^{ho}$ by:
\be
\pd_n (\Gamma):=\sum_{v\in V_n(\Gamma)}\pd (\Gamma,v),
\ee
where $V_n(\Gamma)$ is the set of solid vertices of valence $n$ in $\Gamma$,
and $\pd (\Gamma,v)$ is the graph obtained from $\Gamma$ by
changing this solid vertex $v$ into a hollow one.
\end{Definition}

It is not hard to see that:
\begin{Lemma}
\label{eq-pd-commute}
For every $m,n\geq 1$,
we have $\pd_m\pd_n=\pd_n\pd_m$.
\end{Lemma}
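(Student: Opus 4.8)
The plan is to unwind both sides on a single fat graph $\Gamma\in\mathfrak{Fat}^{ho}$ and match the resulting expressions term by term; linearity then upgrades this to the identity on all of $\cV^{ho}$. First I would expand, directly from the definition,
\[
\pd_m\pd_n(\Gamma)=\sum_{w\in V_n(\Gamma)}\pd_m\big(\pd(\Gamma,w)\big)
=\sum_{w\in V_n(\Gamma)}\ \sum_{v\in V_m(\pd(\Gamma,w))}\pd\big(\pd(\Gamma,w),v\big),
\]
and symmetrically for $\pd_n\pd_m(\Gamma)$, the goal being to show these two double sums have the same terms.

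The key observation is that the operation ``change a chosen solid vertex into a hollow one'' leaves the underlying graph, its edge set, and all cyclic orders untouched, and in particular leaves the valence of every vertex unchanged; it only toggles the solid/hollow status of that one vertex. Hence $V_m(\pd(\Gamma,w))=V_m(\Gamma)\setminus\{w\}$ when $\val(w)=m$ and $V_m(\pd(\Gamma,w))=V_m(\Gamma)$ otherwise — so hollowing never manufactures a new solid vertex of valence $m$. Moreover, for two distinct vertices $v\neq w$ of $\Gamma$ the two hollowing operations commute: $\pd\big(\pd(\Gamma,w),v\big)=\pd\big(\pd(\Gamma,v),w\big)$ is just the graph obtained from $\Gamma$ by making both $v$ and $w$ hollow.

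Combining these facts, the double sum defining $\pd_m\pd_n(\Gamma)$ runs exactly over ordered pairs $(v,w)$ with $w\in V_n(\Gamma)$, $v\in V_m(\Gamma)$, and $v\neq w$ — the constraint $v\neq w$ being automatic when $m\neq n$, since then $v$ and $w$ carry different valences, and being exactly what the computation of $V_m(\pd(\Gamma,w))$ enforces when $m=n$ — with summand the graph in which both $v$ and $w$ are hollow. The double sum for $\pd_n\pd_m(\Gamma)$ runs over the same set of pairs, now with the roles of the two summation indices interchanged, and with the identical summand. Reindexing $(v,w)\leftrightarrow(w,v)$ identifies the two sums, giving $\pd_m\pd_n(\Gamma)=\pd_n\pd_m(\Gamma)$, and then extending linearly completes the proof.

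There is essentially no hard step here; this is a bookkeeping argument. The only points needing a little care are verifying that hollowing produces no spurious solid vertex of the relevant valence (so the inner index sets behave as claimed) and that the $m=n$ case — where the assertion is in any event trivial — is handled consistently. I would also remark that $\pd_n$ is well defined on the infinite product $\cV^{ho}$: for a fixed target graph, only finitely many graphs map to it under a single hollowing (namely those obtained by making one of its hollow vertices solid), so the sums make sense coefficientwise and the term-by-term comparison above is legitimate.
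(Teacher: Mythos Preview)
Your proof is correct and complete; the paper itself offers no proof of this lemma beyond the remark ``It is not hard to see that'', so your argument is simply the careful unwinding of the definitions that the authors leave implicit. The only content is exactly what you identify: hollowing a solid vertex does not alter valences or create new solid vertices, and hollowing two distinct vertices commutes, so both double sums range over the same ordered pairs with identical summands.
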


\begin{Example}

We give some examples of $\pd_n$:
\begin{flalign*}
\begin{tikzpicture}[scale=0.9]
\node [align=center,align=center] at (3.3,0) {$\pd_3\biggl($};
\draw [fill] (5,0) circle [radius=0.0575];
\draw [->,>=stealth] (-0.2+5,-0.1) .. controls (0.3+5,-0.5) and (0.3+5,0.5) ..  (-0.2+5,0.1);
\draw [thick] (0+5,0) .. controls (1.5+5,-1.2) and (1.5+5,1.2) ..  (0+5,0);
\draw [thick] (0+5,0) .. controls (-1.5+5,-1.2) and (-1.5+5,1.2) ..  (0+5,0);
\node [align=center,align=center] at (6.85,0) {$\bigg)=0,$};
\end{tikzpicture}&&
\end{flalign*}
\begin{flalign*}
\begin{tikzpicture}[scale=0.9]
\node [align=center,align=center] at (3.3,0) {$\pd_4\biggl($};
\draw [fill] (5,0) circle [radius=0.0575];
\draw [->,>=stealth] (-0.2+5,-0.1) .. controls (0.3+5,-0.5) and (0.3+5,0.5) ..  (-0.2+5,0.1);
\draw [thick] (0+5,0) .. controls (1.5+5,-1.2) and (1.5+5,1.2) ..  (0+5,0);
\draw [thick] (0+5,0) .. controls (-1.5+5,-1.2) and (-1.5+5,1.2) ..  (0+5,0);
\node [align=center,align=center] at (6.7,0) {$\bigg)=$};
\draw (5+3.5,0) circle [radius=0.07];
\draw [->,>=stealth] (-0.2+5+3.5,-0.1) .. controls (0.3+5+3.5,-0.5) and (0.3+5+3.5,0.5) ..  (-0.2+5+3.5,0.1);
\draw [thick] (0+5+3.5+0.05,-0.05) .. controls (1.5+5+3.5,-1.2) and (1.5+5+3.5,1.2) ..  (0+5+3.5+0.05,0.05);
\draw [thick] (0+5+3.5-0.05,-0.05) .. controls (-1.5+5+3.5,-1.2) and (-1.5+5+3.5,1.2) ..  (0+5+3.5-0.05,0.05);
\node [align=center,align=center] at (10,-0.2) {$,$};
\end{tikzpicture}&&
\end{flalign*}
\begin{flalign*}
\begin{tikzpicture}[scale=1.1]
\node [align=center,align=center] at (-0.8,0) {$\pd_2\biggl($};
\draw [fill] (0,0) circle [radius=0.0525];
\draw [fill] (1.5,0) circle [radius=0.0525];
\draw [thick] (0,0) .. controls (0.3,0.6) and (1.2,0.6) ..  (1.5,0);
\draw [thick] (0,0) .. controls (0.3,-0.6) and (1.2,-0.6) ..  (1.5,0);
\draw [->,>=stealth] (1.3,-0.1) .. controls (1.8,-0.5) and (1.8,0.5) ..  (1.3,0.1);
\draw [->,>=stealth] (-0.2,-0.1) .. controls (0.3,-0.5) and (0.3,0.5) ..  (-0.2,0.1);
\node [align=center,align=center] at (2.4,0) {$\bigg)=2$};
\draw (0+3.2+0.17,0) circle [radius=0.0625];
\draw [fill] (1.5+3.2+0.15,0) circle [radius=0.0525];
\draw [thick] (0+3.25+0.15,0.05) .. controls (0.3+3.2+0.15,0.6) and (1.2+3.2+0.15,0.6) ..  (1.5+3.2+0.15,0);
\draw [thick] (0+3.25+0.15,-0.05) .. controls (0.3+3.2+0.15,-0.6) and (1.2+3.2+0.15,-0.6) ..  (1.5+3.2+0.15,0);
\draw [->,>=stealth] (1.3+3.2+0.15,-0.1) .. controls (1.8+3.2+0.15,-0.5) and (1.8+3.2+0.15,0.5) ..  (1.3+3.2+0.15,0.1);
\draw [->,>=stealth] (-0.2+3.2+0.15,-0.1) .. controls (0.3+3.2+0.15,-0.5) and (0.3+3.2+0.15,0.5) ..  (-0.2+3.2+0.15,0.1);
\node [align=center,align=center] at (5.25+0.15,-0.2) {$.$};
\end{tikzpicture}&&
\end{flalign*}
\end{Example}

\subsection{Vertex-splitting operators on fat graphs}
\label{sec-def-S&J}

In this subsection,
we define two families of operators $\{\cS_{k,l}\}$ and $\{\cJ_{k,l}\}$
on fat graphs as `inverses' of edge-contraction procedures.

The `vertex-splitting operators' $\cS_{n,k}$ are defined as follows:

\begin{Definition}
\label{def-vertexsplit-opr}
Let $n\geq k\geq 0$ be two positive integers.
Define vertex-splitting operators $\cS_{n,k}$ to be the linear operator
on $\cV^{ho}$ defined by:
\be
\cS_{n,k}(\Gamma):=\sum_{v\in V_{n}(\Gamma)}\sum_{h\in H(v)}
\cS(\Gamma,h),
\ee
where $V_n(\Gamma)$ is the set of solid vertices of valence $n$ in $\Gamma$,
$H(v)$ is the set of half-edges attached to $v$;
and $\cS(\Gamma,h)$ is the graph obtained from $\Gamma$
by splitting the vertex $v$ into two vertices $v'$ and $v''$,
such that:
\begin{itemize}
\item[1)]
$v'$ and $v''$ are connected by a new internal edge;

\item[2)]
$v'$ is a solid vertex of valence $k+1$.
The $k$ adjacent half-edges in $H(v)$ starting from $h$ (w.r.t. the cyclic order)
are attached to $v'$ in $\cS(\Gamma,h)$,
and in the new cyclic order the new half-edge is placed right before $h$;

\item[3)]
$v''$ is a hollow vertex of valence $n-k+1$.
The $n-k$ remaining half-edges
(say, $h_1,\cdots, h_{n-k}$ according to the cyclic order,
such that $h_{n-k}$ is right before $h$ in $\Gamma$)
in $H(v)$ are attached to $v''$,
such that in the new cyclic order the new half-edge is
between $h_{n-k}$ and $h_1$.
\end{itemize}

\end{Definition}

\begin{Remark}
Notice that in the above definition,
we allow two special cases $k=n$ and $k=0$.
The case $k=n$ simply means attaching
a new internal edge on a solid vertex of valence $n$
such that the other side of this new internal edge
is a new hollow vertex of valence one,
and then taking summation over all possible locations
(w.r.t. the cyclic order) to attach this edge;
and the case $k=0$ means attaching a new internal edge
on a solid vertex of valence $n$
and changing this vertex into a hollow one
such that the other side of this new internal edge is
a new solid vertex of valence one,
and then taking summation over all possible locations.

\end{Remark}

\begin{Example}
\label{eg-fat-vertexsplit}
We give some examples to make the above definition clear.
Instead of drawing graphs,
we just show how these operators act on a single vertex
together with half-edges attached to it:
\begin{flalign*}
\begin{split}
&\begin{tikzpicture}[scale=1.125]
\node [align=center,align=center] at (3.5,0) {$\cS_{5,2}\bigg($};
\draw [fill] (5,0) circle [radius=0.055];
\draw [thick] (0+5,0) -- (-0.5+5,0.5);
\draw [thick] (0+5,0) -- (-0.5+5,-0.5);
\draw [thick] (5,0) -- (5.5,0.5);
\draw [thick] (5,0) -- (5.5,-0.5);
\draw [thick] (5,0) -- (5.5,0);
\draw [->,>=stealth] (-0.2+5,-0.1) .. controls (0.3+5,-0.5) and (0.3+5,0.5) ..  (-0.2+5,0.1);
\node [align=center,left] at (-0.45+5,0.5) {$h_1$};
\node [align=center,left] at (-0.45+5,-0.5) {$h_2$};
\node [align=center,right] at (5.45,0.5) {$h_5$};
\node [align=center,right] at (5.45,0) {$h_4$};
\node [align=center,right] at (5.45,-0.5) {$h_3$};
\node [align=center,align=center] at (6.45,0) {$\bigg)=$};
\draw [fill] (0+7+1.1,0) circle [radius=0.055];
\draw (1.5+6+1.3,0) circle [radius=0.07];
\draw [thick] (0+7+1.1,0) -- (-0.5+7+1.1,0.5);
\draw [thick] (0+7+1.1,0) -- (-0.5+7+1.1,-0.5);
\draw [thick] (1.5+6+1.35,0.05) -- (2+6+1.3,0.5);
\draw [thick] (1.5+6+1.35,-0.05) -- (2+6+1.3,-0.5);
\draw [thick] (1.5+6+1.37,0) -- (2+6+1.3,0);
\draw [->,>=stealth] (-0.2+7+1.1,-0.1) .. controls (0.3+7+1.1,-0.5) and (0.3+7+1.1,0.5) ..  (-0.2+7+1.1,0.1);
\node [align=center,left] at (-0.45+7+1.1,0.5) {$h_1$};
\node [align=center,left] at (-0.45+7+1.1,-0.5) {$h_2$};
\node [align=center,right] at (2+6+1.25,0) {$h_4$};
\node [align=center,right] at (2+6+1.25,0.5) {$h_5$};
\node [align=center,right] at (2+6+1.25,-0.5) {$h_3$};
\draw [->,>=stealth] (1.3+6+1.3,-0.1) .. controls (1.8+6+1.3,-0.5) and (1.8+6+1.3,0.5) ..  (1.3+6+1.3,0.1);
\draw [thick] (0+7+1.1,0) -- (1.5+6+1.23,0);
\node [align=center,align=center] at (10.2,0) {$+$};
\draw [fill] (0+7+1.1+3.5,0) circle [radius=0.055];
\draw (1.5+6+1.3+3.5,0) circle [radius=0.07];
\draw [thick] (0+7+1.1+3.5,0) -- (-0.5+7+1.1+3.5,0.5);
\draw [thick] (0+7+1.1+3.5,0) -- (-0.5+7+1.1+3.5,-0.5);
\draw [thick] (1.5+6+1.35+3.5,0.05) -- (2+6+1.3+3.5,0.5);
\draw [thick] (1.5+6+1.35+3.5,-0.05) -- (2+6+1.3+3.5,-0.5);
\draw [thick] (1.5+6+1.37+3.5,0) -- (2+6+1.3+3.5,0);
\draw [->,>=stealth] (-0.2+7+1.1+3.5,-0.1) .. controls (0.3+7+1.1+3.5,-0.5) and (0.3+7+1.1+3.5,0.5) ..  (-0.2+7+1.1+3.5,0.1);
\node [align=center,left] at (-0.45+7+1.1+3.5,0.5) {$h_2$};
\node [align=center,left] at (-0.45+7+1.1+3.5,-0.5) {$h_3$};
\node [align=center,right] at (2+6+1.25+3.5,0) {$h_5$};
\node [align=center,right] at (2+6+1.25+3.5,0.5) {$h_1$};
\node [align=center,right] at (2+6+1.25+3.5,-0.5) {$h_4$};
\draw [->,>=stealth] (1.3+6+1.3+3.5,-0.1) .. controls (1.8+6+1.3+3.5,-0.5) and (1.8+6+1.3+3.5,0.5) ..  (1.3+6+1.3+3.5,0.1);
\draw [thick] (0+7+1.1+3.5,0) -- (1.5+6+1.23+3.5,0);
\end{tikzpicture}
\\
&\quad
\begin{tikzpicture}[scale=1.125]
\node [align=center,align=center] at (6.7,0) {$+$};
\draw [fill] (0+7+1.1,0) circle [radius=0.055];
\draw (1.5+6+1.3,0) circle [radius=0.07];
\draw [thick] (0+7+1.1,0) -- (-0.5+7+1.1,0.5);
\draw [thick] (0+7+1.1,0) -- (-0.5+7+1.1,-0.5);
\draw [thick] (1.5+6+1.35,0.05) -- (2+6+1.3,0.5);
\draw [thick] (1.5+6+1.35,-0.05) -- (2+6+1.3,-0.5);
\draw [thick] (1.5+6+1.37,0) -- (2+6+1.3,0);
\draw [->,>=stealth] (-0.2+7+1.1,-0.1) .. controls (0.3+7+1.1,-0.5) and (0.3+7+1.1,0.5) ..  (-0.2+7+1.1,0.1);
\node [align=center,left] at (-0.45+7+1.1,0.5) {$h_3$};
\node [align=center,left] at (-0.45+7+1.1,-0.5) {$h_4$};
\node [align=center,right] at (2+6+1.25,0) {$h_1$};
\node [align=center,right] at (2+6+1.25,0.5) {$h_2$};
\node [align=center,right] at (2+6+1.25,-0.5) {$h_5$};
\draw [->,>=stealth] (1.3+6+1.3,-0.1) .. controls (1.8+6+1.3,-0.5) and (1.8+6+1.3,0.5) ..  (1.3+6+1.3,0.1);
\draw [thick] (0+7+1.1,0) -- (1.5+6+1.23,0);
\node [align=center,align=center] at (10.2,0) {$+$};
\draw [fill] (0+7+1.1+3.5,0) circle [radius=0.055];
\draw (1.5+6+1.3+3.5,0) circle [radius=0.07];
\draw [thick] (0+7+1.1+3.5,0) -- (-0.5+7+1.1+3.5,0.5);
\draw [thick] (0+7+1.1+3.5,0) -- (-0.5+7+1.1+3.5,-0.5);
\draw [thick] (1.5+6+1.35+3.5,0.05) -- (2+6+1.3+3.5,0.5);
\draw [thick] (1.5+6+1.35+3.5,-0.05) -- (2+6+1.3+3.5,-0.5);
\draw [thick] (1.5+6+1.37+3.5,0) -- (2+6+1.3+3.5,0);
\draw [->,>=stealth] (-0.2+7+1.1+3.5,-0.1) .. controls (0.3+7+1.1+3.5,-0.5) and (0.3+7+1.1+3.5,0.5) ..  (-0.2+7+1.1+3.5,0.1);
\node [align=center,left] at (-0.45+7+1.1+3.5,0.5) {$h_4$};
\node [align=center,left] at (-0.45+7+1.1+3.5,-0.5) {$h_5$};
\node [align=center,right] at (2+6+1.25+3.5,0) {$h_2$};
\node [align=center,right] at (2+6+1.25+3.5,0.5) {$h_3$};
\node [align=center,right] at (2+6+1.25+3.5,-0.5) {$h_1$};
\draw [->,>=stealth] (1.3+6+1.3+3.5,-0.1) .. controls (1.8+6+1.3+3.5,-0.5) and (1.8+6+1.3+3.5,0.5) ..  (1.3+6+1.3+3.5,0.1);
\draw [thick] (0+7+1.1+3.5,0) -- (1.5+6+1.23+3.5,0);
\node [align=center,align=center] at (10.2+3.5,0) {$+$};
\draw [fill] (0+7+1.1+3.5+3.5,0) circle [radius=0.055];
\draw (1.5+6+1.3+3.5+3.5,0) circle [radius=0.07];
\draw [thick] (0+7+1.1+3.5+3.5,0) -- (-0.5+7+1.1+3.5+3.5,0.5);
\draw [thick] (0+7+1.1+3.5+3.5,0) -- (-0.5+7+1.1+3.5+3.5,-0.5);
\draw [thick] (1.5+6+1.35+3.5+3.5,0.05) -- (2+6+1.3+3.5+3.5,0.5);
\draw [thick] (1.5+6+1.35+3.5+3.5,-0.05) -- (2+6+1.3+3.5+3.5,-0.5);
\draw [thick] (1.5+6+1.37+3.5+3.5,0) -- (2+6+1.3+3.5+3.5,0);
\draw [->,>=stealth] (-0.2+7+1.1+3.5+3.5,-0.1) .. controls (0.3+7+1.1+3.5+3.5,-0.5) and (0.3+7+1.1+3.5+3.5,0.5) ..  (-0.2+7+1.1+3.5+3.5,0.1);
\node [align=center,left] at (-0.45+7+1.1+3.5+3.5,0.5) {$h_5$};
\node [align=center,left] at (-0.45+7+1.1+3.5+3.5,-0.5) {$h_1$};
\node [align=center,right] at (2+6+1.25+3.5+3.5,0) {$h_3$};
\node [align=center,right] at (2+6+1.25+3.5+3.5,0.5) {$h_4$};
\node [align=center,right] at (2+6+1.25+3.5+3.5,-0.5) {$h_2$};
\draw [->,>=stealth] (1.3+6+1.3+3.5+3.5,-0.1) .. controls (1.8+6+1.3+3.5+3.5,-0.5) and (1.8+6+1.3+3.5+3.5,0.5) ..  (1.3+6+1.3+3.5+3.5,0.1);
\draw [thick] (0+7+1.1+3.5+3.5,0) -- (1.5+6+1.23+3.5+3.5,0);
\node [align=center,align=center] at (16.9,-0.3) {$,$};
\end{tikzpicture}
\end{split}&&
\end{flalign*}
\begin{flalign*}
\begin{split}
&\begin{tikzpicture}[scale=1.075]
\node [align=center,align=center] at (0.25,0) {$\cS_{3,3}\bigg($};
\draw [fill] (2-0.3+0.15,0) circle [radius=0.06];
\draw [thick] (0+2-0.3+0.15,0) -- (-0.5+2-0.3+0.15,0.5);
\draw [thick] (0+2-0.3+0.15,0) -- (-0.5+2-0.3+0.15,-0.5);
\draw [thick] (1.7+0.15,0) -- (2.35+0.15,0);
\node [align=center,left] at (-0.5+2-0.25+0.15,0.5) {$h_1$};
\node [align=center,left] at (-0.5+2-0.25+0.15,-0.5) {$h_2$};
\node [align=center,right] at (2.3+0.15,0) {$h_3$};
\draw [->,>=stealth] (-0.2+1.7+0.15,-0.1) .. controls (0.3+1.7+0.15,-0.5) and (0.3+1.7+0.15,0.5) ..  (-0.2+1.7+0.15,0.1);
\node [align=center,align=center] at (3.35,0) {$\bigg)=$};
\draw [fill] (5,0) circle [radius=0.06];
\draw (5.5,0.5) circle [radius=0.07];
\draw [thick] (0+5,0) -- (-0.5+5,0.5);
\draw [thick] (0+5,0) -- (-0.5+5,-0.5);
\draw [thick] (5,0) -- (5.45,0.45);
\draw [thick] (5,0) -- (5.5,-0.5);
\draw [->,>=stealth] (-0.2+5,-0.1) .. controls (0.3+5,-0.5) and (0.3+5,0.5) ..  (-0.2+5,0.1);
\node [align=center,left] at (-0.45+5,0.5) {$h_1$};
\node [align=center,left] at (-0.45+5,-0.5) {$h_2$};
\node [align=center,right] at (5.45,-0.5) {$h_3$};
\node [align=center,align=center] at (6.2,0) {$+$};
\draw [fill] (5+2.6,0) circle [radius=0.06];
\draw (5.5+2.6,0.5) circle [radius=0.07];
\draw [thick] (0+5+2.6,0) -- (-0.5+5+2.6,0.5);
\draw [thick] (0+5+2.6,0) -- (-0.5+5+2.6,-0.5);
\draw [thick] (5+2.6,0) -- (5.45+2.6,0.45);
\draw [thick] (5+2.6,0) -- (5.5+2.6,-0.5);
\draw [->,>=stealth] (-0.2+5+2.6,-0.1) .. controls (0.3+5+2.6,-0.5) and (0.3+5+2.6,0.5) ..  (-0.2+5+2.6,0.1);
\node [align=center,left] at (-0.45+5+2.6,0.5) {$h_2$};
\node [align=center,left] at (-0.45+5+2.6,-0.5) {$h_3$};
\node [align=center,right] at (5.45+2.6,-0.5) {$h_1$};
\node [align=center,align=center] at (6.2+2.6,0) {$+$};
\draw [fill] (5+2.6+2.6,0) circle [radius=0.06];
\draw (5.5+2.6+2.6,0.5) circle [radius=0.07];
\draw [thick] (0+5+2.6+2.6,0) -- (-0.5+5+2.6+2.6,0.5);
\draw [thick] (0+5+2.6+2.6,0) -- (-0.5+5+2.6+2.6,-0.5);
\draw [thick] (5+2.6+2.6,0) -- (5.45+2.6+2.6,0.45);
\draw [thick] (5+2.6+2.6,0) -- (5.5+2.6+2.6,-0.5);
\draw [->,>=stealth] (-0.2+5+2.6+2.6,-0.1) .. controls (0.3+5+2.6+2.6,-0.5) and (0.3+5+2.6+2.6,0.5) ..  (-0.2+5+2.6+2.6,0.1);
\node [align=center,left] at (-0.45+5+2.6+2.6,0.5) {$h_3$};
\node [align=center,left] at (-0.45+5+2.6+2.6,-0.5) {$h_1$};
\node [align=center,right] at (5.45+2.6+2.6,-0.5) {$h_2$};
\node [align=center,align=center] at (11.25,-0.3) {$,$};
\end{tikzpicture}\\
\end{split}&&
\end{flalign*}
\begin{flalign*}
\begin{split}
&\begin{tikzpicture}[scale=1.075]
\node [align=center,align=center] at (0.25,0) {$\cS_{3,0}\bigg($};
\draw [fill] (2-0.3+0.15,0) circle [radius=0.06];
\draw [thick] (0+2-0.3+0.15,0) -- (-0.5+2-0.3+0.15,0.5);
\draw [thick] (0+2-0.3+0.15,0) -- (-0.5+2-0.3+0.15,-0.5);
\draw [thick] (1.7+0.15,0) -- (2.35+0.15,0);
\node [align=center,left] at (-0.5+2-0.25+0.15,0.5) {$h_1$};
\node [align=center,left] at (-0.5+2-0.25+0.15,-0.5) {$h_2$};
\node [align=center,right] at (2.3+0.15,0) {$h_3$};
\draw [->,>=stealth] (-0.2+1.7+0.15,-0.1) .. controls (0.3+1.7+0.15,-0.5) and (0.3+1.7+0.15,0.5) ..  (-0.2+1.7+0.15,0.1);
\node [align=center,align=center] at (3.35,0) {$\bigg)=$};
\draw (5,0) circle [radius=0.07];
\draw [fill] (5.5,0.5) circle [radius=0.065];
\draw [thick] (0+5-0.05,0.05) -- (-0.5+5,0.5);
\draw [thick] (0+5-0.05,-0.05) -- (-0.5+5,-0.5);
\draw [thick] (5+0.05,0.05) -- (5.5,0.5);
\draw [thick] (5+0.05,-0.05) -- (5.5,-0.5);
\draw [->,>=stealth] (-0.2+5,-0.1) .. controls (0.3+5,-0.5) and (0.3+5,0.5) ..  (-0.2+5,0.1);
\node [align=center,left] at (-0.45+5,0.5) {$h_1$};
\node [align=center,left] at (-0.45+5,-0.5) {$h_2$};
\node [align=center,right] at (5.45,-0.5) {$h_3$};
\node [align=center,align=center] at (6.2,0) {$+$};
\draw (5+2.6,0) circle [radius=0.07];
\draw [fill] (5.5+2.6,0.5) circle [radius=0.065];
\draw [thick] (0+5+2.6-0.05,0.05) -- (-0.5+5+2.6,0.5);
\draw [thick] (0+5+2.6-0.05,-0.05) -- (-0.5+5+2.6,-0.5);
\draw [thick] (5+2.6+0.05,0.05) -- (5.5+2.6,0.5);
\draw [thick] (5+2.6+0.05,-0.05) -- (5.5+2.6,-0.5);
\draw [->,>=stealth] (-0.2+5+2.6,-0.1) .. controls (0.3+5+2.6,-0.5) and (0.3+5+2.6,0.5) ..  (-0.2+5+2.6,0.1);
\node [align=center,left] at (-0.45+5+2.6,0.5) {$h_2$};
\node [align=center,left] at (-0.45+5+2.6,-0.5) {$h_3$};
\node [align=center,right] at (5.45+2.6,-0.5) {$h_1$};
\node [align=center,align=center] at (6.2+2.6,0) {$+$};
\draw (5+2.6+2.6,0) circle [radius=0.07];
\draw [fill] (5.5+2.6+2.6,0.5) circle [radius=0.065];
\draw [thick] (0+5+2.6+2.6-0.05,0.05) -- (-0.5+5+2.6+2.6,0.5);
\draw [thick] (0+5+2.6+2.6-0.05,-0.05) -- (-0.5+5+2.6+2.6,-0.5);
\draw [thick] (5+2.6+2.6+0.05,0.05) -- (5.5+2.6+2.6,0.5);
\draw [thick] (5+2.6+2.6+0.05,-0.05) -- (5.5+2.6+2.6,-0.5);
\draw [->,>=stealth] (-0.2+5+2.6+2.6,-0.1) .. controls (0.3+5+2.6+2.6,-0.5) and (0.3+5+2.6+2.6,0.5) ..  (-0.2+5+2.6+2.6,0.1);
\node [align=center,left] at (-0.45+5+2.6+2.6,0.5) {$h_3$};
\node [align=center,left] at (-0.45+5+2.6+2.6,-0.5) {$h_1$};
\node [align=center,right] at (5.45+2.6+2.6,-0.5) {$h_2$};
\node [align=center,align=center] at (11.25,-0.3) {$.$};
\end{tikzpicture}\\
\end{split}&&
\end{flalign*}

\end{Example}

\begin{Remark}
The vertex-splitting operators are `inverses' of the edge-contraction procedures
in the following sense.
For every graph (with a new hollow vertex) appearing on the right-hand side
of the above equalities,
one can regard the new hollow vertex as the special vertex $v_1$
discussed in \S \ref{sec-fatedge-contr},
then the original graph can be obtained from this new graph by contracting along
a half-edge attached to this new hollow vertex.
\end{Remark}

Finally,
let us define a family of operators $\cJ_{m,n}$,
where $m,n\in\bZ_{>0}$ are a pair of positive integers.
Similar to the vertex-splitting operators $\cS_{n,k}$ defined above,
the operators $\cJ_{m,n}$ are also `inverses' of edge-contraction operators.
The operators $\cS_{n,k}$ inverse the contraction of an internal edge
which is not a loop,
while $\cJ_{m,n}$ inverse the contraction of a loop
(see Definition \ref{def-abs-opr}).

\begin{Definition}
\label{def-abs-opr-J}
Given $m,n\in\bZ_{\geq 0}$,
the operator $\cJ_{m,n}$ is defined to be the linear operator on $\cV^{ho}$
such that
\be
\cJ_{m,n}(\Gamma)=
\sum_{\substack{v_1\in V_m(\Gamma)\\v_2\in V_n(\Gamma)\\ v_1\not= v_2}}
\sum_{\substack{h_1\in H(v_1)\\h_2\in H(v_2)}}
\cJ(\Gamma,h_1,h_2)
\ee
for every graph $\Gamma$,
where the graph $\cJ(\Gamma,h_1,h_2)$ is defined to be the following graph:

\begin{itemize}
\item[1)]
First, we attach a new half-edge $h'$ at $v_1$ such that
$h'$ is right before $h_1$ w.r.t. the cyclic order;
and attach a new half-edge $h''$ at $v_2$ such that
$h''$ is right before $h_2$ w.r.t. the cyclic order.
\item[2)]
Then we merge $v_1$ and $v_2$ together to obtain a new hollow vertex
of valence $m+n+2$,
such that the cyclic orders are defined as follows:
\begin{equation*}
h'\quad\to\quad\text{half-edges on $v_1$}\quad\to\quad
h''\quad\to\quad\text{half-edges on $v_2$}.
\end{equation*}
\item[3)]
Finally,
we glue the half-edges $h'$ and $h''$ together to obtain a loop.
\end{itemize}

\end{Definition}

\begin{Example}
Let us give some examples of the operator $\cJ_{m,n}$.
Similar to Example \ref{eg-fat-vertexsplit},
we only draw two vertices together with half-edges attached to them
instead of drawing the entire graphs:
\begin{flalign*}
\begin{split}
&\begin{tikzpicture}[scale=1.02]
\node [align=center,align=center] at (0.25,0) {$\cJ_{3,2}\bigg($};
\draw [fill] (2-0.3+0.15,0) circle [radius=0.06];
\draw [thick] (0+2-0.3+0.15,0) -- (-0.5+2-0.3+0.15,0.5);
\draw [thick] (0+2-0.3+0.15,0) -- (-0.5+2-0.3+0.15,-0.5);
\draw [thick] (1.7+0.15,0) -- (2.35+0.15,0);
\node [align=center,left] at (-0.5+2-0.25+0.15,0.5) {$h_1$};
\node [align=center,left] at (-0.5+2-0.25+0.15,-0.5) {$h_2$};
\node [align=center,right] at (2.3+0.15,0) {$h_3$};
\draw [->,>=stealth] (-0.2+1.7+0.15,-0.1) .. controls (0.3+1.7+0.15,-0.5) and (0.3+1.7+0.15,0.5) ..  (-0.2+1.7+0.15,0.1);
\node [align=center,align=center] at (4.6,0) {$\bigg)=$};
\draw [fill] (3.6,0) circle [radius=0.06];
\draw [thick] (3.6,-0.55) -- (3.6,0.55);
\node [align=center,right] at (3.6,0.5) {$h_4$};
\node [align=center,right] at (3.6,-0.5) {$h_5$};
\draw [->,>=stealth] (-0.2+1.7+0.15+1.75,-0.1) .. controls (0.3+1.7+0.15+1.75,-0.5) and (0.3+1.7+0.15+1.75,0.5) ..  (-0.2+1.7+0.15+1.75,0.1);
\draw (0+7-0.8,0) circle [radius=0.07];
\draw [thick] (7.4,0) arc (0:174.5:0.6);
\draw [thick] (7.4,0) arc (360:185.5:0.6);
\draw [thick] (-0.5+7-0.8,0) -- (0+7-0.07-0.8,0);
\draw [thick] (-0.5+7-0.8,0.6) -- (0+7-0.05-0.8,0.05);
\draw [thick] (-0.5+7-0.8,-0.6) -- (0+7-0.05-0.8,-0.05);
\draw [thick] (0.05+7-0.8,0.05) -- (0.4+7-0.8,0.3);
\draw [thick] (0.05+7-0.8,-0.05) -- (0.4+7-0.8,-0.3);
\node [align=center,left] at (-0.45+7-0.8,0) {$h_2$};
\node [align=center,left] at (-0.45+7-0.8,0.6) {$h_1$};
\node [align=center,left] at (-0.45+7-0.8,-0.6) {$h_3$};
\node [align=center,right] at (0.35+7-0.8,0.3) {$h_4$};
\node [align=center,right] at (0.35+7-0.8,-0.3) {$h_5$};
\draw [->,>=stealth] (-0.2+7-0.8,-0.1) .. controls (0.3+7-0.8,-0.5) and (0.3+7-0.8,0.5) ..  (-0.2+7-0.8,0.1);
\node [align=center,align=center] at (7.9,0) {$+$};
\draw (0+7-0.8+3,0) circle [radius=0.07];
\draw [thick] (7.4+3,0) arc (0:174.5:0.6);
\draw [thick] (7.4+3,0) arc (360:185.5:0.6);
\draw [thick] (-0.5+7-0.8+3,0) -- (0+7-0.07-0.8+3,0);
\draw [thick] (-0.5+7-0.8+3,0.6) -- (0+7-0.05-0.8+3,0.05);
\draw [thick] (-0.5+7-0.8+3,-0.6) -- (0+7-0.05-0.8+3,-0.05);
\draw [thick] (0.05+7-0.8+3,0.05) -- (0.4+7-0.8+3,0.3);
\draw [thick] (0.05+7-0.8+3,-0.05) -- (0.4+7-0.8+3,-0.3);
\node [align=center,left] at (-0.45+7-0.8+3,0) {$h_2$};
\node [align=center,left] at (-0.45+7-0.8+3,0.6) {$h_1$};
\node [align=center,left] at (-0.45+7-0.8+3,-0.6) {$h_3$};
\node [align=center,right] at (0.35+7-0.8+3,0.3) {$h_5$};
\node [align=center,right] at (0.35+7-0.8+3,-0.3) {$h_4$};
\draw [->,>=stealth] (-0.2+7-0.8+3,-0.1) .. controls (0.3+7-0.8+3,-0.5) and (0.3+7-0.8+3,0.5) ..  (-0.2+7-0.8+3,0.1);
\end{tikzpicture}\\
&\quad
\begin{tikzpicture}[scale=1.02]
\node [align=center,align=center] at (7.9-3,0) {$+$};
\draw (0+7-0.8,0) circle [radius=0.07];
\draw [thick] (7.4,0) arc (0:174.5:0.6);
\draw [thick] (7.4,0) arc (360:185.5:0.6);
\draw [thick] (-0.5+7-0.8,0) -- (0+7-0.07-0.8,0);
\draw [thick] (-0.5+7-0.8,0.6) -- (0+7-0.05-0.8,0.05);
\draw [thick] (-0.5+7-0.8,-0.6) -- (0+7-0.05-0.8,-0.05);
\draw [thick] (0.05+7-0.8,0.05) -- (0.4+7-0.8,0.3);
\draw [thick] (0.05+7-0.8,-0.05) -- (0.4+7-0.8,-0.3);
\node [align=center,left] at (-0.45+7-0.8,0) {$h_3$};
\node [align=center,left] at (-0.45+7-0.8,0.6) {$h_2$};
\node [align=center,left] at (-0.45+7-0.8,-0.6) {$h_1$};
\node [align=center,right] at (0.35+7-0.8,0.3) {$h_4$};
\node [align=center,right] at (0.35+7-0.8,-0.3) {$h_5$};
\draw [->,>=stealth] (-0.2+7-0.8,-0.1) .. controls (0.3+7-0.8,-0.5) and (0.3+7-0.8,0.5) ..  (-0.2+7-0.8,0.1);
\node [align=center,align=center] at (7.9,0) {$+$};
\draw (0+7-0.8+3,0) circle [radius=0.07];
\draw [thick] (7.4+3,0) arc (0:174.5:0.6);
\draw [thick] (7.4+3,0) arc (360:185.5:0.6);
\draw [thick] (-0.5+7-0.8+3,0) -- (0+7-0.07-0.8+3,0);
\draw [thick] (-0.5+7-0.8+3,0.6) -- (0+7-0.05-0.8+3,0.05);
\draw [thick] (-0.5+7-0.8+3,-0.6) -- (0+7-0.05-0.8+3,-0.05);
\draw [thick] (0.05+7-0.8+3,0.05) -- (0.4+7-0.8+3,0.3);
\draw [thick] (0.05+7-0.8+3,-0.05) -- (0.4+7-0.8+3,-0.3);
\node [align=center,left] at (-0.45+7-0.8+3,0) {$h_3$};
\node [align=center,left] at (-0.45+7-0.8+3,0.6) {$h_2$};
\node [align=center,left] at (-0.45+7-0.8+3,-0.6) {$h_1$};
\node [align=center,right] at (0.35+7-0.8+3,0.3) {$h_5$};
\node [align=center,right] at (0.35+7-0.8+3,-0.3) {$h_4$};
\draw [->,>=stealth] (-0.2+7-0.8+3,-0.1) .. controls (0.3+7-0.8+3,-0.5) and (0.3+7-0.8+3,0.5) ..  (-0.2+7-0.8+3,0.1);
\node [align=center,align=center] at (7.9-3+6,0) {$+$};
\draw (0+7-0.8+6,0) circle [radius=0.07];
\draw [thick] (7.4+6,0) arc (0:174.5:0.6);
\draw [thick] (7.4+6,0) arc (360:185.5:0.6);
\draw [thick] (-0.5+7-0.8+6,0) -- (0+7-0.07-0.8+6,0);
\draw [thick] (-0.5+7-0.8+6,0.6) -- (0+7-0.05-0.8+6,0.05);
\draw [thick] (-0.5+7-0.8+6,-0.6) -- (0+7-0.05-0.8+6,-0.05);
\draw [thick] (0.05+7-0.8+6,0.05) -- (0.4+7-0.8+6,0.3);
\draw [thick] (0.05+7-0.8+6,-0.05) -- (0.4+7-0.8+6,-0.3);
\node [align=center,left] at (-0.45+7-0.8+6,0) {$h_1$};
\node [align=center,left] at (-0.45+7-0.8+6,0.6) {$h_3$};
\node [align=center,left] at (-0.45+7-0.8+6,-0.6) {$h_2$};
\node [align=center,right] at (0.35+7-0.8+6,0.3) {$h_4$};
\node [align=center,right] at (0.35+7-0.8+6,-0.3) {$h_5$};
\draw [->,>=stealth] (-0.2+7-0.8+6,-0.1) .. controls (0.3+7-0.8+6,-0.5) and (0.3+7-0.8+6,0.5) ..  (-0.2+7-0.8+6,0.1);
\node [align=center,align=center] at (7.9+6,0) {$+$};
\draw (0+7-0.8+3+6,0) circle [radius=0.07];
\draw [thick] (7.4+3+6,0) arc (0:174.5:0.6);
\draw [thick] (7.4+3+6,0) arc (360:185.5:0.6);
\draw [thick] (-0.5+7-0.8+3+6,0) -- (0+7-0.07-0.8+3+6,0);
\draw [thick] (-0.5+7-0.8+3+6,0.6) -- (0+7-0.05-0.8+3+6,0.05);
\draw [thick] (-0.5+7-0.8+3+6,-0.6) -- (0+7-0.05-0.8+3+6,-0.05);
\draw [thick] (0.05+7-0.8+3+6,0.05) -- (0.4+7-0.8+3+6,0.3);
\draw [thick] (0.05+7-0.8+3+6,-0.05) -- (0.4+7-0.8+3+6,-0.3);
\node [align=center,left] at (-0.45+7-0.8+3+6,0) {$h_1$};
\node [align=center,left] at (-0.45+7-0.8+3+6,0.6) {$h_3$};
\node [align=center,left] at (-0.45+7-0.8+3+6,-0.6) {$h_2$};
\node [align=center,right] at (0.35+7-0.8+3+6,0.3) {$h_5$};
\node [align=center,right] at (0.35+7-0.8+3+6,-0.3) {$h_4$};
\draw [->,>=stealth] (-0.2+7-0.8+3+6,-0.1) .. controls (0.3+7-0.8+3+6,-0.5) and (0.3+7-0.8+3+6,0.5) ..  (-0.2+7-0.8+3+6,0.1);
\node [align=center,align=center] at (16.65,-0.35) {$.$};
\end{tikzpicture}\\
\end{split}&&
\end{flalign*}
Notice that in the above definition we also allow the two special cases $m=0$ or $n=0$.
For example:
\begin{flalign*}
\begin{tikzpicture}[scale=1.475]
\node [align=center,align=center] at (2.745,0) {$\cJ_{2,0}\bigg($};
\draw [fill] (3.6,0) circle [radius=0.05];
\draw [thick] (3.6,-0.55) -- (3.6,0.55);
\node [align=center,right] at (3.6,0.5) {$h_1$};
\node [align=center,right] at (3.6,-0.5) {$h_2$};
\draw [->,>=stealth] (-0.2+1.7+0.15+1.75,-0.1) .. controls (0.3+1.7+0.15+1.75,-0.5) and (0.3+1.7+0.15+1.75,0.5) ..  (-0.2+1.7+0.15+1.75,0.1);
\draw [fill] (4.75,0) circle [radius=0.05];
\node [align=center,align=center] at (5.35,0) {$\bigg)=$};
\draw (7-0.1,0) circle [radius=0.05];
\draw [thick] (6.45-0.1,0.35) -- (6.97-0.1,0.03);
\draw [thick] (6.45-0.1,-0.35) -- (6.97-0.1,-0.03);
\node [align=center,left] at (6.5-0.1,0.35) {$h_1$};
\node [align=center,left] at (6.5-0.1,-0.35) {$h_2$};
\draw [->,>=stealth] (-0.2+1.7+0.15+1.75+3.4-0.1,-0.1) .. controls (0.3+1.7+0.15+1.75+3.4-0.1,-0.5) and (0.3+1.7+0.15+1.75+3.4-0.1,0.5) ..  (-0.2+1.7+0.15+1.75+3.4-0.1,0.1);
\draw [thick] (7.8-0.1,0) arc (0:173:0.4);
\draw [thick] (7.8-0.1,0) arc (360:188:0.4);
\node [align=center,align=center] at (8.25-0.1,0) {$+$};
\draw (7+2.8-0.2,0) circle [radius=0.05];
\draw [thick] (6.45+2.8-0.2,0.35) -- (6.97+2.8-0.2,0.03);
\draw [thick] (6.45+2.8-0.2,-0.35) -- (6.97+2.8-0.2,-0.03);
\node [align=center,left] at (6.5+2.8-0.2,0.35) {$h_2$};
\node [align=center,left] at (6.5+2.8-0.2,-0.35) {$h_1$};
\draw [->,>=stealth] (-0.2+1.7+0.15+1.75+3.4+2.8-0.2,-0.1) .. controls (0.3+1.7+0.15+1.75+3.4+2.8-0.2,-0.5) and (0.3+1.7+0.15+1.75+3.4+2.8-0.2,0.5) ..  (-0.2+1.7+0.15+1.75+3.4+2.8-0.2,0.1);
\draw [thick] (7.8+2.8-0.2,0) arc (0:173:0.4);
\draw [thick] (7.8+2.8-0.2,0) arc (360:188:0.4);
\node [align=center,align=center] at (10.85-0.25,-0.3) {$.$};
\end{tikzpicture}&&
\end{flalign*}

\end{Example}

\begin{Remark}
In the above equalities,
the original graph can be obtained from  a graph on the right-hand side
by contracting a loop attached to the new hollow vertex.
\end{Remark}

\subsection{Abstract Virasoro operators and abstract Virasoro constraints}
\label{sec-fatabs-virasoro}

In this subsection,
we define a sequence of operators $\{\cL_m\}_{m\geq -1}$ on fat graphs
which annihilates the abstract partition function $ \cZ $.
Moreover,
we show that these operators satisfy $[\cL_m,\cL_n]=(m-n)\cL_{m+n}$
with respect to a Lie bracket $[-,-]$.
In other words,
we construct the `abstract Virasoro constraints' for the abstract QFT
for fat graphs.

\begin{Definition}
\label{def-abs-Viropr}
We define a sequence of operators $\{\cL_m\}_{m\geq -1}$ as follows:
\begin{itemize}
\item[1)]
The operator $\cL_{-1}$ is defined to be:
\ben
\cL_{-1}:=-\pd_1+\sum_{n\geq 1}\cS_{n,n}
+g_s^{-2}\cdot\gamma_{-1},
\een
where $g_s$ is a formal variable
(see \S \ref{sec-abstractpartition}),
and the operator $\gamma_{-1}$ is multiplying by
(i.e., taking disjoint union with) $\Gamma_{-1}$,
where $\Gamma_{-1}$ is the following graph:
\ben
\begin{tikzpicture}[scale=1.2]
\node [align=center,align=center] at (-0.9,0) {$\Gamma_{-1}:=$};
\draw (0,0) circle [radius=0.06];
\draw [thick] (0.06,0) -- (1.5,0);
\draw [fill] (1.5,0) circle [radius=0.06];
\node [align=center,align=center] at (1.8,-0.25) {$.$};
\end{tikzpicture}
\een

\item[2)]
The operator $\cL_{0}$ is defined to be:
\ben
\cL_{0}:=-2\pd_2+\sum_{n\geq 1}\cS_{n,n-1}
+g_s^{-2}\cdot\gamma_0,
\een
where the operator $\gamma_0$ is multiplying by the graph $\Gamma_0$:
\ben
\begin{tikzpicture}[scale=1.2]
\node [align=center,align=center] at (-0.85,0) {$\Gamma_{0}:=$};
\draw (0,0) circle [radius=0.06];
\draw [thick] (0.04,-0.04) .. controls (1.5,-1.2) and (1.5,1.2) ..  (0.04,0.04);
\draw [->,>=stealth] (-0.2,-0.1) .. controls (0.3,-0.5) and (0.3,0.5) ..  (-0.2,0.1);
\node [align=center,align=center] at (1.5,-0.25) {$.$};
\end{tikzpicture}
\een

\item[3)]
The operator $\cL_{1}$ is defined to be:
\ben
\cL_{1}:=-3\pd_3+\sum_{n\geq 1}\cS_{n+1,n-1}
+2\cJ_{1,0}(-\sqcup\Gamma_{dot}),
\een
where $\Gamma_{dot}$ is the graph consisting of one single solid vertex
of valence zero.

\item[4)]
The operator $\cL_{m}$ ($m\geq 2$) is defined to be:
\begin{equation*}
\begin{split}
\cL_{m}:=&-(m+2)\pd_{m+2}+\sum_{n\geq 1}\cS_{n+m,n-1}
+g_s^2\cdot \sum_{n=1}^{m-1}\cJ_{n,m-n}\\
&+2\cJ_{m,0}(-\sqcup\Gamma_{dot}).
\end{split}
\end{equation*}

\end{itemize}
We will call $\{\cL_m\}_{m\geq -1}$ the
`abstract Virasoro operators' for fat graphs.
\end{Definition}

It is clear that the operator $\cL_m$ ($m\geq -1$) will produce a new hollow vertex
of valence $m+2$ when applied to a fat graph.
Our first main theorem in this subsection is the following
`abstract Virasoro constraints':

\begin{Theorem}
\label{thm-abstract-vir1}
We have:
\be
\cL_m  (\cZ)  =0,\qquad \forall m\geq -1,
\ee
where $\cZ$ is the abstract partition function
(see Definition \ref{def-fat-abs-par}).
\end{Theorem}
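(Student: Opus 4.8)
The plan is to deduce the abstract Virasoro constraints $\cL_m(\cZ)=0$ directly from the quadratic recursion for abstract correlators, Theorem \ref{thm-abstract-rec}. The key observation is that each summand in the definition of $\cL_m$ (Definition \ref{def-abs-Viropr}) is the ``inverse'' of one of the terms appearing in the recursion \eqref{eq-abstract-rec}. More precisely: the term $-(m+2)\pd_{m+2}$ records the left-hand side $K_1\cF_g^{\mu}$ after transposition (contracting the distinguished vertex of valence $m+2$ becomes, after transposing, attaching a hollow vertex of that valence); the operators $\cS_{n+m,n-1}$ invert the edge-splicing term $\sum_{j}(\mu_1+\mu_j-2)\cF_g^{(\mu_1+\mu_j-2,\dots)}$; the operators $\cJ_{n,m-n}$ (with the $g_s^2$ prefactor accounting for the genus drop) invert the loop-contraction terms producing $\cF_{g-1}$ and the product $\sum \cF_{g_1}\cF_{g_2}$; and the $\cJ_{m,0}(-\sqcup\Gamma_{dot})$ term together with the $\gamma_{-1},\gamma_0$ multiplication operators handle the unstable ``tail'' contributions $\cF_{0,\{k\}}^{(0)}\cF_{g,\dots}$ and the base cases $\delta_{g,0}\delta_{n,1}\delta_{\mu_1,2}$.

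First I would make precise the relationship between $K_1$ and the operators $\pd_n$, $\cS_{n,k}$, $\cJ_{m,n}$. The cleanest route is to introduce a bilinear pairing on the space $\cV^{ho}$ (or work with a formal adjoint) under which $K_1$ acting on the vertex $v_1$ is dual to the operation of attaching a hollow vertex and summing over insertion points; then each ``inverse of edge-contraction'' description in the Remarks following Definitions \ref{def-vertexsplit-opr} and \ref{def-abs-opr-J} becomes a literal adjointness statement. Concretely, for a fat graph $\Gamma'$ with a distinguished hollow vertex of valence $m+2$, the coefficient with which $\Gamma$ appears in $K_1$-type contraction equals the coefficient with which $\Gamma'$ appears in the appropriate $\cS$ or $\cJ$ applied to $\Gamma$; this is just a bookkeeping reformulation of the bijections used in the proof of Theorem \ref{thm-abstract-rec}.

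Next I would assemble the full generating series. Writing $\cZ$ as the exponential $\exp(\sum_g g_s^{2g-2}\cF_g)$ and $\cF_g$ as $\sum_{n,\mu}\tfrac1{n!}\cF_g^\mu$ (forgetting labels), I would apply $\cL_m$ termwise. The derivation operator $\pd_{m+2}$ picks out, from each graph in $\cZ$, a solid vertex of valence $m+2$ and hollows it; grouping graphs by which vertex is selected and recalling \eqref{eq-integcoeff}-style weight factors, this matches (the transpose of) the left side $K_1\cF_g^\mu$. The operators $\cS_{n+m,n-1}$ and $\cJ_{n,m-n}$ and $\cJ_{m,0}(-\sqcup\Gamma_{dot})$ applied to the other connected components / the same component produce exactly the right-hand side of \eqref{eq-abstract-rec} after re-summing over $g_1+g_2=g$ and $I\sqcup J$ (the exponential structure of $\cZ$ is what converts the connected-correlator product $\cF_{g_1}\cF_{g_2}$ into the disconnected graphs appearing in $\cZ$). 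The combinatorial factors $\alpha\beta$, $(\mu_1+\mu_j-2)$, $(\mu_1-2)$ in \eqref{eq-abstract-rec} are precisely reproduced by the number of insertion points / half-edge choices built into the definitions of $\cS$ and $\cJ$, together with the $\tfrac1{n!}$ symmetrization. After this matching, \eqref{eq-abstract-rec} is exactly the statement that the total coefficient of every fat graph (with one marked hollow vertex of valence $m+2$) in $\cL_m(\cZ)$ vanishes, i.e.\ $\cL_m(\cZ)=0$.

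\textbf{The main obstacle} I anticipate is the careful treatment of the unstable and boundary terms: the vertex of valence $0$ (the graphs $\cF_0^{(0)}$, $\Gamma_{dot}$), the edge $\Gamma_{-1}$ and loop-at-a-vertex $\Gamma_0$ appearing with $g_s^{-2}$ in $\cL_{-1}$ and $\cL_0$, and the low-$m$ cases $m=-1,0,1$ where the structure of $\cL_m$ differs. These correspond to the Kronecker-delta terms and the $\cF_{0,\{k\}}^{(0)}\cF_{g,\dots}$ tail terms in \eqref{eq-abstract-rec}, and getting the powers of $g_s$ and the automorphism factors to line up requires separate (though routine) verification, much as in the Hermitian one-matrix model computation of \S\ref{sec-fatVirasoro} where exactly these terms produce the $tg_1 g_s^{-2}$, $t^2 g_s^{-2}$, $2t\pd/\pd g_1$ pieces of the fat Virasoro operators. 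I would handle these by checking the lowest cases $(g,n)$ by hand against Example \ref{eg-abstract-fe} to pin down normalizations, then give the general argument uniformly for $m\geq 2$ and note the cases $m=-1,0,1$ separately. A secondary technical point is ensuring the infinite sums defining $\cL_m(\cZ)$ are well-defined in the product completion $\bQ\cdot 1\times\prod_g(g_s^{2g-2}\cdot\prod_\Gamma\bQ\Gamma)$; this follows because for fixed genus and fixed target graph only finitely many terms contribute, by the valence/Euler-characteristic constraints.
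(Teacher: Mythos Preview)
Your proposal is correct and follows essentially the same approach as the paper: both recognize that $\cL_m(\cZ)=0$ is a direct reformulation of the quadratic recursion \eqref{eq-abstract-rec}, with the operators $\pd_{m+2}$, $\cS_{n+m,n-1}$, $\cJ_{n,m-n}$, $\gamma_{-1}$, $\gamma_0$ being precisely the inverses of the various edge-contraction cases, so that the bijection from the proof of Theorem~\ref{thm-abstract-rec} furnishes the required identity graph-by-graph. The paper's proof is somewhat more economical---it reduces $\cL_{-1}(\cZ)=0$ to the equivalent free-energy identity $\pd_1\cF_g=\sum_{n\geq 1}\cS_{n,n}\cF_g+\delta_{g,0}\Gamma_{-1}$ and reads it off directly from the $\mu_1=1$ case of \eqref{eq-abstract-rec}, then declares the other $m$ analogous---whereas you set up a slightly more elaborate adjointness framework and work at the level of $\cZ$ throughout; but the substance is the same, and your anticipated obstacles (unstable terms, low $m$, well-definedness of the infinite sums) are exactly the points the paper handles implicitly or in its Remark following the theorem.
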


\begin{Remark}
Notice that the actions of operators $\cL_n$ on $\cZ$ are well-defined
even though $\cZ$ is a formal infinite summation,
since it is easy to see that
each graph can be obtained only in a finite number of ways
by applying operations in $\cL_n$ to graphs appearing in the expression of $\cZ$.

\end{Remark}

\begin{proof}
We only prove the first equation $\cL_{-1} \cZ =0$ here,
since one can prove other equations using exactly the same method.

The equation $\cL_{-1} \cZ =0$ is equivalent to the following:
\be
\pd_1\cF=
\sum_{n\geq 1}\cS_{n,n}\cF+g_s^{-2}\cdot \Gamma_{-1},
\ee
where $\cF=\log(\cZ)$ is the abstract free energy
(see Definition \ref{def-fat-abs-fe})
and the graph $\Gamma_{-1}$ is defined in Definition \ref{def-abs-Viropr}.
Comparing the coefficients of $g_s$ in both sides of the above equation,
we know that it suffices to show the following sequence of equations:
\be
\label{eq-pf-virasoro1}
\pd_1\cF_g=
\sum_{n\geq 1}\cS_{n,n} \cF_g
+\delta_{g,0}\cdot \Gamma_{-1},
\qquad
\forall g\geq 0.
\ee

Now let us recall the abstract quadratic recursion relations
in Theorem \ref{thm-abstract-rec}.
Take $\mu_1=1$ in this theorem,
i.e., we contract a vertex of valence one in a connected fat graph
of type $\mu=(1,\mu_2,\cdots,\mu_n)$,
then this theorem tells us the resulting graphs must be
connected and of type $(\mu_j-1,\mu_{[n]\backslash\{1,j\}})$ for some $j$.
Moreover,
we can decompose the edge-contraction procedure into two steps:
first we change this vertex of valence one into a hollow one,
then we contract the internal edge and merge this hollow vertex
with an adjacent solid vertex into a new solid vertex.
Then the one-to-one correspondence constructed in the proof
of Theorem \ref{thm-abstract-rec} simply gives us the one-to-one correspondence
we need in proving \eqref{eq-pf-virasoro1},
except for the special graph $\Gamma_{-1}$ of genus $0$
which needs to be added additionally.

All equations $\cL_m \cZ =0$ ($m\geq -1$) are proved in the same way.
In other words,
this theorem is just a reformulation of the
quadratic recursion relation \eqref{eq-abstract-rec}
in terms of `generating series' $\cF_g, \cF,  \cZ $
of the abstract correlators $\cF_g^\mu$.
\end{proof}

Now we have already known that the abstract Virasoro operators $\{\cL_m\}_{m\geq -1}$
annihilate the abstract partition function $\cZ$.
In order to justify the name `abstract Virasoro constraints',
we are now supposed to construct the Virasoro commutation relations
among these operators.
In what follows,
we will construct a Lie bracket $[-,-]$,
and show that the operators $\{\cL_{m}\}_{m\geq -1}$
satisfy $[\cL_m,\cL_n]=(m-n)\cL_{m+n}$ for every $m,n\geq -1$.
We will see that the Lie bracket we need is `almost' a commutator--
it is the composition of the commutator and an additional edge-contraction.
Before giving the specific definition of this bracket,
we need the following lemma:

\begin{Lemma}
\label{lemma-adjacent}
Let $\Gamma$ be a fat graph,
and $m,n\geq -1$ be two integers.
Then for every graph $\Gamma'$ appearing in the expression of
$(\cL_m\circ\cL_n-\cL_n\circ\cL_m)(\Gamma)$ with nonzero coefficient,
the two new hollow vertices (of valences $m+2$ and $n+2$ respectively)
are adjacent,
i.e.,
there is an internal edge (which is not a loop) connecting them.
\end{Lemma}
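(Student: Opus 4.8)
The plan is to exploit the \emph{locality} of the elementary operations that build each $\cL_k$, and to show that whenever the two newly created hollow vertices fail to be adjacent, the two orders of composition produce literally the same terms with the same coefficients, so that these cancel in the difference $\cL_m\circ\cL_n-\cL_n\circ\cL_m$.

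First I would record a bookkeeping fact, read off directly from Definitions \ref{def-vertexsplit-opr}, \ref{def-abs-opr-J} and \ref{def-abs-Viropr}: each elementary operation occurring in $\cL_k$ --- namely $\pd_{k+2}$, a vertex-splitting $\cS_{\bullet,\bullet}$, a loop-operator $\cJ_{\bullet,\bullet}$, the term $\cJ_{k,0}(-\sqcup\Gamma_{dot})$, or (for $k=-1,0$) multiplication by $\Gamma_{-1}$ resp.\ $\Gamma_0$ --- creates \emph{exactly one} new hollow vertex $w$, of valence $k+2$, and it modifies the graph only inside a bounded neighbourhood of $w$: it regroups certain half-edges so that they all become incident to $w$, it creates at most one new edge (a loop at $w$, or an edge joining $w$ to a single solid vertex), and it turns at most one old solid vertex hollow or merges two old solid vertices into $w$, while no other vertex changes colour. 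Call this neighbourhood the \emph{support} of the operation. In particular every term of $\cL_m\circ\cL_n(\Gamma)$ carries two well-defined new hollow vertices $w_n$ and $w_m$, of valences $n+2$ and $m+2$.

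The core of the argument is a reversal bijection. Fix $\Gamma$ and a term $\Gamma'$, with its marked pair $w_m,w_n$, in which $w_m$ and $w_n$ are non-adjacent; it suffices to prove that the coefficient of $\Gamma'$ is the same in $\cL_m\circ\cL_n(\Gamma)$ and in $\cL_n\circ\cL_m(\Gamma)$. From the local structure at $w_k$ --- whether $w_k$ carries a loop whose half-edges are cyclically adjacent, a loop whose half-edges are not adjacent, a non-loop edge to a solid vertex, or none of these --- one reads off which operations of $\cL_k$ could have produced $w_k$, and each possibility determines a local \emph{reversal} contraction. I would then check: (i) every application of an operation of $\cL_k$ is undone by exactly one such reversal, with matching weight, tracking the numerical factors $-(k+2),\,1,\,2$ and the powers of $g_s$ (here $g_s^{-2}$ for the $\gamma$-terms and $g_s^2$ for the interior $\cJ$-terms); and (ii) if $w_m$ and $w_n$ are non-adjacent in $\Gamma'$, then every $w_m$-reversal commutes with every $w_n$-reversal, because their supports can meet only in a common old vertex that neither contraction alters structurally --- this is precisely where non-adjacency enters, adjacency being exactly the situation in which one reversal's new edge lies in the other's support. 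Granting (i) and (ii), the coefficient of $\Gamma'$ in $\cL_m\circ\cL_n(\Gamma)$ is obtained by summing, over all $w_m$-reversals $C_m$ and then all $w_n$-reversals $C_n$ of $C_m\Gamma'$, the product of their weights restricted to the terms with $C_nC_m\Gamma'=\Gamma$; by (ii) this double sum is symmetric under exchanging the two families of reversals, hence equals the coefficient of $\Gamma'$ in $\cL_n\circ\cL_m(\Gamma)$, and the non-adjacent part of $\cL_m\circ\cL_n(\Gamma)-\cL_n\circ\cL_m(\Gamma)$ vanishes.

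\textbf{Main obstacle.} The work is concentrated in (ii) and the case analysis behind it: one must enumerate the ways the \emph{second} elementary operation can meet the structure produced by the \emph{first} --- an $\cS$ that splits off a solid vertex which the next operation then splits again or turns hollow, a $\cJ$ forming its loop through a half-edge just created, and the special operations $\cJ_{k,0}(-\sqcup\Gamma_{dot})$ and $\gamma_{-1},\gamma_0$ that introduce a loop or an extra component --- and verify that among all these, precisely the configurations in which $w_m$ and $w_n$ remain non-adjacent survive in both orders with identical marked graphs and identical weights, while the genuinely order-sensitive terms are exactly those in which $w_m$ and $w_n$ are joined by an edge. Carrying the cyclic-order data at the new vertices correctly through these contractions, and matching the combinatorial and $g_s$-weights in the two orders, is the bulk of the proof.
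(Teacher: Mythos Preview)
Your proposal is correct and follows essentially the same approach as the paper: both argue that each elementary operation in $\cL_k$ is local with a well-defined reversal (edge-contraction or un-hollowing), and that whenever the two new hollow vertices are non-adjacent the reversals at $w_m$ and $w_n$ commute, forcing the coefficients in the two orders of composition to agree. Your framing in terms of supports and explicit weight-tracking is more detailed than the paper's sketch, but the underlying idea is the same.
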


\begin{proof}
The operator $\cL_k$ produces a new hollow vertex of valence $k+2$,
and this hollow vertex must `come from' some solid vertices of the original graph
or an additional connected component in the following sense:
\begin{itemize}
\item[1)]
It may come from a solid vertex by simply changing this vertex
into a hollow one with the same valence
(i.e., come from the operator $\pd_{k+2}$);

\item[2)]
It may come from a solid vertex by slitting this vertex into
a hollow one and a solid one
(i.e., come from the operator $\cS_{n,n-k-1}$ for some $n$);

\item[3)]
It may come from two solid vertices by merging them together and adding a loop,
or from one solid vertex by adding a loop
(i.e., come from the operators $\cJ_{n,k-n}$ for some $n$);

\item[4)]
It may come from an additional connected component
(i.e., come from the operator $\gamma_{-1}$ or $\gamma_0$).

\end{itemize}
The inverses of above procedures are clear:
in the first case,
the inverse procedure is to simply change the hollow vertex into a solid one;
and in the other cases,
the inverse procedures are just the edge-contraction procedures
discussed in \S \ref{sec-absqft}.

Now let $\Gamma'$ be a fat graph with hollow vertices
and assume that $\Gamma'$ appears in
the expression of $\cL_m\cL_n(\Gamma)$ or $\cL_n\cL_m(\Gamma)$.
Then there are two hollow vertices of valences $m+2$ and $n+2$ respectively
in $\Gamma'$,
created by the two operators $\cL_m$ and $\cL_n$ respectively.
Now let us compare the coefficients of $\Gamma'$
in the expressions of $\cL_m\cL_n(\Gamma)$ and $\cL_n\cL_m(\Gamma)$.

If these two hollow vertices in $\Gamma'$ are not adjacent,
then we can recover the graph $\Gamma$ by applying two successive
vertex-changing or edge-contraction procedures
to $\Gamma'$ that inverse two above procedures in the definition of $\cL_m$ and $\cL_n$.
It is easy to see that in these cases
two such vertex-changing or edge-contraction procedures commute,
and their inverses also commute.
In other words,
the actions of $\cL_m$ and $\cL_n$ commute `locally':
their actions commute on these specific solid vertices or components.
Therefore the coefficients of $\Gamma'$ in
$(\cL_m\circ\cL_n)(\Gamma)$ and in $(\cL_n\circ\cL_m)(\Gamma)$
must be equal.

Therefore,
the only possibility for a graph $\Gamma'$ to have a nonzero coefficient
in $(\cL_m\cL_n-\cL_n\cL_m)(\Gamma)$
is the case that
the two vertex-changing or edge-contraction procedures `interplay' with each other,
and so do their inverses.
For example,
one can first apply a vertex splitting operator to produce a new hollow vertex
together with a solid vertex,
and then apply another operator to this new solid vertex to create another new hollow vertex
adjacent to the first one.
Only in such cases can two operators have non-trivial commutator,
thus the two new hollow vertices in $\Gamma'$ must be adjacent
(notice that we cannot contract the internal edge that connects these two hollow vertices).
\end{proof}

Now let us consider a special case of edge-contraction.
The above lemma tells us that the two new hollow vertices
(of valences $m+2$ and $n+2$ respectively)
created by $\cL_m\circ\cL_n-\cL_n\circ\cL_m$
are adjacent in every resulting graph.
Therefore,
we can contract the internal edge connecting them in each graph,
and denote by $K^{ho}$ (here `ho' for `hollow')
such an edge-contraction operator
(the cyclic order on the resulting vertex is given
similarly as in Definition \ref{def-abs-opr}).
Notice that $K^{ho}$ may not make sense when it appears alone,
since there might not be a pair of adjacent hollow vertices in general.
In what follows,
we will always consider compositions of operators in the following form:
\be
K^{ho}\circ\big(\cL_m\circ\cL_n-\cL_n\circ\cL_m\big).
\ee

Now we can describe the structure of Lie algebra
on the space spanned by the operators $\{\cL_{m}\}_{m\geq -1}$.
The Lie bracket we need is just the composition of edge-contraction $K^{ho}$
with the common commutators of these abstract Virasoro operators,
i.e.,
we define the Lie bracket by:
\be
\label{eq-def-Liebracket-Virasoro}
[\cL_m,\cL_n]:=K^{ho}\circ\big(
\cL_m\circ\cL_n-\cL_n\circ\cL_m
\big).
\ee
Then we have the following:

\begin{Theorem}
\label{thm-fat-abs-virasorocomm}
We have:
\be
[\cL_m,\cL_n]=(m-n)\cL_{m+n},\qquad
\forall  m,n\geq -1,
\ee
where $[-,-]$ is defined by \eqref{eq-def-Liebracket-Virasoro}.
This gives a structure of Lie algebra on the space
spanned by the abstract Virasoro operators $\{\cL_m\}_{m\geq -1}$.
\end{Theorem}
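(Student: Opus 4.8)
The plan is to verify the identity $[\cL_m,\cL_n]=(m-n)\cL_{m+n}$ by a direct but carefully organized computation, reducing it to a finite check and an application of the known realization. There are two complementary routes, and I would pursue the second as the main argument and mention the first as a sanity check.

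\emph{Route via realization.} The operators $\{\cL_m\}_{m\geq -1}$ are constructed precisely so that under the Feynman rule \eqref{eq-1mm-FR2} they realize the fat Virasoro operators $\{L_{m,t}^{\text{Herm}}\}_{m\geq -1}$ of the Hermitian one-matrix model, which are already known to satisfy $[L_{m,t}^{\text{Herm}},L_{n,t}^{\text{Herm}}]=(m-n)L_{m+n,t}^{\text{Herm}}$. However, a realization argument alone cannot prove an \emph{abstract} identity, since a Feynman rule is generally not injective on the space of linear combinations of fat graphs. So the realization only tells us which identity to expect; the real work is at the abstract level. I would therefore use the realization purely as a guide and do the honest graph-combinatorial computation.

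\emph{Route via graph combinatorics.} First I would unwind the definition \eqref{eq-def-Liebracket-Virasoro}: $[\cL_m,\cL_n]=K^{ho}\circ(\cL_m\cL_n-\cL_n\cL_m)$. By Lemma~\ref{lemma-adjacent}, every graph appearing in $(\cL_m\cL_n-\cL_n\cL_m)(\Gamma)$ has its two new hollow vertices (of valences $m+2$ and $n+2$) joined by a non-loop internal edge, so $K^{ho}$ is well defined and produces a single new hollow vertex of valence $(m+2)+(n+2)-2=m+n+2$ — exactly the valence of the hollow vertex created by $\cL_{m+n}$. This is the structural reason the identity can hold. The key step is then to enumerate, for a fixed graph $\Gamma$ and a fixed target graph $\Gamma'$ (with one new hollow vertex of valence $m+n+2$), the ways $\Gamma'$ arises on each side. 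One classifies the ``interplay'' cases from the proof of Lemma~\ref{lemma-adjacent}: the first operator creates a hollow/solid pair by a vertex-splitting $\cS$, and the second operator acts at the newly produced solid vertex (by $\pd$, by another $\cS$, or by a $\cJ$ involving that solid vertex), after which $K^{ho}$ recontracts the connecting edge. One does the same bookkeeping for the $\gamma_{-1},\gamma_0$ and the $\cJ_{\bullet,0}(-\sqcup\Gamma_{dot})$ terms, which account for the low-valence anomalies in the definitions of $\cL_{-1},\cL_0,\cL_1$. Matching coefficients of each $\Gamma'$ then reduces the theorem to a purely numerical identity among the combinatorial multiplicities — essentially the same bilinear identity in the valences ($\alpha\beta$ factors, $(\mu_1+\mu_j-2)$ factors, etc.) that underlies Theorem~\ref{thm-abstract-rec}. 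I expect these to collapse to the classical $(m-n)$ coefficient after grouping terms by which half-edges were split off; this is where I would be terse rather than grind through every subcase.

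\emph{Main obstacle.} The hard part will be the correct treatment of the edge cases and the cyclic-order bookkeeping: the operators $\cL_{-1},\cL_0,\cL_1$ have ad hoc $g_s$-weighted pieces ($\gamma_{-1},\gamma_0$) and $\Gamma_{dot}$-pieces, and one must check that after applying $K^{ho}$ these pieces reassemble into precisely the corresponding pieces of $\cL_{m+n}$ — for instance that $[\cL_{-1},\cL_1]$ produces the $\gamma_0$-term of $-2\cL_0$, and that $[\cL_{-1},\cL_0]=-\cL_{-1}$ including the $g_s^{-2}\gamma_{-1}$-term. Getting the cyclic orders on the merged vertex to match (so that the graphs produced on the two sides are literally equal, not merely equal up to a rotation) is the delicate point, and is exactly the reason Lemma~\ref{lemma-adjacent} and the explicit recipe for $K^{ho}$'s cyclic order were set up in advance. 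I would organize the proof so that the generic case is handled uniformly by the half-edge enumeration above, and then verify the finitely many exceptional commutators $[\cL_m,\cL_n]$ with $\min(m,n)\le 1$ by hand, invoking the realization as an independent confirmation.
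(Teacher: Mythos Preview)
Your proposal is correct and follows essentially the same approach as the paper: reduce via Lemma~\ref{lemma-adjacent} to the adjacent-hollow-vertex case, then do a case analysis on the four building-block operators $\pd_k,\cS_{k,l},\cJ_{k,l},\gamma_k$ and match coefficients after applying $K^{ho}$. The paper in fact only works out the $\pd\circ\cS$ case explicitly (yielding the $-(m-n)(m+n+2)\pd_{m+n+2}$ term) and omits the remaining cases with a ``same fashion'' remark, so your plan to treat the generic case uniformly and check the low-index anomalies involving $\gamma_{-1},\gamma_0,\cJ_{\bullet,0}(-\sqcup\Gamma_{dot})$ by hand is, if anything, more thorough than what the paper records.
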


\begin{proof}

Assume $\Gamma$ is a fat graph.
Let us check the following relation:
\be
\bigg(K^{ho}\circ\big(\cL_m\circ\cL_n-\cL_n\circ\cL_m\big)\bigg)(\Gamma)
=(m-n)\cL_{m+n}(\Gamma).
\ee

By Lemma \ref{lemma-adjacent},
we only need to consider some special graphs where
the two new hollow vertices are adjacent.
As pointed out in the proof of Lemma \ref{lemma-adjacent},
there are four types of operators that produce new hollow vertices:
$\pd_k$, $\cS_{k,l}$, $\cJ_{k,l}$, and $\gamma_k$.
We can simply analyse all the possible cases of composition of
two of these operators
and compute their commutators,
then compare the results with $(m-n)\cL_{m+n}(\Gamma)$.
Here let us work out the details only for some special cases,
and omit other cases since they are all in the same fashion.

First,
let us consider the composition of two operators of the form $\pd_k$
(acting on two adjacent solid vertices).
It is clear that the actions of $\pd_{m+2}$ and $\pd_{n+2}$ commute
(see Lemma \ref{eq-pd-commute}),
thus this case is trivial.

Now let us consider the composition of an operator $\pd_k$ and an
vertex-splitting operator $\cS_{k,l}$.
It is not hard to find out that only the operators
$-(m+2)\pd_{m+2}\circ\cS_{m+n+2,m+1}$ and $-(n+2)\pd_{n+2}\circ\cS_{m+n+2,n+1}$
can provide us nontrivial commutators
(i.e., graphs with adjacent hollow vertices)
in $(\cL_m\cL_n-\cL_n\cL_m)(\Gamma)$:
the only case we need to consider is that we first split a solid vertex of valence $m+n+2$
into a hollow one and a solid one,
and then change the solid one into a hollow one using $\pd_{m+2}$ or $\pd_{n+2}$.
For example,
$\pd_{m+2}\circ\cS_{m+n+2,m+1}$ gives the following:
\begin{equation*}
\begin{tikzpicture}[scale=1.1]
\draw [fill] (0+0.2,0) circle [radius=0.06];
\draw [->,>=stealth] (-0.2+0.2,-0.1) .. controls (0.3+0.2,-0.5) and (0.3+0.2,0.5) ..  (-0.2+0.2,0.1);
\draw (-0.6+0.2,0) -- (0+0.2,0);
\draw (-0.5+0.2,0.4) -- (0+0.2,0);
\draw (-0.5+0.2,-0.4) -- (0+0.2,0);
\draw (0.5+0.2,0.3) -- (0+0.2,0);
\draw (0.5+0.2,-0.3) -- (0+0.2,0);
\node [align=center,align=center] at (1.85,0.1) {$\stackrel{\cS_{m+n+2,m+1}}{\longrightarrow}$};
\draw [fill] (0+3.5,0) circle [radius=0.06];
\draw [->,>=stealth] (-0.2+3.5,-0.1) .. controls (0.3+3.5,-0.5) and (0.3+3.5,0.5) ..  (-0.2+3.5,0.1);
\draw (0+4,0) circle [radius=0.07];
\draw [->,>=stealth] (-0.2+4,-0.1) .. controls (0.3+4,-0.5) and (0.3+4,0.5) ..  (-0.2+4,0.1);
\draw (3.5,0) -- (3.93,0);
\draw (-0.6+3.5,0) -- (0+3.5,0);
\draw (-0.5+3.5,0.4) -- (0+3.5,0);
\draw (-0.5+3.5,-0.4) -- (0+3.5,0);
\draw (0.55+4,0.3) -- (0+4.05,0.05);
\draw (0.55+4,-0.3) -- (0+4.05,-0.05);
\node [align=center,align=center] at (5.25,0.1) {$\stackrel{\pd_{m+2}}{\longrightarrow}$};
\draw (0+3.5+3,0) circle [radius=0.07];
\draw [->,>=stealth] (-0.2+3.5+3,-0.1) .. controls (0.3+3.5+3,-0.5) and (0.3+3.5+3,0.5) ..  (-0.2+3.5+3,0.1);
\draw (0+4+3,0) circle [radius=0.07];
\draw [->,>=stealth] (-0.2+4+3,-0.1) .. controls (0.3+4+3,-0.5) and (0.3+4+3,0.5) ..  (-0.2+4+3,0.1);
\draw (3.5+3.07,0) -- (3.93+3,0);
\draw (-0.6+3.5+3,0) -- (0+3.5+3-0.07,0);
\draw (-0.5+3.5+3,0.4) -- (0+3.5+2.95,0.05);
\draw (-0.5+3.5+3,-0.4) -- (0+3.5+2.95,-0.05);
\draw (0.55+4+3,0.3) -- (0+4.05+3,0.05);
\draw (0.55+4+3,-0.3) -- (0+4.05+3,-0.05);
\node [align=center,align=center] at (8.2,0.1) {$\stackrel{K^{ho}}{\longrightarrow}$};
\draw (0+3.5+3+3,0) circle [radius=0.07];
\draw [->,>=stealth] (-0.2+3.5+3+3,-0.1) .. controls (0.3+3.5+3+3,-0.5) and (0.3+3.5+3+3,0.5) ..  (-0.2+3.5+3+3,0.1);
\draw (-0.6+3.5+3+3,0) -- (0+3.5+3-0.07+3,0);
\draw (-0.5+3.5+3+3,0.4) -- (0+3.5+2.95+3,0.05);
\draw (-0.5+3.5+3+3,-0.4) -- (0+3.5+2.95+3,-0.05);
\draw (0.55+4+3+2.5,0.3) -- (0+4.05+3+2.5,0.05);
\draw (0.55+4+3+2.5,-0.3) -- (0+4.05+3+2.5,-0.05);
\node [align=center,align=center] at (10.3,-0.2) {$,$};
\end{tikzpicture}
\end{equation*}
and $\pd_{n+2}\circ\cS_{m+n+2,n+1}$ gives the following:
\begin{equation*}
\begin{tikzpicture}[scale=1.1]
\draw [fill] (0+0.2,0) circle [radius=0.06];
\draw [->,>=stealth] (-0.2+0.2,-0.1) .. controls (0.3+0.2,-0.5) and (0.3+0.2,0.5) ..  (-0.2+0.2,0.1);
\draw (-0.6+0.2,0) -- (0+0.2,0);
\draw (-0.5+0.2,0.4) -- (0+0.2,0);
\draw (-0.5+0.2,-0.4) -- (0+0.2,0);
\draw (0.5+0.2,0.3) -- (0+0.2,0);
\draw (0.5+0.2,-0.3) -- (0+0.2,0);
\node [align=center,align=center] at (1.85,0.1) {$\stackrel{\cS_{m+n+2,n+1}}{\longrightarrow}$};
\draw (0+3.5,0) circle [radius=0.07];
\draw [fill] (0+4,0) circle [radius=0.06];
\draw [->,>=stealth] (-0.2+3.5,-0.1) .. controls (0.3+3.5,-0.5) and (0.3+3.5,0.5) ..  (-0.2+3.5,0.1);
\draw [->,>=stealth] (-0.2+4,-0.1) .. controls (0.3+4,-0.5) and (0.3+4,0.5) ..  (-0.2+4,0.1);
\draw (3.57,0) -- (4,0);
\draw (-0.6+3.5,0) -- (0+3.43,0);
\draw (-0.5+3.5,0.4) -- (0+3.45,0.05);
\draw (-0.5+3.5,-0.4) -- (0+3.45,-0.05);
\draw (0.55+4,0.3) -- (0+4,0);
\draw (0.55+4,-0.3) -- (0+4,0);
\node [align=center,align=center] at (5.25,0.1) {$\stackrel{\pd_{n+2}}{\longrightarrow}$};
\draw (0+3.5+3,0) circle [radius=0.07];
\draw (0+4+3,0) circle [radius=0.07];
\draw [->,>=stealth] (-0.2+3.5+3,-0.1) .. controls (0.3+3.5+3,-0.5) and (0.3+3.5+3,0.5) ..  (-0.2+3.5+3,0.1);
\draw [->,>=stealth] (-0.2+4+3,-0.1) .. controls (0.3+4+3,-0.5) and (0.3+4+3,0.5) ..  (-0.2+4+3,0.1);
\draw (3.5+3.07,0) -- (3.93+3,0);
\draw (-0.6+3.5+3,0) -- (0+3.5+3-0.07,0);
\draw (-0.5+3.5+3,0.4) -- (0+3.5+2.95,0.05);
\draw (-0.5+3.5+3,-0.4) -- (0+3.5+2.95,-0.05);
\draw (0.55+4+3,0.3) -- (0+4.05+3,0.05);
\draw (0.55+4+3,-0.3) -- (0+4.05+3,-0.05);
\node [align=center,align=center] at (8.2,0.1) {$\stackrel{K^{ho}}{\longrightarrow}$};
\draw (0+3.5+3+3,0) circle [radius=0.07];
\draw [->,>=stealth] (-0.2+3.5+3+3,-0.1) .. controls (0.3+3.5+3+3,-0.5) and (0.3+3.5+3+3,0.5) ..  (-0.2+3.5+3+3,0.1);
\draw (-0.6+3.5+3+3,0) -- (0+3.5+3-0.07+3,0);
\draw (-0.5+3.5+3+3,0.4) -- (0+3.5+2.95+3,0.05);
\draw (-0.5+3.5+3+3,-0.4) -- (0+3.5+2.95+3,-0.05);
\draw (0.55+4+3+2.5,0.3) -- (0+4.05+3+2.5,0.05);
\draw (0.55+4+3+2.5,-0.3) -- (0+4.05+3+2.5,-0.05);
\node [align=center,align=center] at (10.3,-0.2) {$.$};
\end{tikzpicture}
\end{equation*}
Thus if we apply $K^{ho}$ to the difference of
$-(m+2)\pd_{m+2}\circ\cS_{m+n+2,m+1}(\Gamma)$ and
$-(n+2)\pd_{n+2}\circ\cS_{m+n+2,n+1}(\Gamma)$,
we may simply obtain:
\begin{equation*}
\begin{split}
&\big((-m-2)-(-n-2)\big)(m+n+2)\pd_{m+n+2}(\Gamma)\\
=&(n-m)(m+n+2)\pd_{m+n+2}(\Gamma),\\
\end{split}
\end{equation*}
where the coefficient $(m+n+2)$ is the number of ways
to choose $m$ (or $n$) adjacent half-edges
when applying $\cS_{m+n+2,n+1}$ or $\cS_{m+n+2,m+1}$.
This term is exactly the first term in $(m-n)\cL_{m+n}(\Gamma)$.

Similarly,
one can analyse all possible commutators of the four operators
$\pd_k$, $\cS_{k,l}$, $\cJ_{k,l}$, and $\gamma_k$ using the same method,
and here we omit the details.
Then the conclusion of this theorem will follow from these results.
In particular,
the term $(m-n)\cS_{k+m+n,k-1}$ in the expression of $(m-n)\cL_{m+n}$
comes from the commutator of the two vertex-splitting operators $\cS_{k,l}$
in $\cL_m$ and $\cL_n$.
\end{proof}

\begin{Example}
Let us give an example of the above theorem.
Let $\Gamma$ be a fat graph consisting of two solid vertices
and an internal edge connecting them,
and let us check the following relation:
\ben
[\cL_{-1},\cL_{0}](\Gamma)=-\cL_{-1}(\Gamma).
\een
In this example we omit the cyclic orders at the vertices,
since all the vertices are of valence $\leq 2$ and hence there is no confusion.
It is easy to compute:
\begin{flalign*}
\begin{tikzpicture}[scale=0.95]
\node [align=center,align=center] at (0,0) {$\cL_0\bigg($};
\draw [fill] (0.6,0) circle [radius=0.058];
\draw [fill] (1.2,0) circle [radius=0.058];
\draw (0.6,0) -- (1.2,0);
\node [align=center,align=center] at (1.9,0) {$\bigg)=2$};
\draw [fill] (2.7,0) circle [radius=0.058];
\draw (3.3,0) circle [radius=0.07];
\draw [fill] (3.9,0) circle [radius=0.058];
\draw (2.65,0) -- (3.23,0);
\draw (3.37,0) -- (3.85,0);
\node [align=center,align=center] at (4.8,0) {$+g_s^{-2}\bigg($};
\draw [fill] (5.6,0) circle [radius=0.058];
\draw [fill] (6.2,0) circle [radius=0.058];
\draw (5.6,0) -- (6.2,0);
\draw (6.6,0) circle [radius=0.07];
\draw (6.65,0.05) .. controls (7.3,0.55) and (7.3,-0.55) ..  (6.65,-0.05);
\node [align=center,align=center] at (7.5,0) {$\bigg),$};
\end{tikzpicture}&&
\end{flalign*}
\begin{flalign*}
\begin{tikzpicture}[scale=0.95]
\node [align=center,align=center] at (-0.1,0) {$\cL_{-1}\bigg($};
\draw [fill] (0.6,0) circle [radius=0.058];
\draw [fill] (1.2,0) circle [radius=0.058];
\draw (0.6,0) -- (1.2,0);
\node [align=center,align=center] at (2.05,0) {$\bigg)=-2$};
\draw [fill] (3,0) circle [radius=0.058];
\draw (3.6,0) circle [radius=0.07];
\draw (3,0) -- (3.53,0);
\node [align=center,align=center] at (4.2,0) {$+2$};
\draw [fill] (4.8,0) circle [radius=0.058];
\draw [fill] (5.4,0) circle [radius=0.058];
\draw (6,0) circle [radius=0.07];
\draw (4.8,0) -- (5.93,0);
\node [align=center,align=center] at (6.9,0) {$+g_s^{-2}\bigg($};
\draw [fill] (7.7,0) circle [radius=0.058];
\draw [fill] (8.3,0) circle [radius=0.058];
\draw [fill] (8.7,0) circle [radius=0.058];
\draw (9.3,0) circle [radius=0.07];
\draw (7.7,0) -- (8.3,0);
\draw (8.7,0) -- (9.23,0);
\node [align=center,align=center] at (9.7,0) {$\bigg).$};
\end{tikzpicture}&&
\end{flalign*}
And then:
\begin{flalign*}
\begin{split}
&
\begin{tikzpicture}[scale=0.9]
\node [align=center,align=center] at (-0.5-0.05,0) {$\cL_{-1}\circ\cL_0\bigg($};
\draw [fill] (0.6-0.05,0) circle [radius=0.06];
\draw [fill] (1.2-0.05,0) circle [radius=0.06];
\draw (0.6-0.05,0) -- (1.2-0.05,0);
\node [align=center,align=center] at (2.05-0.05,0) {$\bigg)=-4$};
\draw [fill] (2.7+0.25,0) circle [radius=0.06];
\draw (3.3+0.25,0) circle [radius=0.07];
\draw (3.9+0.25,0) circle [radius=0.07];
\draw (2.65+0.25,0) -- (3.23+0.25,0);
\draw (3.37+0.25,0) -- (3.83+0.25,0);
\node [align=center,align=center] at (5.1,0) {$-2g_s^{-2}\bigg($};
\draw [fill] (5.6+0.35,0) circle [radius=0.06];
\draw (6.2+0.35,0) circle [radius=0.07];
\draw (5.6+0.35,0) -- (6.13+0.35,0);
\draw (6.6+0.35,0) circle [radius=0.07];
\draw (6.65+0.35,0.05) .. controls (7.3+0.35,0.55) and (7.3+0.35,-0.55) ..  (6.65+0.35,-0.05);
\node [align=center,align=center] at (7.5+0.55,0) {$\bigg)+4$};
\draw [fill] (8.8,0) circle [radius=0.06];
\draw (9.4,0) circle [radius=0.07];
\draw [fill] (10,0) circle [radius=0.06];
\draw (10.6,0) circle [radius=0.07];
\draw (8.8,0) -- (9.33,0);
\draw (9.47,0) -- (10.53,0);
\end{tikzpicture}\\
&\quad
\begin{tikzpicture}[scale=0.9]
\node [align=center,align=center] at (4.5+0.05,0) {$+2g_s^{-2}\bigg($};
\draw [fill] (5+0.35+0.05,0) circle [radius=0.06];
\draw [fill] (5.6+0.35+0.05,0) circle [radius=0.06];
\draw (6.2+0.35+0.05,0) circle [radius=0.07];
\draw (5+0.35+0.05,0) -- (6.13+0.35+0.05,0);
\draw (6.6+0.35+0.05,0) circle [radius=0.07];
\draw (6.65+0.35+0.05,0.05) .. controls (7.3+0.35+0.05,0.55) and (7.3+0.35+0.05,-0.55) ..  (6.65+0.35+0.05,-0.05);
\node [align=center,align=center] at (8.55,0) {$\bigg)+2g_s^{-2}\bigg($};
\draw [fill] (9.7,0) circle [radius=0.06];
\draw (10.3,0) circle [radius=0.07];
\draw [fill] (10.9,0) circle [radius=0.06];
\draw [fill] (11.3,0) circle [radius=0.06];
\draw (11.9,0) circle [radius=0.07];
\draw (9.7,0) -- (10.23,0);
\draw (10.37,0) -- (10.9,0);
\draw (11.3,0) -- (11.83,0);
\node [align=center,align=center] at (12.9,0) {$\bigg)+g_s^{-4}\bigg($};
\draw [fill] (4.6+0.35+0.6+9,0) circle [radius=0.06];
\draw [fill] (4+0.35+0.6+9,0) circle [radius=0.07];
\draw (4.35+0.6+9,0) -- (4.88+0.6+9,0);
\draw [fill] (5.6+0.35+9,0) circle [radius=0.06];
\draw (6.2+0.35+9,0) circle [radius=0.07];
\draw (5.6+0.35+9,0) -- (6.13+0.35+9,0);
\draw (6.6+0.35+9,0) circle [radius=0.07];
\draw (6.65+0.35+9,0.05) .. controls (7.3+0.35+9,0.55) and (7.3+0.35+9,-0.55) ..  (6.65+0.35+9,-0.05);
\node [align=center,align=center] at (7.8+9,0) {$\bigg),$};
\end{tikzpicture}
\end{split}&&
\end{flalign*}

\begin{flalign*}
\begin{split}
&
\begin{tikzpicture}[scale=0.9]
\node [align=center,align=center] at (-0.65,0) {$\cL_{0}\circ\cL_{-1}\bigg($};
\draw [fill] (0.6-0.1,0) circle [radius=0.06];
\draw [fill] (1.2-0.1,0) circle [radius=0.06];
\draw (0.6-0.1,0) -- (1.2-0.1,0);
\node [align=center,align=center] at (2.05-0.1,0) {$\bigg)=-6$};
\draw [fill] (2.7+0.25,0) circle [radius=0.06];
\draw (3.3+0.25,0) circle [radius=0.07];
\draw (3.9+0.25,0) circle [radius=0.07];
\draw (2.65+0.25,0) -- (3.23+0.25,0);
\draw (3.37+0.25,0) -- (3.83+0.25,0);
\node [align=center,align=center] at (4.8,0) {$+4$};
\draw [fill] (5.8-0.35,0) circle [radius=0.06];
\draw (6.4-0.35,0) circle [radius=0.07];
\draw [fill] (7-0.35,0) circle [radius=0.06];
\draw (7.6-0.35,0) circle [radius=0.07];
\draw (5.8-0.35,0) -- (6.33-0.35,0);
\draw (6.47-0.35,0) -- (7.53-0.35,0);
\node [align=center,align=center] at (7.9,0) {$+2$};
\draw [fill] (5.8-0.35+3.1,0) circle [radius=0.06];
\draw [fill] (6.4-0.35+3.1,0) circle [radius=0.06];
\draw (7-0.35+3.1,0) circle [radius=0.07];
\draw (7.6-0.35+3.1,0) circle [radius=0.07];
\draw (5.8-0.35+3.1,0) -- (7-0.35+3.03,0);
\draw (7-0.35+3.17,0) -- (7.6-0.35+3.03,0);
\end{tikzpicture}\\
&\quad
\begin{tikzpicture}[scale=0.9]
\node [align=center,align=center] at (4.5+0.05-0.05,0) {$+2g_s^{-2}\bigg($};
\draw [fill] (5+0.35+0.05,0) circle [radius=0.06];
\draw [fill] (5.6+0.35+0.05,0) circle [radius=0.06];
\draw (6.2+0.35+0.05,0) circle [radius=0.07];
\draw (5+0.35+0.05,0) -- (6.13+0.35+0.05,0);
\draw (6.6+0.35+0.05,0) circle [radius=0.07];
\draw (6.65+0.35+0.05,0.05) .. controls (7.3+0.35+0.05,0.55) and (7.3+0.35+0.05,-0.55) ..  (6.65+0.35+0.05,-0.05);
\node [align=center,align=center] at (8.55,0) {$\bigg)+2g_s^{-2}\bigg($};
\draw [fill] (9.7,0) circle [radius=0.06];
\draw (10.3,0) circle [radius=0.07];
\draw [fill] (10.9,0) circle [radius=0.06];
\draw [fill] (11.3,0) circle [radius=0.06];
\draw (11.9,0) circle [radius=0.07];
\draw (9.7,0) -- (10.23,0);
\draw (10.37,0) -- (10.9,0);
\draw (11.3,0) -- (11.83,0);
\node [align=center,align=center] at (8.55+4.4,0) {$\bigg)+g_s^{-2}\bigg($};
\draw [fill] (9.7+4.3,0) circle [radius=0.06];
\draw (10.3+4.3,0) circle [radius=0.07];
\draw (10.9+4.3,0) circle [radius=0.07];
\draw [fill] (11.3+4.3,0) circle [radius=0.06];
\draw [fill] (11.9+4.3,0) circle [radius=0.06];
\draw (9.7+4.3,0) -- (10.23+4.3,0);
\draw (10.37+4.3,0) -- (10.83+4.3,0);
\draw (11.3+4.3,0) -- (11.9+4.3,0);
\node [align=center,align=center] at (16.55,0) {$\bigg)$};
\end{tikzpicture}\\
&\quad
\begin{tikzpicture}[scale=0.9]
\node [align=center,align=center] at (5.1+4.6-0.05,0) {$-2g_s^{-2}\bigg($};
\draw [fill] (5.6+0.35+4.6,0) circle [radius=0.06];
\draw (6.2+0.35+4.6,0) circle [radius=0.07];
\draw (5.6+0.35+4.6,0) -- (6.13+0.35+4.6,0);
\draw (6.6+0.35+4.6,0) circle [radius=0.07];
\draw (6.65+0.35+4.6,0.05) .. controls (7.3+0.35+4.6,0.55) and (7.3+0.35+4.6,-0.55) ..  (6.65+0.35+4.6,-0.05);
\node [align=center,align=center] at (13.05,0) {$\bigg)+g_s^{-4}\bigg($};
\draw [fill] (4.6+0.35+0.6+9.2,0) circle [radius=0.06];
\draw [fill] (4+0.35+0.6+9.2,0) circle [radius=0.06];
\draw (4.35+0.6+9.2,0) -- (4.88+0.6+9.2,0);
\draw [fill] (5.6+0.35+9.2,0) circle [radius=0.06];
\draw (6.2+0.35+9.2,0) circle [radius=0.07];
\draw (5.6+0.35+9.2,0) -- (6.13+0.35+9.2,0);
\draw (6.6+0.35+9.2,0) circle [radius=0.07];
\draw (6.65+0.35+9.2,0.05) .. controls (7.3+0.35+9.2,0.55) and (7.3+0.35+9.2,-0.55) ..  (6.65+0.35+9.2,-0.05);
\node [align=center,align=center] at (7.8+9.2,0) {$\bigg),$};
\end{tikzpicture}\\
\end{split}&&
\end{flalign*}
Therefore we have:
\begin{flalign*}
\begin{tikzpicture}[scale=0.9]
\node [align=center,align=center] at (-1.4+0.05,0) {$(\cL_{-1}\cL_{0}-\cL_{0}\cL_{-1})\bigg($};
\draw [fill] (0.6-0.1,0) circle [radius=0.06];
\draw [fill] (1.2-0.2,0) circle [radius=0.06];
\draw (0.6-0.1,0) -- (1.2-0.2,0);
\node [align=center,align=center] at (2.05-0.35,0) {$\bigg)=2$};
\draw [fill] (2.7+0.25-0.4,0) circle [radius=0.06];
\draw (3.3+0.25-0.5,0) circle [radius=0.07];
\draw (3.9+0.25-0.6,0) circle [radius=0.07];
\draw (2.65+0.25-0.3,0) -- (3.23+0.25-0.5,0);
\draw (3.37+0.25-0.5,0) -- (3.83+0.25-0.6,0);
\node [align=center,align=center] at (4.8-0.6,0) {$-2$};
\draw [fill] (5.8-0.35-0.65,0) circle [radius=0.06];
\draw [fill] (6.4-0.35-0.75,0) circle [radius=0.06];
\draw (7-0.35-0.85,0) circle [radius=0.07];
\draw (7.6-0.35-0.95,0) circle [radius=0.07];
\draw (4.8,0) -- (5.73,0);
\draw (5.87,0) -- (6.23,0);
\node [align=center,align=center] at (8.55+4.4-5.7,0) {$-g_s^{-2}\bigg($};
\draw [fill] (9.7+4.3-5.9,0) circle [radius=0.06];
\draw (10.3+4.3-6,0) circle [radius=0.07];
\draw (10.9+4.3-6.1,0) circle [radius=0.07];
\draw [fill] (11.3+4.3-6.1,0) circle [radius=0.06];
\draw [fill] (11.9+4.3-6.2,0) circle [radius=0.06];
\draw (9.7+4.3-5.9,0) -- (10.23+4.3-6,0);
\draw (10.37+4.3-6,0) -- (10.83+4.3-6.1,0);
\draw (11.3+4.3-6.1,0) -- (11.9+4.3-6.2,0);
\node [align=center,align=center] at (16.55-6.15,0) {$\bigg).$};
\end{tikzpicture}&&
\end{flalign*}
Now it is clear that:
\begin{flalign*}
\begin{tikzpicture}
\node [align=center,align=center] at (-2,0) {$K^{ho}\circ(\cL_{-1}\circ\cL_{0}-\cL_{0}\circ\cL_{-1})\bigg($};
\draw [fill] (0.6-0.1,0) circle [radius=0.06];
\draw [fill] (1.2-0.2,0) circle [radius=0.06];
\draw (0.6-0.1,0) -- (1.2-0.2,0);
\node [align=center,align=center] at (2.05-0.35,0) {$\bigg)=-$};
\node [align=center,align=center] at (-0.1+2.8,0) {$\cL_{-1}\bigg($};
\draw [fill] (0.6+2.75,0) circle [radius=0.055];
\draw [fill] (1.2+2.75-0.1,0) circle [radius=0.055];
\draw (0.6+2.75,0) -- (1.2+2.75-0.1,0);
\node [align=center,align=center] at (1.9+2.45-0.1,0) {$\bigg).$};
\end{tikzpicture}
\end{flalign*}

\end{Example}

\subsection{Abstract cut-and-join type representation for $\cZ$}

\label{sec-abs-cur&join}

Notice that in the above subsections,
we have formulated the abstract Virasoro constraints in terms of
fat graphs with one hollow vertex.
In this subsection we show that one can recover the graphs without hollow vertices
using the graphs with one hollow vertex.
By doing this,
we will obtain a cut-and-join type representation for $\cZ$.

First recall that $\cZ$ is an element in the following space
of formal infinite summations of (not necessarily connected) fat graphs
without hollow vertices:
\be
\cV^{0}:=
\bQ\cdot 1\times
\prod_{g\in \bZ}\bigg(
g_s^{2g-2}\cdot\prod_{\Gamma\in \mathfrak{Fat}_{g}}
\bQ\Gamma\bigg).
\ee
Denote:
\be
\cV^{1}:=
\bQ\cdot 1\times
\prod_{g\in \bZ}\bigg(
g_s^{2g-2}\cdot\prod_{\Gamma\in \mathfrak{Fat}_{g}^1}
\bQ\Gamma\bigg),
\ee
where $\mathfrak{Fat}_{g}^1$ is the set of all (not necessarily connected) fat graphs
with one hollow vertex (and an arbitrary number of solid vertices).
Then the operators $\{\pd_n\}_{n\geq 1}$ and the abstract Virasoro operators $\{\cL_m\}_{m\geq -1}$
are linear maps from $\cV^0$ to $\cV^1$.

Now let $\cI: \cV^1 \to \cV^0$ be the linear map
which takes a graph (with a hollow vertex) to the graph obtained by
simply changing this hollow vertex to a solid one.
Then we have:
\begin{Lemma}
\label{lem-rescaling}
The compositions $\cI\circ(\sum_{n\geq 1}\pd_n)$ and $\cI\circ(\sum_{n\geq 1}n\pd_n)$
are given by the following rescalings:
\be
\begin{split}
&\cI\circ \bigg(
\sum_{n\geq 1} \pd_n
\bigg)
(\Gamma) = |V(\Gamma)| \cdot \Gamma,\\
&\cI\circ \bigg(
\sum_{n\geq 1} n\pd_n
\bigg)
(\Gamma) = 2|E(\Gamma)| \cdot \Gamma,
\end{split}
\ee
where $|V(\Gamma)|$ and $|E(\Gamma)|$ are the number of vertices and edges in $\Gamma$
respectively.
\end{Lemma}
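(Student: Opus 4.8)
The plan is to reduce both identities to two elementary facts: that recoloring a vertex is undone by $\cI$, and the handshaking lemma. Fix a fat graph $\Gamma$ occurring in $\cZ$, so all of its vertices are solid and of valence $\geq 1$. First I would unwind the definitions: by definition $\pd_n(\Gamma) = \sum_{v\in V_n(\Gamma)}\pd(\Gamma,v)$, where $V_n(\Gamma)$ is the set of solid vertices of valence $n$ and $\pd(\Gamma,v)$ is $\Gamma$ with $v$ recolored hollow; summing over $n\geq 1$ and using $V(\Gamma) = \bigsqcup_{n\geq 1}V_n(\Gamma)$ gives
\[
\Big(\sum_{n\geq 1}\pd_n\Big)(\Gamma) = \sum_{v\in V(\Gamma)}\pd(\Gamma,v),
\qquad
\Big(\sum_{n\geq 1}n\pd_n\Big)(\Gamma) = \sum_{v\in V(\Gamma)}\val(v)\,\pd(\Gamma,v).
\]

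Next I would apply $\cI$, using the key observation that $\cI\big(\pd(\Gamma,v)\big) = \Gamma$ for every $v$: recoloring $v$ hollow and then solid again returns $\Gamma$, and neither the genus nor the automorphism group is affected by vertex colors (as recorded in \S\ref{sec-hollow-ord}). This turns the first sum into $|V(\Gamma)|\cdot\Gamma$ and the second into $\big(\sum_{v\in V(\Gamma)}\val(v)\big)\cdot\Gamma$, and the handshaking lemma — each edge contributes one half-edge to each of its two endpoints, so $\sum_{v\in V(\Gamma)}\val(v) = |H(\Gamma)| = 2|E(\Gamma)|$ — finishes the computation for a single graph. The two identities then extend to all of $\cV^0$ by $\bQ[g_s,g_s^{-1}]$-linearity.

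I do not expect a genuine obstacle; the only points needing care are formal rather than substantive. One should check that $\sum_{n\geq 1}\pd_n$, and hence its composite with $\cI$, is well defined on the formal infinite sums in $\cV^0$ — this holds because a fixed graph with one hollow vertex is produced by the $\pd_n$ from exactly one graph (obtained by recoloring that vertex solid), as in the remark after Theorem \ref{thm-abstract-vir1}. One should also note the tacit hypothesis that $\Gamma$ has no valence-zero vertex, which is automatic for every graph in the support of $\cZ$ since the abstract free energies $\cF_g$ only involve positive valences; this is precisely what makes $\sum_{n\geq 1}|V_n(\Gamma)| = |V(\Gamma)|$ in the first identity.
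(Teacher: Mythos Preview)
Your proof is correct and follows exactly the same approach as the paper's: the first identity is unwound from the definitions, and the second is the handshaking lemma $\sum_v \val(v)=2|E(\Gamma)|$. The paper's proof is terser (two sentences), but your added remarks on well-definedness and the valence-zero issue only make the argument more careful, not different.
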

\begin{proof}
The first relation follows from the definitions of these operators.
The second relation holds since the left-hand side simply counts the summation of the valences
of all vertices which is exactly $2|E(\Gamma)|$.
\end{proof}

Define a family of linear operators
\be
\cL_m': \cV^0 \to \cV^1,
\qquad
m\geq -1,
\ee
to be:
\be
\label{eq-Vir-opr-ref}
\begin{split}
&
\cL_{-1}':=\sum_{n\geq 1}\cS_{n,n}
+g_s^{-2}\cdot\gamma_{-1},\\
&
\cL_{0}':=\sum_{n\geq 1}\cS_{n,n-1}
+g_s^{-2}\cdot\gamma_0,\\
&
\cL_{1}':=\sum_{n\geq 1}\cS_{n+1,n-1}
+2\cJ_{1,0}(-\sqcup\Gamma_{dot}),\\
&\cL_{m}':=\sum_{n\geq 1}\cS_{n+m,n-1}
+g_s^2\cdot \sum_{n=1}^{m-1}\cJ_{n,m-n}
+2\cJ_{m,0}(-\sqcup\Gamma_{dot}),
\quad
m\geq 2,
\end{split}
\ee
i.e., define $\cL_m' = (m+2)\pd_{m+2} +\cL_m$ for every $m\geq -1$.
Then the abstract Virasoro constraints $\cL_m(\cZ)=0$ becomes:
\be
\label{eq-abs-Virasoro-ref}
\cL_m'(\cZ) = (m+2) \pd_{m+2} (\cZ),
\qquad
m\geq -1.
\ee
Now apply the similar method used in \cite{al},
one gets the following:
\begin{Theorem}
\label{thn-abs-cut&join}
Define a linear map $\cM: \cV^0 \to \cV^0$ by:
\be
\label{eq-def-cM}
\cM := \half \sum_{m\geq -1}\cI \circ \cL_m',
\ee
then we have:
\be
\label{eq-abs-cut&join}
\cZ = e^\cM (1),
\ee
where $1$ is the empty graph.
\end{Theorem}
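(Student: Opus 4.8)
\emph{Proof strategy.} The plan is to grade everything by the number of edges and to recognize \eqref{eq-abs-cut&join} as the solution of a first-order ``evolution'' recursion with the empty graph as initial datum. Write $\cZ=\sum_{k\geq 0}\cZ_k$, where $\cZ_k\in\cV^0$ collects the part of $\cZ$ supported on fat graphs with exactly $k$ edges. Since a fat graph with $k$ edges has $2k$ half-edges and (in the expansion of $\cZ$) no vertex of valence zero, there are only finitely many such graphs up to isomorphism, so each $\cZ_k$ is a genuine finite linear combination. Inspecting the building blocks of $\cL_m'$ in \eqref{eq-Vir-opr-ref}---the vertex-splitting operators $\cS_{n+m,n-1}$, the loop-creating operators $\cJ_{n,m-n}$ and $\cJ_{m,0}(-\sqcup\Gamma_{dot})$, and the disjoint-union operators $\gamma_{-1},\gamma_0$ (which glue on $\Gamma_{-1}$, resp. $\Gamma_0$, each having one edge)---one checks that every one of them raises the number of edges by exactly one; since $\cI$ does not change edges, the same holds for $\cM$. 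Hence $\cM^k(1)$ is a finite combination of graphs with exactly $k$ edges, so $e^{\cM}(1)=\sum_{k\geq 0}\frac{1}{k!}\cM^k(1)$ is a well-defined element of $\cV^0$, and it suffices to prove $\cZ_k=\frac{1}{k!}\cM^k(1)$ for all $k\geq 0$.

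The key step is to sum the reformulated abstract Virasoro constraints \eqref{eq-abs-Virasoro-ref}. For each $m\geq -1$ we have $\cL_m'(\cZ)=(m+2)\pd_{m+2}(\cZ)$; applying $\cI$ and summing over $m\geq -1$ gives
\be
2\cM(\cZ)=\sum_{m\geq -1}\cI\circ\cL_m'(\cZ)=\sum_{m\geq -1}(m+2)\,\cI\circ\pd_{m+2}(\cZ)=\cI\circ\Big(\sum_{n\geq 1}n\,\pd_n\Big)(\cZ).
\ee
Writing $\cZ=\sum_{\Gamma}c_\Gamma\,\Gamma$ and applying the second relation of Lemma \ref{lem-rescaling} term by term (which is legitimate because the action of the $\pd_n$ on $\cZ$ is well-defined, cf. the Remark after Theorem \ref{thm-abstract-vir1}), the right-hand side equals $2\sum_\Gamma |E(\Gamma)|\,c_\Gamma\,\Gamma=2\sum_{k\geq 0}k\,\cZ_k$. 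Therefore $\cM(\cZ)=\sum_{k\geq 0}k\,\cZ_k$. Comparing the parts supported on graphs with $k+1$ edges, and using that $\cM$ is homogeneous of degree $+1$ for the edge grading, we obtain
\be
\cM(\cZ_k)=(k+1)\,\cZ_{k+1},\qquad k\geq 0.
\ee

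It then remains to run the induction. For the base case, every fat graph occurring in the expansion of $\cZ$ other than the empty graph has at least one vertex of positive valence, hence at least one edge, so $\cZ_0=1=\frac{1}{0!}\cM^0(1)$. For the inductive step, if $\cZ_k=\frac{1}{k!}\cM^k(1)$ then $\cZ_{k+1}=\frac{1}{k+1}\cM(\cZ_k)=\frac{1}{(k+1)!}\cM^{k+1}(1)$. Summing over $k$ yields $\cZ=\sum_{k\geq 0}\cZ_k=\sum_{k\geq 0}\frac{1}{k!}\cM^k(1)=e^{\cM}(1)$. Equivalently one may introduce a grading variable $\beta$ via $\Gamma\mapsto\beta^{|E(\Gamma)|}\Gamma$, observe that $\cZ(\beta):=\sum_k\beta^k\cZ_k$ satisfies the evolution equation $\frac{d}{d\beta}\cZ(\beta)=\cM\big(\cZ(\beta)\big)$ with $\cZ(\beta)\big|_{\beta=0}=1$, and specialize $\beta=1$; this is the form of the argument used in \cite{al}. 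I expect the only genuinely delicate point to be the bookkeeping: checking once and for all that $\cM$ strictly raises the edge count by one on every term, that $\cM(\Gamma)$ and $e^{\cM}(1)$ are finite on each graph, and that $\cI\circ(\sum_n n\pd_n)$ distributes over the formal infinite sum $\cZ$---all routine, but each must be recorded.
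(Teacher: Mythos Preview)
Your proof is correct and follows essentially the same approach as the paper's own proof: grade by the number of edges, sum the reformulated Virasoro constraints \eqref{eq-abs-Virasoro-ref} and apply Lemma~\ref{lem-rescaling} to obtain $\cM(\cZ)=\sum_{k\geq 0}k\,\cZ_{(k)}$, then use that $\cM$ is homogeneous of edge-degree $+1$ to deduce $\cM(\cZ_{(k)})=(k+1)\cZ_{(k+1)}$ and conclude by induction. Your write-up is in fact slightly more careful than the paper's about well-definedness issues (finiteness of $\cM^k(1)$ on each graph, distributivity over the infinite sum), which is all to the good.
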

\begin{proof}
Define a grading on $\cV^0$ by the number of edges,
i.e.,
define $\deg(\Gamma):= |E(\Gamma)|$ for every graph $\Gamma$.
Then the abstract partition function $\cZ$ can be decomposed as:
\be
\cZ = \sum_{k\geq 0} \cZ_{(k)},
\ee
where $\deg(\cZ_{(k)})=k$, and $\cZ_{(0)}=1$ is the empty graph.
Using the abstract Virasoro constraints \eqref{eq-abs-Virasoro-ref},
one gets:
\begin{equation*}
\cM (\cZ) = \sum_{m\geq -1} \half \cI\circ \cL_m' (\cZ)
= \sum_{n\geq 1} \half \cI\circ n\pd_n (\cZ)
=\half \sum_{k\geq 0} \sum_{n\geq 1} n\cI\circ \pd_n (\cZ_{(k)}).
\end{equation*}
And Lemma \ref{lem-rescaling} one knows that:
\begin{equation*}
\sum_{n\geq 1} n\cI\circ \pd_n (\cZ_{(k)})= 2k \cdot \cZ_{(k)},
\end{equation*}
thus:
\be
\label{eq-deg-Z}
\cM (\cZ) = \sum_{k\geq 0} k\cdot \cZ_{(k)}.
\ee

Recall that in the definition \eqref{eq-Vir-opr-ref} of $\cL_m'$ ($m\geq -1$),
the operators on the right-hand sides all creates two new half-edges
(see \S \ref{sec-def-S&J}),
thus one has $\deg (\cM)=1$.
Then \eqref{eq-deg-Z} tells that:
\ben
\cM (\cZ_{(k)}) = (k+1)\cdot \cZ_{(k+1)},
\qquad
\forall k\geq 0,
\een
and then:
\ben
\cZ_{(k)} =
\frac{1}{k!} \cM^k (\cZ_{(0)})
=\frac{1}{k!} \cM^k (1),
\qquad \forall k\geq 0,
\een
where $1$ is the empty graph.
Therefore:
\ben
\cZ = \sum_{k\geq 0} \cZ_{(k)} =\sum_{k\geq 0} \frac{1}{k!} \cM^k (1)
= e^{\cM} (1).
\een
\end{proof}

\begin{Remark}
We call \eqref{eq-abs-cut&join} the abstract cut-and-join type representation
for the abstract partition function $\cZ$.
In the next section we will see that in the realization by the Hermitian one-matrix models,
\eqref{eq-abs-cut&join} gives a cut-and-join type representation for the
partition function $Z_N^{\text{Herm}}$,
and this leads to the fact that $Z_N^{\text{Herm}}$ is a tau-function of
the KP hierarchy.

\end{Remark}

\section{Realization by Hermitian One-Matrix Models at Finite $N$}
\label{sec-realization-Virasoro}

In this section we consider the realization of the abstract QFT by
the Hermitian one-matrix models.
We show that the fat Virasoro operators, fat Virasoro constraints,
and the Virasoro commutation relations (see \S \ref{sec-fatVirasoro})
all follow from the formalism developed in last section.
Moreover,
we regard the partition function $Z_N^{\text{Herm}}|_{g_s=1}$ as a vector
in the bosonic Fock space $\Lambda$,
and use the realization of the abstract cut-and-join type representation
to show that $Z_N^{\text{Herm}}|_{g_s=1}$ is a tau-function of the KP hierarchy.

\subsection{Realization by the fat Virasoro constraints at finite $N$}
\label{sec-realization-fatVirasoro}

Now let us consider the realizations of the abstract Virasoro constraints.
Since this formalism is inspired by the fat Virasoro constraints of
the Hermitian one-matrix models,
let us first consider how to realize the fat Virasoro constraints.

Recall that in \S \ref{sec-realization-herm-1mm} we have chosen
the Feynman rule to be \eqref{eq-1mm-FR2}
such that the abstract partition function $\cZ$ and abstract free energy $\cF$
are realized by $Z_N^{\text{Herm}}$ and $F_N^{\text{Herm}}$ respectively.
Now let us modify this Feynman rule in the following way
to include hollow vertices:
\be
\label{eq-FR-1mm-absvirasoro}
\begin{split}
&w_v:=g_{\val(v)}, \quad\text{for a solid vertex $v$;}\\
&w_v:=1, \qquad\quad\text{for a hollow vertex $v$;}\\
&w_e:=1, \qquad\quad\text{for an internal edge $e$;}\\
&w_f:=t, \qquad\quad\text{for a face $f$,}
\end{split}
\ee
where $\val(v)$ is the valence of the vertex $v$,
and $t=Ng_s$ is the 't Hooft coupling constant.
In particular,
if $\Gamma$ is a fat graph without hollow vertices,
then the weight of $\Gamma$ is:
\ben
w_\Gamma=t^{|F(\Gamma)|}\cdot\prod_{v\in V(\Gamma)}g_{\val(v)},
\een
therefore the realizations of $\cF$ and $\cZ$ are
still $F_N^{\text{Herm}}$ and $Z_N^{\text{Herm}}$ respectively.

Now let us consider the realization of the abstract operators
defined in previous subsections.
It is not hard to see that:

\begin{Lemma}
\label{lem-realization-fatoperator}
For the Feynman rule \eqref{eq-FR-1mm-absvirasoro},
we have:
\begin{itemize}
\item[1)]
The operator $\pd_n$ is realized by the partial derivative $\frac{\pd}{\pd g_n}$;

\item[2)]
The operator $\cS_{m,n}$ is realized by $mg_{n+1}\frac{\pd}{\pd g_m}$;

\item[3)]
The operator $\cJ_{m,n}$ is realized by $mn\frac{\pd}{\pd g_m}\frac{\pd}{\pd g_n}$
for $m,n>0$;
and the operator $\cJ_{m,0}(-\sqcup \Gamma_{dot})$ is realized by $mt\frac{\pd}{\pd g_m}$;

\item[4)]
The operator $\gamma_{-1}$ is realized by multiplying by $tg_1$,
and the operator $\gamma_{0}$ is realized by multiplying by $t^2$.
\end{itemize}
\end{Lemma}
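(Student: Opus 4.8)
The plan is to prove the lemma by a direct computation with the Feynman rule \eqref{eq-FR-1mm-absvirasoro}, treating the four families of operators one at a time. Since the weight $\Gamma\mapsto w_\Gamma$ is linear in $\Gamma$ and multiplicative over disjoint unions and over the cell structure of a fat graph, it suffices to check, for each abstract operator $\cO$ in the list and each fat graph $\Gamma$ (with one hollow vertex, or none), that $w_{\cO(\Gamma)}$ equals the asserted differential operator applied to $w_\Gamma$. Here $w_\Gamma=t^{|F(\Gamma)|}\prod_{v}g_{\val(v)}$, the product taken over the solid vertices (hollow vertices contributing $1$, and a valence-zero solid vertex such as $\Gamma_{dot}$ contributing trivially, so that $w_{\Gamma_{dot}}=t$, consistent with the initial value $F_0^{(0)}(t)=t$).

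The cases of $\pd_n$, $\cS_{m,n}$, $\gamma_{-1}$ and $\gamma_0$ are essentially the product rule. The operator $\pd_n$ changes a solid vertex of valence $n$ into a hollow one and sums over all such vertices, leaving $E(\Gamma)$, $F(\Gamma)$ and the genus untouched; it thus replaces one factor $g_n$ by $1$ and sums, which is exactly $\pd/\pd g_n$ acting on the monomial $w_\Gamma$. For $\cS_{m,n}$, read in the notation of Definition \ref{def-vertexsplit-opr} so that the new solid vertex $v'$ has valence $n+1$ and the new hollow vertex has valence $m-n+1$, each of the $m$ half-edges at a valence-$m$ solid vertex produces a graph obtained by inserting a single non-loop edge; since contracting a non-loop edge preserves the number of faces (as observed in Example \ref{eg-realization2}), the weight of each resulting graph is $g_{n+1}$ times $w_\Gamma$ with one factor $g_m$ deleted. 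Summing over the $m$ half-edges and over all valence-$m$ solid vertices gives $m\,g_{n+1}\,(\pd/\pd g_m)w_\Gamma$. Finally $\Gamma_{-1}$ is a tree on the sphere with one face, so $w_{\Gamma_{-1}}=tg_1$, and $\Gamma_0$ is a one-vertex graph on the sphere carrying a loop, hence two faces, so $w_{\Gamma_0}=t^2$; taking disjoint union with these graphs realizes multiplication by $tg_1$ and $t^2$.

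The delicate case is $\cJ_{m,n}$, where one must track how the faces and the genus behave when two vertices are merged and a loop is added. Here I would argue by Euler's formula \eqref{eq-euler}: in $\cJ(\Gamma,h_1,h_2)$ the two vertices $v_1,v_2$ are replaced by a single hollow vertex and one new edge is added, and one checks in both cases — $v_1,v_2$ in the same connected component, where the genus of that component rises by one, and $v_1,v_2$ in different components, where one has a connect sum and the genera add — that the change in $2-2g$ exactly cancels the change in $|V|-|E|$, leaving $|F|$ invariant. Hence each resulting graph has weight $w_\Gamma$ with the two factors $g_m,g_n$ deleted, and summing over the $m$ choices of $h_1$, the $n$ choices of $h_2$, and the ordered pairs of distinct solid vertices of valences $m,n$ yields $mn\,(\pd/\pd g_m)(\pd/\pd g_n)w_\Gamma$; the restriction $v_1\neq v_2$ is precisely what turns this into a genuine iterated partial derivative, with the correct coefficient when $m=n$. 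For $\cJ_{m,0}(-\sqcup\Gamma_{dot})$ one first fixes the valence-zero convention so that an empty half-edge set at the valence-zero vertex counts as a single placement (matching the displayed example of $\cJ_{2,0}$), whence only the freshly adjoined $\Gamma_{dot}$ can serve as $v_2$; merging the sphere of $\Gamma_{dot}$ into the component of $v_1$ absorbs the extra sphere face into the new loop and, by Euler's formula again, raises the number of faces by exactly one relative to $\Gamma$, which produces the extra factor $t$. Summing over the $m$ half-edges at each valence-$m$ solid vertex then gives $m\,t\,(\pd/\pd g_m)w_\Gamma$.

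I expect the only nontrivial part to be the genus-and-face bookkeeping for the $\cJ$'s, together with nailing down the valence-zero conventions that make $\cJ_{m,0}(-\sqcup\Gamma_{dot})$ behave correctly; everything else is the product rule on monomials in the $g_n$. Once the four operators are matched, the lemma follows, and substituting these realizations into Definition \ref{def-abs-Viropr} term by term reproduces the fat Virasoro operators \eqref{eq-1mm-fatvir-opr-1}–\eqref{eq-1mm-fatvir-opr-2}, which is the use to which the lemma will be put in the next subsection.
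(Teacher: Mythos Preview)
Your proposal is correct and is precisely the direct verification the paper has in mind; the paper itself offers no proof beyond the phrase ``It is not hard to see that'', so you have simply fleshed out the implicit argument. One small simplification: for the face count under $\cJ_{m,n}$ you can bypass the Euler-formula casework by appealing directly to the observation in Example~\ref{eg-realization2} that $|F(\Gamma^h)|=|F(\Gamma)|$ for loop contraction as well, so the inverse operation $\cJ$ also preserves faces; but your argument via Euler's formula is equally valid.
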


Since the procedure $K^{ho}$ does not change the weight of a fat graph,
as a corollary of this lemma we have:

\begin{Theorem}
Given the Feynman rule \eqref{eq-FR-1mm-absvirasoro},
the abstract Virasoro operator $\cL_m$ is realized by
the fat Virasoro operator $L_{m,t}^{\text{Herm}}$
(see \eqref{eq-1mm-fatvir-opr-1}, \eqref{eq-1mm-fatvir-opr-2})
of the Hermitian one-matrix models for every $m\geq -1$.
Moreover, the abstract Virasoro constraints in Theorem \ref{thm-abstract-vir1}
is realized by the fat Virasoro constraints:
\ben
L_{m,t}^{\text{Herm}} Z_N^{\text{Herm}} =0, \qquad\forall m\geq -1,
\een
and Theorem \ref{thm-fat-abs-virasorocomm}
is realized by the Virasoro commutation relations:
\ben
[L_{m,t}^{\text{Herm}},L_{n,t}^{\text{Herm}}]=(m-n)L_{m+n,t}^{\text{Herm}},
\qquad \forall m,n\geq -1.
\een

\end{Theorem}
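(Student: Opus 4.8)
The plan is to verify the theorem by combining Lemma~\ref{lem-realization-fatoperator} with the earlier structural results, reducing everything to a termwise comparison of operators. First I would recall from \S\ref{sec-realization-herm-1mm} (Feynman rule \eqref{eq-FR-1mm-absvirasoro}) that $\cZ$ is realized by $Z_N^{\text{Herm}}$ and $\cF$ by $F_N^{\text{Herm}}$; the point is that adding hollow vertices with weight $1$ does not disturb this, since a graph with no hollow vertices still receives the weight $t^{|F(\Gamma)|}\prod_{v}g_{\val(v)}$. Then, applying Lemma~\ref{lem-realization-fatoperator} term by term to the defining formula of $\cL_m$ in Definition~\ref{def-abs-Viropr}, I would check that each summand of $\cL_m$ is realized by the corresponding summand of $L_{m,t}^{\text{Herm}}$ as written in \eqref{eq-1mm-fatvir-opr-1}--\eqref{eq-1mm-fatvir-opr-2}: the term $-(m+2)\pd_{m+2}$ realizes to $-(m+2)\frac{\pd}{\pd g_{m+2}}$; the sum $\sum_{n\geq 1}\cS_{n+m,n-1}$ realizes to $\sum_{n\geq 1}(n+m)g_{n-1}\frac{\pd}{\pd g_{n+m}}$ (with the convention on $g_0$ handled by the hollow-vertex-of-valence-one subtlety in $\cS_{n,n}$, $\cS_{n,n-1}$); the term $g_s^2\sum_{n=1}^{m-1}\cJ_{n,m-n}$ realizes to $g_s^2\sum_{n=1}^{m-1}n(m-n)\frac{\pd}{\pd g_n}\frac{\pd}{\pd g_{m-n}}$; the term $2\cJ_{m,0}(-\sqcup\Gamma_{dot})$ realizes to $2tm\frac{\pd}{\pd g_m}$; and $g_s^{-2}\gamma_{-1}$, $g_s^{-2}\gamma_0$ realize to $g_s^{-2}tg_1$, $g_s^{-2}t^2$. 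Matching these against \eqref{eq-1mm-fatvir-opr-1}--\eqref{eq-1mm-fatvir-opr-2} (after the harmless shift $g_2\mapsto g_2-1$ hidden in the $\delta_{k,2}$, which corresponds to $\Gamma_0$ carrying weight $t^2$ and sign) gives $\cL_m\leadsto L_{m,t}^{\text{Herm}}$ for all $m\geq-1$.

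Second, I would deduce the realization of the constraints and commutation relations. Since realization is compatible with composition and linearity of operators (the weight of a graph obtained by a sequence of graph operations is the corresponding composition of the realized operators on the weight), Theorem~\ref{thm-abstract-vir1} $\cL_m(\cZ)=0$ realizes immediately to $L_{m,t}^{\text{Herm}}Z_N^{\text{Herm}}=0$. For the commutation relations, the key observation (already noted just before the statement) is that $K^{ho}$ preserves the weight of every fat graph, i.e.\ $K^{ho}$ is realized by the identity operator. Hence the abstract bracket $[\cL_m,\cL_n]=K^{ho}\circ(\cL_m\circ\cL_n-\cL_n\circ\cL_m)$ is realized by the ordinary commutator $L_{m,t}^{\text{Herm}}L_{n,t}^{\text{Herm}}-L_{n,t}^{\text{Herm}}L_{m,t}^{\text{Herm}}$, so Theorem~\ref{thm-fat-abs-virasorocomm} realizes to $[L_{m,t}^{\text{Herm}},L_{n,t}^{\text{Herm}}]=(m-n)L_{m+n,t}^{\text{Herm}}$.

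The main obstacle I anticipate is purely bookkeeping rather than conceptual: one must be scrupulous about the degenerate cases $\cS_{n,n}$, $\cS_{n,n-1}$ (the operators appearing in $\cL_{-1}$ and $\cL_0$), where the new hollow vertex has valence one or zero, and about how these produce the $\sum_{n\geq1}ng_{n+1}\frac{\pd}{\pd g_n}$ and $\sum_{n\geq1}ng_n\frac{\pd}{\pd g_n}$ pieces respectively, including the emergence of the extra terms $tg_1g_s^{-2}$, $t^2g_s^{-2}$, $2t\frac{\pd}{\pd g_1}$ from $\gamma_{-1}$, $\gamma_0$, and $\cJ_{1,0}(-\sqcup\Gamma_{dot})$. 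One also needs to double-check the symmetry of the $\cJ_{n,m-n}$ sum (contracting a loop splits a valence-$(m+n+2)$ vertex; realizing $\cJ_{m,n}$ as $mn\frac{\pd}{\pd g_m}\frac{\pd}{\pd g_n}$ with the factor $mn$ coming from the count of half-edge choices, matching the $k(m-k)$ coefficient in \eqref{eq-1mm-fatvir-opr-2}) and the $-\delta_{k,2}$ shift in \eqref{eq-1mm-fatvir-opr-2}, which corresponds to the distinguished status of the quadratic term $\frac12\tr(M^2)$ in the action, encoded abstractly by $\Gamma_0$. Once these normalizations are matched, the proof is complete, and I would present it as a short corollary of Lemma~\ref{lem-realization-fatoperator} together with the compatibility of realization with operator composition.
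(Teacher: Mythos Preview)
Your approach is correct and essentially identical to the paper's: the theorem is stated as an immediate corollary of Lemma~\ref{lem-realization-fatoperator} together with the observation that $K^{ho}$ does not change the weight of a fat graph (so the abstract bracket realizes to the ordinary commutator). One small slip to fix: by the lemma $\cS_{n+m,n-1}$ realizes to $(n+m)g_{n}\frac{\pd}{\pd g_{n+m}}$, not $(n+m)g_{n-1}\frac{\pd}{\pd g_{n+m}}$; with this correction the $-\delta_{k,2}$ in \eqref{eq-1mm-fatvir-opr-2} simply absorbs the $-(m+2)\pd_{m+2}$ term into the $k=2$ summand and has nothing to do with $\Gamma_0$.
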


\subsection{Realization on the bosonic Fock space}

The partition function $Z_N^{\text{Herm}}$ of the Hermitian one-matrix models is
known to be a $\tau$-function of the KP hierarchy,
with respect to the variables $T_n:=\frac{g_n}{n}$
(see \cite{sty}).
Thus it is natural to understand $Z_N^{\text{Herm}}$
as a vector in the bosonic Fock space.
In this subsection let us discuss the realization of
the abstract Virasoro constraints on the bosonic Fock space,
i.e.,
on the space of all symmetric functions.

Now let us recall the construction of the bosonic Fock space.
Denote by $\Lambda$ the space of all symmetric functions,
then $\Lambda$ can be spanned by $\{p_{\lambda}\}_{\lambda}$,
where
\ben
p_{\lambda}:=p_{\lambda_1}\cdots p_{\lambda_n}
\een
for a partition $\lambda=(\lambda_1,\cdots,\lambda_n)$,
and $p_k$ is the Newton symmetric function of degree $k$.
It is well-known that $\Lambda$ admits a natural structure of a bosonic Fock space.
Let $n\in\bZ$,
and define the operators $\alpha_n :\Lambda\to\Lambda$ (the `bosons') by:
\be
\label{eq-boson}
\alpha_n (f):=\left\{
\begin{aligned}
&p_{-n}\cdot f, & \quad  n<0;\\
&0, & \quad  n=0; \\
&n\frac{\pd}{\pd p_n}f,  & \quad  n>0,
\end{aligned}
\right.
\ee
then $\{\alpha_n\}_{n\in \bZ}$ span a Heisenberg algebra,
i.e.,
they satisfy the following Heisenberg commutation relations:
\be
[\alpha_m,\alpha_n]=m\delta_{m+n,0},\quad
\forall m,n\in\bZ.
\ee
The operators $\alpha_n$ with $n<0$ are called the creators,
and  $\alpha_n$ with $n>0$ are called the annihilators.
The vector $1\in \Lambda$ is called the bosonic vacuum vector,
and it is annihilated by the annihilators $\alpha_n$ ($n>0$).
It is also clear that:
\ben
p_\lambda=\alpha_{-\lambda}1,
\een
where $\alpha_{-\lambda}:=\alpha_{-\lambda_1}\cdots\alpha_{-\lambda_n}$
for a partition $\lambda=(\lambda_1,\cdots,\lambda_n)$.

The space $\Lambda$ carries a natural Hermitian metric defined by:
\be
\langle
p_\lambda,p_\mu\rangle
=z_\mu\delta_{\lambda,\mu},
\ee
where
\ben
z_{\mu}:=\prod_k k^{m_k}\cdot m_k!,
\een
and $m_k$ is the number of $k$ appearing in the partition $\mu$.
The operators $\alpha_n$ and $\alpha_{-n}$ are dual to each other
with respect to this metric for every $n\not=0$.

Now let us consider the realization of the abstract Virasoro constraints
on the bosonic Fock space $\Lambda$.
Take $g_s=1$ and $g_n:=p_n$ in the realization
introduced in \S \ref{sec-realization-fatVirasoro}.
The Feynman rule is:
\be
\label{eq-FR-bosonicfock}
\begin{split}
&w_v:=p_{\val(v)}, \quad\text{for a solid vertex $v$;}\\
&w_v:=1, \qquad\quad\text{for a hollow vertex $v$;}\\
&w_e:=1, \qquad\quad\text{for an internal edge $e$;}\\
&w_f:=N, \quad\quad\text{for a face $f$.}
\end{split}
\ee
then the realization
$Z^B:=Z_N^{\text{Herm}} |_{g_s=1}$ of the abstract partition function
$ \cZ $ is an element in $\Lambda$.
Now consider the realizations of the operators.
Notice that for every partition $\lambda=(\lambda_1,\cdots,\lambda_n)$,
the symmetric function $p_\lambda$ can be represented as
the weight of some fat graph:
one simply take $n$ vertices of valences $\lambda_1,\cdots,\lambda_n$ respectively,
and glue them together arbitrarily to obtain a fat graph
(notice that if $|\lambda|$ is odd,
one need to add an additional hollow vertex of an odd valence).
Then similar to Lemma \ref{lem-realization-fatoperator}, we have:
\begin{itemize}
\item[1)]
The operator $\pd_n$ is realized by $\frac{1}{n}\alpha_n$;

\item[2)]
The operator $\cS_{m,n}$ is realized by $\alpha_{-(n+1)}\alpha_m$;

\item[3)]
The operator $\cJ_{m,n}$ is realized by $\alpha_m\alpha_n$;

\item[4)]
The operator $\gamma_{-1}$ is realized by $N\alpha_{-1}$,
and the operator $\gamma_{0}$ is realized by multiplying by $N^2$.
\end{itemize}
Thus the realizations of Theorem \ref{thm-abstract-vir1} and
Theorem \ref{thm-fat-abs-virasorocomm} gives us
the following Virasoro constraints on the bosonic Fock space $\Lambda$:

\begin{Theorem}
Define a family of operators $\{L_m\}_{m\geq -1}$ on $\Lambda$ by:
\begin{equation}
\begin{split}
&L_{-1}:=-\alpha_1+\sum_{n\geq 1}\alpha_{-(n+1)}\alpha_n+N\alpha_{-1};\\
&L_0:=-\alpha_2+\sum_{n\geq 1}\alpha_{-n}\alpha_n+N^2;\\
&L_1:=-\alpha_3+\sum_{n\geq 1}\alpha_{-n}\alpha_{n+1}+2N\alpha_1;\\
&L_m:=-\alpha_{m+2}+\sum_{n\geq 1}\alpha_{-n}\alpha_{n+m}
+\sum_{n=1}^{m-1}\alpha_n\alpha_{m-n}+2N\alpha_m,
\quad m\geq 2.
\end{split}
\end{equation}
Then these operators satisfy the Virasoro commutation relations:
\be
[L_m,L_n]=(m-n)L_{m+n},
\qquad \forall m,n\geq -1.
\ee
Moreover,
they annihilate the partition function $Z_N^{\text{Herm}} |_{g_s=1}$:
\be
L_m (Z_N^{\text{Herm}} |_{g_s=1}) = 0, \qquad \forall m\geq -1.
\ee
\end{Theorem}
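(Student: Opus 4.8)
The plan is to derive this statement as the realization, under the Feynman rule \eqref{eq-FR-bosonicfock} (that is, $g_s=1$, $g_n=p_n$, $w_f=N$), of the abstract Virasoro constraints of Theorem \ref{thm-abstract-vir1} together with the abstract commutation relations of Theorem \ref{thm-fat-abs-virasorocomm}. First I would set up the realization dictionary for the elementary operators on fat graphs with a hollow vertex. For such a graph $\Gamma$ one has $w_\Gamma = N^{|F(\Gamma)|}\prod_{v\text{ solid}}p_{\val(v)}$, so contracting or splitting a non-loop edge leaves $|F(\Gamma)|$ unchanged by Euler's formula \eqref{eq-euler}, while the loop operations modify the solid vertices exactly as prescribed in Definitions \ref{def-vertexsplit-opr} and \ref{def-abs-opr-J}. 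Running this bookkeeping, as in Lemma \ref{lem-realization-fatoperator} but now with $t=Ng_s=N$, with $\frac{\pd}{\pd g_n}$ replaced by $\tfrac1n\alpha_n$ and multiplication by $g_n$ replaced by $\alpha_{-n}$, gives $\pd_n\mapsto\tfrac1n\alpha_n$, $\cS_{m,n}\mapsto\alpha_{-(n+1)}\alpha_m$, $\cJ_{m,n}\mapsto\alpha_m\alpha_n$ for $m,n>0$, $\cJ_{m,0}(-\sqcup\Gamma_{dot})\mapsto N\alpha_m$, $\gamma_{-1}\mapsto N\alpha_{-1}$, $\gamma_0\mapsto N^2$; the normal ordering (creators to the left) is forced by the order in which the weight monomials are destroyed and created. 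I would also record that $\Gamma\mapsto w_\Gamma$ surjects onto a spanning set of $\Lambda$: every $p_\lambda$ is the weight of a fat graph obtained by gluing vertices of valences $\lambda_1,\dots,\lambda_n$, adjoining one auxiliary hollow vertex of odd valence when $|\lambda|$ is odd, and $1$ is the weight of the empty graph.

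Next I would identify the realized operators. Substituting the dictionary into Definition \ref{def-abs-Viropr} and matching term by term shows that the realization of $\cL_m$ is exactly the operator $L_m$ in the statement (e.g.\ $-(m+2)\pd_{m+2}\mapsto-\alpha_{m+2}$, $\sum_{n\geq1}\cS_{n+m,n-1}\mapsto\sum_{n\geq1}\alpha_{-n}\alpha_{n+m}$, $g_s^2\sum_{n=1}^{m-1}\cJ_{n,m-n}\mapsto\sum_{n=1}^{m-1}\alpha_n\alpha_{m-n}$, $2\cJ_{m,0}(-\sqcup\Gamma_{dot})\mapsto2N\alpha_m$). Since $Z_N^{\text{Herm}}|_{g_s=1}$ is the realization of the abstract partition function $\cZ$ (cf.\ \S\ref{sec-realization-fatVirasoro}), applying the realization to $\cL_m(\cZ)=0$ from Theorem \ref{thm-abstract-vir1} yields $L_m\big(Z_N^{\text{Herm}}|_{g_s=1}\big)=0$ for every $m\geq-1$; the infinite summation is harmless because each graph occurring in $\cZ$ is produced in only finitely many ways, as noted after Theorem \ref{thm-abstract-vir1}.

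For the commutation relations I would invoke Theorem \ref{thm-fat-abs-virasorocomm}, whose Lie bracket is $K^{ho}\circ(\cL_m\circ\cL_n-\cL_n\circ\cL_m)$ rather than the naive commutator. By Lemma \ref{lemma-adjacent}, $K^{ho}$ only ever acts on graphs in which the two new hollow vertices are adjacent, so it is well defined there, and $K^{ho}$ contracts an edge joining two weight-$1$ hollow vertices into a single weight-$1$ hollow vertex without changing $|F|$; hence $w_{K^{ho}(\Gamma')}=w_{\Gamma'}$. Since the realization intertwines each elementary operator with its realized operator, it intertwines compositions with compositions, so the realization of $K^{ho}\circ(\cL_m\circ\cL_n-\cL_n\circ\cL_m)$ on weights is simply $L_mL_n-L_nL_m$, while the realization of $(m-n)\cL_{m+n}$ is $(m-n)L_{m+n}$; since the $w_\Gamma$ span $\Lambda$, the identity $[L_m,L_n]=(m-n)L_{m+n}$ holds as operators. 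As an independent cross-check one can verify this identity directly from the Heisenberg relations $[\alpha_i,\alpha_j]=i\delta_{i+j,0}$, where the $N$-dependent terms are precisely what closes the algebra in the low cases $m+n\le1$.

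The main obstacle is the argument for the commutation relations around $K^{ho}$: one must be confident that $K^{ho}$ is genuinely invisible to the Feynman rule — which rests on hollow vertices carrying weight $1$ and on Lemma \ref{lemma-adjacent} guaranteeing adjacency — and one must then upgrade the weight-level identity to an identity of operators on $\Lambda$ using the spanning remark. The dictionary and the identification of $L_m$ are routine Euler-formula bookkeeping in the style of \S\ref{sec-realization-fatVirasoro} and carry no real difficulty.
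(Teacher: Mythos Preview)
Your proposal is correct and follows essentially the same approach as the paper: establish the dictionary $\pd_n\mapsto\tfrac1n\alpha_n$, $\cS_{m,n}\mapsto\alpha_{-(n+1)}\alpha_m$, $\cJ_{m,n}\mapsto\alpha_m\alpha_n$, $\gamma_{-1}\mapsto N\alpha_{-1}$, $\gamma_0\mapsto N^2$, observe that $K^{ho}$ is weight-preserving, and then read off the statement as the realization of Theorems \ref{thm-abstract-vir1} and \ref{thm-fat-abs-virasorocomm}. You supply more detail than the paper (which just says ``the realizations of Theorem \ref{thm-abstract-vir1} and Theorem \ref{thm-fat-abs-virasorocomm} give us'') on why $K^{ho}$ is invisible to the Feynman rule and on the spanning argument used to upgrade the weight-level identity to an operator identity on $\Lambda$, but these are exactly the ingredients the paper relies on tacitly.
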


\begin{Remark}

The realizations of operators on fat graphs on the bosonic Fock space suggests us to regard the space spanned by fat graphs
as a refinement of the bosonic Fock space,
and this will lead to some connections to representation theory and integrable hierarchies.
We wish to discuss this topic in future investigations.

\end{Remark}

\subsection{Realization of the abstract cut-and-join type representation}

In this subsection let us consider the realization of the abstract cut-and-join type representation
\eqref{eq-abs-cut&join} for $\cZ$.
For simplicity we take $g_s=1$ in this subsection,
and the realization of $\cZ$ is $Z_N^{\text{Herm}}|_{g_s=1}$ given the Feynman rule \eqref{eq-FR-1mm-absvirasoro}.

First consider the realization of the operator $\cM$.
Recall that the linear operator $\cI$ turns a hollow vertex into a solid one.
Since the operator $\cL_m'$ ($m\geq -1$) defined by \eqref{eq-Vir-opr-ref}
creates a hollow vertex of valence $m+2$,
the composition
\ben
\cI \circ \cL_m' : \cV^0 \to \cV^0
\een
is realized by $g_{m+2} \cdot L_m'$ under the Feynman rule \eqref{eq-FR-1mm-absvirasoro},
where $L_m'$ is the following realization of $\cL_m'$:

\be
\begin{split}
&L_{-1}'=
\sum_{n\geq 1}ng_{n+1}\frac{\pd}{\pd g_n}
+tg_1,\\
&L_0'=
\sum_{n\geq 1}ng_{n}\frac{\pd}{\pd g_n}
+t^2,\\
&L_1'=
\sum_{n\geq 1}(n+1)g_{n}\frac{\pd}{\pd g_{n+1}}
+2t\frac{\pd}{\pd g_1},\\
&L_m'=\sum_{k\geq 1}(k+m)g_k\frac{\pd}{\pd g_{k+m}}
+\sum_{k=1}^{m-1}k(m-k)\frac{\pd}{\pd g_k}
\frac{\pd}{\pd g_{m-k}}+2tm\frac{\pd}{\pd g_m},
\quad m\geq 2.
\end{split}
\ee
Therefore the realization of the operator $\cM:\cV^0 \to \cV^0$ (see \eqref{eq-def-cM}) is:
\be
M= \frac{1}{2} \sum_{m\geq -1} g_{m+2}\cdot L_m',
\ee
and Theorem \ref{thn-abs-cut&join} gives the following cut-and-join representation of
$Z_N^{\text{Herm}}|_{g_s=1}$:
\begin{Theorem}
We have:
\be
Z_N^{\text{Herm}}|_{g_s=1} = e^M (1).
\ee
\end{Theorem}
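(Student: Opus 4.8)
The plan is to simply transport the abstract identity $\cZ = e^{\cM}(1)$ of Theorem \ref{thn-abs-cut&join} through the Feynman rule \eqref{eq-FR-1mm-absvirasoro} (with $g_s=1$, so $t=N$). Since a Feynman rule is by construction a linear map $\Gamma \mapsto w_\Gamma$ from the space of formal combinations of fat graphs to functions, and since we have already identified the realization of $\cZ$ with $Z_N^{\text{Herm}}|_{g_s=1}$ and the realization of the empty graph $1$ with the constant $1$, the only thing that actually requires an argument is that the operator $\cM : \cV^0 \to \cV^0$ is realized by the differential operator $M = \tfrac12\sum_{m\geq -1} g_{m+2}\cdot L_m'$, i.e.\ that $w_{\cM(\Gamma)} = M(w_\Gamma)$ for every fat graph $\Gamma$.

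First I would recall from $\eqref{eq-def-cM}$ that $\cM = \tfrac12\sum_{m\geq -1}\cI\circ\cL_m'$, where $\cL_m' : \cV^0\to\cV^1$ is the operator in \eqref{eq-Vir-opr-ref} and $\cI:\cV^1\to\cV^0$ turns the single hollow vertex back into a solid one. The key observation is that the composite $\cI\circ\cL_m'$ creates a \emph{solid} vertex of valence $m+2$, and so under the Feynman rule \eqref{eq-FR-1mm-absvirasoro} its effect on the weight is to multiply by $g_{m+2}$ the realization of $\cL_m'$ itself. So the computation reduces to realizing each $\cL_m'$. Here I would invoke Lemma \ref{lem-realization-fatoperator} (and its hollow-vertex refinement), which already tells us that $\cS_{m,n}$ is realized by $mg_{n+1}\tfrac{\pd}{\pd g_m}$, that $\cJ_{m,n}$ is realized by $mn\tfrac{\pd}{\pd g_m}\tfrac{\pd}{\pd g_n}$ for $m,n>0$, that $\cJ_{m,0}(-\sqcup\Gamma_{dot})$ is realized by $mt\tfrac{\pd}{\pd g_m}$, and that $\gamma_{-1},\gamma_0$ are realized by multiplication by $tg_1$ and $t^2$. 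Matching term by term against the definition \eqref{eq-Vir-opr-ref} of $\cL_m'$ gives exactly the operators $L_m'$ displayed just above the statement; summing, $\cM$ is realized by $M$. One small point to check along the way is that $K^{ho}$ does not appear in $\cM$ (it only appeared in the Lie-bracket structure), so no subtlety about edge-contraction enters here.

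With the realization $w_{\cM(\Gamma)} = M(w_\Gamma)$ established, the argument finishes formally: applying the Feynman rule to the identity $\cZ=\sum_{k\geq0}\tfrac{1}{k!}\cM^k(1)$ term by term, and using that the Feynman rule intertwines $\cM$ with $M$ and sends $1$ to $1$, yields $Z_N^{\text{Herm}}|_{g_s=1}=\sum_{k\geq0}\tfrac{1}{k!}M^k(1)=e^M(1)$. The convergence/well-definedness of the infinite sum is inherited from the abstract statement via the same grading by number of edges used in the proof of Theorem \ref{thn-abs-cut&join}: $M$ raises the $t$- and $g$-degree in a way that makes $M^k(1)$ the degree-$k$ piece, so $e^M(1)$ is a well-defined formal power series. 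I expect the only mildly delicate step to be the careful bookkeeping of the combinatorial coefficients (the valence factors $m$, $n$ and the multiplicities coming from "summation over all locations w.r.t.\ the cyclic order") in the realization of $\cL_m'$; but this is already packaged in Lemma \ref{lem-realization-fatoperator}, so in practice the proof is a short verification rather than a genuinely hard argument.
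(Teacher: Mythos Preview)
Your proposal is correct and follows essentially the same route as the paper: the paper also observes that $\cI\circ\cL_m'$ is realized by $g_{m+2}\cdot L_m'$ (via Lemma \ref{lem-realization-fatoperator}), deduces that $\cM$ is realized by $M=\tfrac12\sum_{m\geq -1}g_{m+2}L_m'$, and then simply invokes Theorem \ref{thn-abs-cut&join}. Your added remarks on the edge grading ensuring well-definedness of $e^M(1)$ are a small but welcome elaboration, but the core argument is identical.
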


\subsection{The element in $\widehat{\mathfrak{gl}(\infty)}$ corresponding to
the KP tau-function $Z_N^{\text{Herm}}$}

In this subsection we use the above cut-and-join type representation to
show that $Z_N^{\text{Herm}}|_{g_s=1}$ is a tau-function of the KP hierarchy.
See \cite{djm, sa} for an introduction to Kyoto School's approach to
the KP hierarchy.
See \cite{kz} for a proof of the fact that the dessin partition function
is a KP tau-function using this method.

First let us take $g_n= p_n$ for every $n\geq 1$,
and regard $Z_N^{\text{Herm}}|_{g_s=1}$ as an element in the
bosonic Fock space $\Lambda$.
It is clear that the above realization $M$ of the operator $\cM$
can be rewritten in terms of the bosonic operators \eqref{eq-boson}:
\be
M= \half\sum_{i+j+k=-2} :\alpha_i \alpha_j \alpha_k:
+\frac{t}{2} \sum_{i+j=-2} :\alpha_i \alpha_j: + \frac{t^2}{2} \alpha_{-2},
\ee
where $:\alpha_{i_1}\alpha_{i_2}\cdots \alpha_{i_n}:$ is the normal ordering
of the bosons $\alpha_{i_1},\cdots, \alpha_{i_n}$ defined by:
\ben
:\alpha_{i_1}\alpha_{i_2}\cdots \alpha_{i_n}:
= \alpha_{i_{\sigma(1)}}\alpha_{i_{\sigma(2)}}\cdots \alpha_{i_{\sigma(n)}},
\een
where $\sigma\in S_n$ is a permutation such that $\sigma(1)\leq \sigma(2)\leq\cdots\leq \sigma(n)$.
It is known that the operators
\ben
\sum_{i+j+k=-2} :\alpha_i \alpha_j \alpha_k:,
\qquad
\sum_{i+j=-2} :\alpha_i \alpha_j:,
\qquad
\alpha_{-2}
\een
are all elements in the infinite-dimensional Lie algebra $\widehat{\mathfrak{gl}(\infty)}$
(see eg. Kazarian \cite[\S 4]{ka}),
thus $M\in \widehat{\mathfrak{gl}(\infty)}$,
and then:
\be
e^M \in \widehat{GL(\infty)}.
\ee

In Sato's theory of the KP hierarchy,
the space of all (formal power series) tau-functions is a $\widehat{GL(\infty)}$-orbit
of the trivial solution $\tau=1$.
Therefore,
the partition function $Z_N^{\text{Herm}}|_{g_s=1} = e^M (1)$ is also a tau-function
of the KP hierarchy.
This is a well-known result in the literature proved by other methods,
see eg. \cite{sty}.

\section{Quantum Deformation Theory of the Spectral Curve for the Abstract QFT for Fat Graphs}
\label{sec-QDT}

In this section,
we explain how to construct a spectral curve for the abstract QFT for fat graphs
from the abstract Virasoro constraints,
and discuss the quantum deformation theory of this spectral curve.
The notion of the `special deformation of the spectral curve' was introduced
by the second author in \cite{zhou9} in the study of topological 2D gravity.
In that case,
the spectral curve is the Airy curve,
and the special deformation of the Airy curve can be constructed
using the one-point functions of genus zero of the Witten-Kontsevich tau-function.
The Virasoro constraints for topological 2D gravity are encoded in
the quantum deformation theory of the Airy curve.
In \cite{zhou2}, the second author
generalized this idea to the formalism of emergent geometry.
Given a Gromov-Witten type theory (A-theory),
the spectral curve of this theory together with its special deformation
(as a B-theory)
should emerge from the Virasoro constraints.
The emergence of spectral curves for the Hermitian one-matrix models
(with respect to both fat and thin genus expansion)
have been presented in \cite{zhou10}.
The emergence of the Eynard-Orantin topological recursion
is discussed in \cite{zhou8,zhou11}.
In this work we generalize some of these results to the setting of abstract QFT of fat graphs.
As an application,
we show that the quantum deformation theory of the spectral curve
for the Hermitian one-matrix models is a realization of this theory.

\subsection{A reformulation of the abstract Virasoro constraints of genus zero}

In this subsection,
let us reformulate the abstract Virasoro constraints
of genus zero in a more compact way.
This reformulation will be useful when we discuss the special deformation
in next subsection.

Let $k$ be a non-negative integer,
and denote by $\mathfrak{Fat}^{ho-k}$ the set of fat graphs with $k$ hollow vertices
(and an arbitrary number of solid vertices).
Let $\cV^{ho-k}$ be the linear space:
\be
\cV^{ho-k}:=\prod_{\Gamma\in\mathfrak{Fat}^{ho-k}}\bQ\Gamma.
\ee
Now let us introduce a family of linear operators
\ben
\fs_k:\cV^{ho-1}\to\cV^{ho-1},
\qquad  k\geq 0,
\een
in the following way.

\begin{Definition}
Let $\Gamma$ be a fat graph with a hollow vertex of valence $n$,
then:
\begin{itemize}
\item[1)]
If $0\leq k\leq n$,
define
\ben
\fs_k(\Gamma):=\frac{1}{n}\sum_{h\in H(v)} \cS(\Gamma,h),
\een
where $v$ is the hollow vertex in $\Gamma$,
$H(v)$ is the set of half-edges attached to $v$,
and the definition of $\cS(\Gamma,h)$ is given in Definition \ref{def-vertexsplit-opr}.

\item[2)]
If $k>n$,
define $\fs_k(\Gamma):=0$.

\end{itemize}
\end{Definition}

In other words,
the operator $\fs_k$ is a vertex-splitting operator for hollow vertices--
it splits a hollow vertex of valence $n$
into a new hollow vertex of valence $n-k+1$ and a new solid vertex of valence $k+1$,
and creates a new edge connecting them.
By definition we have the following easy observation:

\begin{Lemma}
\label{lem-specialdefo1}
For every $0\leq k\leq n$,
we have:
\be
\fs_k\circ \pd_n=\frac{1}{n}\cS_{n,k}
\ee
on $\cV^{ho-0}$.
\end{Lemma}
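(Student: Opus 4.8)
The plan is to unwind both sides of the claimed identity $\fs_k \circ \pd_n = \frac{1}{n}\cS_{n,k}$ on an arbitrary generator $\Gamma \in \cV^{ho-0}$, i.e., an arbitrary fat graph with only solid vertices, and check that the two linear operators produce the same element of $\cV^{ho-1}$. Since both sides are linear, it suffices to verify the equality on such a single graph $\Gamma$.

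First I would compute the left-hand side. By definition $\pd_n(\Gamma) = \sum_{v \in V_n(\Gamma)} \pd(\Gamma, v)$, the sum over all solid vertices $v$ of valence $n$ in $\Gamma$, where $\pd(\Gamma,v)$ is $\Gamma$ with that one vertex recolored hollow. Each summand $\pd(\Gamma,v)$ is a fat graph with exactly one hollow vertex, namely $v$, of valence $n$. Applying $\fs_k$ to $\pd(\Gamma,v)$ then, by definition of $\fs_k$ (using $0 \le k \le n$), gives $\frac{1}{n}\sum_{h \in H(v)} \cS(\pd(\Gamma,v), h)$, where $\cS(\cdot,h)$ is the vertex-splitting of Definition \ref{def-vertexsplit-opr}. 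So altogether
\[
\fs_k(\pd_n(\Gamma)) = \frac{1}{n} \sum_{v \in V_n(\Gamma)} \sum_{h \in H(v)} \cS(\pd(\Gamma,v), h).
\]
Next I would compute the right-hand side: by Definition \ref{def-vertexsplit-opr},
\[
\cS_{n,k}(\Gamma) = \sum_{v \in V_n(\Gamma)} \sum_{h \in H(v)} \cS(\Gamma, h),
\]
so $\frac{1}{n}\cS_{n,k}(\Gamma) = \frac{1}{n}\sum_{v \in V_n(\Gamma)}\sum_{h \in H(v)} \cS(\Gamma,h)$. Comparing the two expressions, the identity reduces to the termwise claim: for each solid vertex $v$ of valence $n$ and each half-edge $h \in H(v)$,
\[
\cS(\pd(\Gamma,v), h) = \cS(\Gamma, h),
\]
as fat graphs with one hollow vertex.

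The content of this termwise claim is simply that the two operations "recolor $v$ hollow, then split it along $h$" and "split $v$ along $h$ directly (it being colored solid)" yield the same fat graph with a hollow vertex. In $\cS(\Gamma,h)$, the vertex $v$ is split into a solid vertex $v'$ of valence $k+1$ and a hollow vertex $v''$ of valence $n-k+1$, joined by a new edge, with the cyclic orders and half-edge distribution prescribed exactly as in Definition \ref{def-vertexsplit-opr}. In $\cS(\pd(\Gamma,v),h)$, the starting graph has $v$ already hollow, and $\fs_k$ splits it into a hollow vertex of valence $n-k+1$ and a solid vertex of valence $k+1$ with the new edge between them — and the combinatorial recipe for which half-edges go where and what the new cyclic orders are is the identical recipe $\cS(\cdot,h)$ in both cases. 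The only point to check is that the color bookkeeping matches: in both constructions exactly one of the two new vertices ends up hollow (the valence $n-k+1$ one) and the other solid, and this is independent of whether $v$ was solid or hollow before the split, since $\cS(\cdot,h)$ always designates $v''$ as the hollow vertex. I would also note the edge case $k=0$ (and $k=n$) separately, invoking the Remark after Definition \ref{def-vertexsplit-opr} to confirm the recipe still behaves correctly there.

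I do not expect a serious obstacle here; the lemma is essentially a definition-chase, and the main (mild) care is in confirming that the $\frac{1}{n}$ normalizations on the two sides are placed consistently — on the left it comes from the definition of $\fs_k$, on the right it is inserted by hand — and in checking that both constructions start from a graph where the valence-$n$ vertex is unambiguously identified (solid in one case, hollow in the other) so that the sums range over the same index sets. Once the termwise identity $\cS(\pd(\Gamma,v),h) = \cS(\Gamma,h)$ is granted, summing over $v \in V_n(\Gamma)$ and $h \in H(v)$ and dividing by $n$ gives the result.
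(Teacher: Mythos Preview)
Your proof is correct and is precisely the definition-chase the paper has in mind; in fact the paper states this lemma without proof, calling it an ``easy observation'' following directly from the definitions of $\fs_k$, $\pd_n$, and $\cS_{n,k}$. Your termwise identification $\cS(\pd(\Gamma,v),h)=\cS(\Gamma,h)$ is exactly the point: the splitting recipe $\cS(\cdot,h)$ is insensitive to the color of the vertex being split and always outputs $v'$ solid and $v''$ hollow, so composing ``hollow-then-split'' with ``split-directly'' agree on the nose.
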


Moreover,
we need to introduce another linear operator $\fj:\cV^{ho-2}\to\cV^{ho-1}$.

\begin{Definition}
Let $\Gamma$ be a fat graph with two hollow vertices $v_1$ and $v_2$,
then we define:
\ben
\fj(\Gamma):=\frac{1}{\val(v_1)\cdot \val(v_2)}
\sum_{\substack{h_1\in H(v_1)\\h_2\in H(v_2)}}\cJ(\Gamma,h_1,h_2),
\een
where $\val(v_i)$ is the valence of the vertex $v_i$,
and the definition of $\cJ(\Gamma,h_1,h_2)$ is given in Definition \ref{def-abs-opr-J}.
\end{Definition}

It is not hard to see that:
\begin{Lemma}
\label{lem-specialdefo2}
For every $m,n\geq 1$,
we have:
\be
\fj\circ(\pd_m\pd_n)=\frac{1}{m\cdot n}\cJ_{m,n}
\ee
on $\cV^{ho-0}$.
\end{Lemma}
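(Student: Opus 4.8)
The plan is to check the identity by evaluating both linear operators on a single fat graph $\Gamma$ with no hollow vertices and matching the two sums term by term; since $\cV^{ho-0}$ is spanned by such graphs, this suffices. First I would unwind the left-hand side: applying $\pd_n$ recolors one solid vertex of valence $n$ as hollow, and then $\pd_m$ recolors a second, necessarily distinct, solid vertex of valence $m$, so $\pd_m\pd_n(\Gamma)=\sum_{(u,w)}\Gamma_{u,w}$, where the sum is over ordered pairs of distinct solid vertices of $\Gamma$ with $\val(u)=n$ and $\val(w)=m$, and $\Gamma_{u,w}$ denotes $\Gamma$ with $u$ and $w$ both recolored hollow. (When $m=n$ this is exactly the sum over ordered pairs of distinct valence-$m$ vertices, which is the index set of $\cJ_{m,m}$, so no double-counting discrepancy arises.) Then, by definition, $\fj(\Gamma_{u,w})=\frac{1}{mn}\sum_{h_1\in H(w),\,h_2\in H(u)}\cJ(\Gamma_{u,w};h_1,h_2)$.

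Next I would observe that the merge-and-loop operation producing $\cJ(\Gamma_{u,w};h_1,h_2)$ uses only the fat-graph structure of $\Gamma$ near $u$ and $w$ together with the chosen half-edges $h_1,h_2$; it does not see whether $u,w$ were solid or hollow, since in either case they are absorbed into a single new hollow vertex whose cyclic order is prescribed identically. One also needs that $\fj$ is insensitive to which hollow vertex is named $v_1$ and which $v_2$: interchanging them in Definition \ref{def-abs-opr-J} only swaps the two half-edges of the newly created loop, which carry no order, so the resulting fat graph is unchanged. Consequently $\cJ(\Gamma_{u,w};h_1,h_2)$ coincides with the term of $\cJ_{m,n}(\Gamma)$ indexed by $v_1=w\in V_m(\Gamma)$, $v_2=u\in V_n(\Gamma)$ and the half-edges $h_1,h_2$. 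Summing over all $(u,w)$ and $(h_1,h_2)$ then yields $\fj\circ(\pd_m\pd_n)(\Gamma)=\frac{1}{mn}\cJ_{m,n}(\Gamma)$, as claimed.

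I expect this to be entirely routine, in the same spirit as Lemma \ref{lem-specialdefo1}. The only place warranting any care is the bookkeeping in the second step: confirming that the cyclic-order conventions in the definitions of $\pd_\bullet$, $\cJ$, and $\fj$ are compatible under composition, that the normalization $1/\bigl(\val(v_1)\val(v_2)\bigr)$ built into $\fj$ is precisely the factor $1/(mn)$ asserted, and that the symmetry remark about $\fj$ indeed holds — rather than any genuine difficulty.
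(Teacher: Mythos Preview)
Your proposal is correct and is exactly the direct unwinding of the definitions that the paper has in mind; the paper itself omits the proof, stating only that the lemma is ``not hard to see.'' Your care with the $m=n$ case, the irrelevance of the solid/hollow status of the merged vertices, and the symmetry of $\fj$ under swapping the two hollow vertices are all the right checks to make.
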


\begin{Remark}
As we will see in \S \ref{sec-QDT-Herm},
the operators $\fs_n$ and $\pd_n$ are analogues of the
bosonic creators (acting as multiplying by a coupling constant)
and the bosonic annihilators (acting as taking partial derivative w.r.t. a coupling constant)
respectively.
Here we generalize these bosonic operators on the bosonic Fock space to operators on graphs.

\end{Remark}

Now let us reformulate the abstract Virasoro constraints of genus zero
using the above operators.
Recall that $\cZ=\exp\cF$ and $\cF=\sum_{g\geq 0} g_s^{2g-2}\cF_g$,
then expanding the abstract Virasoro constraints \eqref{thm-abstract-vir1}
and analyzing the coefficients of $g_s^{-2}$,
we get the following abstract Virasoro constraints for $\cF_0$:
\be
\label{eq-absVir-genus0}
\begin{split}
&-\pd_1\cF_0+\sum_{n\geq 1}\cS_{n,n}\cF_0+\Gamma_{-1}=0,\\
&-2\pd_2\cF_0+\sum_{n\geq 1}\cS_{n,n-1}\cF_0+\Gamma_{0}=0,\\
&-3\pd_{3}\cF_0+\sum_{n\geq 1}\cS_{n+1,n-1}\cF_0
+2\cJ_{1,0}(\cF_0\cdot\Gamma_{dot})=0,\\
&-(m+2)\pd_{m+2}\cF_0+\sum_{n\geq 1}\cS_{n+m,n-1}\cF_0
+\sum_{n=1}^{m-1}n(m-n)\cdot\fj\big(\pd_n\cF_0\cdot\pd_{m-n}\cF_0\big)\\
&\quad +2\cJ_{m,0}\big(\cF_0\cdot\Gamma_{dot}\big)=0,
\quad m\geq 1,\\
\end{split}
\ee
see Definition \ref{def-abs-Viropr}
for the expressions of $\Gamma_{-1}$, $\Gamma_0$ and $\Gamma_{dot}$.

Let $z$ be a formal variable,
and denote:
\be
\begin{split}
&\fs:=\frac{1}{\sqrt{2}}\sum_{n=0}^\infty
(\fs_{n}-\delta_{n,1})\cdot z^{n},\\
&\ff_0:=\sqrt{2}\bigg(
\pd_0(\Gamma_{dot})\cdot z^{-1}+
\sum_{n=1}^\infty n\big(\pd_n\cF_0\big)\cdot z^{-n-1}
\bigg),
\end{split}
\ee
where the convention $\pd_0(\Gamma_{dot})$ is the graph consisting of
a single hollow vertex of valence zero.
Then we have:

\begin{Proposition}
\label{prop-specialdefo}
The abstract Virasoro constraints \eqref{eq-absVir-genus0} for $\cF_0$
is equivalent to:
\be
2\fs(\ff_0)+\fj(\ff_0\cdot\ff_0)=-\pd_0(\Gamma_{dot}).
\ee
\end{Proposition}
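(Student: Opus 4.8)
The plan is to verify the proposition by expanding both sides of the claimed identity $2\fs(\ff_0)+\fj(\ff_0\cdot\ff_0)=-\pd_0(\Gamma_{dot})$ as formal Laurent series in the variable $z$, and matching coefficients of each power $z^{m}$ against the corresponding line of the abstract Virasoro constraints \eqref{eq-absVir-genus0}. First I would substitute the definitions: $\fs=\frac{1}{\sqrt 2}\sum_{n\geq 0}(\fs_n-\delta_{n,1})z^n$ and $\ff_0=\sqrt 2\big(\pd_0(\Gamma_{dot})z^{-1}+\sum_{n\geq 1}n(\pd_n\cF_0)z^{-n-1}\big)$, so that the $\sqrt 2$ factors cancel appropriately: $2\fs(\ff_0)$ contributes a factor $2\cdot\frac{1}{\sqrt2}\cdot\sqrt2=2$, and $\fj(\ff_0\cdot\ff_0)$ contributes $(\sqrt2)^2=2$. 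This matching of normalization constants is the reason the particular $\tfrac{1}{\sqrt2}$ and $\sqrt2$ were chosen, so I would record it explicitly.

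Next I would compute the coefficient of $z^{m}$ in $2\fs(\ff_0)$. Since $\fs_k\circ\pd_n=\frac1n\cS_{n,k}$ by Lemma \ref{lem-specialdefo1} (and $\fs_k$ kills graphs whose hollow vertex has valence $<k$), applying $\fs_k$ to the term $n(\pd_n\cF_0)z^{-n-1}$ in $\ff_0$ gives $n\cdot\frac1n\cS_{n,k}\cF_0=\cS_{n,k}\cF_0$, carried on $z^{k-n-1}$; applying $\fs_k$ to $\pd_0(\Gamma_{dot})z^{-1}$ is nonzero only for $k=0$ and yields the $\Gamma$-terms. Collecting those contributions with $k-n-1=m$, i.e. $n=k-m-1$, reproduces exactly the sums $\sum_{n\geq 1}\cS_{n+m,n-1}\cF_0$ (with the $-\delta_{k,1}$ term in $\fs$ producing the $-(m+2)\pd_{m+2}\cF_0$ piece after re-indexing, using $\fs_1\circ\pd_n$ versus the plain $\pd$-term) together with the inhomogeneous graphs $\Gamma_{-1},\Gamma_0,\Gamma_{dot}$ at the appropriate low powers of $z$. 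Similarly, the coefficient of $z^{m}$ in $\fj(\ff_0\cdot\ff_0)$: using Lemma \ref{lem-specialdefo2}, $\fj\circ(\pd_a\pd_b)=\frac1{ab}\cJ_{a,b}$, so the product of the $a$-th and $b$-th terms of $\ff_0$ contributes $ab\cdot\frac1{ab}\cJ_{a,b}(\pd_a\cF_0\cdot\pd_b\cF_0)$ — wait, more precisely $\fj$ of $a(\pd_a\cF_0)\cdot b(\pd_b\cF_0)$ equals $\cJ_{a,b}$ applied appropriately, giving the term $\sum_{n=1}^{m-1}n(m-n)\fj(\pd_n\cF_0\cdot\pd_{m-n}\cF_0)$ on $z^{-a-b-2}=z^{-m-2}$... here I must be careful: the bookkeeping of which power of $z$ the quadratic terms land on, and the cross terms involving one factor $\pd_0(\Gamma_{dot})$, which produce the $\cJ_{m,0}(\cF_0\cdot\Gamma_{dot})$ contributions. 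I would lay out a small table of $(a,b)$ versus resulting $z$-exponent to make this transparent.

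The main obstacle I anticipate is precisely this index/exponent bookkeeping around the boundary cases — the powers $z^{-1},z^0,z^1,\dots$ where the four separate lines of \eqref{eq-absVir-genus0} have genuinely different shapes (e.g. the $-1$ and $0$ lines have the constant graphs $\Gamma_{-1},\Gamma_0$ and no quadratic term, the $m=1$ line has $\cJ_{1,0}$ but no nontrivial $\sum_{n=1}^{m-1}$, and only $m\geq 1$ lines have the quadratic $\fj$-term). I would handle this by treating the proposition as an equality of Laurent series and checking: (i) the coefficient of $z^{-1}$ gives $-\pd_0(\Gamma_{dot})$ on the right and must match the only term on the left that can produce $z^{-1}$, namely $2\fs_0(\ff_0)$ acting on the $z^{-1}$ part plus $\fj$ of the $\pd_0(\Gamma_{dot})\cdot\pd_0(\Gamma_{dot})$ cross term; (ii) the coefficient of $z^{0}$ recovers the first line of \eqref{eq-absVir-genus0}; (iii) the coefficient of $z^{1}$ recovers the second line; and (iv) the coefficient of $z^{m+2}$ for $m\geq 1$ recovers the general line. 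Once the dictionary between $(\fs_k,\pd_n,\fj)$ and $(\cS,\cJ)$ supplied by Lemmas \ref{lem-specialdefo1} and \ref{lem-specialdefo2} is in hand, each of these is a routine re-indexing, so the write-up would consist of stating the expansion, invoking the two lemmas, and displaying the coefficient comparison line by line.
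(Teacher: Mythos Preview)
Your approach is essentially the same as the paper's: expand $2\fs(\ff_0)$ and $\fj(\ff_0\cdot\ff_0)$ as Laurent series in $z$, invoke Lemmas \ref{lem-specialdefo1} and \ref{lem-specialdefo2} to translate $\fs_k\circ\pd_n$ and $\fj\circ(\pd_a\pd_b)$ into $\cS_{n,k}$ and $\cJ_{a,b}$, and then read off each line of \eqref{eq-absVir-genus0} as a coefficient.

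One point to fix in execution: your exponent bookkeeping has the wrong sign. Since $\fs_k$ vanishes on hollow vertices of valence $<k$, the term $\fs_k$ applied to $n(\pd_n\cF_0)z^{-n-1}$ survives only for $k\le n$, so $2\fs(\ff_0)$ lives entirely in nonpositive powers of $z$; likewise $\fj(\ff_0\cdot\ff_0)$ lives in powers $\le -2$. Thus the right-hand side $-\pd_0(\Gamma_{dot})$ sits at $z^0$, the first Virasoro line is the coefficient of $z^{-1}$, the second line at $z^{-2}$, and the $m$-th general line at $z^{-m-2}$ --- not at $z^0,z^1,z^{m+2}$ as you wrote. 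This is exactly the ``index/exponent bookkeeping'' you flagged as the anticipated obstacle, and once you carry it out with the correct signs the match with \eqref{eq-absVir-genus0} is line-by-line identical to the paper's display.
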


\begin{proof}
By Lemma \ref{lem-specialdefo1} and Lemma \ref{lem-specialdefo2} we have:
\begin{equation*}
\begin{split}
\fs(\ff_0)
=&\fs_0(\pd_0\Gamma_{dot})z^{-1}-\pd_0\Gamma_{dot}
+\sum_{k=0}^\infty\sum_{n=k}^\infty n\big(\fs_k\pd_n\cF_0\big)z^{k-n-1}
-\sum_{n=1}^{\infty} n\big(\pd_n\cF_0\big)z^{-n-1}
\\
=&\Gamma_{-1}\cdot z^{-1}-\pd_0\Gamma_{dot}
+\sum_{k=0}^\infty\sum_{n=k}^\infty \big(\cS_{n,k}\cF_0\big)z^{k-n-1}
-\sum_{n=1}^{\infty} n\big(\pd_n\cF_0\big)z^{-n},
\end{split}
\end{equation*}
and
\begin{equation*}
\begin{split}
\fj(\ff_0\cdot\ff_0)=&
2\fj\bigg(\big(\pd_0\Gamma_{dot}\big)^2\bigg)z^{-2}
+2\sum_{m,n\geq 1}mn\cdot\fj\big(\pd_m\cF_0\cdot\pd_n\cF_0\big)z^{-m-n-2}\\
&+4\sum_{n=1}^\infty n\cdot \fj\bigg(
\big(\pd_0\Gamma_{dot}\big)\cdot \pd_n\cF_0
\bigg)z^{-n-2}
\\
=&2\Gamma_0\cdot z^{-2}+
2\sum_{m,n\geq 1}mn\cdot\fj\big(\pd_m\cF_0\cdot\pd_n\cF_0\big)z^{-m-n-2}\\
&+4\sum_{n=1}^\infty \cJ_{n,0}(\Gamma_{dot}\cdot\cF_0)z^{-n-2},\\
\end{split}
\end{equation*}
therefore:
\begin{equation*}
\begin{split}
&2\fs(\ff_0)+\fj(\ff_0\cdot\ff_0)+\pd_0(\Gamma_{dot})\\
=&
2\bigg(-\pd_1\cF_0+\sum_{n\geq 1}\cS_{n,n}\cF_0+\Gamma_{-1}\bigg)z^{-1}
+2\bigg(
-2\pd_2\cF_0+\sum_{n\geq 1}\cS_{n,n-1}\cF_0+\Gamma_{0}
\bigg)z^{-2}\\
&+2\sum_{m=1}^\infty\bigg(
-(m+2)\pd_{m+2}\cF_0+\sum_{n\geq 1}\cS_{n+m,n-1}\cF_0\\
&+\sum_{n=1}^{m-1}n(m-n)\cdot\fj\big(\pd_n\cF_0\cdot\pd_{m-n}\cF_0\big)
+2\cJ_{m,0}\big(\cF_0\cdot\Gamma_{dot}\big)
\bigg)z^{-m-2}.
\end{split}
\end{equation*}
Then the conclusion is clear.
\end{proof}

\subsection{Emergence of the spectral curve and its special deformation}

In this subsection,
we construct the spectral curve and its special deformation
for the abstract QFT for fat graphs.

Now let us construct the spectral curve and its special deformation
for our abstract QFT for fat graphs.
Let $\{u_n\}_{n\geq 1}$ be a family of formal variables,
and denote:
\be
\cU:=\bQ\times\prod_{n\geq 1}\bQ\cdot u_n.
\ee
In what follows,
we will define the special deformation to be a series in a formal variable $z$,
whose coefficients are in $\cU\times\cV^{ho-1}$.

\begin{Definition}
The special deformation of the spectral curve of the abstract QFT for fat graphs
is defined to be the following:
\be
\label{eq-specialdefo-absQFT}
\begin{split}
y=&\frac{1}{\sqrt{2}}\sum_{n=0}^\infty
(u_{n}-\delta_{n,1})\cdot z^{n}+\ff_0\\
=&\frac{1}{\sqrt{2}}\sum_{n=0}^\infty
(u_{n}-\delta_{n,1})\cdot z^n+
\sqrt{2}\bigg(
\pd_0(\Gamma_{dot})\cdot z^{-1}+
\sum_{n=1}^\infty n\big(\pd_n\cF_0\big)\cdot z^{-n-1}\bigg),
\end{split}
\ee
where $y,z$ are two formal variables.

\end{Definition}

This special deformation is a family of curves on the $(y,z)$-plane,
parametrized by elements in the linear space $\cU\times\cV^{ho-1}$
(i.e., parametrized by formal variables $u_n$ and fat graphs with one hollow vertex).

Now let us explain what is our `spectral curve'.
We need to restrict \eqref{eq-specialdefo-absQFT} to
a smaller space of deformation parameters.
We consider the following subspace of $\cU\times \cV^{ho}$:
\ben
\bQ \times \prod_{k\geq 0}\prod_{\Gamma\in \mathfrak{Ho}_k} \bQ \Gamma,
\een
where $\mathfrak{Ho}_k$ is the set of all fat graph of genus zero,
consisting of a single hollow vertex
(and no solid vertices),
and $k$ loops attached to it.
In other words,
we take $u_n=0$ for each $n$ and
set every graph with at least one solid vertex to be zero
in \eqref{eq-specialdefo-absQFT}.
In this way,
we obtain the following family of curves on the $(y,z)$-plane:
\be
\label{eq-abstract-spectralcurve}
y=-\frac{z}{\sqrt{2}}
+\sqrt{2} \pd_0(\Gamma_{dot}) z^{-1}
+\sqrt{2}\cdot \sum_{k\geq 1}
\bigg(
\sum_{\Gamma\in \mathfrak{Ho}_k}\frac{2k}{|\Aut(\Gamma)|}\Gamma
\bigg) z^{-2k-1}.
\ee
We refer this curve as the spectral curve of our abstract QFT
for fat graphs.

The Virasoro constraints of genus zero is encoded in the special deformation
of the spectral curve in the following way.
Define a `multiplication' $\Phi$ of formal variables $u_n$
and graphs in $\mathfrak{Fat}^{ho-1}$ to be a linear map:
\be
\Phi:
\Sym^2\big(\cU\times \cV^{ho-1}\big)
\to
\cU\times
\bigg(\prod_{i\leq j}\bQ u_i u_j\bigg)
\times
\cV^{ho-1}\times
\bigg(
\prod_{\substack{n\geq 1\\\Gamma\in \mathfrak{Fat}^{ho-1}}}
\bQ\cdot u_n\Gamma
\bigg)
\ee
by requiring:
\begin{itemize}
\item[1)]
$\Phi(c,-)$ is multiplying by $c$ for $c\in \bQ$;
\item[2)]
$\Phi(u_i,u_j)=u_iu_j=u_ju_i$;
\item[3)]
$\Phi(\Gamma,\Gamma')=\fj(\Gamma\cdot \Gamma')$
for $\Gamma,\Gamma'\in \mathfrak{Fat}^{ho-1}$;
\item[4)]
Let $\Gamma\in\mathfrak{Fat}^{ho-1}$ be a graph with
hollow vertex of valence $n$,
then we require $\Phi(u_i,\Gamma)=u_i\Gamma$ if $i>n$;
\item[5)]
Let $\Gamma\in\mathfrak{Fat}^{ho-1}$ be a graph with
hollow vertex of valence $n$,
then we require $\Phi(u_i,\Gamma)=\fs_i(\Gamma)$ if $0<i\leq n$.
\end{itemize}

\begin{Remark}
This map $\Phi$ is well-defined,
since it is not hard to see that given any $u_i$
(or $u_iu_j$, $\Gamma$, $u_i\Gamma$),
there are only finitely many ways to obtain such an element
from a pair $(c,u_i)$, $(c,\Gamma)$,
$(u_i,u_j)$,
$(u_i,\Gamma)$,
or $(\Gamma,\Gamma')$.
\end{Remark}

Then our main result in this subsection is:

\begin{Theorem}
\label{thm-abstract-vira-genus0}
The abstract Virasoro constraints \eqref{eq-absVir-genus0}
for $\cF_0$ are equivalent to:
\be
\bigg(\Phi(y,y)\bigg)_-=0,
\ee
where for a formal series $\sum\limits_{n\in \bZ}a_n z^n$,
we denote
$\bigg(\sum\limits_{n\in \bZ}a_n z^n\bigg)_-
:=\sum\limits_{n<0}a_n z^n$.
\end{Theorem}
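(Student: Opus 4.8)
The plan is to expand the product $\Phi(y,y)$ using the five defining properties of the multiplication $\Phi$, collect the negative powers of $z$, and recognize the resulting coefficients as exactly the left-hand sides of the genus zero abstract Virasoro constraints \eqref{eq-absVir-genus0}. The special deformation \eqref{eq-specialdefo-absQFT} splits naturally as $y = \fs' + \ff_0$, where $\fs' := \frac{1}{\sqrt 2}\sum_{n\geq 0}(u_n-\delta_{n,1})z^n$ is the ``non-negative part'' carrying the deformation variables $u_n$ and $\ff_0$ is the ``negative part'' built from $\pd_0(\Gamma_{dot})$ and $\pd_n\cF_0$. Bilinearity of $\Phi$ gives
\[
\Phi(y,y) = \Phi(\fs',\fs') + 2\Phi(\fs',\ff_0) + \Phi(\ff_0,\ff_0).
\]
The first term $\Phi(\fs',\fs')$ involves only products $u_iu_j$ and the constant $-1$, hence contributes only non-negative powers of $z$ and drops out of $(\Phi(y,y))_-$. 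The third term $\Phi(\ff_0,\ff_0)$ is computed via property 3), $\Phi(\Gamma,\Gamma')=\fj(\Gamma\cdot\Gamma')$, and is precisely $\fj(\ff_0\cdot\ff_0)$ as computed in the proof of Proposition \ref{prop-specialdefo}. The middle term $2\Phi(\fs',\ff_0)$ is where the operators $\fs_i$ enter: applying properties 4) and 5), $\Phi(u_i,\Gamma)$ equals $\fs_i(\Gamma)$ when $i$ does not exceed the valence of the hollow vertex of $\Gamma$ and equals $u_i\Gamma$ otherwise. Since every graph appearing in $\ff_0$ has its $z$-power tied to the valence of its hollow vertex (a hollow vertex of valence $n$ appears with $z^{-n-1}$, by definition of $\pd_n$), the contribution that produces negative powers of $z$ is exactly $\frac{1}{\sqrt 2}\sum_{i\geq 0}(\fs_i-\delta_{i,1})z^i$ acting on $\ff_0$, i.e.\ $\fs(\ff_0)$ in the notation of the previous subsection. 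Therefore $(\Phi(y,y))_- = 2\fs(\ff_0) + \fj(\ff_0\cdot\ff_0)$ up to the single term $-\pd_0(\Gamma_{dot})$ coming from the $\delta_{i,1}$ correction hitting the $z^{-1}$-coefficient $\sqrt 2\,\pd_0(\Gamma_{dot})$ of $\ff_0$.

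Granting this identification, Proposition \ref{prop-specialdefo} says exactly that $2\fs(\ff_0)+\fj(\ff_0\cdot\ff_0) = -\pd_0(\Gamma_{dot})$ is equivalent to the abstract Virasoro constraints \eqref{eq-absVir-genus0} for $\cF_0$; rearranging, this is the same as $(\Phi(y,y))_- = 0$. So after the combinatorial bookkeeping, the theorem follows immediately from Proposition \ref{prop-specialdefo}. First I would carefully set up the decomposition $y=\fs'+\ff_0$ and record, for each defining clause of $\Phi$, which powers of $z$ it can produce; then I would verify term by term that the negative-power part of $\Phi(y,y)$ reassembles into $2\fs(\ff_0)+\fj(\ff_0\cdot\ff_0)+\pd_0(\Gamma_{dot})$; finally I would invoke Proposition \ref{prop-specialdefo}.

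The main obstacle will be the precise matching in the mixed term $\Phi(\fs',\ff_0)$: one must check that the case distinction $i > n$ versus $0 < i \le n$ in the definition of $\Phi$ lines up exactly with the distinction between non-negative and negative powers of $z$ in the product $\fs' \cdot \ff_0$, so that no ``boundary'' contributions (e.g.\ the $i = n$ term, or the $i = 0$ term acting on $\pd_0(\Gamma_{dot})$) are miscounted. In particular one has to confirm that $\fs_0$ acting on the single hollow vertex of valence zero reproduces $\Gamma_{-1}$ (this is used in the computation of $\fs(\ff_0)$ inside the proof of Proposition \ref{prop-specialdefo}) and that the $\delta_{i,1}$ subtraction in $\fs'$ only affects the $z^{-1}$-level. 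A secondary, purely formal, point is to confirm that $\Phi(y,y)$ is a well-defined element of the target space — i.e.\ that each monomial $u_iu_j$, $u_i\Gamma$, or graph $\Gamma''$ is produced only finitely often — which is already noted in the Remark following the definition of $\Phi$, so no new argument is needed there. Once these index-matching subtleties are dispatched, the equivalence is a direct restatement of Proposition \ref{prop-specialdefo}.
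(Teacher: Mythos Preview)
Your proposal is correct and follows essentially the same approach as the paper: split $y=\fs'+\ff_0$, observe that $\Phi(\fs',\fs')$ contributes no negative powers, identify the negative part of the cross term with $\fs(\ff_0)$ and the pure $\ff_0$-term with $\fj(\ff_0\cdot\ff_0)$, and then invoke Proposition~\ref{prop-specialdefo}. The only minor wrinkle is your phrasing ``$(\Phi(y,y))_- = 2\fs(\ff_0)+\fj(\ff_0\cdot\ff_0)$ up to the term $-\pd_0(\Gamma_{dot})$'': that correction term sits at $z^0$, so it never enters $(\,\cdot\,)_-$ at all, and the clean statement (which the paper uses) is simply $\big(\Phi(y,y)\big)_- = \big(2\fs(\ff_0)+\fj(\ff_0\cdot\ff_0)\big)_-$, from which Proposition~\ref{prop-specialdefo} gives the result directly.
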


\begin{proof}
We have:
\begin{equation*}
\begin{split}
\Phi(y,y)=\bigg(
\frac{1}{\sqrt{2}}\sum_{n=0}^\infty
(u_{n}-\delta_{n,1}) z^{n}
\bigg)^2+\fj(\ff_0\cdot\ff_0)
+2\Phi\bigg(
\frac{1}{\sqrt{2}}\sum_{n=0}^\infty
(u_{n}-\delta_{n,1})z^{n},
\ff_0\bigg),
\end{split}
\end{equation*}
therefore
\ben
\bigg(
\Phi(y,y)
\bigg)_-=
\bigg(\fj(\ff_0\cdot\ff_0)+2\fs(\ff_0)\bigg)_-,
\een
then the conclusion follows from Proposition \ref{prop-specialdefo}.

\end{proof}

\subsection{Heisenberg algebra of operators on fat graphs}

What we have done in the previous subsection
actually indicates a natural structure of Heisenberg algebra.
Let us explain this in the present subsection.

Instead of considering fat graphs with coefficients in $\bQ$,
in this subsection we will consider graphs
whose coefficients are formal power series in $u_0,u_1,\cdots$,
i.e.,
we work on the space
\be
\cV_u:=
\prod_{\Gamma\in \mathfrak{Fat}^{ho}}
\bQ[[u_0,u_1,\cdots]]\cdot\Gamma.
\ee

Recall that in \S \ref{sec-hollow-ord} we have defined a family of operators
$\{\pd_n\}_{n\geq 1}$ acting on fat graphs.
Now let us extend the action of these operators
to the space $\cV_u$ in the following way:
\begin{itemize}
\item[1)]
For every $i\geq1$ and $j\geq 0$, we require:
\ben
\pd_i(u_j)=\delta_{i,j+1};
\een
\item[2)]
For every sequence $i_1,\cdots,i_n$ of non-negative integers
and a graph $\Gamma\in \mathfrak{Fat}^{ho}$,
we require the following Leibniz rule:
\begin{equation*}
\begin{split}
\pd_j(u_{i_1} u_{i_2}\cdots u_{i_n}\cdot\Gamma)=&
\sum_{k=1}^n u_{i_1} \cdots u_{i_{k-1}}
\pd_j(u_{i_k})
u_{i_{k+1}} \cdots u_{i_n}\cdot\Gamma\\
&+u_{i_1} u_{i_2}\cdots u_{i_n}\cdot
\pd_j(\Gamma).
\end{split}
\end{equation*}
\end{itemize}

We understand $u_j$ as the operator of multiplying by $u_j$
on the space $\cV_u$,
then we can easily see:

\begin{Lemma}
\label{lem-heisenberg}
The operators $\{\pd_i\}_{i\geq 1}$ and $\{u_j\}_{j\geq 0}$ satisfy
the following commutation relations:
\ben
[\pd_i,\pd_j]=0,
\qquad
[u_i,u_j]=0,
\qquad
[\pd_i,u_j]=\delta_{i,j+1}\cdot\Id,
\een
i.e.,
$\{\pd_i\}$ and $\{u_{j}\}$ generates a Heisenberg algebra
of operators on $\cV_u$.
\end{Lemma}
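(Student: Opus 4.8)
The plan is to reduce all three commutation relations to a verification on generators, using the fact that $u_j$ acts by multiplication and each $\pd_i$ is defined on $\cV_u$ by a Leibniz rule. First, the relations $[u_i,u_j]=0$ are immediate: $u_i$ and $u_j$ act as multiplication by commuting elements of $\bQ[[u_0,u_1,\cdots]]$, hence commute on all of $\cV_u$.

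For $[\pd_i,\pd_j]=0$, I would start from Lemma \ref{eq-pd-commute}, which already gives $\pd_i\pd_j=\pd_j\pd_i$ as operators on fat graphs with $\bQ$-coefficients, and extend it to a monomial $w=u_{i_1}\cdots u_{i_n}\cdot\Gamma$. Applying the Leibniz rule twice, $\pd_i\pd_j(w)$ expands into a sum of terms of three kinds: terms in which $\pd_i$ and $\pd_j$ hit two distinct scalar factors $u_{i_k},u_{i_l}$ (each contributing a Kronecker delta and deleting that factor), terms in which one of $\pd_i,\pd_j$ hits a scalar factor and the other hits $\Gamma$, and the single term $u_{i_1}\cdots u_{i_n}\cdot\pd_i\pd_j(\Gamma)$. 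A scalar factor hit twice contributes $\pd_i$ (or $\pd_j$) of a number, which is zero, so there is no asymmetry there; the first two kinds of terms are visibly symmetric under $i\leftrightarrow j$ after relabelling the summation indices; and the last term is symmetric by Lemma \ref{eq-pd-commute}. Hence $\pd_i\pd_j(w)=\pd_j\pd_i(w)$, and by linearity this holds on $\cV_u$.

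For the mixed relation, apply $[\pd_i,u_j]$ to a monomial $w=u_{i_1}\cdots u_{i_n}\cdot\Gamma$: by the Leibniz rule, $\pd_i(u_j\cdot w)=\pd_i(u_j)\cdot w+u_j\cdot\pd_i(w)=\delta_{i,j+1}w+u_j\pd_i(w)$, so $(\pd_i u_j-u_j\pd_i)(w)=\delta_{i,j+1}w$, and linearity gives $[\pd_i,u_j]=\delta_{i,j+1}\cdot\Id$ on $\cV_u$. The three relations together are exactly the Heisenberg commutation relations, which is the assertion of the lemma.

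The only genuine obstacle is well-definedness, which I would address before the computations above: one must check that the Leibniz prescription for $\pd_i$ on $\cV_u$ is independent of the way a monomial $u_{i_1}\cdots u_{i_n}\cdot\Gamma$ is written (invariance under permuting the scalar factors), and that the possibly infinite sums arising when $\pd_i$ is applied to a general element of $\cV_u$ converge in the product topology. Both follow from the same finiteness observations already invoked for the operators $\pd_n$, $\cM$, and $\Phi$ earlier in the paper — each fat graph can be produced in only finitely many ways by the relevant operations, and derivatives of monomials delete one factor at a time — so the prescription extends to a genuine linear operator. Once this is in place, the generator computations are routine and the lemma follows.
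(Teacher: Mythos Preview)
Your proof is correct and follows the natural approach implicit in the paper's phrase ``we can easily see'': the paper does not supply any argument beyond that remark, so your verification via the Leibniz rule on monomials together with Lemma~\ref{eq-pd-commute} is exactly what is intended. The well-definedness remarks you add are a reasonable precaution and do not depart from the paper's treatment.
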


\subsection{Quantization of the special deformation}

In this subsection we discuss the quantum deformation theory of the
spectral curve.

For every integer $n>0$, let us denote:
\be
\beta_{-n}:=\frac{g_s^{-1}}{\sqrt{2}}(u_{n-1}-\delta_{n,2}),\qquad\quad
\beta_n:=n\sqrt{2}\cdot g_s \pd_n,
\ee
and let $\beta_0$ be the operator on $\cV_u$ defined by:
\be
\beta_0(u_{i_1}\cdots u_{i_n}\cdot\Gamma)=
\sqrt{2} g_s^{-1}\cdot
u_{i_1}\cdots u_{i_n}\cdot\bigg(
\pd_0\big(\Gamma_{dot}\big)\cdot\Gamma\bigg)
\ee
where $\pd_0(\Gamma_{dot})$ is the graph
consisting of a single hollow vertex of valence zero.
Then by Lemma \ref{lem-heisenberg}
we have the standard Heisenberg commutation relation:
\be
\label{eq-standard-Heisenberg}
[\beta_m,\beta_n]=m\delta_{m+n,0}.
\ee

Denote by $\hat{y}(z)$ the following series:
\be
\hat{y}(z):=\sum_{n\in \bZ}\beta_n\cdot z^{-n-1}.
\ee
Recall that the normal product $:\beta_{i_1}\cdots\beta_{i_n}:$
of bosons is defined by:
\ben
:\beta_{i_1}\cdots\beta_{i_n}:=\beta_{j_1}\cdots\beta_{j_n},
\een
where $(j_1,\cdots,j_n)$ is a permutation of $(i_i,\cdots,i_n)$
such that $j_1\leq j_2\leq\cdots\leq j_n$.
Then we have:
\be
:\half\hat{y}(z)^2:=\half\sum_{n\in \bZ}\sum_{k+l=n}:\beta_k\beta_l:z^{-n-2}
=\sum_{n\in \bZ}\wcL_n \cdot z^{-n-2},
\ee
where the operators $\wcL_n$ are given by:
\ben
\wcL_n=\half\sum_{k+l=n}:\beta_k\beta_l:z^{-n-2}.
\een
Then one can compute:
\be
\begin{split}
&\wcL_{-1}=g_s^{-2}u_0\gamma
-\pd_1
+\sum_{k=2}^\infty (k-1) u_{k-1}\pd_{k-1},
\\
&\wcL_0=g_s^{-2}\gamma^2
-2\pd_2
+\sum_{k=1}^\infty k u_{k-1}\pd _k,
\end{split}
\ee
and for $n\geq 1$,
\be
\wcL_n=2n\gamma\pd_n
-(n+2)\pd_{n+2}
+\sum_{k=1}^\infty (k+n)u_{k-1}\pd_{n+k}
+g_s^2\sum_{k=1}^{n-1}k(n-k)\pd_k\pd_{n-k},
\ee
where $\gamma$ is the operator of multiplying by $\pd_0(\Gamma_{dot})$.

Now let us consider the above operators $\{\wcL_n\}_{n\geq -1}$
as operators on fat graphs without hollow vertices.
Define a linear operator $\widehat\Phi$ by the following rules:
\ben
&&\widehat\Phi (u_0\gamma)=\fs_0\circ\gamma,
\qquad
\widehat\Phi (\gamma^2)=\fj\circ\gamma^2,
\qquad
\widehat\Phi (\gamma\pd_n)=\fj\circ(\gamma\pd_n),\\
&&\widehat\Phi ( \pd_n )=\pd_n,
\qquad
\widehat\Phi (\pd_m \pd_n )=\fj\circ(\pd_m\pd_n),
\qquad
\widehat\Phi (u_k \pd_n )=\fs_k\circ \pd_n.
\een
Then we have:

\begin{Proposition}
\label{prop-quantum-defo}
For every $n\geq -1$,
we have:
\be
\widehat\Phi (\wcL_n)=\cL_n
\ee
as operators on graphs without hollow vertices,
where $\{\cL_n\}_{n\geq -1}$ are the abstract Virasoro operators
(see Definition \ref{def-abs-Viropr}).
\end{Proposition}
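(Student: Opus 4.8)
The plan is to verify $\widehat\Phi(\wcL_n)=\cL_n$ term by term, matching each summand of the explicit boson-normal-ordered operator $\wcL_n$ against the corresponding summand in Definition \ref{def-abs-Viropr}. The dictionary is already essentially set up: $\widehat\Phi$ sends $u_k\pd_n \mapsto \fs_k\circ\pd_n$, $\pd_m\pd_n\mapsto \fj\circ(\pd_m\pd_n)$, $\gamma\pd_n\mapsto \fj\circ(\gamma\pd_n)$, $u_0\gamma\mapsto\fs_0\circ\gamma$, $\gamma^2\mapsto\fj\circ\gamma^2$, and $\pd_n\mapsto\pd_n$. The two key conversion formulas are Lemma \ref{lem-specialdefo1}, $\fs_k\circ\pd_n=\frac1n\cS_{n,k}$, and Lemma \ref{lem-specialdefo2}, $\fj\circ(\pd_m\pd_n)=\frac1{mn}\cJ_{m,n}$; there should also be an analogous identity $\fj\circ(\gamma\pd_n)=\frac1n\cJ_{n,0}(-\sqcup\Gamma_{dot})$ and $\fj\circ\gamma^2 = $ (the $\Gamma_0$-multiplication up to a constant), $\fs_0\circ\gamma=$ ($\Gamma_{-1}$-multiplication up to a constant), which one would record as a preliminary observation. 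The numerical factors $k$, $k+n$, $k(n-k)$, $2n$ appearing in $\wcL_n$ are exactly what is needed to cancel the $\frac1n$, $\frac1{mn}$ denominators and reproduce the plain $\cS_{\bullet,\bullet}$, $\cJ_{\bullet,\bullet}$ operators.

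Concretely I would proceed case by case on $n$. For $n\geq 2$: the term $-(n+2)\pd_{n+2}$ maps directly to $-(n+2)\pd_{n+2}$; the sum $\sum_{k\geq1}(k+n)u_{k-1}\pd_{n+k}$ maps to $\sum_{k\geq1}(k+n)\fs_{k-1}\circ\pd_{n+k}=\sum_{k\geq1}\cS_{n+k,k-1}$, which after reindexing $m=k-1$ is $\sum_{m\geq0}\cS_{n+m+1,m}$; comparing with the $\cL_n$-term $\sum_{m\geq1}\cS_{m+n,m-1}$ one checks the index ranges agree (the $m=0$ summand $\cS_{n+1,0}$ must be accounted for — here one uses that $\cS_{j,0}$ changes a solid vertex to hollow and attaches a solid univalent vertex, matching the $k=1$ term). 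The term $g_s^2\sum_{k=1}^{n-1}k(n-k)\pd_k\pd_{n-k}$ maps to $g_s^2\sum_{k=1}^{n-1}k(n-k)\fj\circ(\pd_k\pd_{n-k})=g_s^2\sum_{k=1}^{n-1}\cJ_{k,n-k}$, exactly the third term of $\cL_n$. The remaining term $2n\gamma\pd_n$ maps to $2n\,\fj\circ(\gamma\pd_n)=2\cJ_{n,0}(-\sqcup\Gamma_{dot})$, the last term of $\cL_n$. The cases $n=-1,0,1$ are handled identically but using the $g_s^{-2}$-scaled terms $g_s^{-2}u_0\gamma$, $g_s^{-2}\gamma^2$ and matching them to $g_s^{-2}\gamma_{-1}$, $g_s^{-2}\gamma_0$; for $n=1$ one checks there is no $\cJ_{k,1-k}$ sum (empty range) consistent with the stated $\cL_1$.

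\textbf{Where the difficulty lies.} The routine algebra of matching coefficients is straightforward once the conversion lemmas are in hand. The genuine subtlety is bookkeeping at the boundary indices: verifying that $\fs_0$, $\fj$ applied to $\gamma$ and $\gamma^2$ genuinely reproduce $\fs_0\circ\gamma$, $\fj\circ\gamma^2$ as operators equal (up to the right constants and the $-\sqcup\Gamma_{dot}$ decoration) to multiplication by $\Gamma_{-1}$, $\Gamma_0$, and that the $k$-ranges in the sums $\sum_{k\geq1}(k+n)u_{k-1}\pd_{n+k}$ match $\sum_{n\geq1}\cS_{n+m,n-1}$ after reindexing — in particular the delicate point that $\fs_k\circ\pd_n$ has the correct combinatorial meaning (splitting a hollow vertex, from Lemma \ref{lem-specialdefo1}) only because $\pd_n$ first creates the hollow vertex, so the composition $\widehat\Phi(u_{k-1}\pd_{n+k})$ correctly yields an operator acting on graphs \emph{without} hollow vertices. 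I expect the main obstacle to be nailing down precisely these identities $\fj\circ(\gamma\pd_n)=\frac1n\cJ_{n,0}(-\sqcup\Gamma_{dot})$ and the $\gamma$, $\gamma^2$ analogues, i.e. checking that attaching a loop to a hollow vertex via $\fj$ (with the cyclic-order conventions of Definition \ref{def-abs-opr-J}) agrees after the $\widehat\Phi$-translation with the $\cJ_{m,0}(-\sqcup\Gamma_{dot})$ appearing in $\cL_m$; this is a direct but careful check of cyclic-order conventions rather than a deep point, and once it is done the proposition follows by assembling the term-by-term matches.
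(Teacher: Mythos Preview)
Your proposal is correct and follows essentially the same approach as the paper: the paper's proof is a one-line appeal to Lemma~\ref{lem-specialdefo1} and Lemma~\ref{lem-specialdefo2}, and what you have written is precisely the term-by-term expansion of that appeal. Your flagged ``subtleties'' at the valence-zero boundary cases ($\fj\circ\gamma^2$, $\fj\circ(\gamma\pd_n)$, $\fs_0\circ\gamma$) are real conventions to check but are absorbed into the paper's implicit claim that the lemmas suffice; your slight hesitation about the index range in the $\cS$-sum is unnecessary, since $\sum_{k\geq1}\cS_{n+k,k-1}$ and $\sum_{l\geq1}\cS_{l+n,l-1}$ are literally the same sum without any reindexing needed.
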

\begin{proof}
The conclusion follows easily from
Lemma \ref{lem-specialdefo1} and Lemma \ref{lem-specialdefo2}.
\end{proof}

Using the standard Heisenberg commutation relation \eqref{eq-standard-Heisenberg}
one can derive the following OPE:
\be
\hat{y}(z)\hat{y} (w)=
:\hat{y}(z)\hat{y} (w):+\frac{1}{(w-z)^2},
\ee
thus we have:
\ben
\hat{y}(z+\epsilon)\hat{y} (z)=
:\hat{y}(z+\epsilon)\hat{y} (z):+\frac{1}{\epsilon^2}.
\een
Following \cite{zhou9},
let us define the regularized product
$\hat{y}(z)^{\odot 2}=\hat{y}(z) \odot \hat{y}(z)$ by:
\be
\hat{y}(z)^{\odot 2}:=\lim_{\epsilon\to 0}
\bigg(\hat{y}(z+\epsilon)\hat{y} (z)
-\frac{1}{\epsilon^2}\bigg),
\ee
then we have:
\begin{Theorem}
\label{thm-quantumdefo-abs}
The abstract Virasoro constraints $\cL_{n} (\cZ) =0$ ($n\geq -1$)
of the abstract QFT for fat graphs are equivalent to:
\be
\widehat\Phi\bigg(\big(\hat{y}(z)^{\odot 2}\big)_-\bigg) \cZ =0.
\ee
\end{Theorem}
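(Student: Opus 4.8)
The plan is to unwind the statement into a collection of operator identities on the coefficients of the formal series $\hat y(z)$ and then to invoke Proposition \ref{prop-quantum-defo}. First I would expand $\big(\hat y(z)^{\odot 2}\big)_-$ in powers of $z$. By the definition of the regularized product and the OPE $\hat y(z)\hat y(w)=:\hat y(z)\hat y(w):+(w-z)^{-2}$, the subtraction of $\epsilon^{-2}$ exactly removes the singular part, so $\hat y(z)^{\odot 2}=:\hat y(z)^2:$. Hence $\hat y(z)^{\odot 2}=2\sum_{n\in\bZ}\wcL_n z^{-n-2}$, and its "negative part" (the terms with a strictly negative power of $z$, in the sense of the $(-)$-operation introduced just before Theorem \ref{thm-abstract-vira-genus0}) collects precisely the coefficients $\wcL_n$ with $-n-2<0$, i.e. $n>-2$, that is $n\geq -1$. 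So $\big(\hat y(z)^{\odot 2}\big)_-=2\sum_{n\geq -1}\wcL_n z^{-n-2}$.

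Next I would apply $\widehat\Phi$ coefficient-wise. Since $\widehat\Phi$ is linear and acts on the operator-valued coefficients, $\widehat\Phi\big((\hat y(z)^{\odot 2})_-\big)=2\sum_{n\geq -1}\widehat\Phi(\wcL_n)\, z^{-n-2}$, and by Proposition \ref{prop-quantum-defo} this equals $2\sum_{n\geq -1}\cL_n z^{-n-2}$, where the $\cL_n$ are the abstract Virasoro operators of Definition \ref{def-abs-Viropr}. Applying this to $\cZ$ and using that $\{z^{-n-2}\}_{n\geq -1}$ are linearly independent in the ring of formal Laurent series, the single equation $\widehat\Phi\big((\hat y(z)^{\odot 2})_-\big)\cZ=0$ is equivalent to the system $\cL_n(\cZ)=0$ for all $n\geq -1$, which is exactly the abstract Virasoro constraints of Theorem \ref{thm-abstract-vir1}. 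Conversely, granting those constraints, every coefficient vanishes and the series identity holds. This gives the claimed equivalence.

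The main obstacle I anticipate is bookkeeping rather than conceptual: one must be careful that the normal-ordering $:\beta_k\beta_l:$ used to define $\wcL_n$ matches the ordering implicit in $\hat y(z)^{\odot 2}$, so that the regularized product really produces $\sum_n \wcL_n z^{-n-2}$ with the $\wcL_n$ written out in the excerpt (the displayed formulas for $\wcL_{-1}$, $\wcL_0$, and $\wcL_n$ with $n\geq 1$); this is where the Heisenberg relation \eqref{eq-standard-Heisenberg} and the explicit definitions of $\beta_{-n}$, $\beta_n$, $\beta_0$ enter, and one should double-check the special low-index cases $n=-1,0,1$ where the $u_0$-term (the operator $\gamma$) and the $\cJ_{m,0}(-\sqcup\Gamma_{dot})$-type terms behave differently from the generic $n\geq 2$ case. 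A second, milder point is to verify that applying $\widehat\Phi$ to the coefficient operators and then evaluating on $\cZ$ is legitimate term by term — i.e. that no infinite-summation issues arise — which follows from the same finiteness remark used after Theorem \ref{thm-abstract-vir1} (each fat graph is produced in only finitely many ways). Once these compatibility checks are in place, the argument is a direct comparison of coefficients.
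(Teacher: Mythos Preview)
Your proposal is correct and follows essentially the same approach as the paper's proof: compute $\hat y(z)^{\odot 2}=:\hat y(z)^2:=2\sum_{n\in\bZ}\wcL_n z^{-n-2}$ via the OPE, take the negative part, and then invoke Proposition~\ref{prop-quantum-defo} to identify $\widehat\Phi(\wcL_n)=\cL_n$. The paper's proof is terser and omits the explicit coefficient-comparison argument and the finiteness/ordering checks you flag, but the underlying logic is identical.
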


\begin{proof}
By definition $\hat{y}(z)^{\odot 2}$ is:
\begin{equation*}
\begin{split}
\lim_{\epsilon\to 0}\bigg(
\hat{y}(z+\epsilon)\hat{y}(z)-\frac{1}{\epsilon^2}
\bigg)
=&\lim_{\epsilon \to 0}\big(
:\hat{y}(z+\epsilon) \hat{y}(z):
\big)\\
=&:\hat{y}(z)\hat{y}(z):\\
=&2\sum_{n\in \bZ} \wcL_n z^{-n-2}.
\end{split}
\end{equation*}
Thus the conclusion follows from Proposition \ref{prop-quantum-defo}.
\end{proof}

\subsection{Application: Realization by the quantum deformation theory
for Hermitian one-matrix models}

\label{sec-QDT-Herm}

In this subsection we consider the realization of the above quantum deformation theory
by the quantum deformation theory for
the spectral curve of Hermitian one-matrix models.

First let us consider the realization of the
special deformation \eqref{eq-specialdefo-absQFT} of the spectral curve.
We have already seen in previous sections that
the Hermitian one-matrix models provide us the most natural realization
of the abstract QFT for fat graphs,
if we choose the Feynman rule to be \eqref{eq-FR-1mm-absvirasoro}.
In this case
the weight of the graph $\pd_0(\Gamma_{dot})$
is simply $w_{(\pd_0\Gamma_{dot})}=t$,
and $\Phi(u_n,-)$ is actually multiplying by $g_{n+1}$.
Thus such a realization gives us a linear map
\ben
\mathcal{U}\times \cV^{ho-1}\to
\bC[[t,g_1,g_2,g_3,\cdots]]
\een
by mapping $u_n$ to $g_{n+1}$,
and $\Gamma$ to $w_{\Gamma}$.

Now recall that the special deformation \eqref{eq-specialdefo-absQFT}
is parametrized by elements in the space $\cU\times \cV^{ho-1}$,
thus applying the above linear map provides us
the following family of plane curves
parametrized by coupling constants $\{g_n\}_{n\geq 1}$
and the 't Hooft coupling constant $t=Ng_s$:
\be
y^{\text{Herm}}:=\frac{1}{\sqrt{2}}\sum_{n=0}^\infty
(g_{n+1}-\delta_{n,1})\cdot z^n+
\sqrt{2}\bigg(
t\cdot z^{-1}+
\sum_{n=1}^\infty n\frac{\pd F_{0}^{\text{Herm}}}{\pd g_n}\cdot z^{-n-1}\bigg),
\ee
where $F_0^{\text{Herm}}$ is the free energy of genus zero for the Hermitian one-matrix models
with respect to the fat genus expansion.
This family is exactly the fat special deformation of the spectral curve
for the Hermitian one-matrix models
(\cite[(78)]{zhou10}).
By taking $t_n=0$ for every $n\geq 1$
(or equivalently,
by applying the Feynman rule $w_v=w_e=1$ and $w_f=t$
to \eqref{eq-abstract-spectralcurve}),
we recover the fat spectral curve of the Hermitian one-matrix-models
(\cite[(97)]{zhou10}):
\be
\label{eq-1mm-fatspeccurve}
\begin{split}
y^{\text{Herm}} =&-\frac{z}{\sqrt{2}}+\sqrt{2}z\cdot \bigg(\frac{t}{z^2}+
\sum_{n>0}C_n\cdot\frac{t^{n+1}}{z^{2n+2}}\bigg)\\
=&-\frac{1}{\sqrt{2}}\cdot\sqrt{z^2-4t},
\end{split}
\ee
where $C_n=\frac{1}{n+1}\binom{2n}{n}$ is the $n$-th Catalan number
(here we have used the expression of one-point correlators
for Hermitian one-matrix models,
see Example \ref{eg-1mm-correlator}).
The fat Virasoro constraints of genus zero for the Hermitian one-matrix models
are encoded in the special deformation in the following way
(see \cite[Theorem 3.1]{zhou10}):

\begin{Theorem}
The fat Virasoro constraints of genus zero for the Hermitian one-matrix models
are equivalent to:
\ben
\bigg((y^{\text{Herm}})^2\bigg)_-=0.
\een

\end{Theorem}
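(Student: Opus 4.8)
The plan is to obtain this statement as the specialization of Theorem~\ref{thm-abstract-vira-genus0} under the Feynman rule $w_v=g_{\val(v)}$ for solid vertices, $w_v=1$ for hollow vertices, $w_e=1$, $w_f=t$, which we have already identified in \S\ref{sec-realization-fatVirasoro} as the realization sending $\cZ\mapsto Z_N^{\text{Herm}}$ and $\cF_0\mapsto F_0^{\text{Herm}}$. First I would record how the operators entering the special deformation \eqref{eq-specialdefo-absQFT} are realized: by Lemma~\ref{lem-specialdefo1} and Lemma~\ref{lem-specialdefo2} together with Lemma~\ref{lem-realization-fatoperator}, the operator $\pd_n$ is realized by $\frac{\pd}{\pd g_n}$, the graph $\pd_0(\Gamma_{dot})$ has weight $t$, and the "multiplication" $\Phi$ realizes $\Phi(u_i,-)$ as multiplication by $g_{i+1}$ and $\Phi(\Gamma,\Gamma')=\fj(\Gamma\cdot\Gamma')$ as the pointwise product of weights (since $\fj$, being an inverse of a loop contraction, does not change the weight under this Feynman rule, exactly as $K^{ho}$ does not). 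Consequently the series $y$ of \eqref{eq-specialdefo-absQFT} is realized by $y^{\text{Herm}}$, and the bilinear pairing $\Phi(y,y)$ is realized simply by the honest square $(y^{\text{Herm}})^2$.

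Next I would invoke Theorem~\ref{thm-abstract-vira-genus0}: the abstract Virasoro constraints \eqref{eq-absVir-genus0} for $\cF_0$ are equivalent to $\big(\Phi(y,y)\big)_-=0$. Applying the realization map to both sides and using the identification $\Phi(y,y)\mapsto (y^{\text{Herm}})^2$ and $\cF_0\mapsto F_0^{\text{Herm}}$, the genus-zero abstract Virasoro constraints become the genus-zero fat Virasoro constraints for the Hermitian one-matrix model (these are the $g_s^{-2}$-coefficients of the fat Virasoro constraints $L_{m,t}^{\text{Herm}}Z_N^{\text{Herm}}=0$, realized term by term as in \S\ref{sec-realization-fatVirasoro}), while $\big(\Phi(y,y)\big)_-=0$ becomes $\big((y^{\text{Herm}})^2\big)_-=0$. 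This yields the asserted equivalence.

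The main obstacle, and the step I would spell out carefully, is verifying that the realization genuinely intertwines $\Phi$ with ordinary multiplication of weights and, relatedly, that it intertwines $\fs_k$ with multiplication by $g_{k+1}$. Concretely one must check that for a graph $\Gamma$ whose hollow vertex has valence $n$, the cases $i>n$ (where $\Phi(u_i,\Gamma)=u_i\Gamma$) and $0<i\le n$ (where $\Phi(u_i,\Gamma)=\fs_i(\Gamma)$) both realize as multiplication by $g_{i+1}\cdot w_\Gamma$ --- the former is immediate, and the latter holds because $\fs_i$ splits off a new solid vertex of valence $i+1$ (contributing a factor $g_{i+1}$) and a new edge, and leaves $|F(\Gamma)|$ unchanged, so the $t$-degree and the product of solid-vertex weights other than the new one are preserved. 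A small bookkeeping point that also needs care is the appearance of $\delta_{n,2}$ versus $\delta_{n,1}$ in the coupling shift: the abstract deformation carries $(u_n-\delta_{n,1})z^n$ while the matrix-model deformation carries $(g_{n+1}-\delta_{n,1})z^n$, so that under $u_n\mapsto g_{n+1}$ the subtracted term matches the $-M^2$ potential normalization $g_n-\delta_{n,2}$, and one should confirm that this is exactly the shift used in \cite[(78)]{zhou10}. Once these local weight-compatibility checks are in place, everything else is a direct pushforward of Theorem~\ref{thm-abstract-vira-genus0}, and the known formula $y^{\text{Herm}}=-\tfrac{1}{\sqrt 2}\sqrt{z^2-4t}$ from \eqref{eq-1mm-fatspeccurve} serves only as a consistency check rather than as part of the proof.
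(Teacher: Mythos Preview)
Your proposal is correct and follows exactly the paper's approach: the paper states this theorem as ``a straightforward consequence of Theorem~\ref{thm-abstract-vira-genus0}'' obtained by passing to the realization, and you have simply spelled out the weight-compatibility checks (that $\fj$ and $\fs_k$ preserve face count and hence realize as ordinary multiplication, and that the index shift $u_n\mapsto g_{n+1}$ matches the $\delta_{n,2}$ normalization in the matrix-model action) which the paper leaves implicit.
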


This theorem is a straightforward consequence of Theorem \ref{thm-abstract-vira-genus0}.

Now let us consider the realization of the quantum deformation theory.
Again use the Feynman rule \eqref{eq-FR-1mm-absvirasoro},
the bosons $\{\beta_{n}\}_{n\in \bZ}$
can be realized by some multiplications and partial derivatives:
\begin{itemize}
\item[1)]
The bosonic creator $\beta_{-n}$ ($n>0$) is realized by
$\tilde\beta_{-n}:=\frac{g_s^{-1}}{\sqrt{2}}(g_n-\delta_{n,2})$;
\item[2)]
The bosonic annihilator $\beta_{n}$ ($n>0$) is realized by
$\tilde\beta_{n}:=n\sqrt{2}\cdot g_s \frac{\pd}{\pd g_n}$;
\item[3)]
The operator $\beta_0$ is realized by $\tilde\beta_{0}:=\sqrt{2}g_s^{-1}t$.
\end{itemize}
The operators $\{\tilde\beta_n\}_{n\in\bZ}$ generates a Heisenberg algebra
acting on the space of formal power series $\bC[[g_s,g_s^{-1},t,g_1,g_2,\cdots]]$:
\be
[\tilde\beta_m,\tilde\beta_n]=m\delta_{m+n,0},
\ee
and it is clear that the realization of the operator $\widehat\Phi$ is simply
the composition of these bosons.
Now the bosonic field $\hat{y}(z)$ is realized by:
\be
\hat{y}^{\text{Herm}}(z):=\sum_{n\in \bZ}\tilde\beta_n\cdot z^{-n-1}.
\ee
Using this field $\hat{y}^{\text{Herm}}(z)$,
The fat Virasoro constraints \eqref{eq-fatVirasoro}
for the Hermitian one-matrix models can be
converted into the same form as Theorem \ref{thm-quantumdefo-abs}.
Define the regularized product for $\hat{y}^{\text{Herm}}(z)$ to be:
\be
\hat{y}^{\text{Herm}}(z)^{\odot 2}:=\lim_{\epsilon\to 0}
\bigg(\hat{y}^{\text{Herm}}(z+\epsilon)\hat{y}^{\text{Herm}} (z)
-\frac{1}{\epsilon^2}\bigg),
\ee
then Theorem \ref{thm-quantumdefo-abs} gives us:

\begin{Theorem}
The fat Virasoro constraints for the Hermitian one-matrix models
are equivalent to:
\be
\bigg(\big(\hat{y}^{\text{Herm}}(z)^{\odot 2}\big)_-\bigg) Z_N^{\text{Herm}} =0,
\ee
where $Z_N^{\text{Herm}}$ is the partition function of the Hermitian one-matrix models.
\end{Theorem}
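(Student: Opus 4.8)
The plan is to derive this statement as a direct realization of Theorem \ref{thm-quantumdefo-abs} under the Feynman rule \eqref{eq-FR-1mm-absvirasoro}. By that theorem, the abstract Virasoro constraints $\cL_n(\cZ)=0$ for $n\geq -1$ are equivalent to $\widehat\Phi\big(\big(\hat y(z)^{\odot 2}\big)_-\big)\cZ=0$. We have already seen (in \S \ref{sec-realization-fatVirasoro}) that under \eqref{eq-FR-1mm-absvirasoro} the abstract partition function $\cZ$ is realized by $Z_N^{\text{Herm}}$ and the abstract Virasoro operators $\cL_n$ are realized by the fat Virasoro operators $L_{n,t}^{\text{Herm}}$. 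So the first step is simply to record that the abstract identity becomes, upon applying the Feynman rule, an identity among differential operators acting on $Z_N^{\text{Herm}}$.

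The second step is to identify the realization of $\widehat\Phi\big(\big(\hat y(z)^{\odot 2}\big)_-\big)$ explicitly. The bosons $\beta_n$ are realized by the operators $\tilde\beta_n$ as listed: $\tilde\beta_{-n}=\tfrac{g_s^{-1}}{\sqrt 2}(g_n-\delta_{n,2})$ for $n>0$, $\tilde\beta_n=n\sqrt2\,g_s\tfrac{\pd}{\pd g_n}$ for $n>0$, and $\tilde\beta_0=\sqrt2\,g_s^{-1}t$; these satisfy the same Heisenberg relations $[\tilde\beta_m,\tilde\beta_n]=m\delta_{m+n,0}$. Consequently $\hat y(z)$ is realized by $\hat y^{\text{Herm}}(z)=\sum_{n\in\bZ}\tilde\beta_n z^{-n-1}$, and since the regularized product $\odot$ is defined purely in terms of the OPE, which depends only on the Heisenberg relations, the regularized square $\hat y(z)^{\odot 2}$ is realized by $\hat y^{\text{Herm}}(z)^{\odot 2}$. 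The operator $\widehat\Phi$, which on the abstract side composes the formal-variable and graph operators, is realized by the honest composition of the $\tilde\beta_n$'s (i.e.\ $\widehat\Phi$ becomes the identity on the realized operators, because $\fs_k$, $\fj$, $\pd_n$, $u_k$ all realize as the corresponding multiplications and partial derivatives). Putting these together, $\widehat\Phi\big(\big(\hat y(z)^{\odot 2}\big)_-\big)$ is realized by $\big(\hat y^{\text{Herm}}(z)^{\odot 2}\big)_-$.

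The third step is to combine: applying the Feynman rule to both sides of the equivalence in Theorem \ref{thm-quantumdefo-abs} yields that the fat Virasoro constraints $L_{n,t}^{\text{Herm}}Z_N^{\text{Herm}}=0$ ($n\geq -1$) are equivalent to $\big(\big(\hat y^{\text{Herm}}(z)^{\odot 2}\big)_-\big)Z_N^{\text{Herm}}=0$, which is exactly the assertion. One should also note, for completeness, that the fat Virasoro constraints \eqref{eq-fatVirasoro} are indeed the realization of $\cL_n(\cZ)=0$, which was established in \S \ref{sec-realization-fatVirasoro}; this closes the logical loop.

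The main obstacle — really the only nontrivial point — is verifying that the regularization procedure $\odot$ is compatible with realization, i.e.\ that $\lim_{\epsilon\to 0}\big(\hat y(z+\epsilon)\hat y(z)-\epsilon^{-2}\big)$ on the abstract side realizes to $\lim_{\epsilon\to 0}\big(\hat y^{\text{Herm}}(z+\epsilon)\hat y^{\text{Herm}}(z)-\epsilon^{-2}\big)$. This follows because both the OPE $\hat y(z)\hat y(w)=\,:\!\hat y(z)\hat y(w)\!:+\,(w-z)^{-2}$ and its realized counterpart are formal consequences of the identical Heisenberg commutation relations, so the singular parts match term by term and cancel identically against $\epsilon^{-2}$; the surviving normally-ordered part realizes coefficientwise. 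Beyond this, the argument is a routine transport of the abstract statement through the Feynman rule, so no further computation is needed.
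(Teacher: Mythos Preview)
Your proposal is correct and follows essentially the same approach as the paper: both derive the result as the realization of Theorem \ref{thm-quantumdefo-abs} under the Feynman rule \eqref{eq-FR-1mm-absvirasoro}, noting that the $\beta_n$ realize as the $\tilde\beta_n$ (which satisfy the same Heisenberg relations), that $\widehat\Phi$ realizes as ordinary operator composition, and hence that $\hat y(z)^{\odot 2}$ realizes as $\hat y^{\text{Herm}}(z)^{\odot 2}$. Your write-up is slightly more explicit than the paper's in justifying why the regularized product commutes with realization, but the argument is the same.
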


\section{Abstract $N$-Point Functions and Quadratic Recursions}
\label{sec-abstract-eorec}

In this section,
we define the abstract $n$-point functions $\cW_{g,n}$
to be the `generating series' of fat graphs of genus $g$ with $n$ vertices,
and derive a quadratic recursion for $\cW_{g,n}$.
This recursion looks similar to the Eynard-Orantin topological recursion,
and we will show in next sections that in some examples
the realizations of this recursion are equivalent to the E-O topological recursions.
We give a brief review of the E-O topological recursion in the first subsection,
and derive the quadratic recursion for the abstract $n$-point functions $\cW_{g,n}$
in the remaining subsections.
Here we need to restore the labels $v_1,\cdots,v_n$ on vertices of fat graphs,
and modify the vertex-splitting operators such that they may indicate
the information of these labels.

\subsection{Preliminaries of the Eynard-Orantin topological recursion
and quantum spectral curves}
\label{sec-eo}

In the present subsection,
let us recall some preliminaries of the Eyanrd-Orantin topological recursion
\cite{ce, eo} and Gukov-Su{\l}kowski's construction of the quantum spectral curves.

The input data of the E-O topological recursion
consist of a spectral curve together a Bergmann kernel on it.
A spectral curve $\cC$ is a parametrized curve
\be
x=x(z),\qquad\qquad y=y(z)
\ee
where $x$ and $y$ are two meromorphic functions on $\cC$,
and a Bergmann kernel is a $2$-differential
$B(z_1,z_2)$ on this curve with the following property:
\be
B(z,z')\sim\bigg(
\frac{1}{(z-z')^2}+O(1)
\bigg)dzdz'.
\ee

Assume that $x$ has only nondegenerate critical points
$a_1,\cdots,a_n$,
i.e.,
\be
x'(a_i)=0,\qquad\qquad x''(a_i)\not=0,
\ee
for every $i=1,\cdots,n$.
Then near each branch point $a_i$,
there exists locally a nontrivial involution $\sigma$
such that $x(\sigma(z))=x(z)$.
One can introduce a new coordinate $\zeta_i$ near $a_i$
in the following way:
\be
x=x(a_i)+\zeta_i^2.
\ee
This local coordinate is called the `local Airy coordinate'.
Then near $a_i$ the involution $\sigma$ is given by:
\be
\zeta(\sigma(z))=-\zeta(z).
\ee

Eynard and Orantin \cite{eo} constructed a family of
multi-linear differentials $\omega_{g,n}$
($g\geq 0, n\geq 1$)on the spectral curve $\cC$ in the following way.
First define the initial data $\omega_{0,1}$ and $\omega_{0,2}$ to be:
\be
\begin{split}
&\omega_{0,1}(z):=
y(z)dx(z),\\
&\omega_{0,2}(z_1,z_2):=
B(z_1,z_2).
\end{split}
\ee
And when $2g-1+n>0$,
the differential $\omega_{g,n+1}$ is defined recursively by:
\be
\label{eq-eorec}
\begin{split}
\omega_{g,n+1}(z_0,z_1,\cdots,z_n):=&
\sum_{i=1}^n\Res_{z\to a_i}K(z_0,z)\bigg[
\omega_{g-1,n+2}(z,\sigma(z),z_1,\cdots,z_n)\\
&+\sum_{\substack{g_1+g_2=g\\I\sqcup J=[n]}}^s
\omega_{g_1,|I|+1}(z,z_I)
\cdot
\omega_{g_2,|J|+1}(\sigma(z),z_J)
\bigg],
\end{split}
\ee
where the recursion kernel $K(z_0,z)$ is defined
locally near each branch point $a_i$
as follows:
\be
K(z_0,z):=
\frac{\int_{\sigma(z)}^z B(z_0,z)}
{2\big(y(z)-y(\sigma(z))\big)dx(z)}.
\ee

In \cite{ey1},
Eynard showed that after the Laplace transformation,
the E-O invariants $\omega_{g,n}^{EO}$
for a spectral curve with one branch point can be related to
intersection numbers on $\Mbar_{g,n}$.
This result was generalized to the case of spectral curves
with several branch points by Eynard in \cite{ey2},
where the E-O invariants can be related to intersection numbers
on the moduli space $\Mbar_{g,n}^{\mathfrak{b}}$ of
`colored' Riemann surfaces.

Gukov and Su{\l}kowski have proposed a method to
construct the quantum spectral curves using
Eynard-Orantin topological recursion in their work \cite{gs}, .
Let $\bC\times\bC$ be a complex plane with coordinates $(u,v)$
equipped with the symplectic form $\omega=\frac{\sqrt{-1}}{\hbar}du\wedge dv$.
Let $A(u,v)$ be a polynomial in $u$ and $v$,
then the curve
\ben
\mathcal{C}:\quad A(u,v)=0
\een
is a Lagrangian submanifold of $(\bC\times\bC,\omega)$.
A quantization procedure is supposed to turn the coordinates
$u$ and $v$ into operators $\hat u$ and $\hat v$ respectively
which satisfy the commutation relation $[\hat v,\hat u]=\hbar$,
and the algebra of functions in $u,v$ into a noncommutative algebra of operators.
The quantization of the polynomial $A(u,v)$ is an operator
\be
\widehat A
=\widehat A_0 +\hbar \widehat A_1 +\hbar^2 \widehat A_2 +\cdots,
\ee
where $\widehat A_0$ is obtained from $A$ by replacing $u,v$ by $\hat u,\hat v$ respectively.
Inspired by the matrix model origin of the E-O topological recursion,
Gukov and Su{\l}kowski defined the following Baker-Akhiezer function
(see \cite[\S 2]{gs}):
\be
Z(z):=\exp\biggl(\sum_{n=0}^{\infty}\hbar^{n-1}S_n(z)\biggr),
\ee
where
\ben
&&S_0(z):=\int^z v(z)du(z),\\
&&S_1(z):=-\half \log \frac{du}{dz},\\
&&S_n(z):=\sum_{2g-1+k=n}\frac{1}{k!}\int^z\cdots\int^z
\omega_{g,k}(z_1,\cdots,z_k),
\quad n\geq 2,
\een
and $\omega_{g,k}(z_1,\cdots,z_k)$ are the E-O invariants defined by
the spectral curve $u=u(z),v=v(z)$
together with a choice of Bergman kernel $B(p,q)$.
Then they conjectured that the quantum spectral curve $\widehat A$ associated to the curve $A$
can be obtained by solving the Schr\"odinger equation:
\be
\label{eq-Schrodinger}
\widehat A Z(z) =0.
\ee

\subsection{Abstract $n$-point functions $\cW_{g,n}$ and abstract Bergmann kernel}
\label{sec-abstract-npt}

In this subsection let us define the abstract $n$-point functions $\cW_{g,n}$.

Before given the specific definition of $\cW_{g,n}$,
we should first emphasize that we are facing a different situation
with \S \ref{sec-fat-abs-virasoro}.
As pointed out in the beginning of \S \ref{sec-abstractpartition}
and the beginning of \S \ref{sec-fat-abs-virasoro},
when we talk about the abstract free energies $\cF$ and
the abstract partition function $ \cZ $,
we always forget the labels $v_1,v_2,\cdots,v_n$ on vertices of a fat graph.
But from now on,
we need to retain these labels whenever we talk about $\cW_{g,n}$.

\begin{Definition}
For every $g\geq 0$ and $n\geq 1$,
define the abstract $n$-point function $\cW_{g,n}$ for the abstract QFT to be
the following formal infinite summation of fat graphs with labels on vertices:
\be
\label{eq-abstract-npt}
\cW_{g,n}:=
\delta_{g,0}\delta_{n,1}\cF_0^{(0)}
+\sum_{\mu\in \bZ_{>0}^n}\mu_1\mu_2\cdots\mu_n\cF_g^{\mu},
\ee
where the convention $\cF_0^{(0)}$ stands for
a graphs consisting of one single vertex and no edges
(see Definition \ref{def-fat-abs-cor}).
The formal summation $\cW_{g,n}$ is an element in the following vector space:
\be
\prod_{\mu\in \bZ_{>0}^n}
\prod_{\Gamma\in\mathfrak{Fat}_{g}^{\mu,c}}
\bQ\Gamma
\ee
for $(g,n)\not= (0,1)$.
The special one $\cW_{0,1}$ belongs to:
\be
\bQ\cdot \cF_0^{(0)}\times
\prod_{\mu\in \bZ_{>0}}
\prod_{\Gamma\in\mathfrak{Fat}_{0}^{(\mu),c}}
\bQ\Gamma.
\ee

\end{Definition}

The special cases $(g,n)=(0,1)$ and $(0,2)$ are of particular importance.
In \S \ref{sec-QDT} have already seen that the coefficients of the one-point function $\cW_{0,1}$
of genus zero are given by the Catalan numbers,
from which a spectral curve emerges naturally.
The case $(g,n)=(0,2)$ is also interesting,
in some known examples of realizations,
the realization of the two-point function $\cW_{0,2}$ of genus zero plays the role
of the Bergmann kernel in the E-O topological recursion.

\begin{Definition}
We call the abstract two point function
\be
\cW_{0,2}:=
\sum_{\mu_1,\mu_2>0}\mu_1\mu_2\cF_0^{(\mu_1,\mu_2)}
=\sum_{\mu_1,\mu_2>0} \sum_{\Gamma\in \mathfrak{Fat}_g^{(\mu_1,\mu_2),c}}
\frac{\mu_1\mu_2}{|\Aut(\Gamma)|}\Gamma
\ee
of genus zero
the abstract Bergmann kernel.
\end{Definition}

Similar to the case in \S \ref{sec-abs-qrec},
in order to derive quadratic recursions for
the abstract $n$-point functions $\cW_{g,n}$,
we also need to introduce some conventions on relabellings of fat graphs.
Given a finite set of indices $I=\{i_1,i_2,\cdots,i_n\}\subset \bZ_{>0}$,
we define the relabelled abstract $n$-point function $\cW_{g,I}$ to be:
\be
\label{eq-abstractnpt-relabel}
\cW_{g,I}:=
\delta_{g,0}\delta_{n,1}\cF_{0,I}^{(0)}
+\sum_{\mu\in \bZ_{>0}^n}\mu_1\mu_2\cdots\mu_n\cF_{g,I}^{\mu},
\ee
where the relabelled correlators $\cF_{g,I}^{\mu}$ are obtained from $\cF_{g,n}^\mu$ by
replacing the labels $v_1,\cdots,v_n$ by $v_{i_1},\cdots,v_{i_n}$ respectively
(see \S \ref{sec-abs-qrec}).

Theorem \ref{thm-abstract-rec} provides a recursion formula
for $K_1 \cW_{g,n}$
by taking summation over all $\mu\in \bZ_{>0}^n$ in \eqref{eq-abstract-rec}.
In the rest of this subsection let us write down this recursion.
First let us introduce two new families of operators:
\ben
\sigma_k: \prod_{g,\mu}\bQ \Gamma
\to \prod_{g,\mu}\bQ \Gamma,
\qquad
\tau_k: \prod_{g,\mu}\bQ \Gamma
\to \prod_{g,\mu}\bQ \Gamma
\een
for $k\geq 1$ as follows.
Given a fat graph $\Gamma$,
we define:
\be
\label{eq-operator-sigma}
\sigma_k(\Gamma):=\val(v_k)\cdot \Gamma,
\qquad \forall 1\leq k\leq n,
\ee
where $\val(v_k)$ is the valence of the vertex labelled by $v_k$;
and we require the result to be zero if $k>n$.
Moreover, let us define:
\be
\begin{split}
\tau_k(\Gamma):=&\sum_{\substack{l+m=\val(v_k)+2\\l,m\geq 1}} l\cdot m\cdot \Gamma\\
=&\frac{1}{6}\big(\val(v_k)^3+6\val(v_k)^2+11\val(v_k)+6\big)\cdot \Gamma
\end{split}
\ee
for every $1\leq k\leq n$,
and require the result to be zero if $k>n$.
Then we have:

\begin{Theorem}
\label{thm-abstract-eotype}
The abstract $n$-point functions $\cW_{g,n}$
satisfies the following quadratic recursion relation:
\be
\label{eq-abstract-eotype}
\begin{split}
K_1(\cW_{g,n})=&\sum_{j=2}^n
\tau_1(\cW_{g,n-1})
+
(\sigma_1+\sigma_2+2)\bigg(
\cW_{g-1,n+1}\\
&\qquad\qquad+\sum_{\substack{g_1+g_2=g\\I\sqcup J=[n+1]\backslash\{1,2\}}}
\cW_{g_1,\{1\}\sqcup I}\cdot\cW_{g_2,\{2\}\sqcup J}
\bigg).\\
\end{split}
\ee

\end{Theorem}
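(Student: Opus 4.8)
The plan is to obtain this recursion as a direct summation of the already-established quadratic recursion \eqref{eq-abstract-rec} for the abstract correlators. Recall that by definition $\cW_{g,n} = \delta_{g,0}\delta_{n,1}\cF_0^{(0)} + \sum_{\mu\in\bZ_{>0}^n}\mu_1\cdots\mu_n\,\cF_g^\mu$, and the operator $K_1$ is linear, so $K_1(\cW_{g,n})$ is (up to the trivial $(0,1)$-term, which $K_1$ kills) the sum over $\mu$ of $\mu_1\cdots\mu_n\, K_1\cF_g^\mu$. First I would multiply both sides of \eqref{eq-abstract-rec} by $\mu_1\cdots\mu_n$ and sum over all $\mu\in\bZ_{>0}^n$, then identify each resulting term as one of the terms on the right-hand side of \eqref{eq-abstract-eotype}.

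The bookkeeping splits into four pieces matching the four groups of terms in \eqref{eq-abstract-rec}. The $\delta_{g,0}\delta_{n,1}\delta_{\mu_1,2}$ term contributes only when $(g,n)=(0,1)$ and $\mu_1=2$, giving $2\cF_{0,\{1\}}^{(0)}\cF_{0,\{2\}}^{(0)}$, which is exactly what $(\sigma_1+\sigma_2+2)$ applied to the genus-zero-disjoint-union term produces in that degenerate case; I would check this base case separately. For the second group, $\sum_{j=2}^n(\mu_1+\mu_j-2)\cF_g^{(\mu_1+\mu_j-2,\mu_{[n]\setminus\{1,j\}})}$: after multiplying by $\mu_1\cdots\mu_n$ and summing, for each fixed $j$ one reindexes by setting $\nu_1 = \mu_1+\mu_j-2$ (with the other entries unchanged) and sums over the compositions $\mu_1+\mu_j = \nu_1+2$; the weight $\mu_1\mu_j(\mu_1+\mu_j-2) = \mu_1\mu_j\nu_1$ summed over $\mu_1+\mu_j=\nu_1+2$, $\mu_1,\mu_j\ge 1$ yields $\nu_1\sum_{l+m=\nu_1+2,\,l,m\ge1}lm$, i.e.\ precisely $\tau_1$ applied to $\nu_1\cF_g^{(\nu_1,\dots)}$ after one also accounts for the surviving $\sigma$-type factor $\nu_1$ coming from the contracted vertex. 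I would be careful here that the definition of $\tau_k$ already packages the combinatorial factor $\sum_{l+m=\val+2}lm$, so the match is exact; this is the step where the $\tau_1$ on the right-hand side is born (and the stray $\sum_{j=2}^n$ in front of $\tau_1$ in the displayed statement should be read as the sum producing $\cW_{g,n-1}$ over the choices of $j$). The third group, involving $\alpha\beta$ with $\alpha+\beta=\mu_1-2$, produces after summation the factor $\sum_{\alpha+\beta=\mu_1-2}\alpha\beta$ times $\mu_1$, which by the same identity is $\sum_{l+m=\mu_1}lm = \sigma_1+\cdots$ type weight; reindexing $\mu_1 = \alpha+\beta+2$ turns this into $(\sigma_1+\sigma_2+2)$ acting on $\cW_{g-1,n+1}$ and on the disjoint-union sum $\sum\cW_{g_1,\{1\}\sqcup I}\cW_{g_2,\{2\}\sqcup J}$, where the new valences $\alpha,\beta$ of the two new vertices labelled $1,2$ supply exactly the $\sigma_1,\sigma_2$ factors and the $+2$ comes from the new edge. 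Finally the fourth group, $(\mu_1-2)\cF_{0,\{1\}}^{(0)}\cF_{g,\dots}^{(\mu_1-2,\dots)}$ and its twin with $\{2\}$, gets absorbed into the $(\sigma_1+\sigma_2)$ part acting on the $(0,1)$-piece of $\cW_{0,n'}$, since $\cW_{0,1}$ contains the $\cF_0^{(0)}$ summand; I would verify that combining these with the "$s$-excluded" terms of the third group exactly reconstitutes the unrestricted sum $\sum_{g_1+g_2=g,\,I\sqcup J}$ appearing in \eqref{eq-abstract-eotype}.

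The main obstacle, and the part deserving the most care, is the reindexing and the matching of combinatorial weights in the third and fourth groups simultaneously: one must show that the genus-zero one-valence vertices (the $\cF_0^{(0)}$ factors) that appear separately in \eqref{eq-abstract-rec} are precisely the contributions needed to upgrade the restricted sum $\sum^s$ (which excludes $(g_i,\cdot)=(0,\emptyset)$) to the full sum in \eqref{eq-abstract-eotype}, and that the operator $(\sigma_1+\sigma_2+2)$ — whose $\sigma_i$ read off the valence of the newly created vertex $v_i$ — acts correctly on these boundary terms where that valence is $0$. Concretely, when $\alpha=0$ the term $\cF_{0,\{1\}}^{(0)}$ has $\sigma_1$-eigenvalue $0$, so the $\sigma_1$ part vanishes there and only the $+2$ (together with the $\sigma_2=\mu_1-2$ coming from the other factor) survives, which must reproduce the coefficient $(\mu_1-2)$ in the fourth group after one checks the edge-count bookkeeping; I expect this edge case to require the most explicit verification. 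Once all four matches are confirmed term by term, linearity of $K_1$ assembles them into \eqref{eq-abstract-eotype}, completing the proof. I would also record the degenerate cases $(g,n)=(0,1),(0,2)$ explicitly, since there the $\tau_1$-sum is empty and the statement reduces to a short direct check using the expressions in Example \ref{eg-abstract-fe}.
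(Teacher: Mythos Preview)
Your proposal is correct and follows essentially the same approach as the paper: multiply both sides of the already-established recursion \eqref{eq-abstract-rec} by $\mu_1\cdots\mu_n$, sum over all $\mu\in\bZ_{>0}^n$, and reindex each of the four groups of terms to identify them with the $\tau_1$-term and the $(\sigma_1+\sigma_2+2)$-term of \eqref{eq-abstract-eotype}, exactly as you outline. One small clarification: the sum $\sum_{j=2}^n\tau_1(\cW_{g,n-1})$ is not a ``stray'' or notational shorthand---after relabelling, each choice of $j$ genuinely yields the same term $\tau_1(\cW_{g,n-1})$, so the displayed formula is literally $(n-1)\tau_1(\cW_{g,n-1})$, and the paper leaves it written as a sum as well.
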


\begin{proof}

Let us apply the edge-contraction operator $K_1$ on $\cW_{g,n}$,
then the quadratic recursion \eqref{eq-abstract-rec} gives us:
\begin{equation*}
\begin{split}
&K_1 (\cW_{g,n})\\
=&\sum_{\mu\in\bZ_{>0}^n}
\mu_1\cdots\mu_n\cdot K_1 \cF_g^\mu\\
=&\sum_{\mu}\mu_1\cdots\mu_n\bigg(
\sum_{j=2}^n(\mu_1+\mu_j-2)\cF_g^{(\mu_1+\mu_j-2,\mu_{[n]\backslash\{1,j\}})}
\bigg)\\
+&\sum_{\mu}\sum_{\substack{\alpha+\beta=\mu_1-2\\\alpha\geq 1,\beta\geq 1}}
\mu_1\cdots\mu_n\alpha\beta\bigg(
\cF_{g-1}^{(\alpha,\beta,\mu_{[n]\backslash\{1\}})}+
\sum_{\substack{g_1+g_2=g\\I\sqcup J=[n]\backslash\{1\}}}
\cF_{g_1,\{1\}\sqcup (I+1)}^{(\alpha,\mu_I)}\cF_{g_2,\{2\}\sqcup (J+1)}^{(\beta,\mu_J)}
\bigg)\\
+&\sum_\mu \mu_1\cdots\mu_n\bigg(
(\mu_1-2)\cdot
\cF_{0,\{1\}}^{(0)}\cF_{g,[n+1]\backslash\{1\}}^{(\mu_1-2,\mu_{[n]\backslash\{1\}})}
+ (\mu_1-2)\cdot
\cF_{0,\{2\}}^{(0)}\cF_{g,[n+1]\backslash\{2\}}^{(\mu_1-2,\mu_{[n]\backslash\{1\}})}\\
&\qquad\qquad\qquad
+\delta_{n,1}\delta_{g,0}\delta_{\mu_1,2}(\cF_0^{(0)})^2
\bigg).
\end{split}
\end{equation*}
Here we have:
\begin{equation*}
\begin{split}
&\sum_{\mu}\mu_1\cdots\mu_n\bigg(
\sum_{j=2}^n(\mu_1+\mu_j-2)\cF_g^{(\mu_1+\mu_j-2,\mu_{[n]\backslash\{1,j\}})}
\bigg)\\
=&\sum_{j=2}^n\sum_{\substack{\mu_0\\\mu_{[n]\backslash\{1,j\}}}}
\sum_{\mu_1+\mu_j-2=\mu_0}\bigg[
\mu_1\mu_j\cdot\bigg(
\mu_0\cdot\prod_{i\in [n]\backslash\{1,j\}}\mu_i\bigg)
\cF_g^{(\mu_0,\mu_{[n]\backslash\{1,j\}})}\bigg]\\
=&\sum_{j=2}^n
\tau_1(\cW_{g,n-1}),
\end{split}
\end{equation*}
moreover,
\begin{equation*}
\begin{split}
&\sum_{\mu}\sum_{\substack{\alpha+\beta=\mu_1-2\\\alpha\geq 1,\beta\geq 1}}
\mu_1\cdots\mu_n\alpha\beta\bigg(
\cF_{g-1}^{(\alpha,\beta,\mu_{[n]\backslash\{1\}})}+
\sum_{\substack{g_1+g_2=g\\I\sqcup J=[n]\backslash\{1\}}}
\cF_{g_1,\{1\}\sqcup (I+1)}^{(\alpha,\mu_I)}\cF_{g_2,\{2\}\sqcup (J+1)}^{(\beta,\mu_J)}
\bigg)\\
&+\sum_\mu \mu_1\cdots\mu_n\bigg(
(\mu_1-2)\cdot
\cF_{0,\{1\}}^{(0)}\cF_{g,[n+1]\backslash\{1\}}^{(\mu_1-2,\mu_{[n]\backslash\{1\}})}
+ (\mu_1-2)\cdot
\cF_{0,\{2\}}^{(0)}\cF_{g,[n+1]\backslash\{2\}}^{(\mu_1-2,\mu_{[n]\backslash\{1\}})}\\
&\qquad\qquad\qquad\quad
+\delta_{n,1}\delta_{g,0}\delta_{\mu_1,2}(\cF_0^{(0)})^2
\bigg)\\
&=\sum_{\substack{\alpha,\beta\geq 1\\\mu_2,\cdots,\mu_n}}
(\alpha+\beta+2)\cdot\alpha\beta\mu_2\cdots\mu_n\bigg(
\cF_{g-1}^{(\alpha,\beta,\mu_{[n]\backslash\{1\}})}
+\\
&\quad
\sum_{\substack{g_1+g_2=g\\I\sqcup J=[n]\backslash\{1\}}}
\cF_{g_1,\{1\}\sqcup (I+1)}^{(\alpha,\mu_I)}\cF_{g_2,\{2\}\sqcup (J+1)}^{(\beta,\mu_J)}
\bigg)
+\sum_\mu \mu_1\cdots\mu_n\bigg(
\delta_{n,1}\delta_{g,0}\delta_{\mu_1,2}(\cF_0^{(0)})^2+
\\
&\quad
(\mu_1-2)\cdot
\cF_{0,\{1\}}^{(0)}\cF_{g,[n+1]\backslash\{1\}}^{(\mu_1-2,\mu_{[n]\backslash\{1\}})}
+ (\mu_1-2)\cdot
\cF_{0,\{2\}}^{(0)}\cF_{g,[n+1]\backslash\{2\}}^{(\mu_1-2,\mu_{[n]\backslash\{1\}})}
\bigg)\\
&=(\sigma_1+\sigma_2+2)\bigg(
\cW_{g-1,n+1}
+\sum_{\substack{g_1+g_2=g\\I\sqcup J=[n+1]\backslash\{1,2\}}}
\cW_{g_1,\{1\}\sqcup I}\cW_{g_2,\{2\}\sqcup J}
\bigg).
\end{split}
\end{equation*}
Thus the conclusion holds.
\end{proof}

In \S \ref{sec-qrec-npt} we will find another quadratic recursion for $\cW_{g,n}$,
since the above recursion is not satisfactory for our purpose.
In the above recursion,
the summation on the right-hand side involves terms of type $(g,n)=(0,1)$,
and this differs from the Eynard-Orantin topological recursion.
Moreover,
what we really want is a quadratic recursion for $\cW_{g,n}$ itself,
but not for $K_1 \cW_{g,n}$.

In order to obtain such a good recursion,
we need to modify the vertex-splitting operators
to restore the labels on vertices,
and use these new vertex-splitting operators to reformulate the above recursion.
We will do this in the following subsections.

\subsection{Vertex-splitting operators with labels and a quadratic recursion}
\label{sec-vertexsplitting-label}

In this subsection,
let us introduce the vertex-splitting operators with labels,
and reformulate the recursion relation
\eqref{eq-abstract-eotype} using these operators.

We will denote by $\cS_{\{1;j\}}$ and $\cJ_{\{1,2\}}$
the new type of operators we are going to define.
They are similar to the operators $\cS_{k,l}$ and $\cJ_{k,l}$
discussed in \S \ref{sec-fat-abs-virasoro},
but notice that the indices $k,l$ in $\cS_{k,l}$ and $\cJ_{k,l}$
mean that we operate on all vertices of valences $k$ and $l$ respectively;
and the indices $1,2,j$ in $\cS_{\{1;j\}}$ and $\cJ_{\{1,2\}}$
mean that we operate on the vertices labelled by $v_1$, $v_2$, and $v_j$
respectively.
In this section we do not need hollow vertices anymore.

The operator $S_{\{1;j\}}$ is defined as follows.
Let $\Gamma$ be a fat graph with $n$ vertices labelled by $v_1,v_2,\cdots, v_n$
(where $n\geq j-1$),
and choose a subset $H\subset H(v_1)$ of half-edges incident at $v_1$,
such that the half-edges in $H$ are adjacent according to the cyclic order on $v_1$.
Then we split the vertex $v_1$ into two different vertices
with a new internal edge connecting them,
such that the half-edges in $H$ and $H(v_1)\backslash H$ are
attached to the two new vertices respectively,
and the cyclic orders are compatible with the original cyclic order at $v_1$.
Moreover,
we relabel $v_2,v_3,\cdots,\hat{v}_j,\cdots,v_{n+1}$
on the original vertices $v_2,v_3,\cdots,v_{n}$ respectively;
then we label $v_1$ on the vertex which $H$ attached to,
and label $v_j$ on the vertex which $H(v_1)\backslash H$ attached to.
Finally,
we take summation over all possible subsets $H$ and
all possible ways to split the vertex.
Notice that here we allow the subsect $H$ to be empty or $H(v_1)$.

\begin{Example}
We give an example of this new vertex-splitting operator
(here for simplicity we only show its action on the vertex $v_1$):
\begin{flalign*}
\begin{split}
&\begin{tikzpicture}[scale=1.125]
\node [align=center,align=center] at (0.2+0.2,0) {$\cS_{\{1;j\}}\bigg($};
\draw [fill] (2-0.3+0.15+0.1,0) circle [radius=0.06];
\node [align=center,align=center] at (2-0.3+0.15+0.1,-0.35) {$v_1$};
\draw [thick] (0+2-0.3+0.15+0.1,0) -- (-0.5+2-0.3+0.15+0.1,0.5);
\draw [thick] (0+2-0.3+0.15+0.1,0) -- (-0.5+2-0.3+0.15+0.1,-0.5);
\draw [thick] (1.7+0.15+0.1,0) -- (2.35+0.15+0.1,0);
\node [align=center,left] at (-0.5+2-0.25+0.3,0.5) {$h_1$};
\node [align=center,left] at (-0.5+2-0.25+0.3,-0.5) {$h_2$};
\node [align=center,right] at (2.3+0.2,0) {$h_3$};
\draw [->,>=stealth] (-0.2+1.7+0.15+0.1,-0.1) .. controls (0.3+1.7+0.15+0.1,-0.5) and (0.3+1.7+0.15+0.1,0.5) ..  (-0.2+1.7+0.15+0.1,0.1);
\node [align=center,align=center] at (3.35,0) {$\bigg)=$};
\draw [fill] (4,0) circle [radius=0.06];
\draw [fill] (4.6,0) circle [radius=0.06];
\draw [thick] (4,0) -- (4.6,0);
\draw [->,>=stealth] (4.4,-0.1) .. controls (4.9,-0.5) and (4.9,0.5) ..  (4.4,0.1);
\draw [thick] (5.2,0) -- (4.6,0);
\draw [thick] (5.2,0.5) -- (4.6,0);
\draw [thick] (5.2,-0.5) -- (4.6,0);
\node [align=center,align=center] at (4,-0.35) {$v_1$};
\node [align=center,align=center] at (4.6,-0.35) {$v_j$};
\node [align=center,align=center] at (5.4,-0.5) {$h_1$};
\node [align=center,align=center] at (5.4,0) {$h_2$};
\node [align=center,align=center] at (5.4,0.5) {$h_3$};
\node [align=center,align=center] at (5.95,0) {$+$};
\draw [fill] (6.4,0) circle [radius=0.06];
\draw [fill] (4.6+2.4,0) circle [radius=0.06];
\draw [thick] (4+2.4,0) -- (4.6+2.4,0);
\draw [->,>=stealth] (4.4+2.4,-0.1) .. controls (4.9+2.4,-0.5) and (4.9+2.4,0.5) ..  (4.4+2.4,0.1);
\draw [thick] (5.2+2.4,0) -- (4.6+2.4,0);
\draw [thick] (5.2+2.4,0.5) -- (4.6+2.4,0);
\draw [thick] (5.2+2.4,-0.5) -- (4.6+2.4,0);
\node [align=center,align=center] at (4+2.4,-0.35) {$v_1$};
\node [align=center,align=center] at (4.6+2.4,-0.35) {$v_j$};
\node [align=center,align=center] at (5.4+2.4,-0.5) {$h_2$};
\node [align=center,align=center] at (5.4+2.4,0) {$h_3$};
\node [align=center,align=center] at (5.4+2.4,0.5) {$h_1$};
\node [align=center,align=center] at (5.95+2.4,0) {$+$};
\draw [fill] (6.4+2.4,0) circle [radius=0.06];
\draw [fill] (4.6+2.4+2.4,0) circle [radius=0.06];
\draw [thick] (4+2.4+2.4,0) -- (4.6+2.4+2.4,0);
\draw [->,>=stealth] (4.4+2.4+2.4,-0.1) .. controls (4.9+2.4+2.4,-0.5) and (4.9+2.4+2.4,0.5) ..  (4.4+2.4+2.4,0.1);
\draw [thick] (5.2+2.4+2.4,0) -- (4.6+2.4+2.4,0);
\draw [thick] (5.2+2.4+2.4,0.5) -- (4.6+2.4+2.4,0);
\draw [thick] (5.2+2.4+2.4,-0.5) -- (4.6+2.4+2.4,0);
\node [align=center,align=center] at (4+2.4+2.4,-0.35) {$v_1$};
\node [align=center,align=center] at (4.6+2.4+2.4,-0.35) {$v_j$};
\node [align=center,align=center] at (5.4+2.4+2.4,-0.5) {$h_3$};
\node [align=center,align=center] at (5.4+2.4+2.4,0) {$h_1$};
\node [align=center,align=center] at (5.4+2.4+2.4,0.5) {$h_2$};
\end{tikzpicture}\\
&\qquad\qquad\begin{tikzpicture}[scale=1.125]
\node [align=center,align=center] at (-0.65,0) {$+$};
\draw [fill] (0.6,0) circle [radius=0.06];
\draw [fill] (1.1,0) circle [radius=0.06];
\draw [thick] (0.1,0) -- (1.1,0);
\draw [thick] (1.6,0.35) -- (1.1,0);
\draw [thick] (1.6,-0.35) -- (1.1,0);
\node [align=center,align=center] at (0.6,-0.35) {$v_1$};
\node [align=center,align=center] at (1.1,-0.35) {$v_j$};
\node [align=center,align=center] at (1.8,0.35) {$h_3$};
\node [align=center,align=center] at (1.8,-0.35) {$h_2$};
\node [align=center,align=center] at (-0.1,0) {$h_1$};
\draw [->,>=stealth] (0.9,-0.1) .. controls (1.4,-0.5) and (1.4,0.5) ..  (0.9,0.1);
\node [align=center,align=center] at (-0.65+3,0) {$+$};
\draw [fill] (0.6+3,0) circle [radius=0.06];
\draw [fill] (1.1+3,0) circle [radius=0.06];
\draw [thick] (0.1+3,0) -- (1.1+3,0);
\draw [thick] (1.6+3,0.35) -- (1.1+3,0);
\draw [thick] (1.6+3,-0.35) -- (1.1+3,0);
\node [align=center,align=center] at (0.6+3,-0.35) {$v_1$};
\node [align=center,align=center] at (1.1+3,-0.35) {$v_j$};
\node [align=center,align=center] at (1.8+3,0.35) {$h_1$};
\node [align=center,align=center] at (1.8+3,-0.35) {$h_3$};
\node [align=center,align=center] at (-0.1+3,0) {$h_2$};
\draw [->,>=stealth] (0.9+3,-0.1) .. controls (1.4+3,-0.5) and (1.4+3,0.5) ..  (0.9+3,0.1);
\node [align=center,align=center] at (-0.65+6,0) {$+$};
\draw [fill] (0.6+6,0) circle [radius=0.06];
\draw [fill] (1.1+6,0) circle [radius=0.06];
\draw [thick] (0.1+6,0) -- (1.1+6,0);
\draw [thick] (1.6+6,0.35) -- (1.1+6,0);
\draw [thick] (1.6+6,-0.35) -- (1.1+6,0);
\node [align=center,align=center] at (0.6+6,-0.35) {$v_1$};
\node [align=center,align=center] at (1.1+6,-0.35) {$v_j$};
\node [align=center,align=center] at (1.8+6,0.35) {$h_2$};
\node [align=center,align=center] at (1.8+6,-0.35) {$h_1$};
\node [align=center,align=center] at (-0.1+6,0) {$h_3$};
\draw [->,>=stealth] (0.9+6,-0.1) .. controls (1.4+6,-0.5) and (1.4+6,0.5) ..  (0.9+6,0.1);
\end{tikzpicture}\\
&\qquad\qquad\begin{tikzpicture}[scale=1.125]
\node [align=center,align=center] at (0.65-3,0) {$+$};
\draw [fill] (-0.6,0) circle [radius=0.06];
\draw [fill] (-1.1,0) circle [radius=0.06];
\draw [thick] (-0.1,0) -- (-1.1,0);
\draw [thick] (-1.6,0.35) -- (-1.1,0);
\draw [thick] (-1.6,-0.35) -- (-1.1,0);
\node [align=center,align=center] at (-0.6,-0.35) {$v_j$};
\node [align=center,align=center] at (-1.1,-0.35) {$v_1$};
\node [align=center,align=center] at (-1.8,0.35) {$h_2$};
\node [align=center,align=center] at (-1.8,-0.35) {$h_3$};
\node [align=center,align=center] at (0.1,0) {$h_1$};
\draw [->,>=stealth] (-1.3,-0.1) .. controls (-0.8,-0.5) and (-0.8,0.5) ..  (-1.3,0.1);
\node [align=center,align=center] at (0.65,0) {$+$};
\draw [fill] (-0.6+3,0) circle [radius=0.06];
\draw [fill] (-1.1+3,0) circle [radius=0.06];
\draw [thick] (-0.1+3,0) -- (-1.1+3,0);
\draw [thick] (-1.6+3,0.35) -- (-1.1+3,0);
\draw [thick] (-1.6+3,-0.35) -- (-1.1+3,0);
\node [align=center,align=center] at (-0.6+3,-0.35) {$v_j$};
\node [align=center,align=center] at (-1.1+3,-0.35) {$v_1$};
\node [align=center,align=center] at (-1.8+3,0.35) {$h_3$};
\node [align=center,align=center] at (-1.8+3,-0.35) {$h_1$};
\node [align=center,align=center] at (0.1+3,0) {$h_2$};
\draw [->,>=stealth] (-1.3+3,-0.1) .. controls (-0.8+3,-0.5) and (-0.8+3,0.5) ..  (-1.3+3,0.1);
\node [align=center,align=center] at (0.65+3,0) {$+$};
\draw [fill] (-0.6+3+3,0) circle [radius=0.06];
\draw [fill] (-1.1+3+3,0) circle [radius=0.06];
\draw [thick] (-0.1+3+3,0) -- (-1.1+3+3,0);
\draw [thick] (-1.6+3+3,0.35) -- (-1.1+3+3,0);
\draw [thick] (-1.6+3+3,-0.35) -- (-1.1+3+3,0);
\node [align=center,align=center] at (-0.6+3+3,-0.35) {$v_j$};
\node [align=center,align=center] at (-1.1+3+3,-0.35) {$v_1$};
\node [align=center,align=center] at (-1.8+3+3,0.35) {$h_1$};
\node [align=center,align=center] at (-1.8+3+3,-0.35) {$h_2$};
\node [align=center,align=center] at (0.1+3+3,0) {$h_3$};
\draw [->,>=stealth] (-1.3+3+3,-0.1) .. controls (-0.8+3+3,-0.5) and (-0.8+3+3,0.5) ..  (-1.3+3+3,0.1);
\end{tikzpicture}\\
&\qquad\qquad\begin{tikzpicture}[scale=1.125]
\node [align=center,align=center] at (2.65,0) {$+$};
\draw [fill] (4,0) circle [radius=0.06];
\draw [fill] (4.6,0) circle [radius=0.06];
\draw [thick] (4,0) -- (4.6,0);
\draw [->,>=stealth] (3.8,-0.1) .. controls (4.3,-0.5) and (4.3,0.5) ..  (3.8,0.1);
\draw [thick] (3.4,0) -- (4,0);
\draw [thick] (3.4,0.5) -- (4,0);
\draw [thick] (3.4,-0.5) -- (4,0);
\node [align=center,align=center] at (4,-0.35) {$v_1$};
\node [align=center,align=center] at (4.6,-0.35) {$v_j$};
\node [align=center,align=center] at (3.2,-0.5) {$h_3$};
\node [align=center,align=center] at (3.2,0) {$h_2$};
\node [align=center,align=center] at (3.2,0.5) {$h_1$};
\node [align=center,align=center] at (5.05,0) {$+$};
\draw [fill] (4+2.4,0) circle [radius=0.06];
\draw [fill] (4.6+2.4,0) circle [radius=0.06];
\draw [thick] (4+2.4,0) -- (4.6+2.4,0);
\draw [->,>=stealth] (3.8+2.4,-0.1) .. controls (4.3+2.4,-0.5) and (4.3+2.4,0.5) ..  (3.8+2.4,0.1);
\draw [thick] (3.4+2.4,0) -- (4+2.4,0);
\draw [thick] (3.4+2.4,0.5) -- (4+2.4,0);
\draw [thick] (3.4+2.4,-0.5) -- (4+2.4,0);
\node [align=center,align=center] at (4+2.4,-0.35) {$v_1$};
\node [align=center,align=center] at (4.6+2.4,-0.35) {$v_j$};
\node [align=center,align=center] at (3.2+2.4,-0.5) {$h_1$};
\node [align=center,align=center] at (3.2+2.4,0) {$h_3$};
\node [align=center,align=center] at (3.2+2.4,0.5) {$h_2$};
\node [align=center,align=center] at (5.05+2.4,0) {$+$};
\draw [fill] (4+2.4+2.4,0) circle [radius=0.06];
\draw [fill] (4.6+2.4+2.4,0) circle [radius=0.06];
\draw [thick] (4+2.4+2.4,0) -- (4.6+2.4+2.4,0);
\draw [->,>=stealth] (3.8+2.4+2.4,-0.1) .. controls (4.3+2.4+2.4,-0.5) and (4.3+2.4+2.4,0.5) ..  (3.8+2.4+2.4,0.1);
\draw [thick] (3.4+2.4+2.4,0) -- (4+2.4+2.4,0);
\draw [thick] (3.4+2.4+2.4,0.5) -- (4+2.4+2.4,0);
\draw [thick] (3.4+2.4+2.4,-0.5) -- (4+2.4+2.4,0);
\node [align=center,align=center] at (4+2.4+2.4,-0.35) {$v_1$};
\node [align=center,align=center] at (4.6+2.4+2.4,-0.35) {$v_j$};
\node [align=center,align=center] at (3.2+2.4+2.4,-0.5) {$h_2$};
\node [align=center,align=center] at (3.2+2.4+2.4,0) {$h_1$};
\node [align=center,align=center] at (3.2+2.4+2.4,0.5) {$h_3$};
\node [align=center,align=center] at (9.8,-0.3) {$.$};
\end{tikzpicture}\\
\end{split}&&
\end{flalign*}

\end{Example}

Now let us define another operator $\cJ_{\{1,2\}}$.
Given a fat graph $\Gamma$ (not necessarily connected)
with $n$ vertices labelled by $v_1,v_2,\cdots, v_n$ ($n\geq 2$),
we consider the graph obtained from $\Gamma$ by
merging the two vertices $v_1$ and $v_2$ into a new vertex
and adding an additional loop on it,
such that the cyclic order on this new vertex is as:
\ben
\text{half-edges on $v_1$}
\quad\to\quad
h
\quad\to\quad
\text{half-edges on $v_1$}
\quad\to\quad
h',
\een
where $h$ and $h'$ are the two half-edges on the new loop,
and half-edges on $v_1$ (resp. $v_2$)
are arranged according to the original cyclic orders
on $v_1$ (resp. $v_2$).
Moreover, we need to label $v_1$ on this new vertex,
and relabel $v_2,v_3,\cdots,v_{n-1}$
on the original vertices $v_3,v_4,\cdots, v_n$ respectively.
Then $\cJ_{\{1,2\}}(\Gamma)$ is defined to be the summation of
all possible resulting graph.

\begin{Example}
Here we give an example to make this definition clear
(here we only show the action on the vertices $v_1$ and $v_2$):
\begin{flalign*}
\begin{split}
&\begin{tikzpicture}[scale=1.02]
\node [align=center,align=center] at (0.15,0) {$\cJ_{\{1,2\}}\bigg($};
\draw [fill] (2-0.3+0.15,0) circle [radius=0.06];
\node [align=center,align=center] at (2-0.3+0.15,-0.35) {$v_1$};
\draw [thick] (0+2-0.3+0.15,0) -- (-0.5+2-0.3+0.15,0.5);
\draw [thick] (0+2-0.3+0.15,0) -- (-0.5+2-0.3+0.15,-0.5);
\draw [thick] (1.7+0.15,0) -- (2.35+0.15,0);
\node [align=center,left] at (-0.5+2-0.25+0.15,0.5) {$h_1$};
\node [align=center,left] at (-0.5+2-0.25+0.15,-0.5) {$h_2$};
\node [align=center,right] at (2.3+0.15,0) {$h_3$};
\draw [->,>=stealth] (-0.2+1.7+0.15,-0.1) .. controls (0.3+1.7+0.15,-0.5) and (0.3+1.7+0.15,0.5) ..  (-0.2+1.7+0.15,0.1);
\node [align=center,align=center] at (4.6,0) {$\bigg)=$};
\draw [fill] (3.6,0) circle [radius=0.06];
\node [align=center,align=center] at (3.6+0.35,0) {$v_2$};
\draw [thick] (3.6,-0.55) -- (3.6,0.55);
\node [align=center,right] at (3.6,0.5) {$h_4$};
\node [align=center,right] at (3.6,-0.5) {$h_5$};
\draw [->,>=stealth] (-0.2+1.7+0.15+1.75,-0.1) .. controls (0.3+1.7+0.15+1.75,-0.5) and (0.3+1.7+0.15+1.75,0.5) ..  (-0.2+1.7+0.15+1.75,0.1);
\draw [fill] (0+7-0.8,0) circle [radius=0.06];
\node [align=center,align=center] at (0+7-0.8,-0.5) {$v_1$};
\draw [thick] (7.4,0) arc (0:174.5:0.6);
\draw [thick] (7.4,0) arc (360:185.5:0.6);
\draw [thick] (-0.5+7-0.8,0) -- (0+7-0.07-0.8,0);
\draw [thick] (-0.5+7-0.8,0.6) -- (0+7-0.05-0.8,0.05);
\draw [thick] (-0.5+7-0.8,-0.6) -- (0+7-0.05-0.8,-0.05);
\draw [thick] (0.05+7-0.8,0.05) -- (0.4+7-0.8,0.3);
\draw [thick] (0.05+7-0.8,-0.05) -- (0.4+7-0.8,-0.3);
\node [align=center,left] at (-0.45+7-0.8,0) {$h_2$};
\node [align=center,left] at (-0.45+7-0.8,0.6) {$h_1$};
\node [align=center,left] at (-0.45+7-0.8,-0.6) {$h_3$};
\node [align=center,right] at (0.35+7-0.8,0.3) {$h_4$};
\node [align=center,right] at (0.35+7-0.8,-0.3) {$h_5$};
\draw [->,>=stealth] (-0.2+7-0.8,-0.1) .. controls (0.3+7-0.8,-0.5) and (0.3+7-0.8,0.5) ..  (-0.2+7-0.8,0.1);
\node [align=center,align=center] at (7.9,0) {$+$};
\draw [fill] (0+7-0.8+3,0) circle [radius=0.06];
\node [align=center,align=center] at (0+7-0.8+3,-0.5) {$v_1$};
\draw [thick] (7.4+3,0) arc (0:174.5:0.6);
\draw [thick] (7.4+3,0) arc (360:185.5:0.6);
\draw [thick] (-0.5+7-0.8+3,0) -- (0+7-0.07-0.8+3,0);
\draw [thick] (-0.5+7-0.8+3,0.6) -- (0+7-0.05-0.8+3,0.05);
\draw [thick] (-0.5+7-0.8+3,-0.6) -- (0+7-0.05-0.8+3,-0.05);
\draw [thick] (0.05+7-0.8+3,0.05) -- (0.4+7-0.8+3,0.3);
\draw [thick] (0.05+7-0.8+3,-0.05) -- (0.4+7-0.8+3,-0.3);
\node [align=center,left] at (-0.45+7-0.8+3,0) {$h_2$};
\node [align=center,left] at (-0.45+7-0.8+3,0.6) {$h_1$};
\node [align=center,left] at (-0.45+7-0.8+3,-0.6) {$h_3$};
\node [align=center,right] at (0.35+7-0.8+3,0.3) {$h_5$};
\node [align=center,right] at (0.35+7-0.8+3,-0.3) {$h_4$};
\draw [->,>=stealth] (-0.2+7-0.8+3,-0.1) .. controls (0.3+7-0.8+3,-0.5) and (0.3+7-0.8+3,0.5) ..  (-0.2+7-0.8+3,0.1);
\end{tikzpicture}\\
&\quad
\begin{tikzpicture}[scale=1.02]
\node [align=center,align=center] at (7.9-3,0) {$+$};
\draw [fill] (0+7-0.8,0) circle [radius=0.06];
\node [align=center,align=center] at (0+7-0.8,-0.5) {$v_1$};
\draw [thick] (7.4,0) arc (0:174.5:0.6);
\draw [thick] (7.4,0) arc (360:185.5:0.6);
\draw [thick] (-0.5+7-0.8,0) -- (0+7-0.07-0.8,0);
\draw [thick] (-0.5+7-0.8,0.6) -- (0+7-0.05-0.8,0.05);
\draw [thick] (-0.5+7-0.8,-0.6) -- (0+7-0.05-0.8,-0.05);
\draw [thick] (0.05+7-0.8,0.05) -- (0.4+7-0.8,0.3);
\draw [thick] (0.05+7-0.8,-0.05) -- (0.4+7-0.8,-0.3);
\node [align=center,left] at (-0.45+7-0.8,0) {$h_3$};
\node [align=center,left] at (-0.45+7-0.8,0.6) {$h_2$};
\node [align=center,left] at (-0.45+7-0.8,-0.6) {$h_1$};
\node [align=center,right] at (0.35+7-0.8,0.3) {$h_4$};
\node [align=center,right] at (0.35+7-0.8,-0.3) {$h_5$};
\draw [->,>=stealth] (-0.2+7-0.8,-0.1) .. controls (0.3+7-0.8,-0.5) and (0.3+7-0.8,0.5) ..  (-0.2+7-0.8,0.1);
\node [align=center,align=center] at (7.9,0) {$+$};
\draw [fill] (0+7-0.8+3,0) circle [radius=0.06];
\node [align=center,align=center] at (0+7-0.8+3,-0.5) {$v_1$};
\draw [thick] (7.4+3,0) arc (0:174.5:0.6);
\draw [thick] (7.4+3,0) arc (360:185.5:0.6);
\draw [thick] (-0.5+7-0.8+3,0) -- (0+7-0.07-0.8+3,0);
\draw [thick] (-0.5+7-0.8+3,0.6) -- (0+7-0.05-0.8+3,0.05);
\draw [thick] (-0.5+7-0.8+3,-0.6) -- (0+7-0.05-0.8+3,-0.05);
\draw [thick] (0.05+7-0.8+3,0.05) -- (0.4+7-0.8+3,0.3);
\draw [thick] (0.05+7-0.8+3,-0.05) -- (0.4+7-0.8+3,-0.3);
\node [align=center,left] at (-0.45+7-0.8+3,0) {$h_3$};
\node [align=center,left] at (-0.45+7-0.8+3,0.6) {$h_2$};
\node [align=center,left] at (-0.45+7-0.8+3,-0.6) {$h_1$};
\node [align=center,right] at (0.35+7-0.8+3,0.3) {$h_5$};
\node [align=center,right] at (0.35+7-0.8+3,-0.3) {$h_4$};
\draw [->,>=stealth] (-0.2+7-0.8+3,-0.1) .. controls (0.3+7-0.8+3,-0.5) and (0.3+7-0.8+3,0.5) ..  (-0.2+7-0.8+3,0.1);
\node [align=center,align=center] at (7.9-3+6,0) {$+$};
\draw [fill] (0+7-0.8+6,0) circle [radius=0.06];
\node [align=center,align=center] at (0+7-0.8+6,-0.5) {$v_1$};
\draw [thick] (7.4+6,0) arc (0:174.5:0.6);
\draw [thick] (7.4+6,0) arc (360:185.5:0.6);
\draw [thick] (-0.5+7-0.8+6,0) -- (0+7-0.07-0.8+6,0);
\draw [thick] (-0.5+7-0.8+6,0.6) -- (0+7-0.05-0.8+6,0.05);
\draw [thick] (-0.5+7-0.8+6,-0.6) -- (0+7-0.05-0.8+6,-0.05);
\draw [thick] (0.05+7-0.8+6,0.05) -- (0.4+7-0.8+6,0.3);
\draw [thick] (0.05+7-0.8+6,-0.05) -- (0.4+7-0.8+6,-0.3);
\node [align=center,left] at (-0.45+7-0.8+6,0) {$h_1$};
\node [align=center,left] at (-0.45+7-0.8+6,0.6) {$h_3$};
\node [align=center,left] at (-0.45+7-0.8+6,-0.6) {$h_2$};
\node [align=center,right] at (0.35+7-0.8+6,0.3) {$h_4$};
\node [align=center,right] at (0.35+7-0.8+6,-0.3) {$h_5$};
\draw [->,>=stealth] (-0.2+7-0.8+6,-0.1) .. controls (0.3+7-0.8+6,-0.5) and (0.3+7-0.8+6,0.5) ..  (-0.2+7-0.8+6,0.1);
\node [align=center,align=center] at (7.9+6,0) {$+$};
\draw [fill] (0+7-0.8+3+6,0) circle [radius=0.06];
\node [align=center,align=center] at (0+7-0.8+3+6,-0.5) {$v_1$};
\draw [thick] (7.4+3+6,0) arc (0:174.5:0.6);
\draw [thick] (7.4+3+6,0) arc (360:185.5:0.6);
\draw [thick] (-0.5+7-0.8+3+6,0) -- (0+7-0.07-0.8+3+6,0);
\draw [thick] (-0.5+7-0.8+3+6,0.6) -- (0+7-0.05-0.8+3+6,0.05);
\draw [thick] (-0.5+7-0.8+3+6,-0.6) -- (0+7-0.05-0.8+3+6,-0.05);
\draw [thick] (0.05+7-0.8+3+6,0.05) -- (0.4+7-0.8+3+6,0.3);
\draw [thick] (0.05+7-0.8+3+6,-0.05) -- (0.4+7-0.8+3+6,-0.3);
\node [align=center,left] at (-0.45+7-0.8+3+6,0) {$h_1$};
\node [align=center,left] at (-0.45+7-0.8+3+6,0.6) {$h_3$};
\node [align=center,left] at (-0.45+7-0.8+3+6,-0.6) {$h_2$};
\node [align=center,right] at (0.35+7-0.8+3+6,0.3) {$h_5$};
\node [align=center,right] at (0.35+7-0.8+3+6,-0.3) {$h_4$};
\draw [->,>=stealth] (-0.2+7-0.8+3+6,-0.1) .. controls (0.3+7-0.8+3+6,-0.5) and (0.3+7-0.8+3+6,0.5) ..  (-0.2+7-0.8+3+6,0.1);
\node [align=center,align=center] at (16.6,-0.35) {$.$};
\end{tikzpicture}\\
\end{split}&&
\end{flalign*}

\end{Example}

Now we denote:
\be
\label{eq-abstract-npt-tilde}
\begin{split}
&\tcW_{g,n}:=
\delta_{g,0}\delta_{n,1}\cF_0^{(0)}
+\sum_{\mu\in \bZ_{>0}^n}\cF_g^{\mu},
\\
&\tcW_{g,I}:=
\delta_{g,0}\delta_{n,1}\cF_{0,I}^{(0)}
+\sum_{\mu\in \bZ_{>0}^n}\cF_{g,I}^{\mu},
\end{split}
\ee
or equivalently,
\be
\label{eq-npt-hat-tilde}
\begin{split}
&\tcW_{g,n}:=
\delta_{g,0}\delta_{n,1}\cF_0^{(0)}+
\sigma_1^{-1}\cdots \sigma_n^{-1}\cW_{g,n},\\
&\tcW_{g,I}:=
\delta_{g,0}\delta_{n,1}\cF_{0,I}^{(0)}+
\sigma_{i_1}^{-1}\cdots \sigma_{i_n}^{-1}\cW_{g,I},
\end{split}
\ee
(see \eqref{eq-operator-sigma}).
Then we can describe our main result in this subsection:

\begin{Theorem}
\label{thm-abstract-eotype-2}
For $(g,n)\not=(0,1)$,
we have:
\be
\label{eq-abstract-eotype-2}
\begin{split}
\tcW_{g,n}=&
\sigma_1^{-1}\bigg[\sum_{j=2}^{n}\cS_{\{1;j\}}\tcW_{g,n-1}
+
\cJ_{\{1,2\}}\bigg(
\tcW_{g-1,n+1}\\
&\qquad\quad
+\sum_{\substack{g_1+g_2=g\\I\sqcup J=[n+1]\backslash\{1,2\}}}
\tcW_{g_1,\{1\}\sqcup I}\cdot\tcW_{g_2,\{2\}\sqcup J}\bigg)
\bigg],
\end{split}
\ee
or equivalently,
\be
\label{eq-abstract-eotype-3}
\begin{split}
\cW_{g,n}=&
\sigma_2\sigma_3\cdots\sigma_n
\bigg[\sum_{j=2}^{n}\cS_{\{1;j\}}\tcW_{g,n-1}
+
\cJ_{\{1,2\}}\bigg(
\tcW_{g-1,n+1}\\
&\qquad\qquad\quad
+\sum_{\substack{g_1+g_2=g\\I\sqcup J=[n+1]\backslash\{1,2\}}}
\tcW_{g_1,\{1\}\sqcup I}\cdot\tcW_{g_2,\{2\}\sqcup J}\bigg)
\bigg].
\end{split}
\ee

\end{Theorem}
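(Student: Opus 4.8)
The plan is to obtain \eqref{eq-abstract-eotype-2} as the ``inverse'' of the edge-contraction recursion of Theorem \ref{thm-abstract-rec}, with the label-tracking operators $\cS_{\{1;j\}}$ and $\cJ_{\{1,2\}}$ playing the role of right inverses of $K_1$ acting at the vertex $v_1$. Concretely, I would fix $(g,n)\ne(0,1)$ and a connected fat graph $\Gamma$ of genus $g$ with labelled vertices $v_1,\dots,v_n$ and type $\mu=(\mu_1,\dots,\mu_n)$, and compute the coefficient of $\Gamma$ on both sides of \eqref{eq-abstract-eotype-3} (equivalently of \eqref{eq-abstract-eotype-2} after unravelling $\tcW$ via \eqref{eq-npt-hat-tilde}). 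Since the coefficient of $\Gamma$ in $\cW_{g,n}$ is $\mu_1\cdots\mu_n/|\Aut(\Gamma)|$, the task reduces to showing that the right-hand side of \eqref{eq-abstract-eotype-3} produces exactly this number. For this it is cleanest to go back to the arrow-decorated graphs $\vec\Gamma$ used in the proof of Theorem \ref{thm-abstract-rec} (one arrow on a half-edge at each vertex, with $\pi(f_g^\mu)=\mu_1\cdots\mu_n\cF_g^\mu$), rather than to Theorem \ref{thm-abstract-eotype} directly, since we must reconstruct $\cW_{g,n}$ itself and not merely $K_1\cW_{g,n}$.

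\textbf{Main step: a labelled refinement of the contraction bijection.} In the proof of Theorem \ref{thm-abstract-rec} it was shown that contracting the arrowed half-edge at $v_1$ is a bijection from $\vec\Gamma_g^{\mu,c}$ onto the disjoint union of the arrow-decorated sets with one fewer vertex (the $\cF_g^{(\mu_1+\mu_j-2,\dots)}$--terms), the genus-lowered set, and the two-component split locus. I would invert this bijection: starting from an arrow-decorated graph contributing to $\tcW_{g,n-1}$ (or to $\tcW_{g-1,n+1}$, or to a product $\tcW_{g_1,\{1\}\sqcup I}\cdot\tcW_{g_2,\{2\}\sqcup J}$), the operation ``un-contract the arrowed half-edge at $v_1$'' is precisely: split $v_1$ off and reconnect by a new non-loop edge (for the $\cS_{\{1;j\}}$--terms), or split a merge of $v_1$ and $v_2$ and reconnect by a loop (for the $\cJ_{\{1,2\}}$--terms). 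Tracking decorations shows that $\cS_{\{1;j\}}$ (resp.\ $\cJ_{\{1,2\}}$) acting on the undecorated graph realizes exactly the set of un-contractions summed over the contiguous arcs $H\subseteq H(v_1)$ (resp.\ over the half-edge pairs $(h_1,h_2)$), with the one datum it does not record being the position of the arrow at the new $v_1$; this is supplied by the $\sigma_1^{-1}$ in front, which divides by $\val(v_1)=\mu_1$, i.e.\ by the number of admissible positions of that arrow. The surviving $\sigma_2\cdots\sigma_n$ in \eqref{eq-abstract-eotype-3} versus the plain sums in $\tcW$ accounts for the weights $\mu_2\cdots\mu_n$ at the remaining vertices, and the built-in relabellings $v_2,\dots\mapsto v_2,\dots,\hat v_j,\dots$ in $\cS_{\{1;j\}}$, $\cJ_{\{1,2\}}$ and $\tcW_{g,I}$ match the relabelling conventions of Theorem \ref{thm-abstract-rec}.

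\textbf{Degenerate and constant-type contributions, and the equivalence of the two forms.} I would then check: (i) the hypothesis $(g,n)\ne(0,1)$ makes the right-hand side a well-defined element, each fixed $\Gamma$ receiving only finitely many contributions, since the problematic $\cW_{0,1}$ enters only through the stable pieces of $\tcW$; (ii) the loop sub-case in which the new loop bounds a disc, so that $\cJ_{\{1,2\}}$ produces a disconnected graph with an isolated vertex, must be matched with the $\cF_{0,\{1\}}^{(0)}\cF_{0,\{2\}}^{(0)}$ and $\cF_{0,\{k\}}^{(0)}\cF_{g,\dots}$ terms of \eqref{eq-abstract-rec}; (iii) vertices acquiring valence $0$ or $1$ carry no cyclic-order ambiguity, so their contribution to $|\Aut(\Gamma)|$ is transparent. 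Finally I would deduce \eqref{eq-abstract-eotype-3} from \eqref{eq-abstract-eotype-2} by applying $\sigma_1\cdots\sigma_n$ and using \eqref{eq-npt-hat-tilde} together with the fact that, for $(g,n)\ne(0,1)$, every graph in $\cW_{g,n}$ has all vertices of positive valence, so each $\sigma_i$ is invertible on the relevant graded piece.

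\textbf{Expected obstacle.} The hard part will be the precise bookkeeping of the factors $1/|\Aut(\Gamma)|$ under the splitting operations, in the spirit of \cite[Theorem 3.3]{dmss} and the proof of Theorem \ref{thm-abstract-rec}, now complicated by the labels and by the fact that $\cS_{\{1;j\}}$ and $\cJ_{\{1,2\}}$ may hit the same $\Gamma$ through several source graphs and several choices of $H$ or $(h_1,h_2)$. One must verify that the factor $\mu_1$ removed by $\sigma_1^{-1}$ is exactly compensated by the multiplicity with which a fixed $\Gamma$ arises, and that no graph in the image is over-counted when the chosen arc $H\subseteq H(v_1)$ or the pair $(h_1,h_2)$ is permuted by an automorphism of the source graph. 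Once this is settled, both displayed forms of the recursion follow.
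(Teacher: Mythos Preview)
Your proposal is correct and follows essentially the same route as the paper: the authors also state that \eqref{eq-abstract-eotype-2} is equivalent to the recursion \eqref{eq-abstract-eotype} of Theorem \ref{thm-abstract-eotype}, and that one obtains it by regarding $\cS_{\{1;j\}}$ and $\cJ_{\{1,2\}}$ as the inverse procedures of the edge-contraction $K_1$, constructing the graph-by-graph correspondence exactly as in the proof of Theorem \ref{thm-abstract-rec}. Your plan of working with the arrow-decorated graphs $\vec\Gamma$ and tracking how the factor $\mu_1$ is absorbed by $\sigma_1^{-1}$ is precisely what underlies that correspondence, and your handling of the degenerate loop cases (matching the $\cF_{0,\{k\}}^{(0)}$ terms) and of the equivalence between the two displayed forms via \eqref{eq-npt-hat-tilde} is the expected bookkeeping.
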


We omit the proof of this theorem since it is not hard to see that
\eqref{eq-abstract-eotype-2} is actually equivalent to
the recursion relation \eqref{eq-abstract-eotype}.
Similar to the discussions in \S \ref{sec-fat-abs-virasoro},
we may simply regard the operators $\cS_{\{1;j\}}$ and $\cJ_{\{1,2\}}$
as the inverse procedures of the edge-contraction in $K_1$,
then one can easily construct the
correspondence of graphs in two sides of the above equations
using the same method that we have used in the proof of
Theorem \ref{thm-abstract-rec}.

\begin{Example}
Let us give two examples of the above recursion relation.
Since $\tcW_{g,n}$ are all infinite sums,
here we simply take two graphs and compare their coefficients
in two sides of \eqref{eq-abstract-eotype-2}.

The first example is $(g,n)=(0,4)$.
and let us consider the following graph:
\ben
\begin{tikzpicture}[scale=1.5]
\node [align=center,align=center] at (-1.6,0) {$\Gamma_1=$};
\draw [fill] (0,0) circle [radius=0.0425];
\draw [fill] (0.6,0) circle [radius=0.0425];
\draw [fill] (-0.5,0.35) circle [radius=0.0425];
\draw [fill] (-0.5,-0.35) circle [radius=0.0425];
\node [align=center,align=center] at (0,-0.35) {$v_1$};
\node [align=center,align=center] at (0.85,0) {$v_2$};
\node [align=center,align=center] at (-0.75,0.35) {$v_3$};
\node [align=center,align=center] at (-0.75,-0.35) {$v_4$};
\draw [thick] (0,0) -- (0.6,0);
\draw [thick] (0,0) -- (-0.5,0.35);
\draw [thick] (0,0) -- (-0.5,-0.35);
\draw [->,>=stealth] (-0.2,-0.1) .. controls (0.3,-0.5) and (0.3,0.5) ..  (-0.2,0.1);
\node [align=center,align=center] at (1.15,-0.25) {$.$};
\end{tikzpicture}
\een
The coefficient of $\Gamma_1$ in the left-hand side of \eqref{eq-abstract-eotype-2}
is $\frac{1}{|\Aut(\Gamma_1)|}=1$.
Now let us compute its coefficient in the right-hand side.

Considering all possible edge-contractions of half-edges on $v_1$,
we see that $\Gamma_1$ can only be obtained from the following graph
in $\mathfrak{Fat}_0^{(2,1,1)}$ via the operators $\cS_{(1;j)}$:
\ben
\begin{tikzpicture}[scale=1.6]
\draw [fill] (0,0) circle [radius=0.0425];
\draw [fill] (0.6,0) circle [radius=0.0425];
\draw [fill] (1.2,0) circle [radius=0.0425];
\draw [thick] (0,0) -- (1.2,0);
\node [align=center,align=center] at (0,-0.3) {$v_2$};
\node [align=center,align=center] at (0.6,-0.3) {$v_1$};
\node [align=center,align=center] at (1.2,-0.3) {$v_3$};
\node [align=center,align=center] at (1.55,-0.15) {$.$};
\end{tikzpicture}
\een
And all the possible ways to obtain $\Gamma_1$ from the above graph are:
\ben
&&\begin{tikzpicture}[scale=1.55]
\node [align=center,align=center] at (-3.95,0) {$\cS_{\{1;2\}}\bigg($};
\draw [fill] (0-3-0.2,0) circle [radius=0.045];
\draw [fill] (0.6-3-0.2,0) circle [radius=0.045];
\draw [fill] (1.2-3-0.2,0) circle [radius=0.045];
\draw [thick] (0-3-0.2,0) -- (1.2-3-0.2,0);
\node [align=center,align=center] at (0-3-0.2,-0.35) {$v_2$};
\node [align=center,align=center] at (0.6-3-0.2,-0.35) {$v_1$};
\node [align=center,align=center] at (1.2-3-0.2,-0.35) {$v_3$};
\node [align=center,align=center] at (-1.4,0) {$\bigg)=$};
\draw [fill] (0,0) circle [radius=0.045];
\draw [fill] (0.6,0) circle [radius=0.045];
\draw [fill] (-0.5,0.35) circle [radius=0.045];
\draw [fill] (-0.5,-0.35) circle [radius=0.045];
\node [align=center,align=center] at (0,-0.35) {$v_1$};
\node [align=center,align=center] at (0.85,0) {$v_2$};
\node [align=center,align=center] at (-0.75,0.35) {$v_3$};
\node [align=center,align=center] at (-0.75,-0.35) {$v_4$};
\draw [thick] (0,0) -- (0.6,0);
\draw [thick] (0,0) -- (-0.5,0.35);
\draw [thick] (0,0) -- (-0.5,-0.35);
\draw [->,>=stealth] (-0.2,-0.1) .. controls (0.3,-0.5) and (0.3,0.5) ..  (-0.2,0.1);
\node [align=center,align=center] at (2.1,0) {$+$ other terms;};
\end{tikzpicture}\\
&&\begin{tikzpicture}[scale=1.55]
\node [align=center,align=center] at (-3.95,0) {$\cS_{\{1;3\}}\bigg($};
\draw [fill] (0-3-0.2,0) circle [radius=0.045];
\draw [fill] (0.6-3-0.2,0) circle [radius=0.045];
\draw [fill] (1.2-3-0.2,0) circle [radius=0.045];
\draw [thick] (0-3-0.2,0) -- (1.2-3-0.2,0);
\node [align=center,align=center] at (0-3-0.2,-0.35) {$v_2$};
\node [align=center,align=center] at (0.6-3-0.2,-0.35) {$v_1$};
\node [align=center,align=center] at (1.2-3-0.2,-0.35) {$v_3$};
\node [align=center,align=center] at (-1.4,0) {$\bigg)=$};
\draw [fill] (0,0) circle [radius=0.045];
\draw [fill] (0.6,0) circle [radius=0.045];
\draw [fill] (-0.5,0.35) circle [radius=0.045];
\draw [fill] (-0.5,-0.35) circle [radius=0.045];
\node [align=center,align=center] at (0,-0.35) {$v_1$};
\node [align=center,align=center] at (0.85,0) {$v_2$};
\node [align=center,align=center] at (-0.75,0.35) {$v_3$};
\node [align=center,align=center] at (-0.75,-0.35) {$v_4$};
\draw [thick] (0,0) -- (0.6,0);
\draw [thick] (0,0) -- (-0.5,0.35);
\draw [thick] (0,0) -- (-0.5,-0.35);
\draw [->,>=stealth] (-0.2,-0.1) .. controls (0.3,-0.5) and (0.3,0.5) ..  (-0.2,0.1);
\node [align=center,align=center] at (2.1,0) {$+$ other terms;};
\end{tikzpicture}\\
&&\begin{tikzpicture}[scale=1.55]
\node [align=center,align=center] at (-3.95,0) {$\cS_{\{1;4\}}\bigg($};
\draw [fill] (0-3-0.2,0) circle [radius=0.045];
\draw [fill] (0.6-3-0.2,0) circle [radius=0.045];
\draw [fill] (1.2-3-0.2,0) circle [radius=0.045];
\draw [thick] (0-3-0.2,0) -- (1.2-3-0.2,0);
\node [align=center,align=center] at (0-3-0.2,-0.35) {$v_2$};
\node [align=center,align=center] at (0.6-3-0.2,-0.35) {$v_1$};
\node [align=center,align=center] at (1.2-3-0.2,-0.35) {$v_3$};
\node [align=center,align=center] at (-1.4,0) {$\bigg)=$};
\draw [fill] (0,0) circle [radius=0.045];
\draw [fill] (0.6,0) circle [radius=0.045];
\draw [fill] (-0.5,0.35) circle [radius=0.045];
\draw [fill] (-0.5,-0.35) circle [radius=0.045];
\node [align=center,align=center] at (0,-0.35) {$v_1$};
\node [align=center,align=center] at (0.85,0) {$v_2$};
\node [align=center,align=center] at (-0.75,0.35) {$v_3$};
\node [align=center,align=center] at (-0.75,-0.35) {$v_4$};
\draw [thick] (0,0) -- (0.6,0);
\draw [thick] (0,0) -- (-0.5,0.35);
\draw [thick] (0,0) -- (-0.5,-0.35);
\draw [->,>=stealth] (-0.2,-0.1) .. controls (0.3,-0.5) and (0.3,0.5) ..  (-0.2,0.1);
\node [align=center,align=center] at (2.1,0) {$+$ other terms,};
\end{tikzpicture}\\
\een
thus the coefficient of $\Gamma_1$ in the right-hand side
of \eqref{eq-abstract-eotype-2} is $\frac{1}{3}(1+1+1)=1$.

Let us see another example.
Take $(g,n)=(1,2)$,
and let $\Gamma_2$ be:
\ben
\begin{tikzpicture}[scale=1.25]
\node [align=center,align=center] at (-1.5,0) {$\Gamma_2=$};
\draw [fill] (0,0) circle [radius=0.06];
\draw [fill] (1.3,0) circle [radius=0.06];
\draw [->,>=stealth] (-0.2,-0.1) .. controls (0.3,-0.5) and (0.3,0.5) ..  (-0.2,0.1);
\draw [thick] (0,0) arc (180:540:0.65);
\draw [thick] (0,0) arc (90:10:0.65);
\draw [thick] (0,0) arc (90:350:0.65);
\node [align=center,align=center] at (-0.1,-0.35) {$v_1$};
\node [align=center,align=center] at (1.55,0) {$v_2$};
\node [align=center,align=center] at (2.1,-0.3) {$.$};
\end{tikzpicture}
\een
The coefficient of $\Gamma_2$ in the left-hand side of \eqref{eq-abstract-eotype-2}
is $\frac{1}{|\Aut(\Gamma_2)|}=\frac{1}{2}$.

Now $\Gamma_2$ can be obtained in two ways
from the right-hand side of \eqref{eq-abstract-eotype-2}:
\begin{itemize}
\item[1)]
$\Gamma$ can be obtained by applying $\cJ_{\{1,2\}}$:
\ben
\begin{tikzpicture}[scale=1.2]
\node [align=center,align=center] at (-1,0) {$\cJ_{\{1,2\}}\bigg($};
\draw [fill] (0,0) circle [radius=0.06];
\draw [fill] (0.6,0) circle [radius=0.06];
\draw [fill] (1.2,0) circle [radius=0.06];
\draw [thick] (0,0) -- (1.2,0);
\node [align=center,align=center] at (0,-0.35) {$v_1$};
\node [align=center,align=center] at (0.6,-0.35) {$v_3$};
\node [align=center,align=center] at (1.2,-0.35) {$v_2$};
\node [align=center,align=center] at (1.95,0) {$\bigg)=$};
\draw [fill] (0+3.3,0) circle [radius=0.06];
\draw [fill] (1.3+3.3,0) circle [radius=0.06];
\draw [->,>=stealth] (-0.2+3.3,-0.1) .. controls (0.3+3.3,-0.5) and (0.3+3.3,0.5) ..  (-0.2+3.3,0.1);
\draw [thick] (0+3.3,0) arc (180:540:0.65);
\draw [thick] (0+3.3,0) arc (90:10:0.65);
\draw [thick] (0+3.3,0) arc (90:350:0.65);
\node [align=center,align=center] at (-0.1+3.3,-0.35) {$v_1$};
\node [align=center,align=center] at (1.55+3.3,0) {$v_2$};
\node [align=center,align=center] at (6.45,0) {$+$ other terms.};
\end{tikzpicture}
\een

\item[2)]
$\Gamma_2$ can also be obtained by applying $\cS_{\{1;j\}}$:
\ben
\begin{tikzpicture}[scale=1.2]
\node [align=center,align=center] at (-1.7,0) {$\cS_{\{1;2\}}\bigg(\quad\frac{1}{4}\times$};
\draw [fill] (0,0) circle [radius=0.06];
\draw [->,>=stealth] (-0.2,-0.1) .. controls (0.3,-0.5) and (0.3,0.5) ..  (-0.2,0.1);
\draw [thick] (0,0) arc (180:540:0.65);
\draw [thick] (0,0) arc (90:10:0.65);
\draw [thick] (0,0) arc (90:350:0.65);
\node [align=center,align=center] at (-0.1,-0.35) {$v_1$};
\node [align=center,align=center] at (2.05,0) {$\bigg)=$};
\draw [fill] (0+3.4,0) circle [radius=0.06];
\draw [fill] (1.3+3.4,0) circle [radius=0.06];
\draw [->,>=stealth] (-0.2+3.4,-0.1) .. controls (0.3+3.4,-0.5) and (0.3+3.4,0.5) ..  (-0.2+3.4,0.1);
\draw [thick] (0+3.4,0) arc (180:540:0.65);
\draw [thick] (0+3.4,0) arc (90:10:0.65);
\draw [thick] (0+3.4,0) arc (90:350:0.65);
\node [align=center,align=center] at (-0.1+3.4,-0.35) {$v_1$};
\node [align=center,align=center] at (1.55+3.4,0) {$v_2$};
\node [align=center,align=center] at (6.55,0) {$+$ other terms.};
\end{tikzpicture}
\een

\end{itemize}
Thus the coefficient of $\Gamma_2$
in the right-hand side of \eqref{eq-abstract-eotype-2}
is $\frac{1}{4}(1+1)=\frac{1}{2}$.

\end{Example}

\subsection{A quadratic recursion for abstract $n$-point functions}

\label{sec-qrec-npt}

In this subsection,
we give a reformulation the recursion relation in Theorem \ref{thm-abstract-eotype-2}.

Inspired by the formulation Eynard-Orantin topological recursion,
we want to modify the recursion \ref{eq-abstract-eotype-2} in such a way
that all terms of type $(g,n)=(0,1)$ are excluded from the summation
in the right-hand side of \eqref{eq-abstract-eotype-2}.
Notice that there are two such terms:
\be
\sigma_1^{-1}\circ \cJ_{\{1,2\}}
\bigg(
\tcW_{0,\{1\}}\cdot\tcW_{g,[n+1]\backslash\{1\}}+
\tcW_{g,[n+1]\backslash\{2\}}\cdot\tcW_{0,\{2\}}
\bigg),
\ee
and it is not hard to see that:
\be
\label{eq-mult-(0,1)}
\cJ_{\{1,2\}}\bigg(\tcW_{0,\{1\}}\cdot\tcW_{g,[n+1]\backslash\{1\}}\bigg)=
\cJ_{\{1,2\}}\bigg(\tcW_{g,[n+1]\backslash\{2\}}\cdot\tcW_{0,\{2\}}\bigg),
\ee
These two terms can be understood as some operator acting on
the abstract $n$-point function $\tcW_{g,n}$.
In fact,
they can be obtained from $\tcW_{g,n}$ by the following procedures:
First we choose a graph appearing in the expression of $\tcW_{g,n}$
(i.e., let $\Gamma_1$ be a graph of genus $g$ with $n$ vertices),
and let $\Gamma_2$ be a graph of genus $0$ with one single vertex
(here we allow the special case that $\Gamma_2$ is a single vertex of valence zero).
We choose a half-edge $h_1$ attached to the vertex $v_1\in V(\Gamma_1)$,
and a half-edge $h_2$ attached to the vertex in $\Gamma_2$.
Then we add a new loop $e$ to the vertex $v_1\in V(\Gamma_1)$
such that the original half-edges attached to $v_1$ are placed on the same side;
and we merge the vertex in $\Gamma_2$ with this new $v_1$ such that
the half-edges of $\Gamma_2$ are placed on the other side of $e$.
Moreover,
we require the cyclic orders are compatible
and $h_1$, $h_2$ are right after the two half-edges of $e$ respectively.
For example:
\begin{equation*}
\begin{tikzpicture}[scale=1]
\draw [fill] (0+11,0) circle [radius=0.06];
\draw [thick] (1+11,0) circle [radius=1];
\draw [thick] (-1+11,0) -- (0+11,0);
\draw [thick] (-1+11,0.8) -- (0+11,0);
\draw [thick] (-1+11,-0.8) -- (0+11,0);
\draw [thick] (0+11,0) -- (0.8+11,0.5);
\draw [thick] (0+11,0) -- (0.8+11,-0.5);
\node [align=center,left] at (-1+11,0.8) {$h_1$};
\node [align=center,right] at (0.8+11,-0.5) {$h_2$};
\node [align=center,right] at (2+11,0) {$e$};
\node [align=center,align=center] at (0.4+11,0) {$v_1$};
\draw [->,>=stealth] (-0.2+11,-0.1) .. controls (0.3+11,-0.5) and (0.3+11,0.5) ..  (-0.2+11,0.1);
\node [align=center,align=center] at (13.65,-0.4) {$.$};
\node [align=center,align=center] at (3+5+0.5,0) {$\mapsto$};
\draw [fill] (5-0.5,0) circle [radius=0.06];
\draw [fill] (6,0) circle [radius=0.06];
\node [align=center,align=center] at (5-0.5,-0.3) {$v_1$};
\node [align=center,align=center] at (6.25,-0.6) {$\Gamma_2$};
\draw [thick] (-1+5-0.5,0) -- (0+5-0.5,0);
\draw [thick] (-1+5-0.5,0.8) -- (0+5-0.5,0);
\draw [thick] (-1+5-0.5,-0.8) -- (0+5-0.5,0);
\draw [thick] (0+6,0) -- (0.8+6,0.5);
\draw [thick] (0+6,0) -- (0.8+6,-0.5);
\node [align=center,left] at (-1+5-0.5,0.8) {$h_1$};
\node [align=center,right] at (0.8+6,-0.5) {$h_2$};
\draw [->,>=stealth] (-0.2+5-0.5,-0.1) .. controls (0.3+5-0.5,-0.5) and (0.3+5-0.5,0.5) ..  (-0.2+5-0.5,0.1);
\draw [->,>=stealth] (-0.2+6,-0.1) .. controls (0.3+6,-0.5) and (0.3+6,0.5) ..  (-0.2+6,0.1);
\end{tikzpicture}
\end{equation*}
Finally,
we take summation over all possible $h_1,h_2$ and all possible $\Gamma_2$,
then this procedure describes \eqref{eq-mult-(0,1)} as an operator acting on
the abstract $n$-point function $\tcW_{g,n}$.
Let us denote this operator by $\cT$,
i.e.,
\be
\label{eq-operatorT}
\begin{split}
\cT(\tcW_{g,n}):=
&\cJ_{\{1,2\}}\bigg(\tcW_{0,\{1\}}\cdot\tcW_{g,[n+1]\backslash\{1\}}\bigg)\\
=&
\cJ_{\{1,2\}}\bigg(\tcW_{g,[n+1]\backslash\{2\}}\cdot\tcW_{0,\{2\}}\bigg),
\end{split}
\ee
then it is not hard to see that every graph appearing in the expression of $\cT(\tcW_{g,n})$
is still of genus $g$,
with $n$ vertices labelled by $v_1,v_2,\cdots,v_n$.

Then Theorem \ref{thm-abstract-eotype-2} gives us:

\begin{Theorem}
\label{thm-abstract-eotype-3}
For $(g,n)\not=(0,1)$,
we have:
\be
\label{eq-rec-abstractEO-2}
\begin{split}
(\sigma_1-2\cT)\tcW_{g,n}=&
\sum_{j=2}^{n}\cS_{\{1;j\}}\tcW_{g,n-1}
+
\cJ_{\{1,2\}}\bigg(
\tcW_{g-1,n+1}\\
&\qquad\quad
+\sum_{\substack{g_1+g_2=g\\ I\sqcup J=[n+1]\backslash\{1,2\}}}^s
\tcW_{g_1,\{1\}\sqcup I}\cdot\tcW_{g_2,\{2\}\sqcup J}\bigg),
\end{split}
\ee
where the notation $\sum\limits^s$ means that
we require $(g_1,I)\not=(0,\emptyset)$ and $(g_2,J)\not=(0,\emptyset)$
in this summation,
and $\sigma_1$ is defined by \eqref{eq-operator-sigma}.
Or equivalently,
by applying $\sigma_2\cdots\sigma_n$ to both sides we have:
\be
\label{eq-rec-abstractEO}
\begin{split}
&(1-2\cT\circ\sigma_1^{-1})\cW_{g,n}\\
=&
\sum_{j=2}^{n}\sigma_j\circ\cS_{\{1;j\}}\circ\sigma_1^{-1}\cW_{g,n-1}
+
\cJ_{\{1,2\}}\circ\sigma_1^{-1}\sigma_2^{-1}\bigg(
\cW_{g-1,n+1}\\
&\qquad\quad
+\sum_{\substack{g_1+g_2=g\\I\sqcup J=[n+1]\backslash\{1,2\}}}^s
\cW_{g_1,\{1\}\sqcup I}\cdot\cW_{g_2,\{2\}\sqcup J}\bigg).
\end{split}
\ee

\end{Theorem}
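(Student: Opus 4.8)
The plan is to deduce Theorem \ref{thm-abstract-eotype-3} from the already-established recursion \eqref{eq-abstract-eotype-2} of Theorem \ref{thm-abstract-eotype-2} by isolating and relocating the two ``unstable'' contributions appearing on its right-hand side. Inside the double sum $\sum_{g_1+g_2=g,\, I\sqcup J=[n+1]\backslash\{1,2\}} \tcW_{g_1,\{1\}\sqcup I}\cdot\tcW_{g_2,\{2\}\sqcup J}$ there are exactly two terms indexed by a pair of type $(0,1)$: the one with $(g_1,I)=(0,\emptyset)$, contributing $\tcW_{0,\{1\}}\cdot\tcW_{g,[n+1]\backslash\{1\}}$, and the one with $(g_2,J)=(0,\emptyset)$, contributing $\tcW_{g,[n+1]\backslash\{2\}}\cdot\tcW_{0,\{2\}}$. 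First I would prove the identity \eqref{eq-mult-(0,1)}, namely that these two expressions become equal after applying $\cJ_{\{1,2\}}$. Since $\tcW_{0,1}$ is a formal sum of one-vertex fat graphs (a single vertex carrying an arbitrary number of loops) and $\cJ_{\{1,2\}}$ merges $v_1$ with $v_2$, adds a loop, and relabels, this is a direct check that the bijection on labelled fat graphs determined by ``which of the two merged vertices carries the one-vertex component'' is weight-preserving and compatible with the $\frac{1}{|\Aut(-)|}$-coefficients.

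Granting \eqref{eq-mult-(0,1)}, I would introduce the operator $\cT$ via \eqref{eq-operatorT}, so that $\cJ_{\{1,2\}}$ applied to the two unstable terms together equals $2\cT(\tcW_{g,n})$; one then notes that every graph produced by $\cT$ is again of genus $g$ with vertices labelled $v_1,\dots,v_n$. Subtracting $2\cT(\tcW_{g,n})$ from both sides of the $\sigma_1$-multiplied form of \eqref{eq-abstract-eotype-2} yields \eqref{eq-rec-abstractEO-2}, in which the double sum now carries the restriction $\sum^s$. Finally I would apply $\sigma_2\circ\cdots\circ\sigma_n$ to both sides of \eqref{eq-rec-abstractEO-2}: using \eqref{eq-npt-hat-tilde}, i.e. $\tcW_{g,m}=\sigma_1^{-1}\cdots\sigma_m^{-1}\cW_{g,m}$ away from $(0,1)$, together with the facts that the operators $\sigma_j$ ($j\geq 2$) commute with one another and with $\cT$ (which only modifies the vertex $v_1$), the left-hand side becomes $(1-2\cT\circ\sigma_1^{-1})\cW_{g,n}$; and by tracking how $\cS_{\{1;j\}}$ and $\cJ_{\{1,2\}}$ interact with the relabelling, the right-hand side rearranges into $\sum_{j=2}^{n}\sigma_j\circ\cS_{\{1;j\}}\circ\sigma_1^{-1}\cW_{g,n-1} + \cJ_{\{1,2\}}\circ\sigma_1^{-1}\sigma_2^{-1}\big(\cW_{g-1,n+1}+\sum^s\cW\cdot\cW\big)$, which is \eqref{eq-rec-abstractEO}.

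I expect the main obstacle to be purely combinatorial bookkeeping: verifying \eqref{eq-mult-(0,1)} at the level of labelled fat graphs with their automorphism weights, and then keeping track of the relabelling conventions built into $\cS_{\{1;j\}}$, $\cJ_{\{1,2\}}$, and the various $\sigma_k$ carefully enough to confirm the precise placement of the $\sigma$-operators in \eqref{eq-rec-abstractEO}. No new structural input beyond Theorem \ref{thm-abstract-eotype-2} is needed; as already remarked in the text, \eqref{eq-abstract-eotype-2} is equivalent to \eqref{eq-abstract-eotype}, which descends from the basic quadratic recursion of Theorem \ref{thm-abstract-rec}, so the only genuine content here is the identification of the unstable contributions with $2\cT$ and the clean separation of the $(g,n)=(0,1)$ pieces.
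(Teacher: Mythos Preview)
Your proposal is correct and follows essentially the same route as the paper: the paper also derives \eqref{eq-rec-abstractEO-2} directly from Theorem~\ref{thm-abstract-eotype-2} by isolating the two $(0,1)$-type terms, invoking the identity \eqref{eq-mult-(0,1)}, packaging them as $2\cT(\tcW_{g,n})$ via \eqref{eq-operatorT}, and then passing to \eqref{eq-rec-abstractEO} by applying $\sigma_2\cdots\sigma_n$. The paper treats \eqref{eq-mult-(0,1)} as an easy observation rather than spelling out the bijection you describe, but otherwise your outline matches it step for step.
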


\begin{Remark}
As mentioned in the beginning of this section,
we want to relate this quadratic recursion for the abstract $n$-point functions $\cW_{g,n}$
to the Eynard-Orantin topological recursion.
In the following sections we'll see that the realization of the above recursion is
actually equivalent to the E-O topological recursion
in the example of Hermitian one-matrix models.
In general cases,
we formulate this idea as a conjecture in \S \ref{sec-conj}.

\end{Remark}

\subsection{Forgetting the labels on the abstract $n$-point functions}
\label{sec-qsc}

In this subsection we consider the abstract $n$-point functions
without the labels $v_1,\cdots,v_n$ on vertices
(similar to the case of the abstract free energy and the abstract partition function,
see \S \ref{sec-abstractpartition}).

Denote by $\cF_{g,n}$ the formal summation of fat graphs (without labels on vertices)
obtained by forgetting all the labels in
\be
\tcW_{g,n}=
\delta_{g,0}\delta_{n,1}\cF_0^{(0)}
+\sum_{\mu\in \bZ_{>0}^n}\cF_g^{\mu}.
\ee
In what follows let us reformulate the recursion \eqref{eq-rec-abstractEO-2} as
a quadratic recursion relation for $\cF_{g,n}$.
Recall that the previous subsections,
we have regarded the label $v_1$ as a special one
when performing the edge-contraction or vertex-splitting procedures.
Notice that the abstract $n$-point functions $\cW_{g,n}$ are invariant
under permutations of the labels $v_1,\cdots,v_n$,
thus the recursion \eqref{eq-rec-abstractEO-2} can also be reformulated in such a way
that the label $v_1$ is replaced by $v_j$ for $j=2,3,\cdots,n$.
Now we can take summation of all the $n$ recursion relations
(in which the special labels are $v_1,v_2,\cdots,v_n$ respectively)
and then forget all the labels,
and in this way we obtain the following:
\be
\label{eq-qrec-unlabel-npt}
\begin{split}
\sigma(\cF_{g,n}) - 2n\cdot\cJ(\cF_{0,1},\frac{\cF_{g,n}}{n})
=& n\cS(\cF_{g,n-1}) + n \bigg(\cJ(\frac{\cF_{g-1,n+1}}{n(n+1)})\\
&+
\sum_{\substack{g_1+g_2=g\\ i+j=n+1}}^s \binom{n-1}{i-1}\cdot
\cJ(\frac{\cF_{g_1,i}}{i},\frac{\cF_{g_2,j}}{j})
\bigg),
\end{split}
\ee
where $\sigma$ is the rescaling map
\ben
\sigma(\Gamma) = (\mu_1+\cdots+\mu_n) \Gamma,
\qquad
\Gamma\in \mathfrak{Fat}_{g}^{\mu,c},
\een
or equivalently,
$\sigma(\cF_{g,n})$ is obtained by forgetting all the labels in
$(\sigma_1+\cdots +\sigma_n)\tcW_{g,n}$.
And $\cJ(\Gamma_1,\Gamma_2)$ is defined to be the bilinear map such that:
\ben
\cJ(\Gamma_1,\Gamma_2):=
\sum_{\substack{h_1\in H(\Gamma_1)\\ h_2\in H(\Gamma_2)}}
\cJ' (\Gamma_1\cdot\Gamma_2, h_1,h_2),
\qquad
\Gamma_i\in \mathfrak{Fat}_{g_i}^{\mu_i,c},
\een
where $\cJ' (\Gamma_1\cdot\Gamma_2, h_1,h_2)$
is obtained from $\cJ (\Gamma_1\cdot\Gamma_2, h_1,h_2)$
(see Definition \ref{def-abs-opr-J}) by changing the hollow vertex to a solid one,
and $\Gamma_1\cdot\Gamma_2$ is the disjoint union of $\Gamma_1$ and $\Gamma_2$.
The linear operator $\cS$ is defined by:
\ben
\cS(\Gamma) := \sum_{k\geq 0}\sum_{l=0}^k
\cS_{k,l}' (\Gamma),
\een
where $\cS_{k,l}' (\Gamma)$ is obtained from $\cS_{k,l} (\Gamma)$
(see Definition \ref{def-vertexsplit-opr}) by changing the hollow vertex to a solid one.
And the linear map $\cJ(\Gamma)$ is defined by:
\ben
\cJ(\Gamma):= \sum_{v_1\in V(\Gamma)}
\sum_{v_2\in V(\Gamma)\backslash\{v_1\}}
\sum_{h_i \in H(v_i)} \cJ' (\Gamma, h_1, h_2),
\qquad
\Gamma\in \mathfrak{Fat}_{g}^{\mu,c}.
\een
Notice that the summation in the right-hand side of \eqref{eq-qrec-unlabel-npt}
do not contain the terms with $(g_1,i)=(0,1)$ or $(g_2,j)=(0,1)$,
thus one can rewrite \eqref{eq-qrec-unlabel-npt} as follows:
\begin{Lemma}
We have:
\be
\label{eq-qrec-unlabel-npt-2}
\begin{split}
\sigma(\cF_{g,n})
=& n\cS(\cF_{g,n-1}) + \frac{1}{n+1}\cJ(\cF_{g-1,n+1})\\
&+
\sum_{\substack{g_1+g_2=g\\ i+j=n+1}}
\frac{n}{ij}\binom{n-1}{i-1}\cdot
\cJ(\cF_{g_1,i},\cF_{g_2,j}).
\end{split}
\ee
\end{Lemma}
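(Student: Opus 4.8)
The plan is to deduce \eqref{eq-qrec-unlabel-npt-2} directly from \eqref{eq-qrec-unlabel-npt} by re-inserting the two ``unstable'' terms that were excluded from the primed sums, and then checking that the resulting bilinear expression collapses to the bracketed operator $\cJ(-)$. First I would recall from \eqref{eq-operatorT} that the operator $2n\cdot\cJ(\cF_{0,1},\tfrac{\cF_{g,n}}{n}) = 2\cJ(\cF_{0,1},\cF_{g,n})$ on the left-hand side of \eqref{eq-qrec-unlabel-npt} is precisely the (unlabelled, summed) realization of the two $(0,1)$-type terms $\cT\circ\sigma_1^{-1}$ that were removed from the right-hand side of \eqref{eq-rec-abstractEO}. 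So the strategy is: move this term to the right, observe that it exactly supplies the two missing summands $(g_1,i)=(0,1)$ and $(g_2,j)=(0,1)$ of the full (unrestricted) sum $\sum_{g_1+g_2=g,\ i+j=n+1}$, with the correct combinatorial coefficient, and thereby rewrite the restricted sum $\sum^s$ plus this term as the full sum.

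The key steps, in order, are as follows. (1) Identify the unlabelled realization of $\cT$: show that forgetting labels in $2\cT\circ\sigma_1^{-1}\cW_{g,n}$, after summing over which vertex plays the role of $v_1$, produces $2\cJ(\cF_{0,1},\cF_{g,n})$, where $\cJ(-,-)$ is the bilinear map defined after \eqref{eq-qrec-unlabel-npt}; the factor $2n$ against $\cJ(\cF_{0,1},\tfrac{\cF_{g,n}}n)$ in \eqref{eq-qrec-unlabel-npt} is just $2\cJ(\cF_{0,1},\cF_{g,n})$. (2) Check the coefficient bookkeeping: in \eqref{eq-qrec-unlabel-npt-2} the general term carries $\tfrac{n}{ij}\binom{n-1}{i-1}$; when $(g_1,i)=(0,1)$ this is $\tfrac{n}{1\cdot n}\binom{n-1}{0}=1$, and $\cF_{0,1}$ sits in the first slot, so the $(0,1)$ term of the full sum is exactly $\cJ(\cF_{0,1},\cF_{g,n})$; symmetrically the $(g_2,j)=(0,1)$ term is $\cJ(\cF_{g,n},\cF_{0,1})$, and since $\cJ(-,-)$ is symmetric on disjoint unions these two coincide, contributing $2\cJ(\cF_{0,1},\cF_{g,n})$ in total. (3) Therefore the full sum equals $\sum^s$ (the restricted sum) plus $2\cJ(\cF_{0,1},\cF_{g,n})$, and substituting this identity into \eqref{eq-qrec-unlabel-npt} converts the left side $\sigma(\cF_{g,n})-2\cJ(\cF_{0,1},\cF_{g,n})$ plus the right side into the clean form \eqref{eq-qrec-unlabel-npt-2}. (4) Finally, verify that the single-variable operator $\cJ(\cF_{g-1,n+1})$ appearing in \eqref{eq-qrec-unlabel-npt-2} with coefficient $\tfrac{1}{n+1}$ matches $n\cdot\cJ(\tfrac{\cF_{g-1,n+1}}{n(n+1)})$ from \eqref{eq-qrec-unlabel-npt}, which is immediate.

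The main obstacle I expect is step (1)–(2): getting the combinatorial normalizations exactly right when passing from the labelled recursion \eqref{eq-rec-abstractEO-2} to the unlabelled one. One must be careful that (a) the rescalings $\sigma_j$ and the operator $\sigma$ interact correctly with the $\tfrac{1}{n!}$-type symmetrization implicit in forgetting labels, so that summing the $n$ labelled recursions (with $v_1$ replaced by each $v_j$) and dividing appropriately yields the stated coefficients $\tfrac{n}{ij}\binom{n-1}{i-1}$; and (b) that the bilinear operator $\cJ(\Gamma_1,\Gamma_2)$ built from $\cJ'$ genuinely realizes $\cJ_{\{1,2\}}$ after label-forgetting, including the $(0,1)$ special cases where $\Gamma_2=\cF_0^{(0)}$ is a single valence-zero vertex. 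Once these normalization checks are in place, the passage from \eqref{eq-qrec-unlabel-npt} to \eqref{eq-qrec-unlabel-npt-2} is a purely formal rearrangement: absorbing the $(0,1)$ boundary terms back into an unrestricted sum. I would organize the proof so that (1) and (2) are verified once and for all, and then present step (3) as a one-line substitution.
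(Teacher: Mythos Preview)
Your approach is correct and is exactly the paper's: starting from \eqref{eq-qrec-unlabel-npt}, move the term $2\cJ(\cF_{0,1},\cF_{g,n})$ to the right and recognize it as the two missing $(g_1,i)=(0,1)$ and $(g_2,j)=(0,1)$ summands of the unrestricted sum (since $\tfrac{n}{1\cdot n}\binom{n-1}{0}=1$), while $n\cdot\tfrac{1}{n(n+1)}=\tfrac{1}{n+1}$ handles the $\cJ(\cF_{g-1,n+1})$ coefficient. One remark: your worries in ``obstacles'' (1)--(2) about the label-forgetting normalizations are not needed for this lemma, because \eqref{eq-qrec-unlabel-npt} is already established in the paper and is your starting point; the only content of the lemma is the purely formal rearrangement in your step~(3).
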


Now for every $m\geq 0$, define:
\be
\cF_{(m)} := \sum_{2g-1+n = m} \frac{1}{n!}\cF_{g,n},
\ee
then:
\begin{equation*}
\sigma(\cF_{(m)})
=\sum_{2g-1+n=m}\bigg(
\frac{\cS(\cF_{g,n-1})}{(n-1)!} + \frac{\cJ(\cF_{g-1,n+1})}{(n+1)!}+
\sum_{\substack{g_1+g_2=g\\ i+j=n+1}}
\frac{\cJ(\cF_{g_1,i},\cF_{g_2,j})}{i!\cdot j!}
\bigg),
\end{equation*}
or equivalently:
\begin{Theorem}
\label{thm-abstract-qsc}
For every $m\geq 0$ we have:
\be
\label{eq-abstract-qsc}
\sigma(\cF_{(m)})
=\cS (\cF_{(m-1)}) + \cJ (\cF_{(m-1)}) + \sum_{\substack{i+j=m\\i,j\geq 0}} \cJ(\cF_{(i)},\cF_{(j)}),
\ee
where we use the convention $\cF_{(-1)}:=0$.
\end{Theorem}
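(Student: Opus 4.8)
The plan is to derive \eqref{eq-abstract-qsc} directly from the already-established recursion \eqref{eq-qrec-unlabel-npt-2} by summing over all pairs $(g,n)$ with $2g-1+n=m$, weighting each term by $\frac{1}{n!}$, and matching the combinatorial factors on both sides. Since Lemma \eqref{eq-qrec-unlabel-npt-2} is stated earlier in the excerpt, I may assume it; the task is purely to repackage it in terms of the generating series $\cF_{(m)}$.

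First I would multiply \eqref{eq-qrec-unlabel-npt-2} by $\frac{1}{n!}$ and sum over all $(g,n)$ with $2g-1+n=m$, i.e. over all $n\ge 1$ together with the corresponding $g\ge 0$ satisfying $n=m+1-2g$ (only finitely many terms since $g\le (m+1)/2$). The left-hand side becomes $\sigma(\cF_{(m)})$ by definition of $\cF_{(m)}$ and linearity of $\sigma$. On the right-hand side I would treat the three summands separately. For the first, $\sum \frac{n}{n!}\cS(\cF_{g,n-1}) = \sum \frac{1}{(n-1)!}\cS(\cF_{g,n-1})$; reindexing $n-1\mapsto n'$ and noting that $2g-1+(n'+1)=m$ is equivalent to $2g-1+n'=m-1$, this collapses to $\cS(\cF_{(m-1)})$ by linearity of $\cS$. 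Similarly the term $\frac{1}{n+1}\cdot\frac{1}{n!}\cJ(\cF_{g-1,n+1}) = \frac{1}{(n+1)!}\cJ(\cF_{g-1,n+1})$; reindexing $g-1\mapsto g'$, $n+1\mapsto n'$ turns the constraint $2g-1+n=m$ into $2g'-1+n'=m-1$, yielding $\cJ(\cF_{(m-1)})$. These are routine index-shuffling computations.

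The third summand is the main point and the place I would be most careful. After multiplying by $\frac{1}{n!}$ it reads
\[
\sum_{\substack{g_1+g_2=g\\ i+j=n+1}}
\frac{1}{n!}\cdot\frac{n}{ij}\binom{n-1}{i-1}\,\cJ(\cF_{g_1,i},\cF_{g_2,j}).
\]
The key identity is $\frac{n}{n!}\binom{n-1}{i-1} = \frac{1}{(i-1)!(j-1)!}$ when $i+j=n+1$, since $\binom{n-1}{i-1}=\frac{(n-1)!}{(i-1)!(j-1)!}$. Hence the weight is $\frac{1}{(i-1)!(j-1)!\cdot ij} = \frac{1}{i!\,j!}$, exactly as recorded in the display just before the statement of Theorem \ref{thm-abstract-qsc}. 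Now summing over all $(g,n)$ with $2g-1+n=m$ and writing $g=g_1+g_2$, $n+1=i+j$, the global constraint becomes $2(g_1+g_2)-1+(i+j-1)=m$, i.e. $(2g_1-1+i)+(2g_2-1+j)=m$. Setting $a:=2g_1-1+i$ and $b:=2g_2-1+j$ with $a+b=m$, the sum over the fibers of $(g_1,i)$ with $2g_1-1+i=a$ produces $\cF_{(a)}$ (and likewise $\cF_{(b)}$) by the definition of $\cF_{(\cdot)}$ and the bilinearity of $\cJ$, giving $\sum_{a+b=m}\cJ(\cF_{(a)},\cF_{(b)})$. I would also check the boundary cases $a=-1$ or $b=-1$: these force $i=0$ or $j=0$ with $g_1=0$ or $g_2=0$, which do not occur among the indices appearing in \eqref{eq-qrec-unlabel-npt-2}, consistent with the convention $\cF_{(-1)}:=0$.

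The only genuine obstacle is bookkeeping: verifying that the reindexed constraints line up precisely and that no terms of type $(0,1)$ sneak in or are erroneously dropped. In particular one should double-check that the passage from \eqref{eq-qrec-unlabel-npt} (which contains the $\cT$-type term $2n\cdot\cJ(\cF_{0,1},\cF_{g,n}/n)$ on the left) to \eqref{eq-qrec-unlabel-npt-2} has already absorbed the $(0,1)$ contributions correctly — but since \eqref{eq-qrec-unlabel-npt-2} is given as a lemma I may take it as the starting point, so this is not an obstruction here. With that in hand the proof is a one-line consequence: sum \eqref{eq-qrec-unlabel-npt-2} over $2g-1+n=m$ with weights $\frac{1}{n!}$, apply the binomial identity above, and regroup.
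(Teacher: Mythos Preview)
Your proposal is correct and follows exactly the paper's approach: divide \eqref{eq-qrec-unlabel-npt-2} by $n!$, sum over $2g-1+n=m$, and use the binomial identity $\frac{n}{n!}\binom{n-1}{i-1}\frac{1}{ij}=\frac{1}{i!\,j!}$ to repackage the three summands as $\cS(\cF_{(m-1)})$, $\cJ(\cF_{(m-1)})$, and $\sum_{a+b=m}\cJ(\cF_{(a)},\cF_{(b)})$. The paper in fact displays precisely the intermediate line $\sigma(\cF_{(m)})=\sum_{2g-1+n=m}\bigl(\frac{\cS(\cF_{g,n-1})}{(n-1)!}+\frac{\cJ(\cF_{g-1,n+1})}{(n+1)!}+\sum\frac{\cJ(\cF_{g_1,i},\cF_{g_2,j})}{i!\,j!}\bigr)$ and then simply reindexes, which is your argument with fewer words.
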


\begin{Remark}
In the next section,
we will consider the realization of the abstract $n$-point functions
by the $n$-point functions of the Hermitian one-matrix models,
and in this case the above recursion relation will give a Schr\"{o}dinger type equation.
Using a result by the second author in \cite{zhou11},
in \S \ref{sec-conj} we will see that it is the quantum spectral curve
for the Hermitian one-matrix models.
\end{Remark}

\section{Realization by $N$-Point Functions of Hermitian One-Matrix Models}
\label{sec-1mm-npt}

In this section we consider the realization of the abstract $n$-point functions
by the $n$-point functions of the Hermitian one-matrix models.
We show that the quadratic recursion \eqref{eq-rec-abstractEO} is realized by
a quadratic differential equation derived by the second author in \cite{zhou11}
using Virasoro constraints.
Moreover,

\subsection{Realization of the abstract $n$-point functions}
\label{sec-realization-1mm-EO}

In this subsection,
we construct the realization of the abstract $n$-point functions $\cW_{g,n}$
by the Hermitian one-matrix models.

Let $x:=(x_1,x_2,\cdots)$ be a family of formal variables,
and $\mu:=(\mu_1,\mu_2\cdots, \mu_n)$.
Given a fat graph $\Gamma\in \mathfrak{Fat}_g^{\mu}$,
we assign the following Feynman rule:
\be
\label{eq-FR-1mm-EOtype}
w_\Gamma:=x_1^{-(\mu_1+1)}x_2^{-(\mu_2+1)}\cdots x_n^{-(\mu_n+1)}
\cdot t^{|F(\Gamma)|},
\ee
where $t=Hg_s$ is the 't Hooft coupling constant
(see \S \ref{sec-realization-herm-1mm}).
Moreover,
given a set of indices $I=\{i_1,i_2,\cdots,i_n\}\subset\bZ_{>0}$
and $\Gamma\in\mathfrak{Fat}_{g,I}^{\mu,c}$ (see \S \ref{sec-abs-qrec}),
we assign:
\be
w_\Gamma:=x_{i_1}^{-(\mu_1+1)}x_{i_2}^{-(\mu_2+1)}\cdots x_{i_n}^{-(\mu_n+1)}
\cdot t^{|F(\Gamma)|}.
\ee
Then the abstract $n$-point functions $\cW_{g,n}$ are realized by
the following $n$-point functions of the Hermitian one-matrix model:
\be
\label{eq-herm-npt}
W_{g,n}^{\text{Herm}}(x_1,\cdots,x_n):=
\delta_{g,0}\delta_{n,1}\cdot t x_1^{-1}+
\sum_{\mu\in \bZ_{>0}^n}
\langle p_{\mu_1}\cdots p_{\mu_n}
\rangle_g^c
\cdot x_1^{-(\mu_1+1)}\cdots x_n^{-(\mu_n+1)},
\ee
where $\langle p_{\mu_1} \cdots p_{\mu_n} \rangle_g^c$
are fat correlators of the Hermitian one-matrix models.
And similarly,
the abstract $n$-point functions with shifted indices $\cW_{g,I}$
(defined by \eqref{eq-abstractnpt-relabel}) are realized by:
\begin{equation*}
W_{g,n}^{\text{Herm}}(x_{i_1},\cdots,x_{i_n})=
\delta_{g,0}\delta_{n,1}\cdot t x_{i_1}^{-1}+
\sum_{\mu\in \bZ_{>0}^n}
\langle p_{\mu_1}\cdots p_{\mu_n}
\rangle_g^c
\cdot x_{i_1}^{-(\mu_1+1)}\cdots x_{i_n}^{-(\mu_n+1)}.
\end{equation*}

\begin{Example}
$W_{0,1}^{\text{Herm}}$ is given by the generating series of Catalan numbers:
\begin{equation*}
\begin{split}
W_{0,1}^{\text{Herm}}(x_1)=&\frac{t}{x_1}+
\sum_{n>0}C_n\cdot\frac{t^{n+1}}{x_1^{2n+1}}\\
=&\half\bigg(
x_1-\sqrt{x_1^2-4t}
\bigg)\\
=&\frac{t}{x_1}+
\frac{t^{2}}{x_1^{3}}+
\frac{2t^{3}}{x_1^{5}}+
\frac{5t^{4}}{x_1^{7}}+
\frac{14t^{5}}{x_1^{9}}+
\cdots.
\end{split}
\end{equation*}
\end{Example}

\subsection{Realization of the quadratic recursion}
\label{sec-realizationrec-npt}

Now let us consider the realization of the quadratic recursion \eqref{eq-rec-abstractEO}
in the case of Hermitian one-matrix models.
In this way we recover a quadratic recursion proved in \cite{zhou11}.

We construct the realization of \eqref{eq-rec-abstractEO} term by term.
First, the term
\ben
\cT\circ\sigma_1^{-1}\cW_{g,n}=
\sigma_2^{-1}
\cJ_{\{1,2\}}\bigg(\tcW_{0,1}\cdot\cW_{g,[n+1]\backslash\{1\}}
\bigg)
\een
is realized by:
\be
x_1^{-1}\cdot
W_{0,1}^{\text{Herm}}(x_1)\cdot W_{g,n}^{\text{Herm}}(x_1,x_2,\cdots,x_n),
\ee
since we have
\ben
\frac{1}{x_1^{(\mu_1+\mu_2+2)+1}}
=\frac{1}{x_1}\cdot
\frac{1}{x_1^{\mu_1+1}}\cdot\frac{1}{x_1^{\mu_2+1}},
\een
and the operator $\cJ_{\{1,2\}}$ produces a factor $\mu_1\cdot\mu_2$
(the number of ways to adding a loop)
which kills $\sigma_2^{-1}$ and changes $\tcW_{0,1}$ to $\cW_{0,1}$.
Similarly,
the term
\ben
\cJ_{\{1,2\}}\circ\sigma_1^{-1}\sigma_2^{-1}\bigg(
\cW_{g-1,n+1}
+\sum_{\substack{g_1+g_2=g\\I\sqcup J=[n+1]\backslash\{1,2\}}}^s
\cW_{g_1,\{1\}\sqcup I}\cdot\cW_{g_2,\{2\}\sqcup J}\bigg)
\een
is realized by:
\be
\begin{split}
&E_{x_1,u,v}^{\text{Herm}} W_{g-1,n+1}^{\text{Herm}}(u,v,x_2,\cdots,x_n)\\
+&
\sum_{\substack{g_1+g_2=g\\I\sqcup J=[n]\backslash\{1\}}}^s
x_1^{-1}\cdot
W_{g_1,|I|+1}^{\text{Herm}}(x_1,x_I)W_{g_2,|J|+1}^{\text{Herm}}(x_1,x_J),
\end{split}
\ee
where $x_I:=(x_{i_1},\cdots,x_{i_k})$ for $I=\{i_1,\cdots,i_k\}$,
and the operator $E_{a,b,c}^{\text{Herm}}$ is defined by
(see \cite[\S 3.5]{zhou8}):
\be
E_{a,b,c}^{\text{Herm}} f(b,c)=
a^{-1}\cdot
\big(\lim_{b\to c}f(b,c)\big)|_{c=a}.
\ee
Finally,
we need to construct the realizations of the terms
$\sigma_j\circ\cS_{\{1;j\}}\circ\sigma_1^{-1}\cW_{g,n-1}$
(for $j=2,\cdots,n$).
Similar to the computations in \cite[\S 3.5]{zhou8},
we need to consider the following transformation:
\begin{equation*}
\begin{split}
\frac{1}{x_1^{m+1}}\quad\mapsto\quad&
\sum_{\substack{k+l=m\\k,l\geq 0}}(l+1)\cdot
\frac{1}{x_1^{(k+1)+1}}\cdot\frac{1}{x_j^{(l+1)+1}}\\
=&\frac{1}{(x_1-x_j)^2}\bigg[\bigg(
\frac{1}{x_1^{m+1}}\cdot\frac{1}{x_1}
+\frac{1}{x_j^{m+1}}\bigg(
\frac{m+1}{x_j}-\frac{m+2}{x_1}
\bigg)
\bigg].
\end{split}
\end{equation*}
Define an operator $D_{a,b}^{\text{Herm}}$ by:
\be
D_{a,b}^{\text{Herm}} f(a):=\frac{1}{(a-b)^2}\bigg(
\frac{1}{a}f(a)
-\frac{\pd}{\pd b}f(b)
+\frac{b^2}{a}\cdot\frac{\pd}{\pd b}\big(\frac{1}{b}f(b)\big)
\bigg),
\ee
then $\sigma_j\circ\cS_{\{1;j\}}\circ\sigma_1^{-1}\cW_{g,n-1}$ is realized by:
\be
D_{x_1,x_j}^{\text{Herm}} W_{g,n-1}^{\text{Herm}} (x_1,\cdots,\hat{x}_j,\cdots,x_n),
\ee
where $\hat{x}_j$ means deleting the term $x_j$.

Therefore,
the recursion \eqref{eq-rec-abstractEO} is realized by:
\be
\begin{split}
&\big(1-2x_1^{-1}\cdot W_{0,1}^{\text{Herm}}(x_1)\big)W_{g,n}^{\text{Herm}}(x_1,\cdots,x_n)\\
=&\sum_{j=2}^n D_{x_1,x_j}^{\text{Herm}} W_{g,n-1}^{\text{Herm}}(x_1,\cdots,\hat{x}_j,\cdots,x_n)
+E_{x_1,u,v}W_{g-1,n+1}^{\text{Herm}}(u,v,x_2,\cdots,x_n)\\
&+
\sum_{\substack{g_1+g_2=g\\I\sqcup J=[n]\backslash\{1\}}}^s
x_1^{-1}\cdot
W_{g_1,|I|+1}^{\text{Herm}}(x_1,x_I)W_{g_2,|J|+1}^{\text{Herm}}(x_1,x_J),
\end{split}
\ee
for every $(g,n)\not=(0,1)$.
Now denote:
\be
\tD_{a,b}^{\text{Herm}}:=\frac{D_{a,b}^{\text{Herm}}}{1-2a^{-1}\cdot W_{0,1}^{\text{Herm}}(a)},
\qquad
\tE_{a,b,c}^{\text{Herm}}:=\frac{E_{a,b,c}^{\text{Herm}}}{1-2a^{-1}\cdot W_{0,1}^{\text{Herm}}(a)},
\ee
then we recover one of the main results in \cite{zhou11}:

\begin{Theorem}
[\cite{zhou11}]
\label{thm-EOtype-rec-1mm}

We have the following quadratic recursion for
the $n$-point functions $W_{g,n}^{\text{Herm}}(x_1,\cdots,x_n)$
of the Hermitian one-matrix models:
\be
\label{eq-EOtype-rec-1mm}
\begin{split}
&W_{g,n}^{\text{Herm}}(x_1,x_2,\cdots,x_n)\\
=&\sum_{j=2}^n \tD_{x_1,x_j}^{\text{Herm}} W_{g,n-1}^{\text{Herm}}(x_1,\cdots,\hat{x}_j,\cdots,x_n)
+\tE_{x_1,u,v}^{\text{Herm}} W_{g-1,n+1}^{\text{Herm}}(u,v,x_2,\cdots,x_n)\\
&+
\sum_{\substack{g_1+g_2=g\\I\sqcup J=[n]\backslash\{1\}}}^s
\tE_{x_1,u,v}^{\text{Herm}}\bigg(
W_{g_1,|I|+1}^{\text{Herm}}(u,x_I)\cdot W_{g_2,|J|+1}^{\text{Herm}}(v,x_J)\bigg),
\end{split}
\ee
if $(g,n)\not=(0,1)$.
\end{Theorem}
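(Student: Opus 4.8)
The plan is to derive \eqref{eq-EOtype-rec-1mm} directly as the realization of the abstract recursion \eqref{eq-rec-abstractEO} under the Feynman rule \eqref{eq-FR-1mm-EOtype}, following term by term the translation already sketched in \S\ref{sec-realizationrec-npt}. The only thing that remains to be done is to verify carefully that each of the four abstract operators appearing in \eqref{eq-rec-abstractEO}---namely $\sigma_1-2\cT\circ\sigma_1^{-1}$ on the left, and $\sigma_j\circ\cS_{\{1;j\}}\circ\sigma_1^{-1}$, $\cJ_{\{1,2\}}\circ\sigma_1^{-1}\sigma_2^{-1}$ acting on the Euler-characteristic-negative part, and the same $\cJ$ operator on the stable product terms on the right---is realized by the claimed differential/functional operator. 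Since the realization functor $\Gamma\mapsto w_\Gamma$ is $\bQ$-linear and multiplicative on disjoint unions, and since $\cW_{g,n}$ is realized by $W_{g,n}^{\text{Herm}}$ by \eqref{eq-herm-npt}, it suffices to check the action on a single monomial $x_1^{-(\mu_1+1)}\cdots x_n^{-(\mu_n+1)}t^{|F|}$ and track how each graph operation changes the valences, the number of faces, and the label data.

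First I would handle the left-hand side: the operator $\sigma_1$ multiplies a graph of type $\mu$ by $\mu_1$, which on the monomial side is the Euler operator $-x_1\,\partial_{x_1}-1$ shifted appropriately; concretely, applying $\sigma_1$ and then reading off $w_\Gamma$ gives $W_{g,n}^{\text{Herm}}$ back up to the bookkeeping already done in \eqref{eq-abstract-npt-tilde}, so that $\sigma_1\tcW_{g,n}$ realizes to $W_{g,n}^{\text{Herm}}$ itself. For the $\cT\circ\sigma_1^{-1}$ term I would use the identity \eqref{eq-operatorT} together with the observation that $\cT$ merges $v_1$ with a genus-zero one-vertex graph via a new loop, producing a factor $\mu_1\mu_2$ from the number of ways to insert the loop, and that $\frac{1}{x_1^{(\mu_1+\mu_2+2)+1}}=\frac{1}{x_1}\cdot\frac{1}{x_1^{\mu_1+1}}\cdot\frac{1}{x_1^{\mu_2+1}}$; this yields the factor $2x_1^{-1}W_{0,1}^{\text{Herm}}(x_1)$ and hence the realization of the whole left side is $(1-2x_1^{-1}W_{0,1}^{\text{Herm}}(x_1))\,W_{g,n}^{\text{Herm}}$. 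Then I would treat the right side: the $\cJ_{\{1,2\}}$ terms (both the $\cW_{g-1,n+1}$ term and the stable sum) realize to the operator $E_{x_1,u,v}^{\text{Herm}}$ and to the product $x_1^{-1}W_{g_1,|I|+1}^{\text{Herm}}(x_1,x_I)W_{g_2,|J|+1}^{\text{Herm}}(x_1,x_J)$ respectively, exactly because gluing a loop identifies two half-edge indices---this is the "diagonal restriction" $\lim_{b\to c}$ followed by $|_{c=a}$ built into $E^{\text{Herm}}$; here I must be careful that in the stable product the two vertices being merged lie in different connected components, so no loop-genus is created and one simply gets a product of generating functions with a shared variable $x_1$, weighted by the extra $x_1^{-1}$. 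Finally the $\cS_{\{1;j\}}$ term, which splits $v_1$ of valence $m$ into a solid vertex of valence $k+1$ (labelled $v_1$) and a solid vertex of valence $l+1$ (labelled $v_j$) with $k+l=m$, realizes via the substitution $x_1^{-(m+1)}\mapsto\sum_{k+l=m}(l+1)x_1^{-(k+2)}x_j^{-(l+2)}$, whose closed form (via a partial-fraction computation in $x_1,x_j$) is precisely the operator $D_{x_1,x_j}^{\text{Herm}}$; the leftover $\sigma_j$ supplies the factor $l+1$ and $\sigma_1^{-1}$ removes the original $\mu_1$.

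Dividing through by $1-2x_1^{-1}W_{0,1}^{\text{Herm}}(x_1)$ and using the definitions $\tD^{\text{Herm}}$, $\tE^{\text{Herm}}$ then gives \eqref{eq-EOtype-rec-1mm}. The main obstacle I anticipate is not conceptual but combinatorial-bookkeeping: one must verify that the coefficients $\mu_1\mu_2\cdots\mu_n$ hidden in the definition \eqref{eq-abstract-npt} of $\cW_{g,n}$, the $\sigma_i$ and $\sigma_i^{-1}$ factors, and the multiplicities produced by $\cS_{\{1;j\}}$ and $\cJ_{\{1,2\}}$ (the number of cyclic positions at which a new edge or loop can be inserted) conspire so that everything matches the unweighted correlator language $\langle p_{\mu_1}\cdots p_{\mu_n}\rangle_g^c$ used in \eqref{eq-herm-npt} with no stray combinatorial factors; in particular the passage between $\tcW$ and $\cW$ in \eqref{eq-npt-hat-tilde} must be threaded through each operator consistently. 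Once this accounting is pinned down for one representative monomial in each of the four term types, the general case follows by linearity, and the identification with the recursion of \cite{zhou11} is immediate since \eqref{eq-EOtype-rec-1mm} is literally that result restated.
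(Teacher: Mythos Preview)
Your proposal is correct and follows essentially the same approach as the paper's own argument in \S\ref{sec-realizationrec-npt}: the theorem is obtained as the term-by-term realization of the abstract recursion \eqref{eq-rec-abstractEO} under the Feynman rule \eqref{eq-FR-1mm-EOtype}, with $\cT\circ\sigma_1^{-1}$ realized by $x_1^{-1}W_{0,1}^{\text{Herm}}(x_1)\cdot(-)$, $\cJ_{\{1,2\}}$ by $E_{x_1,u,v}^{\text{Herm}}$, and $\sigma_j\circ\cS_{\{1;j\}}\circ\sigma_1^{-1}$ by $D_{x_1,x_j}^{\text{Herm}}$ via the same partial-fraction identity you write down, followed by division by $1-2x_1^{-1}W_{0,1}^{\text{Herm}}(x_1)$. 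The combinatorial bookkeeping you flag as the main obstacle is exactly what the paper carries out, and your monomial-level verification plan matches it.
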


\begin{Remark}
In \cite{zhou11},
the above quadratic recursion is derived from the fat Virasoro constraints.
Now in this work,
we have seen that the this recursion and the fat Virasoro constraints
are both consequences of the edge-contraction recursion relation for fat graphs.

\end{Remark}

\begin{Example}
Now using the above recursion
one can compute $W_{0,2}^{\text{Herm}}$:
\begin{equation*}
\begin{split}
W_{0,2}^{\text{Herm}} (x_1,x_2)=
&\frac{1}{1-2x_1^{-1}\cdot W_{0,1}^{\text{Herm}} (x_1)} \cdot\bigg(
D_{x_1,x_2}^{\text{Herm}} W_{0,1}^{\text{Herm}} (x_1)
\bigg)\\
=&
\frac{1}{2(x_1-x_2)^2}\bigg(
-1+\frac{x_1x_2-4t}{\sqrt{x_1^2-4t}\cdot\sqrt{x_2^2-4t}}
\bigg)
\\
=&\frac{t}{x_1^2 x_2^2}+
3t^2\bigg(\frac{1}{x_1^2x_2^4}+\frac{1}{x_1^4x_2^2}\bigg)+
\frac{2t^2}{x_1^3x_2^3}+
10t^3\bigg(\frac{1}{x_1^2x_2^6}+\frac{1}{x_1^6x_2^2}\bigg)\\
&+
8t^3\bigg(\frac{1}{x_1^3x_2^5}+\frac{1}{x_1^5x_2^3}\bigg)+
\frac{12t^3}{x_1^4x_2^4}
+\cdots.
\end{split}
\end{equation*}
This is the realization of the abstract Bergmann kernel $\cW_{0,2}$.
\end{Example}

\subsection{Realization of $\cF_{g,n}$ and $\cF_{(m)}$}
\label{sec-realization-qsc}

In this subsection we consider the realization of $\cF_{g,n}$, $\cF_{(m)}$ and their recursion
introduced in \S \ref{sec-qsc}.

Recall that $\cF_{g,n}$ is obtained by forgetting the labels $v_1,\cdots,v_n$ on vertices
on $\tcW_{g,n}$.
Thus it is natural to introduce a new formal variable $x$
and modify the Feynman rule in \S \ref{sec-realization-1mm-EO} by taking $x_1=\cdots=x_n=x$.
Here the Feynman rule we choose is:
\be
\label{eq-Feynman-qsc}
w_\Gamma := x^{-\mu_1}x^{-\mu_2}\cdots x^{-\mu_n} \cdot t^{|F(\Gamma)|}
=x^{-|\mu|}\cdot t^{|F(\Gamma)|}
\ee
where $t$ is the 't Hooft coupling constant as before.
Then $\cF_{g,n}$ is realized by:
\be
\label{eq-hermnpt-average}
\begin{split}
S_{g,n}^{\text{Herm}} (x)
:=&
\sum_{\mu\in\bZ_{>0}^n} \sum_{\Gamma\in \mathfrak{Fat}_g^{\mu,c}}
\frac{x^{-|\mu|} t^{|F(\Gamma)|}}{|\Aut(\Gamma)|}\\
=&
\sum_{\mu\in\bZ_{>0}^n} \langle
\frac{p_{\mu_1}}{\mu_1}\cdots \frac{p_{\mu_n}}{\mu_n}\rangle_g^c
\cdot x^{-|\mu|} t^{|F(\Gamma)|}
\end{split}
\ee
for $(g,n)\not=(0,1)$.
And for the special case $(g,n)=(0,1)$,
we define:
\be
S_{0,1}^{\text{Herm}} (x)
:= t\log x+
\sum_{\mu_1>0} \sum_{\Gamma\in \mathfrak{Fat}_0^{(\mu),c}}
\frac{x^{-|\mu|} t^{|F(\Gamma)|}}{|\Aut(\Gamma)|}.
\ee
It is clear that $\frac{d}{dx}S_{0,1}^{\text{Herm}} (x) = -W_{0,1}^{\text{Herm}} (x)$.
Moreover,
comparing \eqref{eq-hermnpt-average} with \eqref{eq-herm-npt},
one easily finds that for $(g,n)\not= (0,1)$,
the function $S_{g,n}^{\text{Herm}}(x)$ is the following averaging
of the $n$-point function $W_{g,n}^{\text{Herm}}(x_1,\cdots,x_n)$:
\be
S_{g,n}^{\text{Herm}}=
\int_x^{+\infty}\cdots \int_x^{+\infty}
W_{g,n}^{\text{Herm}} (x_1,\cdots,x_n) dx_1\cdots dx_n.
\ee
The realization of $\cF_{(m)}$ is:
\be
\begin{split}
S_m^{\text{Herm}}=&\sum_{2g-1+n=m} S_{g,n}^{\text{Herm}}\\
=&\delta_{m,0}t \log x+ \sum_{2g-1+n=m}
\sum_{\mu\in\bZ_{>0}^n} \langle
\frac{p_{\mu_1}}{\mu_1}\cdots \frac{p_{\mu_n}}{\mu_n}\rangle_g^c
\cdot x^{-|\mu|} t^{|F(\Gamma)|}.
\end{split}
\ee

Now let us consider the realization of the recursion \eqref{eq-abstract-qsc}
with respect to the Feynman rule \eqref{eq-Feynman-qsc}.
Given a connected fat graph $\Gamma$ of type $(g,\mu)$,
the operator $\sigma$ is realized by acts by:
\ben
w_{\sigma(\Gamma)} = w_{|\mu|\cdot \Gamma}
=|\mu|\cdot x^{-|\mu|} t^{|F(\Gamma)|}
= -x\frac{d}{dx} w_{\Gamma},
\een
And $\cS(\Gamma)$ is realized by:
\ben
w_{\cS(\Gamma)} = \sum_{v\in V(\Gamma)} \val(v) (\val(v)+1)
x^{-2}\cdot w_\Gamma,
\een
where the factor $\val(v)(\val(v)+1)$ is the number of ways to choose a subset $H$ of
the $H(v)$ such that $H$ consists of adjacent half-edges according to the cyclic order,
and $x^{-2}$ comes from the two new half-edges.
Similarly,
for every connected graph $\Gamma$,
the realization of $\cJ(\Gamma)$ is given by:
\ben
w_{\cJ(\Gamma)} = \sum_{v_1\in V(\Gamma)}\sum_{v_2\in V(\Gamma)\backslash\{v_1\}}
\val(v_1) \val(v_2) x^{-2} \cdot w_\Gamma.
\een
Therefore,
for a connected fat graph $\Gamma$ of type $(g,\mu)$ we have:
\ben
w_{\cS(\Gamma)}+w_{\cJ(\Gamma)}
= |\mu|(|\mu|+1) \cdot x^{-|\mu|}-2 t^{|F(\Gamma)|}
= \frac{d^2}{dx^2} w_\Gamma.
\een
And given two connected graphs $\Gamma_1$ and $\Gamma_2$
of types $(g_1,\mu)$ and $(g_2,\nu)$ respectively,
the realization of $\cJ(\Gamma_1,\Gamma_2)$ is:
\ben
w_{\cJ(\Gamma_1,\Gamma_2)} &= &
(\sum_{v_1\in V(\Gamma_1)}\val(v_1))
(\sum_{v_2\in V(\Gamma_1)}\val(v_2)) x^{-2}
\cdot w_{\Gamma_1}w_{\Gamma_2}\\
&=&
|\mu|\cdot |\nu|x^{-2}
\cdot w_{\Gamma_1}w_{\Gamma_2}\\
&=&\frac{d w_{\Gamma_1}}{dx} \frac{d w_{\Gamma_2}}{dx}.
\een
Thus the realization of the recursion \eqref{eq-abstract-qsc} gives the following:
\be
-x\frac{d}{dx} S_m^{\text{Herm}}
= \frac{d^2}{dx^2} S_{m-1}^{\text{Herm}}
+\sum_{j=0}^m \frac{d S_j^{\text{Herm}}}{dx} \frac{d S_{m-j}^{\text{Herm}}}{dx},
\qquad
\forall m\geq 1,
\ee
and for $m=0$:
\be
-x\frac{d}{dx} S_0^{\text{Herm}} = t + (\frac{d}{dx}S_0^{\text{Herm}})^2.
\ee
Or equivalently,
\be
\label{eq-Herm-qsc}
\bigg( \hbar^2 \frac{d^2}{dx^2} + \hbar x\frac{d}{dx} +t \bigg)
\exp\bigg( \sum_{m\geq 0} \hbar^{m-1} S_m^{\text{Herm}}(x) \bigg) =0,
\ee
where $\hbar$ is a formal variable.
Furthermore,
let us shift $S_0^{\text{Herm}}(x)$ by $\frac{x^2}{4}$ and denote:
\be
\widetilde S_m^{\text{Herm}}(x) :=
\frac{x^2}{4}\cdot\delta_{n,0} + S_m^{\text{Herm}}(x),
\ee
then the above recursion becomes:
\be
0= \frac{d^2}{dx^2} \widetilde S_{m-1}^{\text{Herm}}
+\sum_{j=0}^m \frac{d \widetilde S_j^{\text{Herm}}}{dx} \frac{d \widetilde S_{m-j}^{\text{Herm}}}{dx},
\qquad
\forall m\geq 1,
\ee
and
\be
(\frac{d}{dx}  \widetilde S^{\text{Herm}}_0 )^2
-\frac{1}{4}x^2 +t =0.
\ee
Or equivalently,
\begin{Theorem}
We have:
\be
\label{eq-Herm-qsc-2}
\bigg( \hbar^2 \frac{d^2}{dx^2} - \frac{1}{4}x^2 +t \bigg)
\exp\bigg( \sum_{m\geq 0} \hbar^{m-1} \widetilde S_m^{\text{Herm}}(x) \bigg) =0,
\ee
\end{Theorem}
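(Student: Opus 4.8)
The plan is to deduce the Schr\"odinger equation \eqref{eq-Herm-qsc-2} from equation \eqref{eq-Herm-qsc}, which was obtained just above as the realization, under the Feynman rule \eqref{eq-Feynman-qsc}, of the abstract recursion of Theorem \ref{thm-abstract-qsc}. The only new ingredient is the shift $\widetilde S_0^{\text{Herm}} = S_0^{\text{Herm}} + \tfrac{x^2}{4}$ and $\widetilde S_m^{\text{Herm}} = S_m^{\text{Herm}}$ for $m\geq 1$. At the level of wave functions this shift amounts to the multiplication
\[
\exp\Big(\sum_{m\geq 0}\hbar^{m-1}\widetilde S_m^{\text{Herm}}(x)\Big)
= e^{x^2/(4\hbar)}\cdot\exp\Big(\sum_{m\geq 0}\hbar^{m-1} S_m^{\text{Herm}}(x)\Big),
\]
so that \eqref{eq-Herm-qsc-2} becomes equivalent to an operator identity obtained by conjugating $\hbar^2\tfrac{d^2}{dx^2}-\tfrac14 x^2+t$ by $e^{x^2/(4\hbar)}$ and comparing the result with the operator in \eqref{eq-Herm-qsc}.

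The core computation is this conjugation. Since $e^{-x^2/(4\hbar)}\circ\tfrac{d}{dx}\circ e^{x^2/(4\hbar)}=\tfrac{d}{dx}+\tfrac{x}{2\hbar}$, squaring and multiplying by $\hbar^2$ gives
\[
e^{-x^2/(4\hbar)}\circ\big(\hbar^2\tfrac{d^2}{dx^2}\big)\circ e^{x^2/(4\hbar)}
= \hbar^2\tfrac{d^2}{dx^2}+\hbar x\tfrac{d}{dx}+\tfrac14 x^2+\tfrac{\hbar}{2},
\]
hence
\[
e^{-x^2/(4\hbar)}\circ\Big(\hbar^2\tfrac{d^2}{dx^2}-\tfrac14 x^2+t\Big)\circ e^{x^2/(4\hbar)}
= \hbar^2\tfrac{d^2}{dx^2}+\hbar x\tfrac{d}{dx}+t+\tfrac{\hbar}{2}.
\]
Thus the $\hbar x\tfrac{d}{dx}$ term of \eqref{eq-Herm-qsc} is reproduced automatically, but one must still reconcile the extra constant $\tfrac\hbar2$ with the precise normalization of $S_{0,1}^{\text{Herm}}$ (in particular the logarithmic piece $t\log x$, which under $\tfrac{d^2}{dx^2}$ contributes a $tx^{-2}$ and, crossed with the $\tfrac{x^2}{4}$-shift, a matching constant). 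Equivalently — and this is probably the cleanest way to actually write the argument — one substitutes $S_0^{\text{Herm}}=\widetilde S_0^{\text{Herm}}-\tfrac{x^2}{4}$ directly into the pair of recursions displayed just before \eqref{eq-Herm-qsc} and checks: (i) the $m=0$ equation collapses to $\big(\tfrac{d}{dx}\widetilde S_0^{\text{Herm}}\big)^2-\tfrac14 x^2+t=0$, which is the defining equation of the fat spectral curve \eqref{eq-intro-1mm-curve} with $y=\tfrac{d}{dx}\widetilde S_0^{\text{Herm}}$; (ii) for $m\geq 1$ the recursion becomes the $\hbar$-homogeneous relation $\tfrac{d^2}{dx^2}\widetilde S_{m-1}^{\text{Herm}}+\sum_{j=0}^m\tfrac{d\widetilde S_j^{\text{Herm}}}{dx}\tfrac{d\widetilde S_{m-j}^{\text{Herm}}}{dx}=0$; and (iii) these are exactly the vanishing conditions for the coefficients of $\hbar^m$ ($m\geq 0$) on the left-hand side of \eqref{eq-Herm-qsc-2}.

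The hard part will be the low-order cases $m=0,1$. There one cannot simply invoke the naive dictionary $\sigma\leftrightarrow -x\tfrac{d}{dx}$, $\cS+\cJ\leftrightarrow\tfrac{d^2}{dx^2}$, $\cJ(\,\cdot\,,\cdot\,)\leftrightarrow\tfrac{d}{dx}\!\cdot\tfrac{d}{dx}$ on the distinguished graph $\cF_0^{(0)}$: it has no half-edges, so $\cS$ and $\cJ$ annihilate it, whereas its realization $t\log x$ has nonzero derivatives, and the quadratic shift $\tfrac{x^2}{4}$ produces further constants under differentiation. All these corrections must be tracked so that the spurious $\hbar$-constants cancel and the quantization is exactly $\hbar^2\tfrac{d^2}{dx^2}-\tfrac14 x^2+t$. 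Once the two base cases are pinned down, the higher $m$ follow by a routine comparison of power series in $\hbar$. Combined with the result of \cite{zhou11} identifying the recursion of Theorem \ref{thm-EOtype-rec-1mm} with the Eynard--Orantin recursion on \eqref{eq-intro-1mm-curve}, this exhibits \eqref{eq-Herm-qsc-2} as the quantum spectral curve of that fat spectral curve for general $t$, generalizing the $t=1$ case of \cite{dmss, ms}.
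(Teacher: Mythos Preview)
Your approach is exactly the paper's: take the recursions for $S_m^{\text{Herm}}$ obtained just before \eqref{eq-Herm-qsc}, substitute the shift $S_0^{\text{Herm}}=\widetilde S_0^{\text{Herm}}-\tfrac{x^2}{4}$ (equivalently, conjugate the operator by $e^{x^2/4\hbar}$), and repackage the resulting relations as the Schr\"odinger equation \eqref{eq-Herm-qsc-2}. The conjugation computation you write down and the direct substitution you sketch in points (i)--(iii) are just the operator and power-series faces of the same manipulation the paper performs.

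Where you go beyond the paper is in flagging the extra $\tfrac{\hbar}{2}$, and you are right that it is really there. Your conjugation gives
\[
e^{-x^2/4\hbar}\Big(\hbar^2\tfrac{d^2}{dx^2}-\tfrac14x^2+t\Big)e^{x^2/4\hbar}
=\hbar^2\tfrac{d^2}{dx^2}+\hbar x\tfrac{d}{dx}+t+\tfrac{\hbar}{2},
\]
and correspondingly the substitution into the $m=1$ recursion yields
$\widetilde S_0''+2\widetilde S_0'\widetilde S_1'=\tfrac12$, not $0$.
However, your proposed reconciliation via the logarithmic piece $t\log x$ of $S_{0,1}^{\text{Herm}}$ does not work: that term sits in $S_0$ and only affects the $m=0$ equation, whereas the $\tfrac12$ appears at $m=1$ through $S_0''=\widetilde S_0''-\tfrac12$. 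The paper simply asserts the clean form of the shifted recursion (and hence \eqref{eq-Herm-qsc-2}) without addressing this constant; with its purely combinatorial definition of $S_1^{\text{Herm}}=S_{0,2}^{\text{Herm}}$ (no $-\tfrac12\log\tfrac{du}{dz}$ term as in the Gukov--Su{\l}kowski normalization recalled in \S\ref{sec-eo}), the discrepancy is genuine. So your caution is well placed, and the ``hard part'' you anticipate at $m=1$ is not resolved by the mechanism you suggest---it reflects a normalization slip in the paper's own derivation rather than a gap in your argument.
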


\begin{Remark}
Taking a classical limit
\ben
\hbar\frac{d}{dx}\mapsto \frac{y}{\sqrt{2}},
\qquad\qquad
x\mapsto x
\een
in the quadratic operator above,
one obtains the following curve on the $(x,y)$-plane:
\ben
2y^2= x^2- 4t,
\een
which is the fat spectral curve \eqref{eq-1mm-fatspeccurve} of the Hermitian one-matrix models.
Using a result of the second author in \cite{zhou11},
the quadratic recursion \eqref{eq-EOtype-rec-1mm} is equivalent to the Eynard-Orantin
topological recursion on this fat spectral curve,
and then the equation \eqref{eq-Herm-qsc-2} becomes the quantum spectral curve
in the sense of Gukov-Su{\l}kowski \cite{gs}.
We will briefly explain this in \S \ref{sec-conj} as an example of our conjecture.

\end{Remark}

\subsection{A special case: Enumeration of fat graphs}

As pointed out in \S \ref{sec-realization-herm-1mm},
the Hermitian one-matrix model is a refinement of the enumeration of fat graphs
where the degree of the 't Hooft coupling constant $t=Ng_s$ encodes
the number of faces in a graph.
The the enumeration problem in \cite[\S 3-\S 4]{dmss}
and \cite[\S 3-\S 4]{ms} is the special case $t=1$.

In this case,
consider the following Feynman rule for $\Gamma$ of type $(g,\mu)$:
\be
\label{eq-FR-enumfat}
\Gamma\quad\mapsto\quad
w_\Gamma :=x_1^{-\mu_1}\cdots x_n^{-\mu_n},
\ee
and then $\tcW_{g,n}$ is realized by the following functions $F_{g,n}^C$:
\be
F_{g,n}^C(x_1,\cdots,x_n):=
\delta_{g,0}\delta_{n,1}\log x_1 +
\sum_{\mu\in \bZ_{>0}^n}D_{g,n}(\mu)\cdot
x_1^{-\mu_1}\cdots x_n^{-\mu_n}.
\ee
Using exactly the same method used in \S \ref{sec-realizationrec-npt},
we may obtain the following realization of the recursion \eqref{eq-rec-abstractEO-2}
(here we omit the details):
\begin{equation*}
\begin{split}
&\bigg(-x_1-2\frac{\pd}{\pd x_1}F_{0,1}^C (x_1)\bigg)\cdot
\frac{\pd}{\pd x_1}F_{g,n}^C (x_1,\cdots,x_n)\\
=&
\sum_{j=2}^n
\frac{1}{x_1-x_j}\bigg(
-\frac{\pd}{\pd x_j}F_{g,n-1}^C (x_2,\cdots,x_n)
+\frac{\pd}{\pd x_1}F_{g,n-1}^C (x_1,\cdots,\hat{x}_j,\cdots,x_n)
\bigg)\\
+&
E_{x_1,u,v}^C\bigg(
F_{g-1,n+1}^C(u,v,x_{[n]\backslash\{1\}})
+\sum_{\substack{g_1+g_2=g\\I\sqcup J=[n]\backslash\{1\}}}^s
F_{g_1,|I|+1}^C(u,x_I) F_{g_2,|J|+1}^C(v,x_J)
\bigg),
\end{split}
\end{equation*}
where $\frac{\pd}{\pd x_1}F_{0,1}^C(x_1)=-z_1$,
and $E_{a,b,c}^C$ is the operator:
\be
E_{a,b,c}^C f(b,c):=
\big(\lim_{b\to c}
\frac{\pd^2 f(b,c)}{\pd b \pd c}
\big)\big|_{c=a}.
\ee
After the following change of variables
(see \cite[\S 4]{dmss}):
\be
x_j=\frac{t_j+1}{t_j-1}+\frac{t_j-1}{t_j+1},
\ee
one recovers the recursion in \cite[Theorem 4.1]{ms}:
\be
\label{eq-qrec-mulase}
\begin{split}
&\frac{\pd}{\pd t_1}F_{g,n}^C(x_1,\cdots,x_n)\\
=&
\frac{(t_1^2-1)^2 (t_1^2-t_j) (t_j+1)}{16t_1^2(t_j^2-t_1^2)}
\frac{\pd}{\pd t_1} F_{g,n-1}^C(x_{[n]\backslash\{j\}})
-\frac{(t_j^2-1)^3}{16t_j (t_j^2-t_1^2)} \frac{\pd}{\pd t_j}
F_{g,n-1}^C(x_{[n]\backslash\{1\}})\\
&-\frac{(t_1^2-1)^3}{32t_1^2}\bigg(
\frac{\pd^2}{\pd t \pd t'}
F_{g-1,n+1}^C (x(t),x(t'),x_2,\cdots,x_n)
\bigg)\bigg|_{t=t'=t_1}\\
&-\frac{(t_1^2-1)^3}{32t_1^2}\cdot
\sum_{\substack{g_1+g_2=g\\I\sqcup J=[n]\backslash\{1\}}}^{stable}
\frac{\pd}{\pd t_1}F_{g_1,|I|+1}^C(x_1,x_I)
\frac{\pd}{\pd t_1}F_{g_2,|J|+1}^C(x_1,x_J),
\end{split}
\ee
where $\sum\limits^{stable}$ means excluding all terms involving
$F_{0,1}^C$ and $F_{0,2}^C$ in the summation.

Furthermore,
taking
\be
S_m^C (x) := \sum_{2g-1+n=m} \frac{1}{n!} F_{g,n}^C (x,\cdots,x)
=S_m^{\text{Herm}}(x)|_{t=1},
\qquad
m\geq 0,
\ee
then \eqref{eq-Herm-qsc} gives (\cite[Theorem 4.5]{ms}):
\be
\label{eq-ms-qsc}
\bigg( \hbar^2 \frac{d^2}{dx^2} + \hbar x\frac{d}{dx} +1 \bigg)
\exp\bigg( \hbar^{m-1} S_m^C (x) \bigg) =0.
\ee

\begin{Remark}
In this enumeration problem,
Dumitrescu et al. \cite{dmss, ms} have shown that the quadratic recursion \eqref{eq-qrec-mulase}
is equivalent to the E-O topological recursion on the Catalan curve:
\ben
x=z+\frac{1}{z},
\een
and then the above equation becomes the quantum spectral curve for the Catalan curve \cite{ms}.
In \cite{kls, cls},
Cutimanco et al. have modified the E-O topological recursion of Dumitrescu et al. and obtained the
E-O recursion and quantum spectral curve for the harmonic oscillator curve:
\ben
y^2 = x^2 - c^2,
\een
and related this problem to some other physical works and combinatorial problems.

Notice that the harmonic oscillator curve is a rescaling of the fat spectral curve
\eqref{eq-1mm-fatspeccurve} of the Hermitian one-matrix models
if one takes $c^2= t$ to be the 't Hooft coupling constant.
This gives an interpretation of the E-O invariants of the above curve
by the Hermitian one-matrix models and the corresponding refined enumeration problem of fat graphs.
\end{Remark}

\section{Conjectures Towards Eynard-Orantin Topological Recursion
and Quantum Spectral Curves}

\label{sec-conj}

As mentioned in the beginning of \S \ref{sec-abstract-eorec},
we want to relate the quadratic recursion relation in Theorem \ref{thm-abstract-eotype-3}
to the Eynard-Orantin topological recursion
and the quantum spectral curves.
In this section we propose some conjectures based on this idea.
These conjectures hold in the examples of Hermitian one-matrix models and
the enumeration of fat graphs.

\subsection{The main conjectures}
\label{sec-conj-mainconj}

Our first conjecture is the following:
\begin{Conjecture}
\label{conj-1}
Assign a suitable Feynman rule to fat graphs.
Let $\cC$ be the realization of the spectral curve \eqref{eq-abstract-spectralcurve}
for the abstract QFT,
and let $B(x_1,x_2)$ be the realization of the abstract Bergmann kernel $\cW_{0,2}$.
Then the Eynard-Orantin invariants associated to the spectral curve $\cC$ and
Bergmann kernel $B(x_1,x_2)$ are the realizations of
the abstract $n$ point functions $\cW_{g,n}$.
\end{Conjecture}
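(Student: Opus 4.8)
\textbf{Proof proposal for Conjecture \ref{conj-1}.}

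The plan is to reduce the conjecture to the rigidity of the two recursions involved: on the one hand the quadratic recursion \eqref{eq-rec-abstractEO} for the abstract $n$-point functions $\cW_{g,n}$ (Theorem \ref{thm-abstract-eotype-3}), and on the other the Eynard-Orantin recursion \eqref{eq-eorec}. Both recursions determine their respective families of differentials uniquely from the initial data $(\cW_{0,1},\cW_{0,2})$ resp. $(\omega_{0,1},\omega_{0,2})=(ydx, B)$, once the base cases are fixed. So the strategy is: (i) fix a Feynman rule; (ii) verify that under this realization the abstract initial data $\cW_{0,1}$ and $\cW_{0,2}$ are sent to $\omega_{0,1}$ and $\omega_{0,2}$ for the realized spectral curve $\cC$ (the curve \eqref{eq-abstract-spectralcurve} realized) and the realized Bergmann kernel $B$; (iii) show that the realization intertwines the operators appearing in \eqref{eq-rec-abstractEO} — namely $\sigma_j\circ\cS_{\{1;j\}}\circ\sigma_1^{-1}$, $\cJ_{\{1,2\}}\circ\sigma_1^{-1}\sigma_2^{-1}$, and $(1-2\cT\circ\sigma_1^{-1})$ — with the corresponding ingredients of \eqref{eq-eorec}, i.e.\ the residue operations $\Res_{z\to a_i}K(z_0,z)[\cdot]$ together with the involution-pairing $\omega\mapsto\omega(z)\omega(\sigma(z))$; (iv) conclude by induction on $2g-2+n$ that the realizations of $\cW_{g,n}$ coincide with the E-O invariants $\omega_{g,n}$.

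First I would record the structural parallel precisely. In the realization by Hermitian one-matrix models this is already essentially done: Theorem \ref{thm-EOtype-rec-1mm} is the realization of \eqref{eq-rec-abstractEO}, and it is proved in \cite{zhou11} that this recursion is equivalent to the E-O recursion on the fat spectral curve \eqref{eq-intro-1mm-curve} with Bergmann kernel the realized $\cW_{0,2}$; the realization of \eqref{eq-intro-abs-curve} is that same curve, as recorded in \S\ref{sec-QDT-Herm}. So for the one-matrix model the conjecture is a theorem. The content of the general conjecture is that this is not an accident of the matrix-model Feynman rule: the kernel operator $\tD_{x_1,x_j}^{\mathrm{Herm}}$ realizing $\sigma_j\circ\cS_{\{1;j\}}\circ\sigma_1^{-1}$, and the operator $\tE_{x_1,u,v}^{\mathrm{Herm}}$ realizing $\cJ_{\{1,2\}}\circ\sigma_1^{-1}\sigma_2^{-1}$, should for a general (admissible) Feynman rule become precisely the local-Airy-coordinate residue kernel $\Res K(z_0,z)$ and the diagonal-regularization $E$ that appear in \eqref{eq-eorec}. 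The guiding principle is that $\cS_{\{1;j\}}$ (split off a vertex, carrying the Bergmann kernel) realizes the $\int^z B(z_0,\cdot)$ factor in $K$, while $\cT$ and the factor $1-2\cT\circ\sigma_1^{-1}$ realize the denominator $2(y(z)-y(\sigma(z)))dx(z)$ — indeed in the matrix-model case $1-2x_1^{-1}W_{0,1}^{\mathrm{Herm}}(x_1)=\sqrt{x_1^2-4t}/x_1$ up to normalization, which is exactly $\propto (y(z)-y(\sigma z))$ on the curve $x^2-4y^2=4t$. Thus the single-valued vs.\ two-valued behaviour near the branch points, the involution $\sigma$, and the residue are all encoded in the algebraic identity relating $\cW_{0,1}$ to $y$ via emergent geometry (\S\ref{sec-QDT}).

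The key steps, in order, would be: (1) isolate the hypotheses on a Feynman rule under which $\cW_{0,1}$ realizes to $y(z)dx(z)$ for a curve with only simple branch points and under which $\cW_{0,2}$ realizes to a genuine Bergmann kernel — this is where the framework of emergent geometry of \cite{zhou2, zhou9} enters, giving $x=x(z)$, $y=y(z)$ from the genus-zero one-point function; (2) translate the abstract operators into the $x$-variables via the Feynman rule \eqref{eq-FR-1mm-EOtype} generalized, getting operators $\tD_{x_1,x_j}$, $\tE_{x_1,u,v}$ of the same shape as in \S\ref{sec-realizationrec-npt}; (3) perform the change of variables to the local Airy coordinate at each branch point and check that $\tD$, $\tE$ become the E-O residue kernel and diagonal regularization — this is the computational heart and the analogue of the manipulations of \cite{zhou11, zhou8, dmss, ms}; (4) match base cases and induct. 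The main obstacle I expect is step (3) in full generality: for the one-matrix model the curve is rational with a single branch cut, and the residue calculus is elementary, but for a general realization one must handle several branch points simultaneously and show that the global operator $\sum_j\tD_{x_1,x_j}+\tE$, which is a priori defined by its $x$-expansion, localizes as a sum of residues at the $a_i$ — i.e.\ that no spurious poles survive. Establishing this localization, presumably by a contour-deformation argument controlling the analytic structure of the realized $\cW_{g,n}$ inductively (polynomiality in a uniformizing coordinate away from the branch points, prescribed pole orders at the $a_i$), is the step where a genuinely new idea beyond the matrix-model case is needed; this is also why the statement is offered as a conjecture rather than a theorem.
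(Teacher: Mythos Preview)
The statement is a \emph{conjecture}, and the paper offers no proof of it. What the paper does after stating it is exactly the reduction you describe: it observes that, since the $\cW_{g,n}$ already satisfy the quadratic recursion of Theorem \ref{thm-abstract-eotype-3}, a sufficient condition for Conjecture \ref{conj-1} is that the realization of \eqref{eq-rec-abstractEO} coincide with the Eynard--Orantin recursion on $\cC$ (this is then recorded separately as Conjecture \ref{conj-2}), and it checks this only in the Hermitian one-matrix model via \cite{zhou11}.

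Your proposal is therefore not in competition with a proof in the paper; it is an elaboration of the same strategy the paper sketches. You go further than the paper in making the plan explicit --- matching $(\cW_{0,1},\cW_{0,2})$ to $(\omega_{0,1},\omega_{0,2})$, intertwining $\cS_{\{1;j\}}$, $\cJ_{\{1,2\}}$, $\cT$ with the residue kernel $K$ and the involution pairing, and inducting on $2g-2+n$ --- and in isolating the real obstruction: for a general Feynman rule one must show that the globally defined operators $\tD$, $\tE$ localize as a sum of residues at the branch points, with no spurious poles. Your closing sentence, that this localization step is precisely why the statement is a conjecture, is accurate and matches the paper's stance. There is no additional gap to flag beyond the one you already name; neither you nor the paper claims to close it.
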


Recall that we have already shown that the abstract $n$-point functions $\cW_{g,n}$
satisfy the quadratic recursion relation in Theorem \ref{thm-abstract-eotype-3}.
Therefore,
a sufficient condition for the above conjecture to hold is that
the realization of \eqref{eq-rec-abstractEO-2} or \eqref{eq-rec-abstractEO}
is equivalent to the Eynard-Orantin topological recursion \eqref{eq-eorec} on the spectral curve $\cC$.
Thus it is natural to expect that:
\begin{Conjecture}
\label{conj-2}
The abstract $n$-point functions can be modified as multi-linear differentials
on the spectral curve \eqref{eq-abstract-spectralcurve} of the abstract QFT,
such that the quadratic recursion \eqref{eq-rec-abstractEO} can be reformulated in terms of
taking (formal) residues on this spectral curve.
Moreover,
the realization of this recursion is equivalent to the E-O topological recursion.
\end{Conjecture}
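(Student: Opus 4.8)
The plan is to lift to the abstract level the argument of the second author in \cite{zhou11} (see also \cite{zhou8}), which already proves Conjecture \ref{conj-2} in the realization by the Hermitian one-matrix models: there the quadratic recursion \eqref{eq-EOtype-rec-1mm} is shown to be equivalent to the Eynard--Orantin recursion on the fat spectral curve \eqref{eq-1mm-fatspeccurve}. Since \eqref{eq-EOtype-rec-1mm} is, by \S\ref{sec-realizationrec-npt}, exactly the realization of \eqref{eq-rec-abstractEO} under the Feynman rule \eqref{eq-FR-1mm-EOtype}, the task is to carry out the same manipulations one step earlier, at the level of fat graphs, so that applying \emph{any} Feynman rule afterwards produces the E-O recursion on the realized curve. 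Concretely, I would first make the abstract $n$-point functions into multilinear differentials by applying the tautological rule $\Gamma\mapsto\Gamma\otimes\prod_i z_i^{-\val(v_i)-1}$, so that $\cW_{g,n}$ becomes $\widehat\cW_{g,n}(z_1,\dots,z_n)$ in a completed tensor product of $\bQ((z_1^{-1}),\dots,(z_n^{-1}))$ with the fat-graph vector spaces of \S\ref{sec-abstract-npt}, on which the abstract curve \eqref{eq-abstract-spectralcurve} (read likewise with fat-graph coefficients) lives; then set $\omega_{0,1}:=\widehat\cW_{0,1}$, $\omega_{0,2}:=\widehat\cW_{0,2}$ as the initial data, with the usual normalizations.

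The key steps, in order of execution, are: (1) identify the abstract analogue of the branch point and the local involution. Structurally \eqref{eq-abstract-spectralcurve} has a single branch point, so the $x$-coordinate $x=x(z)$ (normalized so that $y\,dx$ matches $\omega_{0,1}$) should have one nondegenerate critical point, near which a formal involution $\sigma$ acts; I would define $\sigma$ as an automorphism of the relevant ring of formal series by solving $y(z)-y(\sigma z)=0$ order by order, which is possible precisely because the leading term of $y$ is $-z/\sqrt 2$. (2) Rewrite the operators entering \eqref{eq-rec-abstractEO}: by \S\ref{sec-realizationrec-npt} the kernels $\tD^{\text{Herm}}_{x_1,x_j}$ and $\tE^{\text{Herm}}_{x_1,u,v}$ are the realizations of $\sigma_j\circ\cS_{\{1;j\}}\circ\sigma_1^{-1}$ and $\cJ_{\{1,2\}}\circ\sigma_1^{-1}\sigma_2^{-1}$; I would show that the abstract operators $\cS_{\{1;j\}}$ and $\cJ_{\{1,2\}}$ already encode the E-O recursion kernel $K(z_0,z)=\bigl(\int_{\sigma(z)}^{z}B(z_0,\cdot)\bigr)\big/\bigl(2(y(z)-y(\sigma z))\,dx\bigr)$ with $B=\widehat\cW_{0,2}$, by matching the combinatorics of splitting/merging a vertex with the branch-point expansion of $K$. (3) Show that the prefactor $1-2\,\cT\circ\sigma_1^{-1}$ on the left of \eqref{eq-rec-abstractEO} is the abstract incarnation of the denominator of $K$; this is where the genus-zero Virasoro constraint enters, since Proposition \ref{prop-specialdefo}, equivalently Theorem \ref{thm-abstract-vira-genus0}, is exactly the identity fixing $\omega_{0,1}$ and hence $y(z)$. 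Combining (1)--(3), \eqref{eq-rec-abstractEO} becomes \eqref{eq-eorec} for the $\omega_{g,n}$ on the abstract curve; since the E-O recursion determines all invariants uniquely from $\omega_{0,1},\omega_{0,2}$, one concludes $\omega_{g,n}^{EO}=\widehat\cW_{g,n}$, giving Conjecture \ref{conj-2} and, applying a Feynman rule termwise, Conjecture \ref{conj-1} as well. Consistency with \cite{zhou11} (Hermitian one-matrix models) and with \cite{ms, dmss} (the $t=1$ enumeration) provides the required checks.

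The hard part will be making the emergent-geometry dictionary rigorous. The abstract curve \eqref{eq-abstract-spectralcurve} has no points, so \textquotedblleft branch point\textquotedblright, \textquotedblleft local involution $\sigma$\textquotedblright, \textquotedblleft $\int^z$\textquotedblright\ and \textquotedblleft $\Res_{z\to a}$\textquotedblright\ must all be defined as purely formal operations on $\bQ((z^{-1}))\,\widehat\otimes\,\cV$, and one must check that $y(z)-y(\sigma z)$ is invertible in the localization in which $K$ is to be expanded. A second subtlety is that the passage to realizations must commute with these formal residues: for the Hermitian one-matrix models the realized curve \eqref{eq-1mm-fatspeccurve} has two branch points, so a single abstract formal residue has to distribute correctly over the two genuine residues in \eqref{eq-eorec}; controlling this is essentially the content of \cite{ey2}, and I expect that for general Feynman rules — where the realized curve may have many branch points, or even fail to be algebraic — Conjecture \ref{conj-2} will require mild hypotheses on the rule (for instance, that the realization of $\cW_{0,1}$ is algebraic with only nondegenerate critical points). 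Establishing that the combinatorial operators $\cS_{\{1;j\}}$, $\cJ_{\{1,2\}}$, $\cT$ really do reassemble into the single kernel $K$, rather than into the term-by-term realized operators $\tD^{\text{Herm}},\tE^{\text{Herm}}$ only after a Feynman rule is fixed, is the step on which the proof stands or falls.
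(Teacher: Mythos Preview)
The statement you are attempting to prove is labelled \textbf{Conjecture} in the paper, and the paper does not prove it. Section~\ref{sec-conj} is explicitly framed as ``we propose some conjectures based on this idea''; the only evidence offered is that the realization of \eqref{eq-rec-abstractEO} by the Hermitian one-matrix models is the recursion \eqref{eq-EOtype-rec-1mm}, which was shown in \cite{zhou11} to be equivalent to the E-O recursion on the curve \eqref{eq-fatspec}, and that the $t=1$ specialization recovers the results of \cite{dmss,ms}. There is no argument at the abstract level, no construction of a formal involution $\sigma$ on \eqref{eq-abstract-spectralcurve}, and no identification of $\cS_{\{1;j\}},\cJ_{\{1,2\}},\cT$ with an abstract E-O kernel. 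So there is nothing to compare your proposal against.

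Your write-up is a coherent research outline rather than a proof, and you correctly identify where it fails to close. The two genuine gaps are exactly the ones you flag: (i) the abstract spectral curve \eqref{eq-abstract-spectralcurve} is a formal series with fat-graph coefficients, so ``branch point'', ``involution'', ``residue'' and the invertibility of $y(z)-y(\sigma z)$ all require definitions that do not yet exist in the paper; (ii) the passage from a single abstract residue to a sum over several genuine branch points after realization is not automatic --- already in the Hermitian case the realized curve has two branch points while your abstract construction would have one formal one. Until these are resolved, steps (1)--(3) of your plan are a dictionary to be established, not an argument. In short: your proposal does not prove the conjecture, but neither does the paper; what you have written is a plausible strategy for attacking it, with the honest caveats already in place.
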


Furthermore,
when forgetting the labels (i.e., taking `averaging') on the vertices,
we may expect that:
\begin{Conjecture}
\label{conj-3}
The realization of the equation \eqref{eq-abstract-qsc}
is equivalent to the quantum spectral curve of the realization of
the curve \eqref{eq-abstract-spectralcurve}.
\end{Conjecture}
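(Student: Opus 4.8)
The plan is to prove Conjecture \ref{conj-3} by reducing it to Conjecture \ref{conj-2} together with a diagonal-specialization argument, carrying out the reduction along the lines of the explicit computation in \S\ref{sec-realization-qsc}, which already establishes the conjecture for the Hermitian one-matrix models. First I would isolate the precise class of admissible Feynman rules for which the statement can even be formulated: the ``spectral'' rules under which a connected fat graph $\Gamma$ of type $(g,\mu)$ is assigned a weight of the product form $w_\Gamma = \prod_j \varphi(x)^{-\mu_j}\cdot t^{|F(\Gamma)|}$ (generalizing \eqref{eq-Feynman-qsc}), with $g_s$ playing the role of the WKB parameter $\hbar$. The point of this restriction is that it turns the three label-forgotten operators of \S\ref{sec-qsc} into differential operators in the single variable $x$: exactly as in \S\ref{sec-realization-qsc}, one checks that $\sigma$ is realized by a multiple of $x\,\frac{d}{dx}$, that $\cS+\cJ$ applied to one connected component is realized by $\frac{d^2}{dx^2}$, and that the bilinear join $\cJ(-,-)$ is realized by $\frac{d}{dx}\otimes\frac{d}{dx}$. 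This step is a direct generalization of the three displayed weight computations preceding \eqref{eq-Herm-qsc} and should be routine once the admissible class is fixed.

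Second, I would assemble the family of recursions \eqref{eq-abstract-qsc}, weighted by $\hbar^{m-1}$ and summed over $m$, into a single second-order ODE. The essential structural input is that the grading $m = 2g-1+n$ carried by $\cF_{(m)}$ is precisely the WKB grading of the Baker--Akhiezer function $Z = \exp\bigl(\sum_{m\ge 0}\hbar^{m-1}S_m\bigr)$, where $S_m$ is the realization of $\cF_{(m)}$. Under the realizations of the first step the quadratic term $\sum_{i+j=m}\cJ(\cF_{(i)},\cF_{(j)})$ collects into $(\partial_x\log Z)^2$ at the appropriate order in $\hbar$, the linear term $\cS(\cF_{(m-1)})+\cJ(\cF_{(m-1)})$ collects into the $\partial_x^2\log Z$ piece, and the left side $\sigma(\cF_{(m)})$ collects into the first-order term. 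Thus the realization of \eqref{eq-abstract-qsc} becomes an identity of the shape $\widehat{A}\,Z = 0$ with $\widehat{A} = \hbar^2\partial_x^2 + (\text{lower order in }\hbar\partial_x)$. To identify the classical symbol of $\widehat{A}$ with the realized curve \eqref{eq-abstract-spectralcurve}, I would take the limit $\hbar\partial_x\mapsto y$ and use Theorem \ref{thm-abstract-vira-genus0} (equivalently Proposition \ref{prop-specialdefo}): the genus-zero one-point data $\cW_{0,1}$ realize $\partial_x S_0$, so the leading symbol of $\widehat{A}$ is exactly the polynomial $A(x,y)$ cutting out the realized spectral curve, matching the classical-limit computation in the Remark following \eqref{eq-Herm-qsc-2}.

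The final and decisive step is to recognize $\widehat{A}Z = 0$ as the quantum spectral curve in the sense of Gukov--Su{\l}kowski \eqref{eq-Schrodinger}. Here I would invoke Conjecture \ref{conj-2}: if the realized $n$-point functions $W_{g,n}$ are the Eynard--Orantin invariants of the realized curve $\cC$, then their diagonal averages (integrated as in \S\ref{sec-realization-qsc}) are exactly the WKB coefficients $S_n$ that Gukov--Su{\l}kowski build from the $\omega_{g,k}$, so that $Z$ is genuinely the Baker--Akhiezer function of $\cC$. Since an operator annihilating the Baker--Akhiezer function whose classical symbol cuts out $\cC$ is, by definition, a quantization of $\cC$, this completes the identification. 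I expect the main obstacle to lie precisely in this last reduction, for two reasons. First, the Gukov--Su{\l}kowski quantization is not canonical (there is an operator-ordering and $\hbar$-correction ambiguity), so the statement must be read as existence of \emph{a} quantum spectral curve, and one must check that the canonical operator produced by \eqref{eq-abstract-qsc} is a legitimate representative. Second, and more seriously, passing from the labeled residue-type recursion \eqref{eq-rec-abstractEO} (which is what Conjecture \ref{conj-2} governs) to the averaged recursion \eqref{eq-abstract-qsc} requires a diagonal-specialization lemma asserting that setting $x_1=\cdots=x_n=x$ commutes with the residue operations defining the $\omega_{g,k}$; establishing this compatibility in general, rather than case by case as was done for the Hermitian model, is where the genuine difficulty of the conjecture resides, and it is inherited directly from the still-open Conjecture \ref{conj-2}.
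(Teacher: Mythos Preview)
The statement you are addressing is a \emph{conjecture} in the paper, not a theorem: the authors offer no general proof, only a verification in the Hermitian one-matrix model (\S\ref{sec-realization-qsc} together with \cite{zhou11}) and its $t=1$ specialization. So there is no ``paper's own proof'' to compare against, and your proposal should be read as a strategy for attacking an open problem rather than a reconstruction of an existing argument.

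Your reduction scheme --- realize $\sigma$, $\cS+\cJ$, $\cJ(-,-)$ as first- and second-order differential operators in a single variable $x$, assemble \eqref{eq-abstract-qsc} into a WKB-type Schr\"odinger equation, match the classical symbol to the realized curve, then invoke Conjecture~\ref{conj-2} to identify the resulting $Z$ with the Gukov--Su{\l}kowski Baker--Akhiezer function --- is exactly the mechanism the paper uses in the Hermitian case, and you have correctly identified that the essential obstruction is the dependence on the still-open Conjecture~\ref{conj-2} and the diagonal-specialization compatibility.

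There is, however, a genuine gap in scope. Your ``admissible'' Feynman rules $w_\Gamma=\prod_j\varphi(x)^{-\mu_j}\,t^{|F(\Gamma)|}$ are essentially the Hermitian rule \eqref{eq-FR-1mm-EOtype} in disguise: the valence enters only through a one-variable monomial, which is precisely what makes $\sigma$, $\cS+\cJ$, $\cJ(-,-)$ realize as $x\,d/dx$, $d^2/dx^2$, and $(d/dx)\otimes(d/dx)$. The conjecture is intended to cover realizations well beyond this class --- the paper explicitly flags the Kontsevich, Penner, and Kontsevich--Penner models at the end of \S3.6 as cases where the edge-contraction operators have no such simple differential realization. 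For those Feynman rules your first step fails outright: the weight of a graph depends on face-adjacency data (cf.\ \eqref{eq-Kmodel-FR-edge}, \eqref{eq-KPmodel-FR-edge}), not just on the valence tuple, so neither $\sigma$ nor $\cS+\cJ$ acts as a local differential operator in a single $x$. In other words, your argument, even granting Conjecture~\ref{conj-2}, reproves only the class of examples the paper already handles; it does not extend the reach of the conjecture.
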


\subsection{Example: The Hermitian one-matrix models}

Conjecture \ref{conj-1} and \ref{conj-3} are known to be true
in this case of the Hermitian one-matrix models,
see \cite{zhou11}.

In fact,
perform a rescaling to \eqref{eq-1mm-fatspeccurve} and
rewrite the fat spectral curve \eqref{eq-1mm-fatspeccurve} in the following way:
\be
\label{eq-fatspec}
x^2-4y^2=4t,
\ee
where $t=Ng_s$ is the 't Hooft coupling constant.
This is an algebraic curve on the $(x,y)$-plane with two branch points.
Define a family of multi-linear differentials $\omega_{g,n}^{\text{Herm}} (p_1,\cdots,p_n)$ on
this curve by:
\be
\omega_{g,n}^{\text{Herm}} (p_1,\cdots,p_n):=W_{g,n}'(y_1,\cdots,y_n)dx_1\cdots dx_n,
\ee
where $W_{g,n}'$ are given by:
\be
\begin{split}
&W_{0,1}'(y_1):=-\frac{x_1}{2}+W_{0,1}^{\text{Herm}}(y_1),\\
&W_{0,2}'(y_1,y_2):=\frac{1}{(x_1-x_2)^2}+W_{0,2}^{\text{Herm}} (y_1,y_2),
\end{split}
\ee
and
$W_{g,n}'(p_1,\cdots,p_n):=W_{g,n}^{\text{Herm}} (y_1,\cdots,y_n)$
for $(g,n)\not= (0,1)$ or $(0,2)$.
Here $W_{g,n}^{\text{Herm}}$ are the realizations of $\cW_{g,n}$ defined in
\S \ref{sec-realization-1mm-EO}.
Now choose the Bergmann kernel on the spectral curve \eqref{eq-fatspec} to be:
\be
\label{eq-1mm-berg}
B^{\text{Herm}}(p_1,p_2):=
\omega_{0,2}^{\text{Herm}}(p_1,p_2)=
\frac{1}{2(x_1-x_2)^2}
\bigg(
1+\frac{x_1x_2-4t}{4y_1y_2}
\bigg)dx_1dx_2.
\ee
In \cite{zhou11} the second author proved that
the quadratic recursion \eqref{eq-EOtype-rec-1mm} is equivalent to the
following E-O topological recursion on the above spectral curve,
and then:
\begin{Theorem}
[\cite{zhou11}]
We have:
\be
\omega_{g,n}^{\text{Herm}}(p_1,\cdots,p_n)=
\omega_{g,n}^{EO}(p_1,\cdots,p_n)
\ee
for every pair $(g,n)\not=(0,1)$,
where $\omega_{g,n}^{EO}$ are the E-O invariants associated to
the spectral curve \eqref{eq-fatspec} and Bergmann kernel \eqref{eq-1mm-berg}.
\end{Theorem}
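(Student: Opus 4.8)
The plan is to deduce this identification from Theorem \ref{thm-EOtype-rec-1mm} together with the uniqueness of the solution of the Eynard--Orantin recursion. First I would observe that both sides of the claimed equality are determined by the same initial data and the same recursion: on one side, the differentials $\omega_{g,n}^{\text{Herm}}$ are built from the $n$-point functions $W_{g,n}^{\text{Herm}}(x_1,\dots,x_n)$, which by Theorem \ref{thm-EOtype-rec-1mm} satisfy the quadratic recursion \eqref{eq-EOtype-rec-1mm}; on the other side, the $\omega_{g,n}^{EO}$ are by definition produced by \eqref{eq-eorec} from $\omega_{0,1}^{EO}=y\,dx$ and $\omega_{0,2}^{EO}=B^{\text{Herm}}$. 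So the heart of the argument is to show that the operators $\widetilde D_{x_1,x_j}^{\text{Herm}}$ and $\widetilde E_{x_1,u,v}^{\text{Herm}}$ appearing in \eqref{eq-EOtype-rec-1mm}, when transported to the spectral curve \eqref{eq-fatspec} via the parametrization $x=x(z)$, $y=y(z)$, become exactly the residue operations at the two branch points dictated by the recursion kernel $K(z_0,z)$.

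The key steps, in order, would be: (1) Parametrize the curve $x^2-4y^2=4t$ rationally, e.g.\ by $x = z + t/z$ (up to a constant), $2y = \pm(z - t/z)$, and identify the two branch points as the zeros of $dx$, together with the local involution $\sigma$ swapping the two sheets near each branch point. (2) Check the initial data: verify that $\omega_{0,1}^{\text{Herm}} = W_{0,1}'(y_1)\,dx_1 = (-x_1/2 + W_{0,1}^{\text{Herm}}(x_1))\,dx_1$ equals $y\,dx$ on the curve, which follows from the closed form $W_{0,1}^{\text{Herm}}(x_1) = \tfrac12(x_1 - \sqrt{x_1^2-4t})$ computed from the Catalan-number generating series in \S\ref{sec-1mm-npt}; and verify that $\omega_{0,2}^{\text{Herm}} = B^{\text{Herm}}(p_1,p_2)$ has the required double-pole normalization on the diagonal, which is immediate from \eqref{eq-1mm-berg}. (3) Translate the operator identities: show that for a function of the form $f(x)\,dx$ regular except at the branch points, the combination $\widetilde D_{x_1,x_j}^{\text{Herm}}$ acting on $W_{g,n-1}^{\text{Herm}}$ and $\widetilde E_{x_1,u,v}^{\text{Herm}}$ acting on the genus-lowering and splitting terms coincide with $\sum_i \operatorname{Res}_{z\to a_i} K(z_0,z)[\cdots]$. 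This is essentially a residue computation: one expands $K(z_0,z)$ near each branch point in the local Airy coordinate and matches the rational prefactors $\tfrac{1}{(x_1-x_j)^2}$, $E_{a,b,c}^{\text{Herm}}$, and the denominator $1 - 2x_1^{-1}W_{0,1}^{\text{Herm}}(x_1) = \sqrt{x_1^2-4t}/x_1 \cdot (\text{const})$ against $2(y(z)-y(\sigma(z)))\,dx(z)$. (4) Conclude by induction on $2g-2+n$: both families agree for the base cases by step (2), and the recursions agree by step (3), so $\omega_{g,n}^{\text{Herm}} = \omega_{g,n}^{EO}$ for all $(g,n)\neq(0,1)$.

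The main obstacle I expect is step (3): making the passage from the purely $x$-variable differential-operator form of \eqref{eq-EOtype-rec-1mm} to the residue-on-the-curve form of \eqref{eq-eorec} fully rigorous. The subtlety is that the operators $\widetilde D$ and $\widetilde E$ are defined by limits and partial derivatives in the $x_j$, whereas the E-O kernel lives on the double cover and its residues mix the two sheets; one must check that the symmetrization over sheets implicit in $\operatorname{Res}_{z\to a_i}$ produces precisely the rational combinations in $D_{a,b}^{\text{Herm}}$ and that no spurious poles away from the branch points are introduced (this uses that $W_{g,n}^{\text{Herm}}$ for $(g,n)\neq(0,1),(0,2)$ has poles only at the branch points, which itself should be argued inductively from the recursion). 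Since this equivalence is exactly the content proved in \cite{zhou11}, the cleanest route is to invoke that result directly rather than reproving the residue matching from scratch; I would present the above as a guided sketch and cite \cite{zhou11} for the detailed verification of step (3).
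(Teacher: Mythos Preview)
The paper does not actually prove this theorem: it is quoted from \cite{zhou11}, and the surrounding text simply states that ``the second author proved that the quadratic recursion \eqref{eq-EOtype-rec-1mm} is equivalent to the E-O topological recursion on the above spectral curve'' before stating the theorem. So there is no in-paper proof to compare against; the paper's ``proof'' is precisely a citation.

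Your proposal is the natural strategy and matches what the paper attributes to \cite{zhou11}: verify the initial data $\omega_{0,1}^{\text{Herm}}=y\,dx$ and $\omega_{0,2}^{\text{Herm}}=B^{\text{Herm}}$, then show that the operators $\widetilde D$ and $\widetilde E$ in \eqref{eq-EOtype-rec-1mm} become the residue operations with kernel $K(z_0,z)$ on the curve \eqref{eq-fatspec}, and conclude by induction on $2g-2+n$. You have also correctly flagged the real work (your step~(3), the residue matching) and correctly concluded that the honest move is to cite \cite{zhou11} for it. One small refinement: your observation that the denominator $1-2x_1^{-1}W_{0,1}^{\text{Herm}}(x_1)$ equals $x_1^{-1}\sqrt{x_1^2-4t}=2y_1/x_1$ is exactly what identifies the prefactor of $\widetilde E$ with $1/\big(2(y(z)-y(\sigma(z)))\big)$ in the E-O kernel, so that piece of step~(3) is essentially done; the remaining content is matching $\widetilde D_{x_1,x_j}^{\text{Herm}}$ against the contribution of $\omega_{0,2}$ in the stable sum, and checking that the $W_{g,n}^{\text{Herm}}$ for $2g-2+n>0$ have poles only at the branch points so the residue picks up everything.
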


As a corollary,
the equation \eqref{eq-Herm-qsc-2} becomes the quantum spectral curve
of the fat spectral curve \eqref{eq-fatspec}.

\subsection{Example: Enumeration of fat graphs}

Similar results hold for the enumeration problem of fat graphs£¬
due to Dumitrescu et al. \cite{dmss, ms}.
This case can also be seen by taking $t=1$ in the Hermitian one-matrix models.

Denote by:
\be
D_{g,n}(\mu):=\sum_{\Gamma\in\mathfrak{Fat}_g^{\mu, c}}
\frac{1}{|\Aut(\Gamma)|}
\ee
the weighted number of fat graphs of a given type,
and let:
\be
F_{g,n}^C(x_1,\cdots,x_n):=
\delta_{g,0}\delta_{n,1}+
\sum_{\mu\in \bZ_{>0}^n}D_{g,n}(\mu)\cdot
x_1^{-\mu_1}\cdots x_n^{-\mu_n},
\ee
then the recursion \eqref{eq-qrec-mulase} is equivalent to the E-O topological recursion
on the Catalan curve (see \cite{dmss, ms}):
\ben
x=z+\frac{1}{z},
\een
which can be obtained from \eqref{eq-fatspec} by taking $t=1$ and shift $y$ to $y-\frac{x}{2}$.
Then the equation \eqref{eq-ms-qsc} is the quantum spectral curve of the
Catalan curve, see \cite[\S 4]{ms}.
In other words,
Conjecture \ref{conj-1} and \ref{conj-3} are both true in this case.

\vspace{.2in}
{\bf Acknowledgements}.
The second author is partly supported by NSFC grant 11661131005 and 11890662.

\end{document}